\newcommand*\patchAmsMathEnvironmentForLineno[1]{%
  \expandafter\let\csname old#1\expandafter\endcsname\csname #1\endcsname
  \expandafter\let\csname oldend#1\expandafter\endcsname\csname end#1\endcsname
  \renewenvironment{#1}%
     {\linenomath\csname old#1\endcsname}%
     {\csname oldend#1\endcsname\endlinenomath}}%
\newcommand*\patchBothAmsMathEnvironmentsForLineno[1]{%
  \patchAmsMathEnvironmentForLineno{#1}%
  \patchAmsMathEnvironmentForLineno{#1*}}%
\tikzstyle{pop}=[circle, draw, fill=black,
\let\originalleft\left
\let\originalright\right
\renewcommand{\left}{\mathopen{}\mathclose\bgroup\originalleft}
\renewcommand{\right}{\aftergroup\egroup\originalright}
\DeclareMathOperator*{\argmax}{argmax}
\newcommand{\pushright}[1]{\ifmeasuring@#1\else\omit\hfill$\displaystyle#1$\fi\ignorespaces}
\newcommand{\pushleft}[1]{\ifmeasuring@#1\else\omit$\displaystyle#1$\hfill\fi\ignorespaces}
\def\BState{\State\hskip-\ALG@thistlm}
\newcommand{\Tr}[1]{\ensuremath{{\textstyle\operatorname{Tr}\left(#1\right)}}}
\newcommand{\E}{\ensuremath{\operatorname{\mathbb{E}}}}
\DeclareFontFamily{U}{mathx}{\hyphenchar\font45}
\DeclareFontShape{U}{mathx}{m}{n}{<-> mathx10}{}
\DeclareSymbolFont{mathx}{U}{mathx}{m}{n}
\theoremstyle{plain}
\newtheorem{Theorem}{Theorem}
\newtheorem{Lemma}{Lemma}
\newtheorem{Proposition}{Proposition}
\newtheorem{Corollary}{Corollary}
\newtheorem{Definition}{Definition}
\newtheorem{Assumption}{Assumption}
\theoremstyle{remark}
\theoremstyle{definition}
\newtheorem{Remark}{Remark}
\newcommand{\Texttwoedge}{%
\raisebox{0.55mm}{\!
\tikz[scale=.2,clip]{\draw[thick]
(0,0) node[circle,fill=black,inner sep=0pt, minimum width=2.5 pt]{}--(0.5,0) node[circle,fill=black,inner sep=0pt, minimum width=2.5 pt]{}--(1,0) node[circle,fill=black,inner sep=0pt, minimum width=2.5 pt]{};
}}
}
\newcommand{\Textpathtwo}{%
\raisebox{0.0mm}{\!
\tikz[scale=.2,clip]{\draw[thick]
(210:.6) node[circle,fill=black,inner sep=0pt, minimum width=3 pt]{}--
(90:.6) node[circle,fill=black,inner sep=0pt, minimum width=3 pt]{}--
(-30:.6) node[circle,fill=black,inner sep=0pt, minimum width=3 pt]{};
}}
}
\newcommand{\Textpathtwob}{%
\raisebox{0.0mm}{\!
\tikz[scale=.2,clip]{\draw[thick]
(210:.6) node[circle,fill=black,inner sep=0pt, minimum width=3 pt]{}--
(90:.6) node[circle,fill=black,inner sep=0pt, minimum width=3 pt]{}
(-30:.6) node[circle,fill=black,inner sep=0pt, minimum width=3 pt]{};
}}
}
\newcommand{\Texttrianglenew}{%
\raisebox{0.0mm}{\!
\tikz[scale=.2,clip]{\draw[thick]
(210:.6) node[circle,fill=black,inner sep=0pt, minimum width=3 pt]{}--
(90:.6) node[circle,fill=black,inner sep=0pt, minimum width=3 pt]{}--
(-30:.6) node[circle,fill=black,inner sep=0pt, minimum width=3 pt]{}--
(210:.6) node[circle,fill=black,inner sep=0pt, minimum width=3 pt]{};
}}
}
\newcommand{\Textoctagon}{%
\smash{\raisebox{.5mm}{\!
\tikz[clip]{\hspace{-.4cm}\octagon{4.3}{.03}{2}{black}}\hspace{-.25cm}}
}}
\newcommand{\Texttadpolea}{%
\raisebox{-.3mm}{
\!\Textoctagon\hspace{-0.2cm}\raisebox{.6mm}{\Texttwoedge\hspace{-0.15cm}}}
}
\DeclareSymbolFont{symbolsC}{U}{txsyc}{m}{n}
\DeclareMathSymbol{\multimapboth}{\mathrel}{symbolsC}{"13}
\title{\vspace{-3\baselineskip}Topology reveals universal features for network comparison}%
\author{Pierre-Andr\'e G.\ Maugis, Sofia C.\ Olhede \& Patrick J.\ Wolfe}%
\date{University College London}%
\begin{document}
\maketitle

\begin{abstract}
The topology of any complex system is key to understanding its structure and function. Fundamentally, algebraic topology guarantees that any system represented by a network can be understood through its closed paths. The length of each path provides a notion of scale, which is vitally important in characterizing dominant modes of system behavior. Here, by combining topology with scale, we prove the existence of universal features which reveal the dominant scales of any network. We use these features to compare several canonical network types in the context of a social media discussion which evolves through the sharing of rumors, leaks and other news. Our analysis enables for the first time a universal understanding of the balance between loops and tree-like structure across network scales, and an assessment of how this balance interacts with the spreading of information online. Crucially, our results allow networks to be quantified and compared in a purely model-free way that is theoretically sound, fully automated, and inherently scalable. 
\end{abstract}
\begin{bibunit}[nature]

\begin{figure}[t]
    \begin{center}
    \vspace{-\baselineskip}%
	   {\includegraphics[width=.67\textwidth]{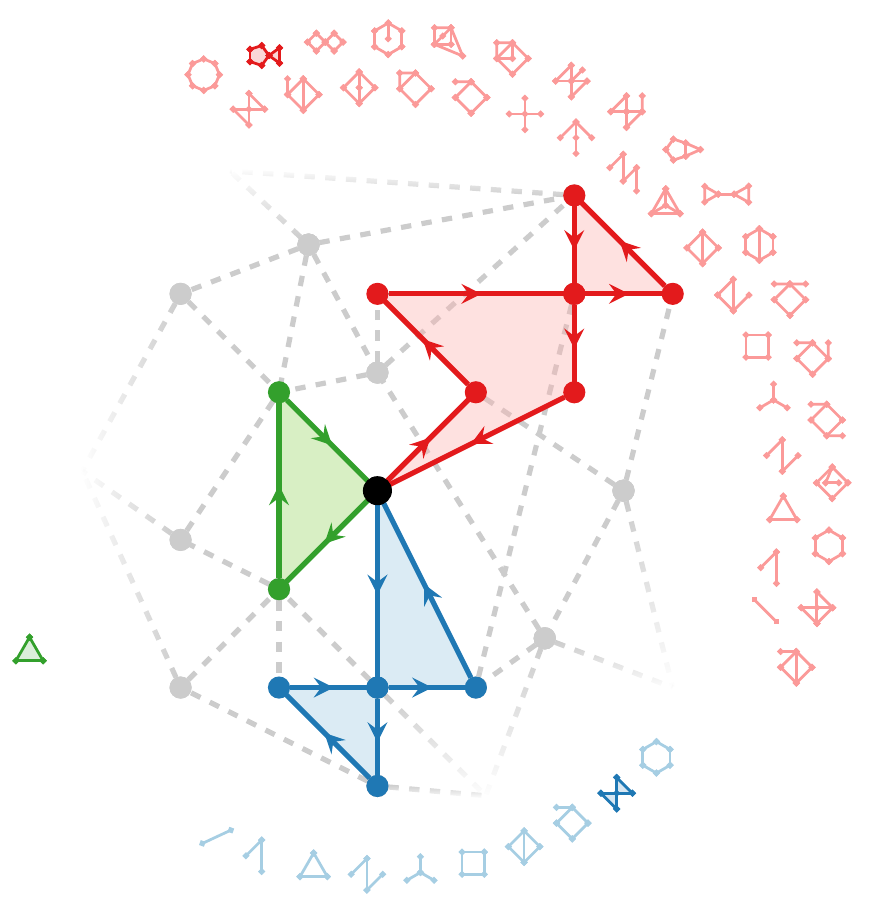}\vspace{-\baselineskip}}%
	\end{center}
	\caption{\label{fig:walks}\textbf{Closed walks at all scales in a network determine its topology.} A closed walk at scale $k$ traces out a closed path, starting at a given node (here, the center node, shown in black) and taking $k$ steps to return to its origin. Highlighted are walks at scales $k = 3$ (green), 6 (blue), and 8 (red), along with all possible closed paths at each scale. Formally, all walks across all scales determine the fundamental group of any connected network, and hence its classification as a topological space.
	}
\end{figure}

Across the sciences, complex physical and biological systems are represented by networks\cite{gao2016universal}. A fundamental challenge is to understand the structure of such networks, and to compare them irrespective of their sizes and origins\cite{Ingalhalikar14, Shashwath14, Hudson14,schieber2017quantification}. Closed paths in a network (Fig.~\ref{fig:walks}) are crucial to this understanding.  They determine the topology of any network through its mathematical symmetries\cite{stallings1983topology}, and hence its behavior as a dynamical system\cite{kotani2000zeta}. The shapes of these paths capture the full range of scales intrinsic to any network: from local features reflecting small-scale properties at the level of individual nodes, to global features revealing large-scale aspects of system behavior such as  diffusion and information flows\cite{brockmann2013hidden}.

Shapes that are over-represented, termed motifs, play a key functional role in networks\cite{milo2002network, Maayan2008cyclic,angulo2015network, sorrells2015making,benson2016higher}. They are equally fundamental to mathematical representations: In the theory of large graph limits which has emerged over the past decade, motif densities correspond directly to moments of probability distributions\cite{janson1995contiguity, diaconis2008graph, bollobas2009metric, BickelLevina2012, lovasz2012large}. However, the question of precisely which shapes are essential to a unified understanding of all networks has long remained open\cite{Milo2004super, alon2007network, gerstein2012architecture, boyle2014comparative}. Here we show that the simplest shapes are uniquely essential: They reveal the dominant scales of any network. This discovery allows us to identify structural differences between networks in an entirely model-free way, and to pinpoint exactly the scales at which these differences occur.

Figure~\ref{fig:walks} describes how walks traveling from node to node in a network determine its classification as a topological space. A closed walk has two essential characteristics: its scale, which is the number of steps it takes before returning to its starting node (Fig.~\ref{fig:walks}, center); and its shape, which describes the closed path it traces out (Fig.~\ref{fig:walks}, periphery). By combining topology with scale, we prove a fundamental result: Walks with the simplest shapes will predominate.  This holds universally across a vast range of network types\cite{holland1983stochastic, hoff2002latent, chung2003spectra, bollobas2007phase, airoldi2008mixed, bickel2009nonparametric, riordan2011explosive, zhao2012consistency, olhede2013network}. As a consequence, we can understand and articulate structural differences between networks obtained under different experimental conditions or at different times.

Walks in a network govern not only its topology, but also its spectrum.  This is crucial when networks represent complex physical systems: Spectral properties determine how phenomena diffuse and spread over networks, with walks describing propagation from node to node\cite{morone2015influence}.  We prove that dominating walks exhibit a sharp phase transition, changing from cycles to trees as networks becomes sparse. By contrast, we show that non-backtracking walks\cite{angel2015non}---which never traverse the same network edge twice in succession, and thus form geodesics analogous to those on a Riemann surface---are impervious to this phase transition. This is a fundamental robustness result, which has broad implications for our understanding of network dynamics\cite{luscombe2004genomic, barzel2013universality} and control\cite{liu2011controllability, ruths2014control, yan2015spectrum}. It reveals why techniques based on the Perron--Frobenius operator, represented by the non-backtracking matrix of a network\cite{fitzner2013backtracking}, have revolutionized algorithms for network community detection\cite{ahn2010link, Krzakala13, Newman14}.

\subsection*{Topology reveals dominant network features}

Surprisingly, there is a natural progression to the shapes traced out by walks in a network. Figure~\ref{fig:walks} shows how these shapes grow in complexity as scale increases. Why then are the simplest shapes guaranteed to dominate all others? There are two crucial reasons: one stemming from topology, the other from scale. 

First, trees and cycles have the simplest topologies of all connected networks. Their Euler characteristics as one-dimensional simplicial complexes---the differences between their numbers of nodes and edges---are as large as possible. Viewing a connected network as a topological surface, one minus its Euler characteristic counts its one-dimensional holes (its first Betti number; Fig.~\ref{fig:extension}). Starting from any tree that spans the entire network, each hole is formed by one network edge outside this spanning tree, which completes a distinct cycle within the network.

Second, because of their fixed Euler characteristics, trees and cycles naturally organize themselves by scale. Cycles maximize the number of nodes that can be visited by a closed walk at any given scale, whereas trees minimize the number of distinct edges that must be traversed. Scale then combines with topology to ensure that trees and cycles predominate.

To see how topology reveals dominant network features, we begin with the simplest setting. Let $G$ be a generalized random graph\cite{riordan2011explosive}, with $\operatorname{\mathbb{P}} \{\textnormal{edge in $G$}\}$ the probability that two randomly chosen nodes are connected, and call $\E \#\{\textnormal{copies of $w$ in $G$}\}$ the average number of walks in $G$ tracing out the same shape as a walk $w$.

Then, as the number of nodes in $G$ becomes large, the following simple and fundamental equation governs $\E \#\{\textnormal{copies of $w$ in $G$}\}$:
\begin{align}
\nonumber \!\!\log \E\#\{\textnormal{copies of $w$ in $G$}\} \sim & \textnormal{ const}_w 
\\ \nonumber & + \#\{\textnormal{nodes visited by $w$}\} \cdot \log \#\{\textnormal{nodes in $G$}\}
\\ \label{eq:nb-of-copies} & - \#\{\textnormal{edges traversed by $w$}\} \cdot \left| \log \operatorname{\mathbb{P}} \{\textnormal{edge in $G$}\} \right|.
\end{align}

\begin{figure}[t]
\begin{center}
{\vspace{-1\baselineskip}%
\includegraphics[width=.5\textwidth,angle=90]{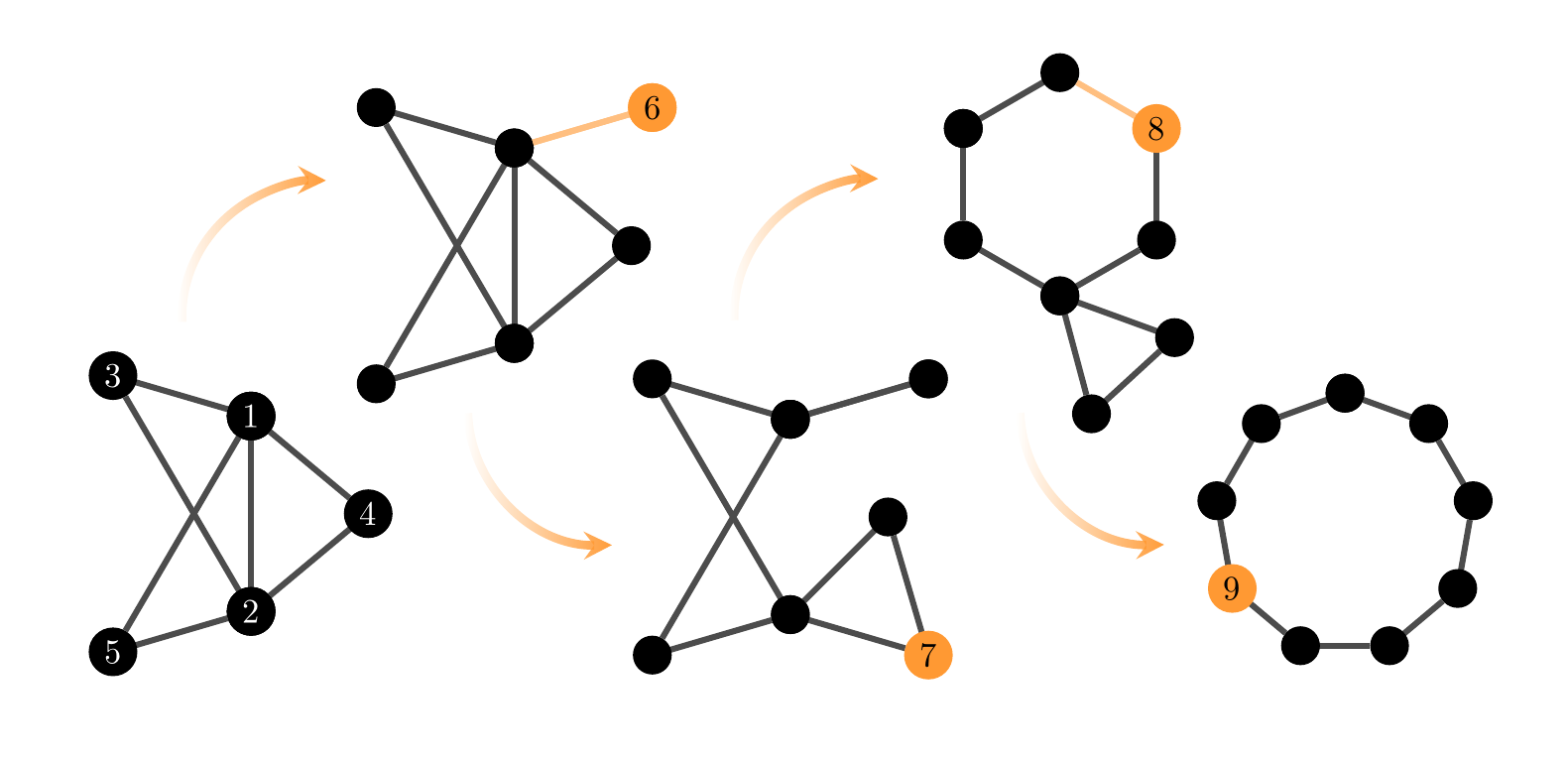}}%
\vspace{-1.75\baselineskip}%
\end{center}
\caption{\label{fig:extension}\textbf{Repeatedly extending a walk simplifies its topology.} Any closed walk extends to either a tree or a cycle. Here the walk 1-2-1-3-2-4-1-2-5-1 extends fully in four steps: Each extension adds one node to the shape traced out (1\underline{6}13241251, 161324\underline{7}251, 16\underline{8}3247251, 1683247\underline{9}51) and can only reduce its first Betti number.}
\end{figure}

Equation~\eqref{eq:nb-of-copies} shows how $\E \#\{\textnormal{copies of $w$ in $G$}\}$ depends on the shape traced out by $w$, revealing a fundamental trade-off between its nodes and edges. This determines which walks dominate at any given scale: As long as $\operatorname{\mathbb{P}} \{\textnormal{edge in $G$}\} > 1 / \#\{\textnormal{nodes in $G$}\} $, a walk that traverses an edge three times can never be dominant. This is because the right-hand side of equation~\eqref{eq:nb-of-copies} could then be increased by visiting a new node instead---even at the cost of traversing an additional edge.

Figure~\ref{fig:extension} shows how extending a walk in this manner simplifies its topology. Specifically, call $\smash{ w' }$ an extension of $w$ if both walks are at the same scale, but the sequences of nodes they visit differ at exactly one entry, with $\smash{ w' }$ visiting one additional node and traversing at most one additional edge. The key topological property of an extension is that its shape cannot have a more negative Euler characteristic than that traced out by $w$. Consequently, repeated extensions lead to cycles or trees (Supplementary Information, section~\ref{sec:extensions}). Walks tracing out these shapes are fully extended: either they traverse all their edges once, or all their edges twice.

\begin{figure}[t]
	\vspace{-2\baselineskip}\hskip2cm{\includegraphics[width=.67\textwidth]{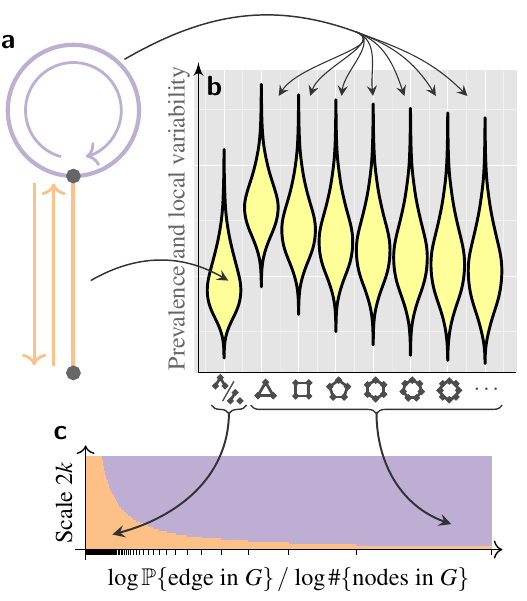}}%
	\caption{\label{fig:tadpole}\textbf{Combining topology with scale enables network comparison.} \protect\textbf{a}, Extending a tadpole graph \,\smash{\protect\Texttadpolea} leads to dominant walks: those traversing their edges either once (cycles, purple) or twice (trees, orange). \protect\textbf{b}, Consequently, measuring the prevalence and local variability of trees and cycles across any network reveals its dominant scales. \protect\textbf{c}, A phase transition (Corollary~\protect\ref{corrol_main}) determines whether cycles or trees dominate the structure of a graph $G$ at a given scale, based on the number of nodes in $G$ and the probability that two randomly chosen nodes are connected.}
\end{figure}

\subsection*{Universal features for network comparison}

Figure~\ref{fig:tadpole} shows how combining topology with scale yields universal features for network comparison. To prove that walks tracing out trees and cycles dominate, we recast equation~\eqref{eq:nb-of-copies} to hold for any network generating mechanism $\mathcal{M}$:
\begin{align}
\nonumber \!\!\log \E\#\{\textnormal{copies of $w$ in $G$}\} \sim & \textnormal{ const}_{w,\mathcal{M}} 
\\ \nonumber & + \#\{\textnormal{nodes visited by $w$}\} \cdot \log \#\{\textnormal{nodes in $G$}\}
\\ \label{eq:nb-of-copies2} & - \left| \log \operatorname{\mathbb{P}}_{\!\!\!\mathcal{M}} \{\textnormal{local copy of $w$ in $G$}\} \right|.
\end{align}

Equation~\eqref{eq:nb-of-copies2} replaces $\#\{\textnormal{edges traversed by $w$}\} \cdot \left| \log \operatorname{\mathbb{P}} \{\textnormal{edge in $G$}\} \right|$ in equation~\eqref{eq:nb-of-copies}, which drove our earlier argument for walk dominance, with a more general quantity $\smash{ \left| \log \operatorname{\mathbb{P}}_{\!\!\!\mathcal{M}} \{\textnormal{local copy of $w$ in $G$}\} \right| }$, which captures local structure in $G$.

To define $ \smash{ \operatorname{\mathbb{P}}_{\!\!\!\mathcal{M}} \{\textnormal{local copy of $w$ in $G$}\} }$, let $G[u]$ be the graph $G$ restricted to any set of its nodes $u$, and $K[u]$ the graph on $u$ with all edges present. Consider a walk $w$ and a graph $G$ generated via mechanism $\mathcal{M}$. If we now let $u$ be chosen at random from all subsets of nodes of $G$ of any fixed size, we may define
\begin{equation*}
\operatorname{\mathbb{P}}_{\!\!\!\mathcal{M}} \{\textnormal{local copy of $w$ in $G$}\} =
\frac{\E\#\{\textnormal{copies of $w$ in $G[u]$}\}}{\#\{\textnormal{copies of $w$ in $K[u]$}\}}.
\end{equation*}
Here $\#\{\textnormal{copies of $w$ in $K[u]$}\}$ is the number of walks on the nodes of $u$ tracing out the same shape as $w$, while $\E\#\{\textnormal{copies of $w$ in $G[u]$}\}$ is the number of such walks present in a random subgraph $G[u]$ on average. 

Thus, $ \smash{ \operatorname{\mathbb{P}}_{\!\!\!\mathcal{M}} \{\textnormal{local copy of $w$ in $G$}\} }$ is the probability that a randomly selected walk local to $u$ is present in $G[u]$. The behavior of $ \smash{ \operatorname{\mathbb{P}}_{\!\!\!\mathcal{M}} \{\textnormal{local copy of $w$ in $G$}\} }$ when $w$ is extended to visit a single additional node drives our main result.

\begin{Theorem}\label{thm:main}
Suppose for every closed walk $w$ at scale $k$ which is not fully extended, there exists an extension $\smash{ w' }$ such that under network generating mechanism $\mathcal{M}$,
\begin{equation}\label{assumption-main}
\#\{\textnormal{nodes in $G$}\} \cdot
	\frac{\operatorname{\mathbb{P}}_{\!\!\!\mathcal{M}} \{\textnormal{local copy of $\smash{ w' }$ in $G$}\}}
		{\operatorname{\mathbb{P}}_{\!\!\!\mathcal{M}} \{\textnormal{local copy of $w$ in $G$}\}}
		\to\infty.
\end{equation}
Then in networks generated by $\mathcal{M}$, closed walks tracing out cycles on $k$ nodes---and also, if $k$ is even, trees on $k/2\!+\!1$ nodes---dominate all closed walks at scale~$k$:
\begin{multline*}
\E\#\{\textnormal{closed $k$-walks in $G$}\}
\sim \E\#\{\textnormal{closed $k$-walks tracing out a $k$-cycle in $G$}\}
\\+1_{\{k\textnormal{ even}\}} \cdot \E\#\{\textnormal{closed $k$-walks tracing out any $(k/2+1)$-tree in $G$}\}.
\end{multline*}
\end{Theorem}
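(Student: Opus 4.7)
The plan is to apply equation~\eqref{eq:nb-of-copies2} iteratively along extension sequences, reducing the enumeration of closed $k$-walks to just the fully extended ones. First, for any extension $w'$ of a closed $k$-walk $w$, subtracting the two instances of equation~\eqref{eq:nb-of-copies2} gives
\[\log\frac{\E\#\{\text{copies of } w' \text{ in } G\}}{\E\#\{\text{copies of } w \text{ in } G\}} \sim \log\#\{\text{nodes in } G\} + \log\frac{\mathbb{P}_\mathcal{M}\{\text{local copy of } w' \text{ in } G\}}{\mathbb{P}_\mathcal{M}\{\text{local copy of } w \text{ in } G\}},\]
because $w'$ visits exactly one more node than $w$ and the shape-dependent constants $\mathrm{const}_{w,\mathcal{M}}$ and $\mathrm{const}_{w',\mathcal{M}}$ are bounded uniformly over the finitely many closed $k$-walk shapes. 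Hypothesis~\eqref{assumption-main} asserts precisely that the right-hand side diverges to $+\infty$, so the expected number of copies grows by a diverging factor under a single extension.

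Next I would invoke the termination picture suggested by Figure~\ref{fig:extension}: starting from any non-fully-extended $w$ at scale $k$, one iteratively applies the hypothesized extension to produce a chain $w = w_0, w_1, \ldots, w_j = w^*$, in which $w_i$ visits one more node than $w_{i-1}$. Since the scale $k$ caps the number of visited nodes, the chain must terminate after at most $k$ steps in a fully extended walk $w^*$. A fully extended closed $k$-walk traverses every edge of its image either exactly once---in which case its shape is a $k$-cycle---or exactly twice---in which case its shape is a tree on $k/2+1$ nodes, which requires $k$ to be even. Multiplying the single-step bounds along the chain, $\E\#\{\text{copies of } w\}/\E\#\{\text{copies of } w^*\} \to 0$ for any $w$ that is not itself fully extended. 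Partitioning the total closed-$k$-walk count in $G$ by traced shape, and noting that both the number of shapes at scale $k$ and the number of rooted walk-labelings per shape are combinatorial constants depending only on $k$, the aggregate contribution from non-fully-extended shapes is $o(1)$ relative to the sum of cycle and tree contributions, which establishes the claimed asymptotic.

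The main obstacle is making this iterated-extension argument coherent at the level of shapes. The hypothesis only guarantees that \emph{some} extension exists at each non-fully-extended step, so one must verify that \emph{every} such chain, regardless of which extension is chosen, terminates at a $k$-cycle or a $(k/2+1)$-tree shape; this reduces to the Euler-characteristic monotonicity property indicated in Figure~\ref{fig:extension}, but must be made rigorous---in particular, one needs that extensions weakly increase the Euler characteristic and that the ``fully extended'' walks are exactly the cycle- and tree-tracing walks. A related subtlety is controlling the uniformity of the bounded constants in equation~\eqref{eq:nb-of-copies2} across the finitely many shapes, so that their accumulation along chains of bounded length does not offset the logarithmic gains supplied by hypothesis~\eqref{assumption-main}.
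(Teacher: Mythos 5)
Your overall architecture matches the paper's: use the hypothesis to show every extendable walk shape contributes negligibly, characterize the non-extendable shapes as the $k$-cycle and the $(k/2+1)$-trees, and sum over the finitely many shapes at scale $k$. Two remarks on the analytic part. First, the iteration along a chain $w_0,\dots,w_j$ is unnecessary: since the extension's induced shape again lies in $W_k$, its expected count is itself bounded by $\E\left|\mathcal{W}_k(G_n)\right|$, so a \emph{single} application of the hypothesis already gives $\E\mathrm{ind}_k(F_w,G_n)\le \E\mathrm{ind}_k(F_w,G_n)\cdot\E\left|\mathcal{W}_k(G_n)\right|/\E\mathrm{ind}_k(F_{w'},G_n)=o\!\left(\E\left|\mathcal{W}_k(G_n)\right|\right)$; this is exactly how Theorem~\ref{thm:dominant-walk} argues, and it sidesteps your worry about which extension is chosen at later steps. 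Second, "subtracting two instances" of the displayed asymptotic relation~\eqref{eq:nb-of-copies2} is not a rigorous step, since that relation carries unspecified constants and a loose $\sim$; the paper instead uses the exact combinatorial identity of Remark~\ref{remark:extension}, which equates the ratio of expected counts with $\Theta\!\left(n\,\E\varphi(w',G_n)/\E\varphi(w,G_n)\right)$, and the $\liminf$ machinery of Proposition~\ref{w-k-ast-prop} to handle shapes whose expected counts may vanish along subsequences.

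The genuine gap is the combinatorial characterization you flag but then dispatch in one sentence: the claim that a fully extended (non-extendable) closed $k$-walk "traverses every edge once, hence is a $k$-cycle, or every edge twice, hence is a $(k/2+1)$-tree" is not a proof, and the two inferences are false as stated. A closed $6$-walk tracing an Eulerian circuit of the lemniscate $C_3C_3$ traverses every edge exactly once but does not induce a cycle on $k$ vertices; a closed $6$-walk going around $C_3$ twice traverses every edge exactly twice but induces no tree. What must be shown is that all such intermediate shapes \emph{do} admit extensions, while $C_k$ and the elements of $\mathcal{T}_{k/2+1}$ do not; this is precisely Lemma~\ref{extens-prop} (proved via the four constructive walk surgeries of Fig.~\ref{extens-prop-fig}), combined with the structural facts of Lemma~\ref{wk-props} ($C_k$ is the unique element of $W_k$ on $k$ vertices; trees occur only for even $k$ and have at most $k/2+1$ vertices). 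That lemma is the technical heart of the paper's proof of Theorem~\ref{thm:main}; your proposal identifies the right reduction but leaves it unproven, and the shortcut offered in its place would fail.
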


Theorem~\ref{thm:main} (proved in the Supplementary Information) shows in greatest generality when walks tracing out trees and cycles will dominate all walks in a network. Equation~\eqref{assumption-main} is a universal requirement for walks to extend fully. For it to hold in the setting of generalized random graphs, network degrees must grow. This in turn implies the emergence of a giant component\cite{bollobas2007phase}, and so Theorem~\ref{thm:main} links global network structure to local walk properties. 

Theorem~\ref{thm:main} also leads to simple explicit forms for $\E\#\{\textnormal{copies of $w$ in $G$}\}$, depending on whether all network degrees grow uniformly\cite{holland1983stochastic, hoff2002latent, airoldi2008mixed, zhao2012consistency} (Supplementary Information, Proposition~\ref{grg-prop}), or at variable rates, as in a power-law network\cite{chung2003spectra} with hubs and scale-free structure\cite{Barabasi99} (Supplementary Information, Proposition~\ref{powlaw-prop}).

This result points to a fundamental dichotomy: When networks are sufficiently homogeneous that equation~\eqref{eq:nb-of-copies} applies, all trees are equally important, regardless of their degree sequence. But when networks are more heterogeneous and equation~\eqref{eq:nb-of-copies2} applies, as in the case of a power law, trees containing hubs dominate (Supplementary Information, Theorem~\ref{propn:irg_thm_validity}). In both cases a universal phase transition regulates walk dominance.

\begin{Corollary}\label{corrol_main}
Consider the setting of generalized random graphs whose mechanism $\mathcal{M}$ is either a bounded kernel or a power law. If network degrees grow fast enough, then dominant closed walks exhibit the sharp phase transition shown in Fig.~\ref{fig:tadpole}c:
\vspace{-.4\baselineskip}\begin{multline*}
\hskip-0.33cm\E \# \{\textnormal{closed $k$-walks in $G$}\}
\sim \\
\hskip0.515cm\begin{cases}
\E \# \{\textnormal{closed $k$-walks tracing out a $k$-cycle in $G$}\} & \textnormal{\!\!\!if $k$ is odd or $k>k^\ast\!$,} \\
\E \# \{\textnormal{closed $k$-walks tracing out $(k/2+1)$-trees in $G$}\} & \textnormal{\!\!\!otherwise.}
\end{cases}
\vspace{-.4\baselineskip}\end{multline*}
The boundary between dominating regimes of trees and cycles occurs at
\vspace{-.1\baselineskip}\begin{equation*}
k^\ast={\log \#\{\textnormal{nodes in $G$}\}}/{\log\smash{\sqrt{\#\{\textnormal{nodes in $G$}\} \cdot \operatorname{\mathbb{P}} \{\textnormal{edge in $G$}\}}}}.
\vspace{-.1\baselineskip}\end{equation*}
By contrast, dominant non-backtracking walks always trace out cycles:
\vspace{-.4\baselineskip}\begin{multline*}
\E\#\{\textnormal{non-backtracking closed $k$-walks in $G$}\} \sim\\
\E\#\{\textnormal{closed $k$-walks tracing out a $k$-cycle in $G$}\}.
\vspace{-.4\baselineskip}\end{multline*}
\end{Corollary}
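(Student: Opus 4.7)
The plan is to use Theorem~\ref{thm:main} as a reduction step and then resolve the remaining cycle-versus-tree comparison by plugging concrete counts into the asymptotic formula for $\E\#\{\text{copies of }w\text{ in }G\}$. Under the hypothesis that network degrees grow fast enough (the regime in which equation~\eqref{assumption-main} holds for the relevant mechanism, verified separately for bounded kernels and for power laws in the accompanying propositions), Theorem~\ref{thm:main} already narrows the candidate shapes at scale $k$ to a $k$-cycle or, when $k$ is even, a $(k/2{+}1)$-tree. Corollary~\ref{corrol_main} therefore amounts to deciding which of these two candidates wins and where the crossover occurs.

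To carry out that comparison, I substitute the two candidates into equation~\eqref{eq:nb-of-copies} (or equation~\eqref{eq:nb-of-copies2} in the power-law case, which has the same node-by-node accounting up to a shape-dependent prefactor). A $k$-cycle visits $k$ distinct nodes and traverses $k$ distinct edges once each; a $(k/2{+}1)$-tree visits $k/2{+}1$ nodes and traverses $k/2$ edges twice each. Writing $n=\#\{\text{nodes in }G\}$ and $p=\operatorname{\mathbb{P}}\{\text{edge in }G\}$, the leading log asymptotics read
\begin{align*}
\log\E\#\{\text{closed $k$-walks on a $k$-cycle}\} &\sim k\log n - k|\log p|,\\
\log\E\#\{\text{closed $k$-walks on any $(k/2{+}1)$-tree}\} &\sim (k/2{+}1)\log n - (k/2)|\log p|.
\end{align*}
Subtracting, the cycle term dominates precisely when $(k/2-1)\log n > (k/2)|\log p|$, which rearranges to $k\log(np) > 2\log n$, i.e.\ $k > 2\log n/\log(np) = \log n/\log\sqrt{np} = k^\ast$. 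For odd $k$ only cycles are candidates, so they dominate unconditionally; together these two observations yield the dichotomy displayed in Fig.~\ref{fig:tadpole}c.

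For the non-backtracking statement, the same extension argument underlying Theorem~\ref{thm:main} still applies, since the fresh-node substitution used in an extension cannot introduce a backtracking step that was not already present: extensions preserve the non-backtracking property, so the only candidate shapes remain cycles and trees. But any closed walk that traces out a tree must, on reaching a leaf of that tree, reverse direction, which a non-backtracking walk is forbidden from doing; hence no closed non-backtracking walk traces out a tree, leaving cycles as the sole dominant shape and giving the second asymptotic claim.

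The step I expect to be most delicate is reconciling the bounded-kernel and power-law regimes in a single statement: the constant $\text{const}_{w,\mathcal{M}}$ in equation~\eqref{eq:nb-of-copies2} depends on the shape of $w$ through its degree sequence, and one must check that this shape-dependent prefactor does not overturn the exponential comparison above — equivalently, that $k^\ast$ is the same threshold in both regimes, so that the phase transition is genuinely universal rather than an artefact of homogeneity. Once that is in place, the cycle-versus-tree arithmetic is elementary, and the real substance of Corollary~\ref{corrol_main} lies in the upstream reduction supplied by Theorem~\ref{thm:main}.
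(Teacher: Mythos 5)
Your treatment of the phase transition in the bounded-kernel case is essentially the paper's: reduce to cycles versus $(k/2{+}1)$-trees via Theorem~\ref{thm:main}, then compare $n^{|v(F)|}\rho^{|e(F)|}$ for the two candidates, and your arithmetic correctly reproduces $k^\ast=\log n/\log\sqrt{n\rho}$ (this is Lemma~\ref{dominantscaling} and Theorem~\ref{thm:phase}). However, the power-law half of the first claim is not actually proved: your displayed log-asymptotics for the tree count use the homogeneous formula, whereas under equation~\eqref{eq:nb-of-copies2} the tree term is $(k/2+1)\log n-\sum_t\min(1,d_t\gamma)\log n-k|\log\theta|$, which depends on the tree's degree sequence, and for $k\gamma/2\ge 1$ stars genuinely outgrow other trees. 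The paper closes exactly this gap (the one you flag but defer) via Proposition~\ref{ihrg-ptop}, Corollary~\ref{cor:irg_star_dom} and Theorem~\ref{propn:irg_thm_validity}: one must check that ``degrees grow fast enough'' means average degree growing faster than $n^{\gamma}$ (i.e.\ $\beta+\gamma<(1-\gamma)/2$), which forces $k^\ast<2/\gamma$, so that at the relevant scales all $(k/2{+}1)$-trees grow at a common rate and the threshold coincides with the homogeneous $k^\ast$; for slower degree growth the dichotomy in the corollary actually fails (stars and $C_3$-with-pendants take over), so this verification is substantive, not a formality.

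The non-backtracking argument has a more serious flaw. The claim that ``extensions preserve the non-backtracking property'' is false: the extension moves used to prove Lemma~\ref{extens-prop} (Cases 2.1 and 2.2) insert segments of the form $ala$, i.e.\ pendant edges traversed by an immediate reversal, so the extension of a non-backtracking walk is typically backtracking, and Theorem~\ref{thm:main} cannot simply be ``restricted'' to the non-backtracking class. Moreover, even granting a reduction, excluding trees does not leave cycles as the only candidates: a non-backtracking, tailless closed $k$-walk can wind $k/m$ times around a shorter cycle $C_m$ with $m\mid k$, or trace a lemniscate $C_pC_{k-p}$, and these shapes are not eliminated by your leaf argument. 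The paper instead argues directly that every shape induced by a non-backtracking, tailless closed walk has minimum degree two (Lemma~\ref{w_k^b}), identifies the competing shapes ($C_{h_k}$ and the lemniscates), and shows $C_k$ maximizes $\Psi_F$ among them whenever $n\rho\to\infty$ (Lemma~\ref{non-back:w}, Theorem~\ref{thm:back}), with a separate convexity argument for the power-law regime in the proof of Theorem~\ref{propn:irg_thm_validity}. Your observation that trees are unreachable is correct and is indeed Item~1 of Lemma~\ref{w_k^b}, but by itself it does not yield the second asymptotic claim.
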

 
Corollary~\ref{corrol_main} reveals a sudden shift in dominating regimes between trees and cycles, driven entirely by $\operatorname{\mathbb{P}} \{\textnormal{edge in $G$}\}$.  This shift is scale-dependent and simple to describe: the sparser the network, the more scales are dominated by trees. Fluctuations near the boundary $k^\ast$, and the emergence of a regime dominated by star trees when degrees grow slowly, can be quantified precisely (Supplementary Information, sections~\ref{sec:S3} and~\ref{sec:S5}).  Non-backtracking walks, by contrast, are robust to this phase transition: They cannot backtrack and traverse the same network edge twice in succession, and so cannot trace out shapes containing trees. Instead, Corollary~\ref{corrol_main} shows that non-backtracking walks are dominated by walks tracing out cycles.

Corollary~\ref{corrol_main} also quantifies the behavior of a network's adjacency matrix $A$ relative to its non-backtracking matrix $B$. While $A$ tabulates all paths of length one, $B$ tabulates all non-backtracking paths of length two. It follows in turn that $ \smash{ \E\lambda_A^k = \E\#\{\textnormal{closed $k$-walks in $G$}\} } $, where $\E\lambda_A^k$ is the average value of an eigenvalue of $A$ chosen at random, while $ \smash{ \E\lambda_B^k = \E\#\{\textnormal{non-backtracking closed $k$-walks in $G$}\} } $. Corollary~\ref{corrol_main} pinpoints how $\lambda_A$ shifts regimes as $G$ becomes sparser, whereas $\lambda_B$ remains stable, effectively decoupling from $\operatorname{\mathbb{P}} \{\textnormal{edge in $G$}\}$. This stability explains why $B$ rather than $A$ governs the fundamental limits of network community detection\cite{ahn2010link, Krzakala13, Newman14}, and suggests that its optimality properties may hold much more widely.  

\subsection*{Dominant scales in static and dynamic networks}

Trees and cycles reveal a network's dominant scales, thereby establishing a theoretical basis for network comparison. To implement this comparison, we sample sub-networks at random and count the normalized proportions of trees and cycles present (Supplementary Information, Algorithm~\ref{alg:violins}). We characterize the distributions of these proportions using violin plots\cite{hintze1998violin} (Figs.~\ref{fig:tadpole}b and~\ref{fig:compnets}), to isolate and quantify the heterogeneity of network structure locally at each scale. We determine the necessary size of sub-networks to sample by iteratively adapting to the network under study (Supplementary Information, Algorithm~\ref{alg:sizes}).

\begin{figure}[t]
\vspace{-5\baselineskip}\hskip-1.75cm{\includegraphics[width=1.25\textwidth]{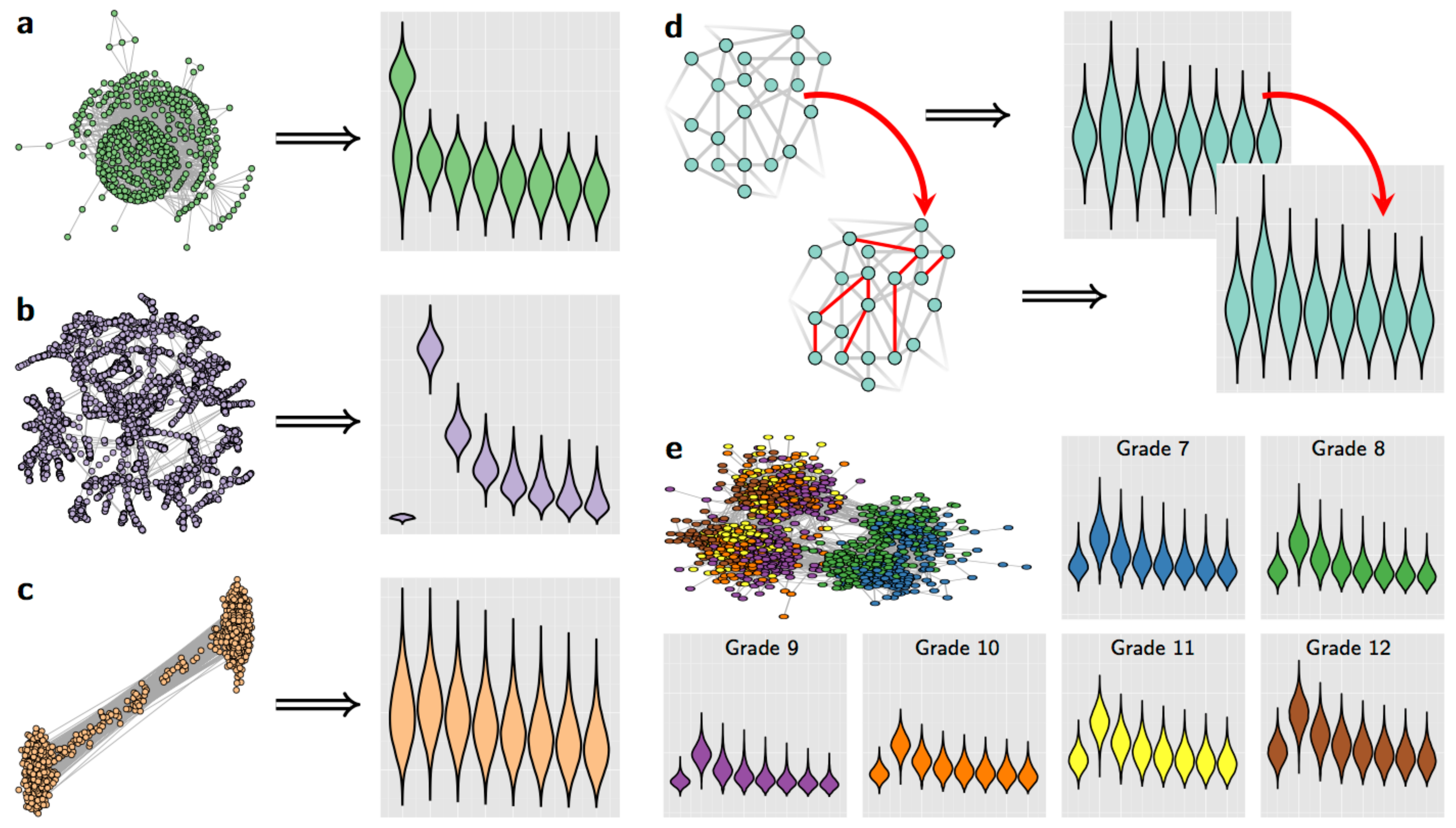}}%
	\caption{\label{fig:compnets}\textbf{Comparing networks and their dominant scales.} \protect\textbf{a}, The \emph{C.~elegans} metabolic network is dominated by tree-like structure characteristic of preferential attachment, which manifests at the smallest scale (cf.~Fig.~\protect\ref{fig:typexamp}a). \protect\textbf{b}, A power grid network, by contrast, exhibits small-world connectivity (cf.~Fig.~\protect\ref{fig:typexamp}b), and is dominated by medium-range scales. \protect\textbf{c}, A network of political weblogs has strong community structure (cf.~Fig.~\protect\ref{fig:typexamp}c), but surprisingly shows relatively uniform contributions from all scales. \protect\textbf{d}, A random graph with triadic closure (\smash{\protect\Textpathtwo} $\!\!\rightarrow\!\!$ \,\smash{\protect\Texttrianglenew}\!) shows a single elevated \smash{\protect\Texttrianglenew}\! scale. \protect\textbf{e}, Adolescent friendship networks evolve across school grades, maintaining triadic closure despite changes in connectivity (see main text).}
\end{figure}

Figure~\ref{fig:compnets} shows a comparative analysis of networks with archetypical features: preferential attachment characteristics\cite{Barabasi99}, small-world connectivity\cite{watts1998small}, community structure\cite{adamic2005political}, triadic closure (\smash{\Textpathtwo} $\!\!\rightarrow\!\!$ \,\smash{\Texttrianglenew}\!)\cite{granovetter1973strength}, and assortative mixing\cite{resnick1997protecting}. Our analysis highlights the fundamental differences in network structure reflected by these features, distinct from differences in network size and sparsity. Not only can we detect and visualize these structural differences in the context of network comparison, but also we can identify changes in networks at the level of their individual scales.

To this end, Figure~\ref{fig:compnets}e compares adolescent friendship networks across different years (grades) in school.  Among high school students (grades 9--12), overall connectivity levels---known in this context as sociality\cite{goodreau2009birds}---increase with grade.  Yet at the same time, dominant scales persist across grades, even in the transition from middle school (grades 7--8) to high school. These dominant scales reveal a strong triadic closure effect (cf.~Fig.~\ref{fig:compnets}d). Previous studies have shown that the aggregate friendship network across grades shows evidence of selective mixing\cite{goodreau2009birds}, with a complex community structure based in part on gender, race, and grade\cite{olhede2013network}. Despite this complexity, we see that scale-based structure in this setting persists and appears robust to the many social changes taking place in adolescence.

\begin{figure}[t]
	\vspace{-5\baselineskip}\hskip-1.75cm{\includegraphics[width=1.25\textwidth]{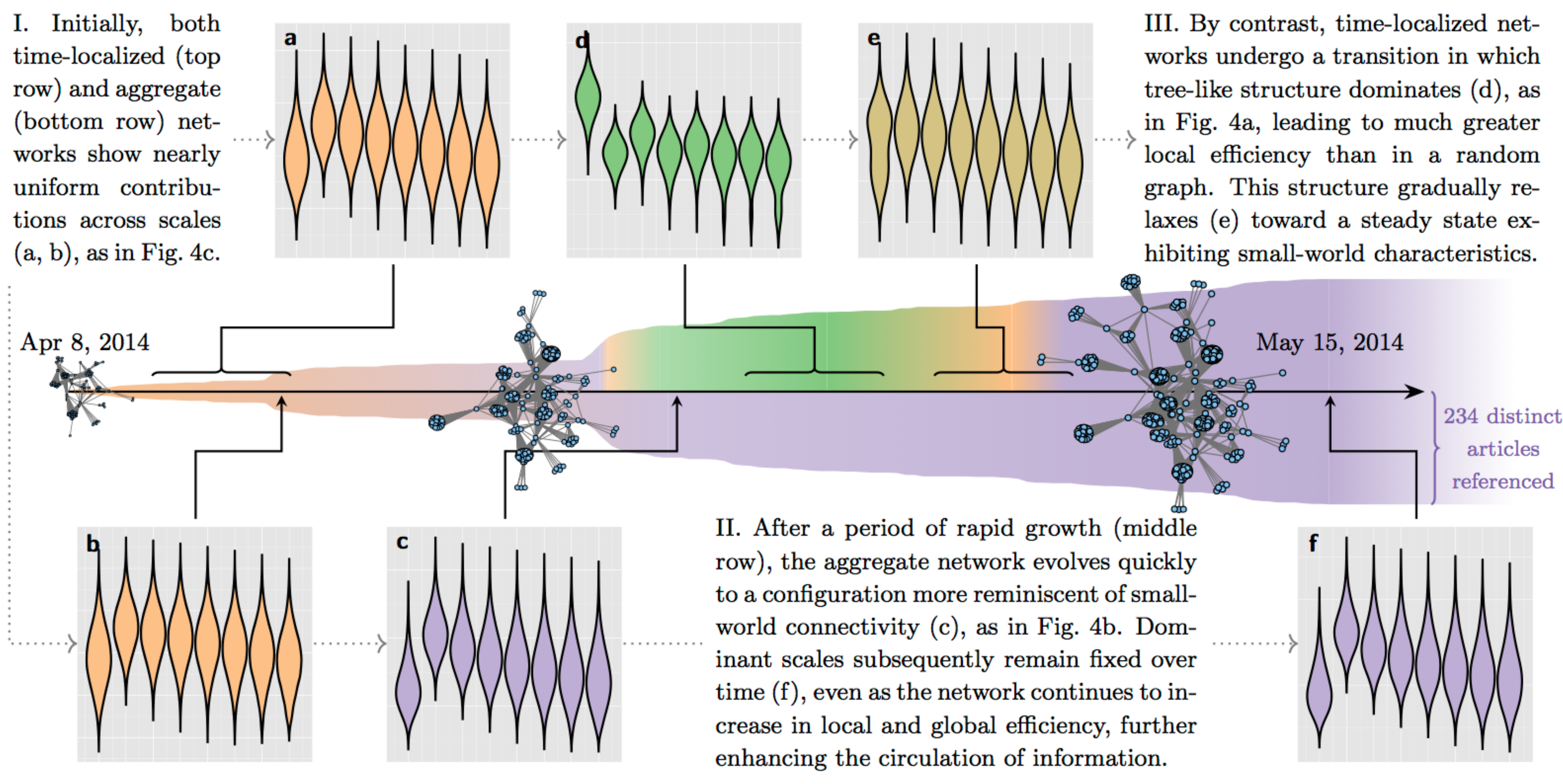}}%
	\caption{\label{fig:twitter}\textbf{Evolution of dominant scales in an online discussion network.} A social media (Twitter) discussion about the release of a new consumer product (iPhone 6 mobile device) begins early in April 2014 as rumors and leaks of the new product are shared. The dynamics of this discussion manifest in time-localized networks (top row), where edges form whenever two users broadcast similar messages within a 4-day window, as well as an aggregate network (middle and bottom rows), where edges form whenever two users refer to the same online news article.}
\end{figure}

Figure~\ref{fig:twitter} shows the rich temporal dynamics of an evolving social media discussion, revealed through the archetypical features of Fig.~\ref{fig:compnets}. Initially, the corresponding discussion network is simply structured and shows relatively uniform contributions across scales (Figs.~\ref{fig:twitter}a and~\ref{fig:twitter}b; cf.~Fig.~\ref{fig:compnets}c). As rumors of a new product and its manufacturer dominate the discussion, a large clique forms and connects to several smaller, nearly disjoint clusters of users. This clique breaks apart after April 15, when leaked product images appear online and are shared across the network (Supplementary Information, section~\ref{sec:S7}). In response, the network converges rapidly to small-world connectivity (Figs.~\ref{fig:twitter}c and~\ref{fig:twitter}f; cf.~Fig.~\ref{fig:compnets}b)---known to circulate information efficiently\cite{latora2001physrevE}---as its first, tree-like scale deprecates.

As the discussion shown in Fig.~\ref{fig:twitter} progresses, we continue to recognize canonical signatures of fundamental network generating mechanisms. A rapid burst of network activity follows the manufacturer's earnings report, released on April 23.  This manifests as a sharp elevation of the first, tree-like network scale indicative of preferential attachment (Fig.~\ref{fig:twitter}d; cf.~Fig.~\ref{fig:compnets}a), along with increases in local and global efficiency\cite{latora2001physrevE} relative to random and lattice graphs\cite{watts1998small}. Star-like nodes appear suddenly and dominate the network structure, then gradually recede as the network consolidates (Fig.~\ref{fig:twitter}e), adding triangles and higher-order cycles to reflect small-world connectivity properties. Overall, as the discussion network evolves through a combination of growth and structural changes, the dynamics of its dominant scales enable us to identify and describe different developmental phases in its life cycle.

\subsection*{Discussion}

Here we have shown how algebraic topology leads naturally to a set of universal features for network comparison. These features provide the first theoretically justified, automated, and scalable means to compare networks in a model-free way. Critical to our discovery is Theorem~\ref{thm:main}, which shows that under very weak assumptions, certain shapes---trees and cycles---dominate at every network scale. These shapes and scales arise from properties of non-backtracking closed walks, answering longstanding open questions in the field of network motif analysis\cite{Milo2004super, alon2007network, BickelLevina2012, gerstein2012architecture, boyle2014comparative}. 

The need to characterize and compare networks is fundamental to many fields, from the physical and life sciences to the social, behavioral, and economic sciences. This need is currently particularly significant in understanding information cascades and spreading online.  From sharing trees\cite{del2016spreading} to the role of cycles in the strength of weak ties\cite{granovetter1973strength}, social networks modulate the diffusion of information and influence. At the same time, advances in machine learning and artificial intelligence mean that the spread of contemporary news is strongly influenced by the uniquely personalized online social context of each individual in a network\cite{bakshy2015exposure}. Consequently, understanding how structural network properties interact with the dynamics of spreading processes is a more timely and important problem now than ever before.

To this end, we have shown, through a comparative analysis of several rich and varied examples, how our discovery enables new insights into the structure and function of of static and dynamic networks. Our approach is sufficiently flexible to permit the comparison of networks with different numbers of nodes and edges, enabling scientists to extract common patterns and signatures from a variety of canonical network types. Surprisingly, we observe that these same signatures appear in the analysis of a social media discussion which evolves through the sharing of rumors, leaks and other types of news.  This allows us to observe and quantify, in a way never before possible, the direct and strong symbiosis between the dynamics of an evolving network and the ways in which information spreads within it. The discoveries we report are a crucial first step in disentangling the features that make networks similar or different from those that facilitate the diffusion of information.

\putbib[motif]
\end{bibunit}

\medskip

\textbf{Supplementary Information} begins on the following page.

\medskip

\textbf{Acknowledgements} This work was supported in part by the US Army Research Office under Multidisciplinary University Research Initiative Award 58153-MA-MUR; by the US Office of Naval Research under Award N00014-14-1-0819; by the UK Engineering and Physical Sciences Research Council under Mathematical Sciences Leadership Fellowship EP/I005250/1, Established Career Fellowship EP/K005413/1, Developing Leaders Award EP/L001519/1, and Award EP/N007336/1; by the UK Royal Society under a Wolfson Research Merit Award; and by Marie Curie FP7 Integration Grant PCIG12-GA-2012-334622 and the European Research Council under Grant CoG 2015-682172NETS, both within the Seventh European Union Framework Program. The authors thank FSwire for making available the data used to produce Figs.~\ref{fig:tadpole}b and~\ref{fig:twitter}. This work was partially supported by a grant from the Simons
Foundation, and the authors simultaneously acknowledge the Isaac Newton Institute for Mathematical Sciences, Cambridge, UK, for support and hospitality during the program Theoretical Foundations for Statistical Network Analysis (Supported by EPSRC award EP/K032208/1) where a portion of the work on this article was undertaken.



\medskip

\textbf{Author Contributions} All authors contributed to all aspects of the paper.

\medskip

\textbf{Author Information} The authors declare no competing financial interests. Correspondence should be addressed to P.J.W.\ (p.wolfe@ucl.ac.uk).


\newpage
\renewcommand\refname{}
\bibpunct{[}{]}{,}{n}{}{;}
\begin{bibunit}[nature]
\setcounter{figure}{0}%
\renewcommand{\thefigure}{S\arabic{figure}}%
\setcounter{table}{0}%
\renewcommand{\thetable}{S\arabic{table}}%
\setcounter{equation}{0}%
\renewcommand{\theequation}{S.\arabic{equation}}
\setcounter{section}{0}%
\renewcommand{\thesection}{S.\arabic{section}}
\setcounter{Theorem}{0}%
\renewcommand{\theTheorem}{S.\arabic{Theorem}}
\setcounter{Assumption}{0}%
\renewcommand{\theAssumption}{S.\arabic{Assumption}}
\setcounter{Remark}{0}%
\renewcommand{\theRemark}{S.\arabic{Remark}}
\setcounter{Lemma}{0}%
\renewcommand{\theLemma}{S.\arabic{Lemma}}
\setcounter{Proposition}{0}%
\renewcommand{\theProposition}{S.\arabic{Proposition}}
\setcounter{Definition}{0}%
\renewcommand{\theDefinition}{S.\arabic{Definition}}
\setcounter{Corollary}{0}%
\renewcommand{\theCorollary}{S.\arabic{Corollary}}
\newpage

\begin{center}
\begin{minipage}[t][13\baselineskip][t]{\textwidth}%
\begin{center}
{{\LARGE{\vspace{-3\baselineskip}Supplementary Information: \\ Topology reveals universal features for network comparison\vspace{-0.67\baselineskip}}\\%
\ \\%
{\large{Pierre-Andr\'e G.\ Maugis, Sofia C.\ Olhede \& Patrick J.\ Wolfe}}%
}}\\\ \\ {\large{\vspace{-0.67\baselineskip}University College London}}%
\end{center}%
\end{minipage}%
\end{center}

\setcounter{section}{0}
\renewcommand{\thesection}{\arabic{section}}

\vspace{-12\baselineskip}%
\tableofcontents

\newpage

\section{Introduction}\label{sec:S-1}
%
In this Supplementary Information we provide proofs of all results from the main text, along with details of the corresponding methods, algorithms, and datasets. It is written as a self-contained document, with the remainder of this introductory section relating the results and notation featured here to the main text.

A key driver of our proofs comes via the introduction of a walk extension, which in turn permits us to compare the prevalence of any two walks of the same length (Fig.~\ref{fig:walks}, main text). We define as ``simplest'' precisely those walks that cannot be extended any further (Fig.~\ref{fig:extension}, main text). We then show that the simplest walks dominate other walks in terms of their expected prevalence. By the argument of Fig.~\ref{fig:tadpole} in the main text, the simplest closed walks turn out to be those mapping out either trees or cycles at maximal scales. Together these results lead to Theorem~\ref{thm:main} in the main text, and are obtained in two steps. First, after providing basic definitions in Section~\ref{sec:S0} of this Supplemental Information, we introduce the general framework and necessary preliminary results in Section~\ref{sec:S1}. Then, in Section~\ref{sec:S2}, we prove that under suitable conditions, the total number of closed walks in a network is dominated in expectation by walks inducing trees or cycles. 

We then show additional results which apply if we assume more about the network of interest, leading to Corollary~\ref{corrol_main} in the main text. In Sections~\ref{sec:S3} and~\ref{sec:S5} of this Supplemental Information, we consider the setting of generalized random graphs~\citep{bollobas2007phase}, generated either from a bounded kernel or from an unbounded kernel giving rise to a power-law degree distribution. Applying Theorem~\ref{thm:main} in these settings leads to the following two propositions, which hold respectively for the two large families of graphs described by Definitions~\ref{model} and~\ref{powerlaw} in Sections~\ref{sec:S3} and~\ref{sec:S5}.

\begin{Proposition}\label{grg-prop}
Let $\{G\}$ be a sequence of generalized random graphs generated from a bounded, symmetric kernel $\smash{ \kappa: (0,1)^2 \to [0,\infty) }$, with $\| \kappa \|_1 = 1$ and $\#\{\textnormal{nodes in $G$}\}\to\infty$. Assume edges in $G$ form independently, conditionally upon a random sample $\xi$ of $\operatorname{Uniform}(0,1)$ variates, and that there exists a sequence $\{ \operatorname{\mathbb{P}} \{\textnormal{edge in $G$}\} \}$ taking values in $ \smash{ (0, \| \kappa \|_{\infty}^{-1} ) } $ such that
\begin{equation*}
\mathbb{P}\{\textnormal{edge connecting nodes $i$ and $j$ in $G \,\vert\, \xi$}\} = \operatorname{\mathbb{P}} \{\textnormal{edge in $G$}\} \cdot \kappa(\xi_i,\xi_j).
\end{equation*}
Then for any closed $k$-walk $w$, we recover equation~\eqref{eq:nb-of-copies}, with
\begin{multline*}
\log \operatorname{\mathbb{P}} \{\textnormal{local copy of $w$ in $G$}\} = \textnormal{const}\\ - \#\{\textnormal{edges traversed by $w$}\} \cdot |\log\operatorname{\mathbb{P}} \{\textnormal{edge in $G$}\}|,
\end{multline*}
and if $\#\{\textnormal{nodes in $G$}\} \cdot \operatorname{\mathbb{P}} \{\textnormal{edge in $G$}\}\to\infty$, then Theorem~\ref{thm:main} applies.
\end{Proposition}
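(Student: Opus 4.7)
The plan is to establish both claims by direct moment computation that exploits the conditional independence of edges in the generalized random graph given the latent coordinates $\xi_1,\ldots,\xi_n$. For a closed $k$-walk $w$ with $v(w)$ distinct nodes and $e(w)$ distinct edges, I would write the expected count of copies of $w$ in $G$ as a sum over injective maps $\phi : V(w) \to V(G)$ of the probability that every edge traversed by $w$ is realized under $\phi$. Conditionally on $\xi$, these edge indicators are independent Bernoulli variables with success probability $\operatorname{\mathbb{P}}\{\textnormal{edge in $G$}\}\cdot\kappa(\xi_{\phi(i)},\xi_{\phi(j)})$, and exchangeability of $\xi$ then factorizes the expectation as
\begin{equation*}
\E\#\{\textnormal{copies of $w$ in $G$}\} = \frac{n!}{(n-v(w))!}\cdot\operatorname{\mathbb{P}}\{\textnormal{edge in $G$}\}^{e(w)}\cdot C_w,
\end{equation*}
where $C_w := \int_{(0,1)^{v(w)}}\prod_{(i,j)\in E(w)}\kappa(\xi_i,\xi_j)\,d\xi$ depends only on the shape of $w$. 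Taking logarithms and using $\log\bigl(n!/(n-v(w))!\bigr) = v(w)\log n + O(1)$ recovers equation~\eqref{eq:nb-of-copies} with $\textnormal{const}_w$ absorbing $\log C_w$ and the $O(1)$ combinatorial factor.

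Applying the identical computation to the induced subgraph $G[u]$ on a uniform subset $u$ of any fixed size $s\geq v(w)$ gives $\E\#\{\textnormal{copies of $w$ in $G[u]$}\} = (s!/(s-v(w))!)\cdot\operatorname{\mathbb{P}}\{\textnormal{edge in $G$}\}^{e(w)}\cdot C_w$, while $\#\{\textnormal{copies of $w$ in $K[u]$}\} = s!/(s-v(w))!$ since every injective map is admissible in the complete graph on $u$. The combinatorial prefactor cancels in the defining ratio, leaving
\begin{equation*}
\operatorname{\mathbb{P}}_{\!\!\!\mathcal{M}}\{\textnormal{local copy of $w$ in $G$}\} = \operatorname{\mathbb{P}}\{\textnormal{edge in $G$}\}^{e(w)}\cdot C_w,
\end{equation*}
whose logarithm is exactly the formula claimed in the proposition, with additive constant $\log C_w$.

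To verify the hypothesis of Theorem~\ref{thm:main}, I would then compare an arbitrary extension $w'$ of $w$ to $w$ itself. An extension adds one node and at most one new distinct edge, so $e(w') - e(w) \in \{0,1\}$ and the previous display gives
\begin{equation*}
n \cdot \frac{\operatorname{\mathbb{P}}_{\!\!\!\mathcal{M}}\{\textnormal{local copy of $w'$ in $G$}\}}{\operatorname{\mathbb{P}}_{\!\!\!\mathcal{M}}\{\textnormal{local copy of $w$ in $G$}\}} = \frac{C_{w'}}{C_w}\cdot n\cdot\operatorname{\mathbb{P}}\{\textnormal{edge in $G$}\}^{e(w')-e(w)}.
\end{equation*}
Under the standing assumption $n\cdot\operatorname{\mathbb{P}}\{\textnormal{edge in $G$}\}\to\infty$ the right-hand side diverges in either case (order $n$ when no edge is added, order $n\cdot\operatorname{\mathbb{P}}\{\textnormal{edge in $G$}\}$ when one is), so any valid extension witnesses the condition and Theorem~\ref{thm:main} applies. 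The main bookkeeping obstacle is ensuring that the kernel ratio $C_{w'}/C_w$ is bounded away from zero uniformly over the finitely many walk shapes at each fixed scale $k$, which follows from $\|\kappa\|_\infty<\infty$ (finiteness of $C_w$) together with positivity of $\kappa$ on a set of positive measure, forced by $\|\kappa\|_1 = 1$.
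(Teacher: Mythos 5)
Your argument is essentially the paper's own: Remark~\ref{remark:kernelver}, built on Lemmas~\ref{exp-XFG} and~\ref{lem:sampling} and the identity~\eqref{ind-const}, yields exactly your factorization $\operatorname{\mathbb{P}}_{\!\!\!\mathcal{M}} \{\textnormal{local copy of $w$ in $G$}\} = \rho^{|e(F_w)|}\, s(F_w,\kappa)$ (your $C_w$ is the kernel embedding density $s(F_w,\kappa)$ of Definition~\ref{embedding}; the walk-versus-embedding counting factors cancel in the ratio, as you note), and then verifies Assumption~\ref{asump1.3} through the same ratio $n\,\rho^{|e(F_{w'})|-|e(F_w)|}\, s(F_{w'},\kappa)/s(F_w,\kappa)=\Omega(n\rho)$, so that Theorem~\ref{thm:dominant-walk} (Theorem~\ref{thm:main} of the main text) applies once $n\rho\to\infty$. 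The one soft spot is your closing claim that $\|\kappa\|_1=1$ plus boundedness forces $C_w>0$ for every walk shape---this is false in general (a bipartite-type kernel gives $C_w=0$ for any shape containing an odd cycle)---but since Assumption~\ref{asump1.3} only quantifies over walks with $\E\varphi(w,G_n)>0$ and the paper's own remark likewise asserts ``$s\neq 0$ by construction'' without further argument, this caveat does not separate your route from the paper's.
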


\begin{Proposition}\label{powlaw-prop}
Let $\{G\}$ be a sequence of inhomogeneous random graphs generated from a rank one kernel yielding power-law degrees, with $\#\{\textnormal{nodes in $G$}\}\to\infty$. Specifically, assume that edges in $G$ form independently, and that there exists a constant $\gamma \in (0,1)$ and a monotone sequence $\{\theta\}$ taking values in $(0,1]$ such that
\begin{equation*}
\smash{ \mathbb{P}\{\textnormal{edge connecting nodes $i$ and $j$ in $G$}\} = \theta^2 \cdot (ij)^{-\gamma} } .
\end{equation*}
Then $\operatorname{\mathbb{P}} \{\textnormal{edge in $G$}\} \sim (1-\gamma)^{-2} \theta^2 \cdot \#\{\textnormal{nodes in $G$}\}^{-2\gamma}$, and if the average degree tends to infinity faster than $\#\{\textnormal{nodes in $G$}\}^\gamma$, then Theorem~\ref{thm:main} applies, whence for degrees $d_1,\dots , d_v$ of the shape traced out by any closed $k$-walk $w$,
\begin{multline}
\log \operatorname{\mathbb{P}} \{\textnormal{local copy of $w$ in $G$}\}
\sim - \textstyle{\sum}_{t=1}^v \min(1,d_t\gamma) \cdot \log \#\{\textnormal{nodes in $G$}\} 
\\ - \smash{ \#\{\textnormal{edges traversed by $w$}\} \cdot | \log\theta^2 | } . \label{eq:powerlaw_prob}
\end{multline}
\end{Proposition}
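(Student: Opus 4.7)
The plan is to establish the three claims in sequence: first the marginal edge probability, then the asymptotic formula for $\operatorname{\mathbb{P}}\{\textnormal{local copy of } w \textnormal{ in } G\}$, and finally the verification of hypothesis~\eqref{assumption-main} so that Theorem~\ref{thm:main} applies.

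First I would obtain the marginal edge probability by averaging the rank-one kernel $\theta^2(ij)^{-\gamma}$ over all unordered pairs in $\{1,\dots,n\}$ with $n=\#\{\textnormal{nodes in } G\}$. Because $\gamma\in(0,1)$, a Riemann-sum comparison gives $\sum_{i=1}^n i^{-\gamma} = n^{1-\gamma}/(1-\gamma) + O(1)$, whence $\sum_{i<j}(ij)^{-\gamma}$ is asymptotic to $\tfrac12 n^{2-2\gamma}/(1-\gamma)^2$. Dividing by $\binom{n}{2}$ yields the claimed $\operatorname{\mathbb{P}}\{\textnormal{edge in } G\}\sim(1-\gamma)^{-2}\theta^2 n^{-2\gamma}$.

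Second, to obtain equation~\eqref{eq:powerlaw_prob}, let $w$ trace a shape $H$ on $v$ vertices with $e$ edges and degrees $d_1,\dots,d_v$. Edge independence together with the definition of local copy writes $\operatorname{\mathbb{P}}\{\textnormal{local copy of } w\}$ as an average over uniform injective labelings $\phi:V(H)\to\{1,\dots,n\}$ of the product $\prod_{(s,t)\in E(H)} \theta^2 (\phi(s)\phi(t))^{-\gamma}$. Collecting exponents by vertex (so that the label $\phi(t)$ acquires total exponent $-d_t\gamma$) and relaxing the injective constraint (a lower-order correction since $v$ is fixed and $n\to\infty$), this becomes
\begin{equation*}
\operatorname{\mathbb{P}}\{\textnormal{local copy of } w \textnormal{ in } G\} \asymp \theta^{2e}\, n^{-v}\, \prod_{t=1}^v \sum_{i=1}^n i^{-d_t\gamma}.
\end{equation*}
Each inner sum splits into three regimes via Riemann comparison: if $d_t\gamma<1$ it is $\sim n^{1-d_t\gamma}/(1-d_t\gamma)$; if $d_t\gamma>1$ it converges to $\zeta(d_t\gamma)$; and if $d_t\gamma=1$ it is $\sim\log n$. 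In every case $\log\sum_i i^{-d_t\gamma} = (1-\min(1,d_t\gamma))\log n + O(\log\log n)$. Summing the logarithms and cancelling the $v\log n$ from the normalization against the $v$ factors of $\log n$ in the product yields precisely equation~\eqref{eq:powerlaw_prob}.

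Third, I verify hypothesis~\eqref{assumption-main}. Given a non-fully-extended closed $k$-walk $w$ and an extension $w'$ from Section~\ref{sec:extensions}, the shape $H_{w'}$ adds one new vertex of degree $d^\ast\in\{1,2\}$ and adjusts the edge count by $\delta e\in\{0,1\}$; by the handshake identity the degrees at the (at most three) affected old vertices shift by a total of $2\delta e - d^\ast$. Substituting the formula just proved and simplifying,
\begin{equation*}
\log\!\bigl(n\cdot \operatorname{\mathbb{P}}\{\textnormal{local copy of } w'\}/\operatorname{\mathbb{P}}\{\textnormal{local copy of } w\}\bigr) = \bigl(1-\min(1,d^\ast\gamma)\bigr)\log n - \delta e\cdot|\log\theta^2| + R,
\end{equation*}
where $R$ collects the bounded contribution from shifts at old vertices. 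The hypothesis that the average degree $\asymp\theta^2 n^{1-2\gamma}$ grows faster than $n^\gamma$ translates to $\theta^2 n^{1-3\gamma}\to\infty$; combined with $d^\ast\le 2$ and a careful choice of extension (guaranteed to exist since $w$ is not fully extended), this makes the displayed right-hand side diverge to $+\infty$, so hypothesis~\eqref{assumption-main} holds and Theorem~\ref{thm:main} may be invoked.

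The main obstacle is the third step. Because $\min(1,d\gamma)$ is piecewise linear with a kink at $d\gamma=1$, the log-probability is sensitive to which side of the threshold each degree lies on, so the case analysis must enumerate the joint configurations of $d^\ast$, $\delta e$, and the directions of the shifts at the affected old vertices. Choosing, for each non-fully-extended $w$, an extension $w'$ for which the leading $(1-\min(1,d^\ast\gamma))\log n$ term is not cancelled by $\delta e\cdot|\log\theta^2|$ or by $R$, under the single hypothesis on average-degree growth, is the delicate part of the argument; by contrast, the first two steps reduce to standard Riemann-sum asymptotics.
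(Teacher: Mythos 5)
Your overall route is the same as the paper's: the edge probability and the logarithm of the local-copy probability come from power-sum asymptotics (the paper carries this out exactly, with constants and error rates, in Proposition~\ref{ihrg-ptop} and Corollary~\ref{cor:irg_walk_dens}), and condition~\eqref{assumption-main} is then checked on the explicit extensions of Lemma~\ref{extens-prop} (the paper's Theorem~\ref{propn:irg_thm_validity}). Two steps need repair. The lighter one is in your second step: dropping the injectivity constraint is \emph{not} a lower-order correction when two or more degrees satisfy $d_t\gamma>1$ --- the coincidence terms are then of the same (constant) order as the main term, which is exactly why the paper evaluates the constrained sums $S_\gamma(d_1,\dots,d_v)$ inductively. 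For the log-asymptotics you only need a matching positive lower bound of the same polynomial order (e.g.\ place the heavy vertices on distinct labels in $\{1,\dots,v\}$ and the light vertices on labels in $(n/2,n]$), but that bound must be supplied; the ``$\asymp$'' does not come for free.

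The substantive gap is in your third step. The shifts at the old vertices do not enter through a bounded remainder $R$: they change $\sum_t \min(1,d_t\gamma)$, which multiplies $\log n$. In the extension cases of Lemma~\ref{extens-prop} where an old vertex gains one degree (Cases 2 and 3), the ratio loses an additional $\gamma\log n$, so the correct worst case (new vertex of degree $2$, one old vertex up by one, one extra edge) requires $1-3\gamma-2\beta>0$ with $\theta^2=n^{-2\beta}$ --- precisely ``average degree $\gg n^\gamma$'', and precisely the paper's bound $\lambda-\lambda'\geq-3\gamma$ in the proof of Theorem~\ref{propn:irg_thm_validity}. If $R$ really were $O(1)$, your displayed identity would yield condition~\eqref{assumption-main} under the weaker requirement $1-2\gamma-2\beta>0$ (mean degree merely diverging polynomially), and that conclusion is false: the paper shows that in the intermediate regime $(1-\gamma)/2<\beta+\gamma<1/2$ the dominating shapes become stars and star-like graphs rather than all trees and the cycle. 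You do flag the case analysis at the old vertices as the delicate step, and under your stated (stronger) hypothesis the conclusion survives once the $\Theta(\log n)$ accounting replaces $R$; but as written the identity for $\log\bigl(n\,\operatorname{\mathbb{P}}\{\textnormal{local copy of } w'\}/\operatorname{\mathbb{P}}\{\textnormal{local copy of } w\}\bigr)$ is not correct, and fixing it is exactly where the $n^\gamma$ threshold in the statement comes from.
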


To prove these propositions, as well as Corollary~\ref{corrol_main} in the main text, we proceed as follows. In Section~\ref{sec:S3}, we prove for kernel-based random graphs (also known as inhomogeneous or generalized random graphs) that closed walks are dominated in expectation by walks inducing either trees or cycles, with a phase transition between the two regimes. We additionally prove that non-backtracking closed walks are dominated in expectation by walks inducing cycles. We derive error rates, making these results more precise than the general results of Section~\ref{sec:S2}, which are based only on an assumption about how walk extensions scale. Next, we show in Section~\ref{sec:S5} that for inhomogeneous random graphs with a power-law degree distribution, closed walks are again dominated in expectation by walks inducing either trees or cycles when Theorem~\ref{thm:main} applies, and that non-backtracking closed walks are again dominated by those inducing cycles. This provides a set of developments parallel to Section~\ref{sec:S3}, and establishes the above propositions as well as Corollary~\ref{corrol_main} in the main text. We also establish dominant walks outside the setting of Theorem~\ref{thm:main}, showing when stars and star-like graphs containing cycles will dominate. Finally, in Section~\ref{sec:S6} we detail the methods and algorithms we develop to make use of these results, and in Section~\ref{sec:S7} we provide details of the network datasets we analyze in the main text.

To facilitate mapping results from the main text to the notation of this Supplementary Information, we detail the following relationship.

\begin{description}
\item[Definitions of walk densities and extensions] The term introduced in the main text as $\smash{\operatorname{\mathbb{P}}_{\!\!\!\mathcal{M}} \{\textnormal{local copy of $w$ in $G$}\}}$ corresponds to the graph walk density of Definition~\ref{varphi}. The main text definition of a walk extension corresponds to Definition~\ref{def:extens}. We give the basic properties of the graph walk density and of walk extensions in Lemma~\ref{lem:sampling} and in Remark~\ref{remark:extension}.
\item[Theorem~\ref{thm:main}]
We prove Theorem~\ref{thm:main} in Section~\ref{sec:S2} as Theorem~\ref{thm:dominant-walk}. Proving the theorem requires Proposition~\ref{w-k-ast-prop} from Section~\ref{sec:S2} and Lemma~\ref{extens-prop} from Section~\ref{sec:S1}. The proof is made under Assumptions~\ref{asump1.1} and~\ref{asump1.3}. The latter assumption is equivalent to the condition of~\eqref{assumption-main} in the main text.
\item[Proposition~\ref{grg-prop}] This proposition is a consequence of Remark~\ref{remark:kernelver}. We show in the remark that the conditions of Theorem~\ref{thm:main} are satisfied for generalized random graphs with bounded kernels. This then implies that the proposition holds for generalized random graphs that are sufficiently dense and arise from bounded kernels.
\item[Proposition~\ref{powlaw-prop}] We prove this proposition in Section~\ref{sec:S5} as Proposition~\ref{ihrg-ptop}. The proof relies on Theorem~\ref{thm:main}.
\item[Corollary~\ref{corrol_main}, part 1] We prove this result by splitting it into two cases. First, for the case of generalized random graphs with bounded kernels, we show the result via Theorem~\ref{thm:phase} in Section~\ref{sec:S3}. Its proof relies on Lemmas~\ref{exp-ind} and~\ref{dominantscaling}. Second, for an inhomogeneous random graph with an unbounded but separable kernel, we show the result via Proposition~\ref{ihrg-ptop}.
\item[Corollary~\ref{corrol_main}, part 2] We also prove this result in two parts. First, for generalized random graphs with bounded kernels, we establish the result via Theorem~\ref{thm:back} in Section~\ref{sec:S3}. Its proof requires Lemma~\ref{non-back:w}. Second, for an inhomogeneous random graph with an unbounded but separable kernel, we show the result as above via Proposition~\ref{ihrg-ptop}.
\item[Methods and algorithms] The methods and algorithms we use for all data analysis in the main text are described in Section~\ref{sec:S6}.
\item[Network datasets] The datasets we study are described in Section~\ref{sec:S7}.
\end{description}

Finally, Table~\ref{tabby} overleaf shows how the notation in this Supplementary Information maps to the notation we use in the main text.

\begin{table}
\begin{center}
\begin{tabular}{ >{\centering\arraybackslash}p{5cm} >{\centering\arraybackslash}m{5cm} >{\centering\arraybackslash}c}
{\bf Main text notation} & {\bf Supplementary notation} & {\bf Definition}\tabularnewline
\hline
Shape traced out by walk $w$ & $F_w$ & \ref{closedalkdef} \tabularnewline
$\#\{\textnormal{nodes visited by $w$}\} $ & $v(F_w)$ & - \tabularnewline
$\#\{\textnormal{edges traversed by $w$}\} $ & $e(F_w)$ & - \tabularnewline
$\#\{\textnormal{nodes}\textnormal{ in $G$}\} $ & $n$ & - \tabularnewline
$\mathbb{P}\{\textnormal{edge in }G\} $ & $\rho$ & - \tabularnewline
Average network degree & $\mu(1-n^{-1})$ & - \tabularnewline
Set of closed $k$-walks in $G$ & $ \mathcal{W}_k(G) $ & \ref{closedwalks} \tabularnewline
$\#\{\textnormal{closed $k$-walks in $G$}\}$ & $\left| \mathcal{W}_k(G) \right|$ & - \tabularnewline
\begin{minipage}[c][1cm][c]{5cm}\centering
	Set of unlabeled subgraphs\\[-.2\baselineskip]
	induced by closed $k$-walks\end{minipage}& $W_k$ &\ref{closedalkdef}\tabularnewline
Subgraph extension & $\vartriangleleft_k$ &\ref{def:extens} \tabularnewline
$\#\{\textnormal{copies}\textnormal{ of $w$ in $G$}\} $ & $\mathrm{ind}_k(F_w,G)$ & \ref{defnocop} \tabularnewline
$\#\{\textnormal{copies}\textnormal{ of $F$ in $G$}\} $ & $X_F(G)$ & - \tabularnewline
$\operatorname{\mathbb{P}} \{\textnormal{local copy of $w$ in $G$}\}$ & $\E\varphi(w,G)$ & \ref{varphi} \tabularnewline
Kernel & $\kappa$ & \ref{kernel} \tabularnewline
\begin{minipage}[c][1cm][c]{5cm}\centering
	Generalized\\[-.25\baselineskip]
	random graph\end{minipage} & $G(n,\rho\kappa)$ &  \ref{model}\tabularnewline
\begin{minipage}[c][1cm][c]{5cm}\centering
	\#\{\textnormal{closed $k$-walks visiting}\\[-.2\baselineskip]
	\textnormal{a $k$-cycle in $G$}\}\end{minipage} & $\left|\{w\in\mathcal{W}_k(G) : F_w\equiv C_k\}\right|$ & - \tabularnewline
\begin{minipage}[c][1cm][c]{5cm}\centering
	\#\{\textnormal{closed $k$-walks visiting}\\[-.2\baselineskip]
	\textnormal{any $(k/2+1)$-tree in $G$}\}\end{minipage} & $\left|\{w\in\mathcal{W}_k(G) : F_w \in \mathcal{T}_{k/2+1}\}\right|$ & -\tabularnewline
\begin{minipage}[c][1cm][c]{5cm}\centering
	Set of non-backtracking\\[-.2\baselineskip]
	closed $k$-walks in $G$\end{minipage}&$\mathcal{W}_k^{_b}(G)$ & \ref{def-non-back} \tabularnewline
\begin{minipage}[c][1cm][c]{5cm}\centering
	\#\{\textnormal{non-backtracking closed}\\[-.2\baselineskip]
	\textnormal{$k$-walks in $G$}\} \end{minipage}&$\left| \mathcal{W}_k^{_b}(G) \right|$ & - \tabularnewline
\begin{minipage}[c][1.5cm][c]{5cm}\centering
	Set of unlabeled subgraphs\\[-.2\baselineskip]
	induced by non-backtracking,\\[-.2\baselineskip]
	tailless closed walks\end{minipage}& $\smash{W_k^{_b}}$ &\ref{defWkb}\tabularnewline
\begin{minipage}[c][1.5cm][c]{5cm}\centering
	Number of non-backtracking,\\[-.2\baselineskip]
	tailless closed $k$-walks\\[-.2\baselineskip]
	inducing copy of $F$\end{minipage} &$\mathrm{ind}_k^b(F,G)$ & \ref{defnocopb}\tabularnewline
\hline
\end{tabular}
\end{center}
\caption{\label{tabby}\textbf{Notation used in the main text and Supplementary Information}}
\end{table}

\newpage

\section{Preliminaries}\label{sec:S0}
\subsection{Notation}

We begin by providing basic graph-theoretic definitions needed for our analysis. All graphs throughout are assumed to be finite and simple (unweighted, undirected, and without self-loops).
\begin{enumerate}
\item We write $G$ for a simple graph with vertex set $v(G)$ and edge set $e(G)$. Often we write $n = |v(G)|$.
\item We write $F\subset G$ if $F$ is a subgraph of $G$; i.e., $v(F)\subset v(G)$ and $e(F)\subset e(G)$.
\item We say that two simple graphs $G$ and $G'$ are isomorphic and write $G\equiv G'$ if there exists a bijection $\Phi \colon v(G)\to v(G')$ such that $pq\in e(G) \Leftrightarrow \Phi(p)\Phi(q)\in e(G')$.
\item By a labeled graph, we mean any finite, simple graph $G$. By an unlabeled graph, we mean an element of the set of isomorphism classes of finite simple graphs (or a representative thereof).
\item We write $\mathrm{aut}(G)$ for $\left| \left\{ \Phi \in \mathrm{Sym}\left( v(G) \right) : pq\in e(G) \Leftrightarrow \Phi(p)\Phi(q)\in e(G) \right\} \right|$, the order of the automorphism group of $G$; i.e., the number of adjacency-pres\-erving permutations of $v(G)$.
\item We denote by $\mathrm{emb}(F,G)$ for the number of embeddings (injective homomorphisms) of $F$ into $G$; i.e., the number of labeled copies of $F$ in $G$.
\item We denote by $X_F(G) = \left| \left\{ F'\subset G : F'\equiv F \right\} \right| = \mathrm{emb}(F,G) / \mathrm{aut}(F) $ the number of subgraphs of $G$ isomorphic to $F$; i.e., the number of unlabeled or isomorphic copies of $F$ in $G$.
\item We denote by $K_n$ the complete graph on $n$ vertices (i.e., with $\smash{\tbinom{n}{2}}$ edges).
\item A $k$-cycle $C_k$ is the cycle graph on $k$ vertices.
\item A $k$-tree is any tree on $k$ vertices; i.e, any connected graph on $k$ vertices without cycles, or equivalently with $k-1$ edges. We write $\mathcal{T}_k$ for the set of all unlabeled $k$-trees.
\item A $k$-path $P_k$ is the $k$-tree containing $k-2$ vertices of degree two and $2$ vertices of degree one (its endpoint vertices or leaves), for $k\geq 2$.  For $k=1$, $P_1 = K_1$ (the singleton graph).
\item A $(k,l-1)$-tadpole $C_k P_l$ is the graph obtained by joining $C_k$ to $P_l$ by identifying a single vertex. It is also known in the literature as a balloon graph (and sometimes even as a dragon, kite, canoe paddle or lollipop graph, though often the last of these refers instead to a clique joined to a path).
\item A $(k,l)$-lemniscate $C_k C_l$ is the graph obtained by joining $C_k$ to $C_l$ by identifying a single vertex. It is also known in the literature as a bouquet or flower graph.
\end{enumerate}

\subsection{Counting closed walks}
We begin by grouping closed walks of a given length $k$ according to the subgraphs they induce. We now introduce the sets $\mathcal{W}_k(G)$ and $W_k$ necessary to implement such a grouping.

\begin{Definition}[The set $\mathcal{W}_k(G)$ of closed $k$-walks in a simple graph $G$]\label{closedwalks}
Fix a simple graph $G$ and $k\in \mathbb{N}$. A walk $w$ of length $k$ in $G$ is a sequence of adjacent vertices in $G$:
\begin{equation*}
w = v_0 v_1 \cdots v_k,
\end{equation*}
where $ \cup_{i=0}^k \{v_i\} \subset v(G) $ and $ \cup_{i=0}^{k-1} \{ v_i v_{i+1} \} \subset e(G) $. If $v_0 = v_k$ then the walk is closed. We denote by $\mathcal{W}_k(G)$ the set of all closed $k$-walks in a given graph $G$ and write $|w|$ for the length of $w$.
\end{Definition}

If $A(G)$ is the adjacency matrix of a simple graph $G$, then $ \left| \mathcal{W}_k(G) \right| = \mathrm{Tr}(A(G)^k) $ for any $k\in \mathbb{N}$.

\begin{Definition}[Walk-induced subgraphs $F_w \subset G$ and the set $W_k$ of unlabeled subgraphs induced by closed $k$-walks]\label{closedalkdef}
Fix a simple graph $G$ and a walk $w$ in $G$. We call $F_w \subset G$ the labeled subgraph of $G$ induced by the edges traversed by $w$; i.e., the labeled subgraph with vertex set $v(F_w) = \cup_{i=0}^k \{v_i\}$ and edge set $e(F_w) = \cup_{i=0}^{k-1} \{ v_i v_{i+1} \}$. We denote by $W_k$ the set of all unlabeled graphs induced by closed walks of length $k$:
\begin{equation*}
W_k = \{F\subset K_k : \exists w\in\mathcal{W}_k(K_k) \textnormal{ s.t. } F = F_w\} \mathbin{/} \equiv.
\end{equation*}
\end{Definition}

Thus, for any fixed $k$, $W_k$ is a subset of the set of isomorphism classes of finite simple graphs. When enumerating its elements, we will implicitly choose a representative of the corresponding equivalence class for each element, so that we may treat each $ F \in W_k $ as an arbitrarily labeled graph.

\begin{Lemma}[Properties of $W_k$]\label{wk-props}
For every integer $k \geq 2$, the set $W_k$ is non-empty and satisfies the following properties:
\begin{enumerate}
\item For each $F \in W_k$, there exists a closed $k$-walk $w = v_0v_1\cdots v_{k-1}v_0$ in $F$ for which $F_w = F$.
\item Every $F \in W_k$ is connected.
\item Any $ F \in W_k $ with $|e(F)| < |v(F)| $ is a tree, and hence in this case $|e(F)| = |v(F)| - 1$.
\item If $w$ is a closed $k$-walk and $ F_w \in W_k $ is a tree, then every edge in $F_w$ is traversed at least twice by $w$.
\item Any tree $ T \in W_k $ has $ |e(T)| = |v(T)| - 1 \leq k/2 $.
\item It holds that $W_k \subset W_{k+2}$.
\item If $k$ is odd, then $W_k$ contains no trees.  If $k$ is even, then $W_k$ contains all unlabeled trees on $2$ to $k/2+1$ vertices.
\item The $k$-cycle $C_k$ is an element of $W_k$ for all $k \geq 3$, and is the only element of $W_k$ on $k$ vertices.
\item The $(k-2,1)$-tadpole $C_{k-2}P_2$ is an element of $W_k$ for all $k \geq 5$, and is the only element of $W_k$ that is both on $k-1$ vertices and with $k-1$ edges.
\end{enumerate}
\end{Lemma}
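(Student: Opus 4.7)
The plan is to handle claims 1–6 as direct consequences of definitions and elementary facts about closed walks on bipartite graphs, and then to concentrate effort on the structural statements 7–9.

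First, claims 1 and 2 are essentially formal: claim 1 is built into the choice of representative for $F\in W_k$ (the defining walk transports to the representative under the chosen isomorphism), and claim 2 holds because consecutive vertices in any walk are adjacent in $F_w$. Claim 3 is the textbook fact that a connected graph on $v$ vertices has at least $v-1$ edges, with equality if and only if it is a tree. For claims 4 and 5, I would note that removing any edge $e$ of a tree $F_w$ disconnects it, so the closed walk $w$ must cross the induced cut an even number of times, and at least once because $e \in e(F_w)$; summing the multiplicities $m(e)$ gives $k = \sum_{e} m(e) \geq 2\,|e(F_w)|$. Claim 6 will follow by the explicit extension: given $w = v_0 v_1 \cdots v_{k-1} v_0$ inducing $F$, replace it by $v_0 v_1 v_0 v_1 v_2 \cdots v_{k-1} v_0$, a closed walk of length $k+2$ whose induced subgraph is again $F$.

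For claim 7, the ``odd'' direction uses that trees are bipartite, so every closed walk on a tree has even length. In the ``even'' direction, given a tree $T$ on $t \leq k/2+1$ vertices, a depth-first traversal starting at any vertex produces a closed walk of length $2(t-1) \leq k$ which induces $T$, and I can iterate claim 6 to inflate its length to exactly $k$. For claim 8, the canonical cyclic walk on $C_k$ witnesses membership for $k \geq 3$; uniqueness follows because if $F = F_w$ and $|v(F)| = k$, then the $k$ vertices $v_0, \ldots, v_{k-1}$ visited by $w$ must all be distinct, so the edges $v_0v_1, v_1v_2, \ldots, v_{k-1}v_0$ are $k$ distinct edges forming exactly $C_k$.

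The substance of the argument is claim 9. Existence is easy: start at the junction vertex of $C_{k-2}P_2$, traverse the cycle once ($k-2$ steps), then the pendant out and back ($2$ steps), giving a closed $k$-walk inducing the tadpole. For uniqueness, any $F \in W_k$ with $|v(F)| = |e(F)| = k-1$ is connected (claim 2) with cyclomatic number $1$, hence unicyclic with a single cycle of some length $c$ and $k-1-c$ non-cycle edges. Every non-cycle edge is a bridge, so must be traversed at least twice; together with the at-least-once constraint on cycle edges, this forces $k \geq 2(k-1-c) + c = 2k - 2 - c$, i.e.\ $c \in \{k-2, k-1\}$. The case $c = k-1$ corresponds to $F = C_{k-1}$, which would require a closed walk of length $k$ on $C_{k-1}$ using every edge, with total multiplicity $k$; but then exactly one cycle edge is doubled, and the resulting multigraph has two vertices of odd degree, so no closed Eulerian tour exists. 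The only remaining possibility is $c = k-2$, i.e.\ $F = C_{k-2}P_2$.

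The main obstacle will be precisely this last ruling-out of $F = C_{k-1}$ in claim 9: the bridge-counting bound only narrows $c$ to $\{k-2, k-1\}$, and it is the parity/Eulerian argument — phrased as a degree-parity obstruction in the edge-multiplicity multigraph induced by $w$ — that closes the case.
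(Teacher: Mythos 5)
Your proof is correct, but on the substantive items it takes a genuinely different route from the paper's. For item~4 the paper argues by contradiction that an edge traversed once would yield two edge-disjoint paths between its endpoints inside a tree, whereas you use the cut-parity fact that a closed walk crosses the cut induced by deleting a tree edge an even number of times; your version is slightly stronger (every tree edge is traversed an even number of times) and you reuse exactly this fact later. For item~7 (even case) the paper runs an induction on the number of vertices, peeling off a leaf and splicing an extra $olo$ segment into an inductively constructed walk, while you take a depth-first (Euler-tour) traversal of the tree of length $2(t-1)$ and pad to length $k$ by iterating item~6; both are fine, the paper's construction has the side benefit of rehearsing the walk-surgery technique it needs again in Lemma~\ref{extens-prop}. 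The largest divergence is item~9: the paper works locally with a walk inducing $F$, shows there is exactly one doubled edge and one twice-visited vertex, forces the pattern $oto$, and splits the walk into a $P_2$ and a closed $(k-2)$-walk inducing $C_{k-2}$; you instead classify $F$ structurally as unicyclic (cyclomatic number one), use the bridge-parity count $k\geq 2(k-1-c)+c$ to force the cycle length $c\in\{k-2,k-1\}$, and eliminate $c=k-1$ by the Eulerian degree-parity obstruction in the multigraph with one doubled edge (two vertices of degree three). Your argument is shorter and leans on classical invariants (bipartiteness, cyclomatic number, Euler's theorem), at the cost of invoking those external facts; the paper's is longer but self-contained and constructive, which matches the style of its later extension arguments. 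No gaps: the only implicit step worth flagging is that in item~9 you tacitly reuse the item~4 cut argument to get ``bridges are traversed at least twice,'' which is legitimate since bridges are cut edges of $F_w$.
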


\begin{proof}  
We shall prove all items in order, using results from each part in turn.  Items~7 and~8, once proved, imply that $W_k$ is non-empty for every integer $k \geq 2$.

\vspace{.5\baselineskip}
\emph{Proof of 1:} 
Fix $F\in W_k$, recalling that we treat $F$ as an arbitrarily labeled graph. From Definition~\ref{closedalkdef}, there exists $F'\subset K_k$ such that: i) $F'\equiv F$; and ii) there exists $w'\in\mathcal{W}_k(K_k)$ such that $F_{w'}=F'$. We call $\phi$ the adjacency preserving bijection from the vertex set of $F'$ to the vertex set of $F$. Then, we write $w' = v_0v_1\cdots v_k$, where $v_0=v_k$, and define the walk $w$ as $w = \phi(v_0)\phi(v_1)\cdots \phi(v_k)$. Finally, we observe that $w\in\mathcal{W}_k(F)$ and by construction,
\begin{align*}
F_w
&=(\{\phi(v_t)\}_{0\leq t\leq k}, \{\phi(v_t)\phi(v_{t+1})\}_{0\leq t< k})\\
&=(\{\phi(x)\}_{x\in v(F')}, \{\phi(x)\phi(y)\}_{xy\in e(F')})\\
&= F.
\end{align*}

\vspace{.5\baselineskip}
\emph{Proof of 2:} 
Fix $F\in W_k$. Fix $w$ such that $F_w=F$. Then, $w$ contains a path between any pair of nodes in $F$. Hence, $F$ is connected.

\vspace{.5\baselineskip}
\emph{Proof of 3:} 
Suppose that there exists $F\in W_k$ such that $|e(F)|<|v(F)|$. Recall that by definition, trees are the only connected graphs with fewer edges than nodes. Hence, since we have already established that $F$ is connected, $F$ must be a tree.

\vspace{.5\baselineskip}
\emph{Proof of 4:} 
Assume there exists a closed $k$-walk $w = v_0v_1\cdots v_{k-1}v_0$ such that $F_w$ is a tree. We will prove the claimed result by contradiction. Assume there exists at least one edge traversed by $w$ exactly once. Let $o$ and $t$ be the nodes corresponding to such an edge. Since the composition of any cyclic permutation and reversal of $w$ will also induce $F_w$, as the walk is closed, we assume without loss of generality that $o$ and $t$ are the two first steps of $w$; i.e., $w = otv_2\dots v_{k-1}o$.

Let $w_1 = ot$ and $w_2 = tv_2\dots v_{k-1}o$. Then, $w_1$ and $w_2$ contain paths between vertices $o$ and $t$, and by assumption neither $ot$ nor $to$ appears in $w_2$. Thus, the paths contained within $w_1$ and $w_2$ are edge disjoint.  However, the existence of two edge-disjoint paths between a pair of vertices in $F_w$ contradicts our assumption that $F_w$ is a tree.  Thus we conclude that every edge in $F_w$ is traversed at least twice by $w$.

\vspace{.5\baselineskip}
\emph{Proof of 5:} 
Fix a tree $T\in W_k$. Fix a closed $k$-walk $w$ such that $F_w=T$. Then $w$ visits all edges of $T$ at least twice, hence $|e(T)|\leq k/2$. Since $|e(T)| = |v(T)|-1$ for any tree $T$, we conclude $|v(T)|-1\leq k/2$.

\vspace{.5\baselineskip}
\emph{Proof of 6:} 
Fix $F\in W_k$. We will show that $F \in W_{k+2}$. Let $w = v_0v_1\cdots v_{k-1}v_0$ be a closed $k$-walk such that $F_w = F$. Then $w' = v_0v_1v_0v_1v_2\cdots v_{k-1}v_0$ is a closed $(k+2)$-walk such that $F_{w'} = F$. Hence $W_k\subset W_{k+2}$.

\vspace{.5\baselineskip}
\emph{Proof of 7:} 
To prove the first point of the claim, note that any closed walk of odd length must contain a cycle (of odd length). Trees contain no cycles, and so if $k$ is odd, then $W_k$ contains no trees.

To prove the second part of the claim, assume that $k$ is even and set $k=2r$. Recall that $\mathcal{T}_r$ denotes the set of unlabeled trees on $r$ vertices. We will show by induction that the statement
\begin{equation*}
P(r) = \{\forall T\in\mathcal{T}_{r+1}, \exists w\in \mathcal{W}_{2r}(T)\textnormal{ s.t. } F_w = T\},
\end{equation*}
is true for all $r\geq1$. This directly yields that $\mathcal{T}_{r+1}\subset W_{2r}$ for any fixed $r\geq1$. Then, noting that $W_{2t}\subset W_{2r}$ for all $t\leq r$, we conclude that $\mathcal{T}_{t+1}\subset W_{2r}$ for all $t\leq r$. Thus, we will have established our claim that all unlabeled trees on up to $k/2+1$ vertices are elements of $W_k$. We now prove that $P(r)$ is true for all $r\geq1$.
\begin{enumerate}
\item Set $r=1$. Then, the only $2$-tree is $K_2$, which is visited by any walk in $\mathcal{W}_{2}(K_2)$. Thus, $P(1)$ is true.
\item Fix $r>1$. Assume $P(r-1)$ is true and fix $T\in\mathcal{T}_{r+1}$. We now build a closed $2r$-walk that induces $T$.

First, choose a leaf $l$ of $T$ and call $o$ the unique node adjacent to $l$. Because any finite tree on at least two nodes possesses at least two leaves, $l$ always exists. Then, let $L$ be the graph given by $L = (\{l,o\},\{lo\})$ and $T'$ be the graph given by $T' = (v(T)\setminus \{l\},e(T)\setminus\{lo\})$. By construction, $T = L\cup T'$, and  $T'$ is an $r$-tree.

Second, consider any closed $2(r-1)$-walk that induces $T'$. (Such a walk exists by Item 1 of the current lemma, because we assume $P(r-1)$ to be true.) Since the composition of any cyclic permutation and reversal of this walk also induces $T'$, without loss of generality we may choose the walk $w'= o v_1\cdots v_{2(r-2)} o$ such that $F_{w'}=T'$.

Third, extend $w'$ to $T$ by defining $w = o l o v_1\cdots v_{2(r-2)} o$, so that $F_w = L\cup T' = T$ and $|w| = |w'|+2 = 2r$.

Thus, for any $(r+1)$-tree $T$, we have exhibited a closed walk $w$ in $\mathcal{W}_{2r}(T)$ such that $F_w = T$. Hence, $P(r)$ is true.
\end{enumerate}
Finally, since $P(r-1)\Rightarrow P(r)$ and $P(1)$ holds, it follows by induction that $P(r)$ is true for all $r\geq1$.

\vspace{.5\baselineskip}
\emph{Proof of 8:} 
The closed walk $w=12\cdots k1$ is in $\mathcal{W}_k(K_k)$ and induces $F_w\equiv C_k$. Hence, $C_k\in W_k$. For a closed $k$-walk to visit $k$ nodes, it must visit a new node at each step (apart from the last). Let $w' = v_1v_2\cdots v_{k}v_1$ be any such closed $k$-walk in $\mathcal{W}_k(K_k)$, where all elements $v_1, v_2, \ldots v_k \in v(K_k)$ are distinct.  Then through the vertex bijection $v_i \mapsto i$, we see that $F_{w'} \equiv F_w$.  Hence, we conclude that $C_k$ is the only element of $W_k$ on $k$ vertices.

\vspace{.5\baselineskip}
\emph{Proof of 9:} 
We show that $C_{k-2}P_2$ is the only element of $W_k$ that is both on $k-1$ vertices and with $k-1$ edges. To begin, consider a closed $k$-walk that induces a $(k-2)$-cycle and then immediately traverses a pendant edge.  In turn, this walk induces a graph isomorphic to $C_{k-2}P_2$, and so $C_{k-2}P_2 \in W_k$. Hence we may now fix some $F\in W_k$ on $k-1$ vertices and with $k-1$ edges, since we have shown at least one such $F$ to exist. We will show that $F \equiv C_{k-2}P_2$ for any choice of $F$.

First, by Item~1 of the current lemma, we may fix a closed $k$-walk $w$ such that $F_w = F$. Then, since $F$ possesses $k-1$ edges while $w$ traverses $k$ edges, we conclude that there is exactly one edge in $F$ traversed twice by $w$. Label this edge $ot$, and assume without loss of generality (since the composition of any cyclic permutation and reversal of $w$ will also induce $F$) that $w = otv_2v_3\cdots v_{k-1}o$.

Next, note that for $ot$ to be traversed twice, one of its vertices must be visited twice. However, since $F$ possesses $k-1$ vertices while $w$ visits $k$ vertices, we see that exactly one vertex in $F$ is visited twice. Without loss of generality, assume this vertex to be $o$, in which case all vertices in $v(F)\setminus\{o\}$ must be visited exactly once.

Finally, observe that for $t$ to be visited exactly once and $ot$ traversed exactly twice, the sequence $oto$ must occur exactly once in $w$. Thus we conclude that $w = otov_3\cdots v_{k-1}o$. Furthermore, by construction neither $o$ nor $t$ can be otherwise visited by $w$.  It follows that $\{o,t\}\cap \{v_3,\ldots,v_{k-1}\} = \emptyset$, and consequently that $|\{v_3,\ldots,v_{k-1}\}| = k-3$.

To complete the proof, consider the graph $F_1$ induced by $oto$ and the graph $F_2$ induced by $ov_3\cdots v_{k-1}o$.  We have $F_w = F = F_1 \cup F_2$, with $\{o,t\}\cap \{v_3,\ldots,v_{k-1}\} = \emptyset$ implying that $v(F_1) \cap v(F_2) = \{o\}$.  We see directly that $F_1 \equiv P_2$.  Furthermore, we see that $F_2 \in W_{k-2}$, with $|\{v_3,\ldots,v_{k-1}\}| = k-3$ implying that $|v(F_2)| = k-2$.  Hence $F_2 \equiv C_{k-2}$. It therefore follows that $F\equiv C_{k-2}P_2$, since we have shown that, up to isomorphism, $F$ can be obtained by joining $C_{k-2}$ to $P_2$ by identifying a single vertex.
\end{proof}

\section{Closed walks and extensions}\label{sec:S1}

\subsection{Extensions and a partial order on walk-induced subgraphs}\label{sec:extensions}

We have introduced the set $W_k$, which organizes walk-induced graphs by scale. We next construct a relation ``$\vartriangleleft_k$'' on each given set $W_k$.

\begin{Definition}[Graph extension ``$\vartriangleleft_k$'' and corresponding walk extension]
\label{def:extens}
Fix an integer $k\geq 2$ and $F,F'\in W_k$. We call $F'$ an {\em extension} of $F$ and write $F\vartriangleleft_k F'$ if:
\begin{enumerate}
	\item $|v(F')|-|v(F)| = 1$,
	\item $|e(F')|-|e(F)| \leq 1$,
	\item $
\min_{\substack{
w = v_0\cdots v_k
	\in\mathcal{W}(K_k)\,:\,F_w\ \equiv F\\
w' = v_0'\cdots v_k'
	\in\mathcal{W}(K_k)\,:\,F_{w'} \equiv F'
}}\ \sum_{i=0}^k 1_{\{v_i \neq v_i'\}} = 1$.
\end{enumerate}
Given two closed $k$-walks $w=v_0v_1\cdots v_{k-1}v_0$ and $w'=v_0'v_1'\cdots v_{k-1}'v_0'$ in a simple graph $G$, we call $w'$ an {\em extension} of $w$ in $G$ if $F_w \vartriangleleft_k F_{w'}$ and $w'$ has a Hamming distance of exactly 1 from $w$.
\end{Definition}

From Item~3 of Definition~\ref{def:extens}, we see directly that whenever $F \vartriangleleft_k F'$ for some pair $F,F'\in W_k$, there exists a pair $w,w'$ of closed $k$-walks in $K_k$ such that $F_w\equiv F$, $F_{w'}\equiv F'$, and $w'$ is an extension of $w$ in $K_k$.

We now determine when any $F\in W_k$ admits an extension.

\begin{Lemma}\label{extens-prop}
Fix an integer $k\geq 2$ and let $F\in W_k$. Then $F$ admits an extension unless either $F\equiv C_k$ when $k\geq 3$, or $F$ is isomorphic to an element of $\mathcal{T}_{k/2+1}$ when $k$ is even and $k\geq 2$. In these latter two cases $F$ admits no extension.

Furthermore, whenever $F$ admits an extension, then at least one such extension $F'$ has the following property: There exist orderings $(d_1,\dots,d_v)$ and $(d_1',\dots,d_{v+1}')$ of the degrees of $F$ and $F'$, respectively, such that exactly one of the following four cases holds:
\begin{enumerate}
\item $d_t' = d_t$ for all $t \leq v-1$, $d_v' = d_v-2$, and $d_{v+1}' = 2$;
\item $d_t' = d_t$ for all $t \leq v-2$, $d_{v-1}' = d_{v-1}-1$, $d_v' = d_v+1$, and $d_{v+1}'=2$;
\item $d_t' = d_t$ for all $t \leq v-2$, $d_{v-1}'\leq d_{v-1}$, $d_v' \leq d_v+1$, and $d_{v+1}'=1$; or 
\item $d_t' = d_t$ for all $t \leq v-1$, $d_v' = d_v-1$, and $d_{v+1}'=1$.
\end{enumerate}
\end{Lemma}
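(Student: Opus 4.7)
The plan is to split the argument into the two no-extension cases (dispatched by a counting argument from Lemma~\ref{wk-props}), a walk-level reformulation of what an extension is that simultaneously settles the degree-sequence classification, and a concrete construction of an inducing walk for each permitted $F$ that exhibits an extension of one of the four listed types.

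First I would dispose of the two exceptional cases. If $F \equiv C_k$ then $|v(F)| = k$; since Item~8 of Lemma~\ref{wk-props} bounds the vertex count of any element of $W_k$ by $k$, no $F'$ with $|v(F')| = k+1$ can lie in $W_k$. If instead $F \in \mathcal{T}_{k/2+1}$ then any hypothetical extension $F'$ would have $|v(F')| = k/2+2$ and $|e(F')| \leq |e(F)|+1 = k/2+1 < |v(F')|$; Item~3 of Lemma~\ref{wk-props} then forces $F'$ to be a tree, but Item~5 caps trees in $W_k$ at $k/2+1$ vertices, yielding a contradiction.

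Next I would rely on a walk-level reformulation: by Item~3 of Definition~\ref{def:extens}, an extension $F \vartriangleleft_k F'$ is witnessed by a pair of closed $k$-walks $w = v_0\cdots v_k$ and $w' = v_0'\cdots v_k'$ in $K_k$ that differ at exactly one coordinate $i$, with $v_i = u$ and $v_i' = v_*$. For the vertex- and edge-count conditions of Definition~\ref{def:extens} to hold, $v_*$ must be a vertex not in $v(F_w)$, $u$ must appear at some position of $w$ other than $i$, and the net number of edges created must be zero or one. A case analysis on (a) whether $v_{i-1} = v_{i+1}$ in $w$ and (b) the multiplicities in $w$ of the two edges traversed at steps $i{-}1 \to i$ and $i \to i{+}1$ then exhausts the valid configurations and yields exactly the four degree-sequence patterns in the statement: pattern~1 for non-backtracking $i$ with both incident edges of multiplicity~$1$ and $d_u \geq 3$ (so $u$ has another edge by which to remain visited); pattern~2 for non-backtracking $i$ with exactly one incident edge of multiplicity~$1$; pattern~3 for backtracking $i$ whose unique incident edge is traversed at least three times in $w$; and pattern~4 for backtracking $i$ whose edge is traversed exactly twice but $u$ has some other incident edge in $F$ (so $d_u \geq 2$). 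Every remaining configuration either removes $u$ from $F_{w'}$ or adds two net edges, contradicting Definition~\ref{def:extens}.

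It remains to exhibit, for every $F$ outside the two exceptional cases, one inducing walk realizing one of these four patterns. For $F$ a tree with $v := |v(F)| \leq k/2$, I would build $w$ by prepending $(k-2(v-1))/2 \geq 1$ backtracks $r \to r' \to r$ (for an arbitrary vertex $r$ and fixed neighbor $r'$) to a DFS tour of $F$ rooted at $r$; the edge $rr'$ then has multiplicity at least four in the combined walk, and extending at the first padded backtrack yields a pattern-3 extension. For $F$ containing a cycle with $v < k$, I would split further by comparing $|e(F)|$ and $k/2$: when $|e(F)| > k/2$ some edge of $F$ is forced to have multiplicity one in any inducing walk, and choosing the walk so this edge lies in $w$ next to an edge of multiplicity at least two produces a pattern-2 extension (with pattern-1 available instead whenever the walk visits a vertex of degree at least three between two consecutive multiplicity-one edges); when $|e(F)| \leq k/2$ I would use a walk that traverses a spanning cycle of $F$ once, handles any attached tree part via DFS detours, and pads the remaining length with backtracks at a cycle vertex, again producing a pattern-3 extension. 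The principal technical obstacle, which I expect to dominate the length of the proof, is this case-(b) sub-case split: the appropriate inducing walk depends on whether $F$ has pendant edges, on the number of independent cycles of $F$, and on the parity compatibility between $k$ and the cycle lengths of $F$, so careful bookkeeping is needed to ensure that one of the four degree-sequence patterns is always realizable.
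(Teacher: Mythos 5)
Your handling of the two exceptional cases is the same counting argument as the paper's, and your walk-level reformulation of Definition~\ref{def:extens} (classifying a single-coordinate change by whether the modified position backtracks and by the multiplicities of the two incident edge-traversals) is sound; it in fact delivers the four degree-sequence patterns more systematically than the paper, which instead reads them off from its four explicit constructions. The real content of the lemma, however, is exhibiting an inducing walk of each admissible $F$ that supports such a change, and here your construction has a genuine hole. Your Case~B recipe (``traverse a cycle once, DFS the rest, pad with backtracks, extend at a padded backtrack'') needs either an even cycle to traverse once or strictly positive padding slack $k-2|e(F)|>0$. When $k$ is even, $|e(F)|=k/2$, and every cycle of $F$ is odd, neither is available: parity forces every inducing $k$-walk to traverse every edge exactly twice, and there is no padded backtrack to extend at. A concrete instance is $F\equiv C_3P_2$ with $k=8$ (four edges, so $|e(F)|=k/2$, not a tree, not $C_8$): your tree case and Case~A do not apply, and Case~B produces no walk with a spare backtrack, yet the lemma requires an extension and one exists (e.g.\ the inducing walk $abacbcada$ admits the extension $alacbcada$, which induces a $5$-vertex tree). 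These are precisely the configurations the paper's Claim~2 is built for: it takes an \emph{arbitrary} inducing walk in which every edge is traversed at least twice and rearranges it---if some edge is traversed twice in the same direction, the reordering $abw_1abw_2a \mapsto aba\bar w_1 b w_2 a$ manufactures a backtrack; if every edge is traversed once in each direction, a separate ``crossing-vertex'' argument (using that $F$ is not a tree) does so. Your proposal contains no analogue of this rearrangement step, and flagging ``parity bookkeeping'' does not supply it, so as written the proof does not cover a nonempty family of admissible $F$.

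A smaller, fixable omission: in Case~A, when \emph{all} edges have multiplicity one (the Eulerian sub-case, where pattern~2 is unavailable because no multiplicity-$\geq 2$ edge exists), you need to argue that a suitable vertex for pattern~1 exists. It does---an Eulerian inducing walk forces all degrees even, and all degrees equal to $2$ would force $F\equiv C_k$, so some vertex has degree at least $4$ and is visited twice by the walk---but your parenthetical ``whenever'' treats this as optional rather than proving it is guaranteed exactly when it is needed.
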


\begin{proof}
We first show that $F\in W_k$ admits no extension if either i) $F\equiv C_k$ when $k\geq 3$; or ii) $F$ is isomorphic to an element of $\mathcal{T}_{k/2+1}$ when $k\geq 2$ is even.

First, consider i) so that $F\equiv C_k$. Then it follows that $|v(F)| = k$. Assume there exists $F' \in W_k$ such that $F\vartriangleleft_k F'$. Then $ |v(F')| = |v(F)| + 1 $ by Definition~\ref{def:extens}.  But no closed $k$-walk can visit more than $k$ nodes, and since by construction every element of $W_k$ is induced by some closed $k$-walk, no element of $W_k$ has more than $k$ nodes.  Thus we obtain a contradiction, and so conclude that $F\in W_k$ admits no extension if $F\equiv C_k$ when $k\geq 3$.

Second, consider ii) so that $F$ is isomorphic to some element of $\mathcal{T}_{k/2+1}$, implying that $|v(F)| = k/2+1$ and $|e(F)| = k/2$. Assume there exists $F' \in W_k$ such that $F\vartriangleleft_k F'$. By Item~2 of Lemma~\ref{wk-props}, any $F' \in W_k$ is necessarily a connected graph. Since $ |v(F')| = |v(F)| + 1 = k/2 + 2 $ by Definition~\ref{def:extens}, it follows that in order to be connected, $F'$ must necessarily have $k/2+1$ edges.  This implies that $F'$ is a tree with $|e(F')|=|e(F)|+1=k/2+1$.  By Item~5 of Lemma~\ref{wk-props}, however, any tree in $W_k$ contains no more than $ k / 2 $ edges.  Thus we obtain a contradiction, and so conclude that $F\in W_k$ admits no extension if $F$ is isomorphic to an element of $\mathcal{T}_{k/2+1}$ when $k\geq 2$ is even.

We next construct an extension of any $F\in W_k$ that is i) not isomorphic to $C_k$ when $k\geq 3$; and ii) not isomorphic to an element of $\mathcal{T}_{k/2+1}$ when $k\geq 2$ is even. We consider two mutually exclusive and exhaustive cases, depending on whether or not a closed walk inducing $F$ traverses every edge in $e(F)$ at least twice:
\begin{description}
\item[Claim~1] If for $k\geq 3$ there exists a closed $k$-walk $w$ inducing $F\in W_k$ which traverses at least one edge in $e(F)$ exactly once, then $F$ admits an extension whenever $F\not\equiv C_k$.
\item[Claim~2] If for $k\geq 2$ there exists a closed $k$-walk $w$ inducing $F\in W_k$ which traverses every edge in $e(F)$ at least twice, then $F$ admits an extension whenever $k$ is odd, or whenever $k$ is even and $F$ is not isomorphic to an element of $\mathcal{T}_{k/2+1}$.
\end{description}

For each of Claim~1 and Claim~2 in turn, we will exhibit an $F'\in W_k$ such that $F\vartriangleleft_k F'$. To construct $F' \subset K_k$ as a labeled graph, we must first ensure the existence of an additional node in $K_k$ to be visited by a closed $k$-walk $w'$ that is a candidate extension of a closed $k$-walk inducing $F\in W_k$. To do so we must exclude any $F$ with $|v(F)| = k$.  By Item~8 of Lemma~\ref{wk-props}, the only element of $W_k$ on $k$ nodes is $C_k$. We have shown that if $F\equiv C_k$, then $F$ admits no extension.  In Claim~1, we assume $F\not\equiv C_k$. In Claim~2, we assume all edges in $F$ are traversed at least twice, so $|e(F)|\leq \lfloor k/2 \rfloor$ and hence $F\not\equiv C_k$. Thus for both claims, clearly it follows that $F\not\equiv C_k$, which in turn implies $|v(F)|<k$.  Therefore, we can fix a new node $l\in v(K_k)\setminus v(F)$ to be visited by a closed $k$-walk $w'$ that is a candidate extension of a closed $k$-walk inducing $F\in W_k$. 

Furthermore, note that since the composition of any cyclic permutation and reversal of any walk inducing $F\in W_k$ will also induce $F$, we will choose from among all such walks at our convenience throughout this proof, always without loss of generality.

\begin{figure}[t]
\hspace{-.05\textwidth}{\includegraphics[width=1.15\textwidth]{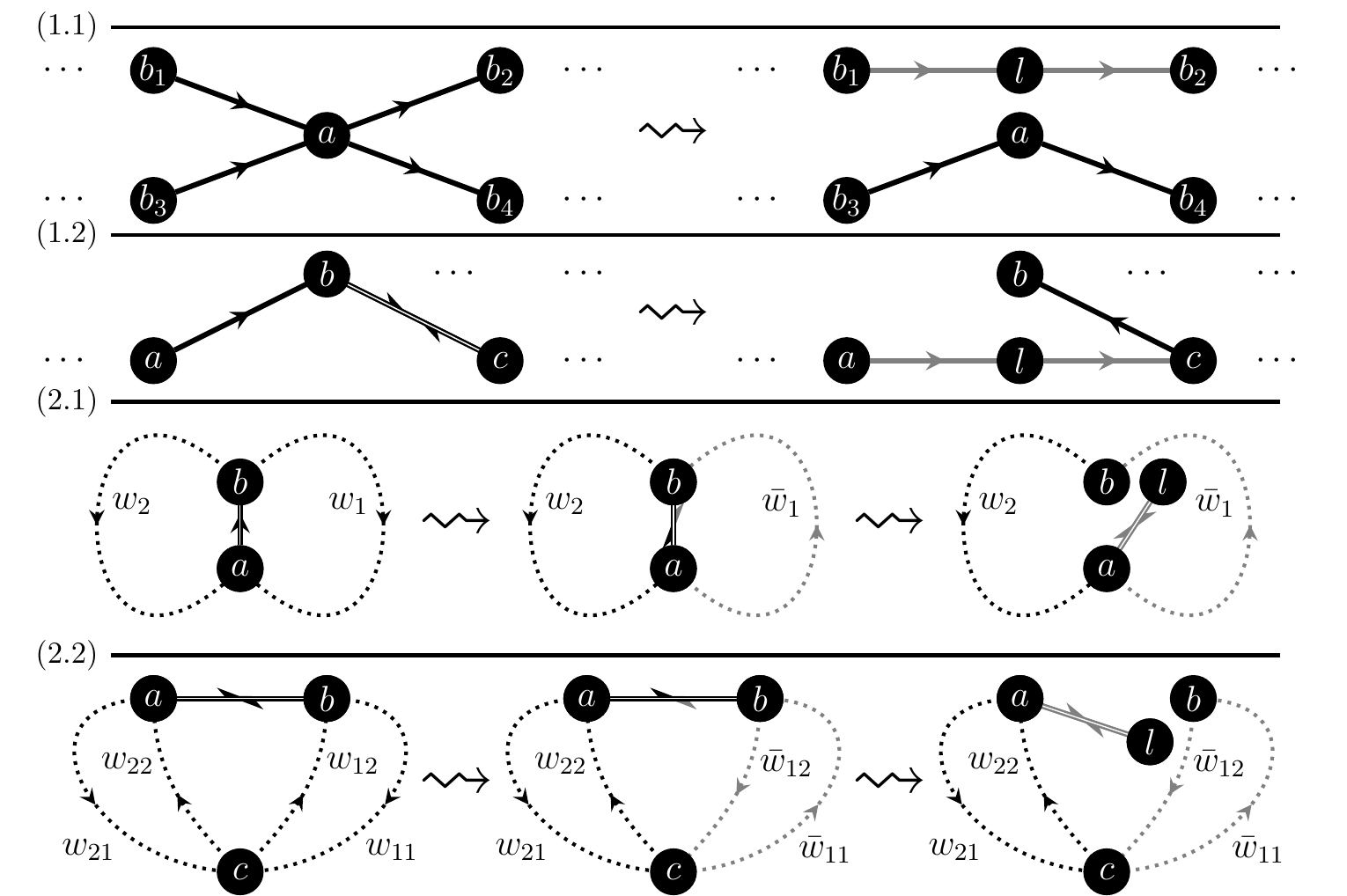}}%
\caption{\label{extens-prop-fig}Diagram showing the four different walk modifications used to prove Lemma~\ref{extens-prop}. Note that new walk steps are marked gray, and arrows indicate the direction of walk travel. Split arrows indicate that both directions are traveled.}
\end{figure}

\vspace{.5\baselineskip}
\emph{Proof of Claim 1:} 
We first consider $k\geq 5$, treating the cases $k=3$ and $k=4$ separately below. Fix $F\in W_k\setminus\{C_k\}$ such that there exists a closed $k$-walk inducing $F\in W_k$ which traverses at least one edge in $e(F)$ exactly once; for example, by Item~9 of Lemma~\ref{wk-props}, we may choose the $(k-2,1)$-tadpole $F \equiv C_{k-2}P_2$ for any $k \geq 5$. To exhibit an extension $F'\in W_k$ of $F$, we consider two mutually exclusive and exhaustive cases for $k \geq 5$:
\begin{enumerate}
\item Assume that $k \geq 5$, and that all edges in $e(F)$ are traversed exactly once by some closed $k$-walk inducing $F$. Therefore this walk is an Eulerian circuit, and so all vertices in $F$ must have even degrees. If all vertices of $F$ were to have degree two, then we would conclude $F\equiv C_k$, since $C_k$ is the only connected $2$-regular graph on $k$ vertices. However, since $F\not\equiv C_k$ by hypothesis, we immediately conclude that $F$ must possess at least one vertex of even degree at least four.

Pick any such vertex of even degree $d \geq 4$ and label it $a$. Since we assume the existence of a closed $k$-walk inducing $F$ with the property that every edge is traversed exactly once, we may assign labels $b_1, b_2, \ldots, b_d$ to all vertices connected to $a$ such that the segments $b_1ab_2$ and $b_3ab_4$ appear exactly once in this walk, with $b_1ab_2$ preceding $b_3ab_4$. The first diagram of Fig.~\ref{extens-prop-fig}.(1.1) depicts an example for $d=4$. Therefore (after an appropriate cyclic permutation if necessary to begin the walk at $b_1$) we can write the assumed closed $k$-walk inducing $F$ as $w = b_1ab_2\cdots b_3ab_4\cdots b_1$.

Now, define another closed $k$-walk $w' = b_1lb_2\cdots b_3ab_4\cdots b_1$, and refer to the second diagram of Fig.~\ref{extens-prop-fig}.(1.1). By construction, Items~1 and~2 of Definition~\ref{extens-prop} are verified for $F_w$ and $F_{w'}$, since $|v(F_{w'})| = |v(F_w)| + 1$ and $|e(F_{w'})| = |e(F_w)|$. To verify Item~3 of Definition~\ref{extens-prop}, note that $w$ and $w'$ disagree in exactly one entry. Furthermore, no other walks inducing $F$ and $F_{w'}$ can disagree in fewer entries, since by Item~1 we have that $v(F) \neq v(F_{w'})$. Therefore, $F\vartriangleleft_k F_{w'}$.

\item Assume that $k \geq 5$, and that at least one edge in $e(F)$ is traversed more than once by some closed $k$-walk inducing $F$. Then at some point in this walk, an edge traversed exactly once and an edge traversed more than once must be traversed in immediate succession. Assume, reversing the walk if necessary, that the former edge precedes the latter edge. Then, calling the edge traversed exactly once $ab$, and the edge traversed more than once $bc$, we may (after an appropriate cyclic permutation if necessary to begin the walk at $a$) write the assumed closed $k$-walk inducing $F$ as $w = abc\cdots a$. An example is depicted in the first diagram of Fig.~\ref{extens-prop-fig}.(1.2).

    Now, define the closed $k$-walk $w' = alc\cdots a$, noting that $|v(F_{w'})| = |v(F)| + 1$ since the edge $bc$ is traversed more than once by hypothesis and thus $b \in v(F_{w'})$, and refer to the second diagram of Fig.~\ref{extens-prop-fig}.(1.2). Observe that since $ab$ is traversed only one time by $w$, it follows that $ab\not\in F_{w'}$; in contrast, since $bc$ is traversed multiple times by $w$, it follows that $bc\in e(F_{w'})$. Thus, $|e(F_{w'})|  = |(e(F)\setminus\{ab\})\cup \{al,lc\}| = |e(F)| + 1$. Therefore, Items~1 and~2 of Definition~\ref{extens-prop} are verified for $F$ and $F_{w'}$. To verify Item~3, observe as before that $w$ and $w'$ disagree in exactly one entry.
\end{enumerate}

It remains to consider $k=3$ and $k=4$.  For $k=3$, we have $W_3\setminus\{C_3\} = \emptyset$. For $k=4$, we have $W_4\setminus\{C_4\} = \{ P_2, P_3\}$. No graphs of the form $ F \equiv P_2$ or $ F \equiv P_3$ may be induced by a closed $4$-walk which traverses at least one edge in $e(F)$ exactly once. Thus we have exhibited an extension of $F$ under the hypothesis of Claim~1.

\vspace{.5\baselineskip}
\emph{Proof of Claim 2:} 
We first consider the cases $k=4$, $k=6$, and $k\geq 8$, treating the cases $k=2$, $k=3$, $k=5$ and $k=7$ separately below. Fix $F\in W_k\setminus\{C_k\}$ if $k$ is odd, or $F\in W_k\setminus(\{C_k\}\cup\mathcal{T}_{k/2+1})$ if $k$ is even, such that there exists a closed $k$-walk traversing every edge in $e(F)$ at least twice. To show that such an element exists, note that if $k$ is even, we may choose $F\equiv P_2$, while if $k$ is odd we may choose $F\equiv C_3$, with a walk traversing $C_3$ three times and then repeatedly visiting an adjacent node a total of $(k-9)/2$ times.

To exhibit an extension $F'\in W_k$ of $F$ for $k=4$, $k=6$, or $k\geq 8$, we consider two mutually exclusive and exhaustive cases:
\begin{enumerate}
\item We let $k=4$, $k=6$, or $k\geq 8$, and assume that at least one edge in $e(F)$ is traversed twice in the same direction by some closed $k$-walk inducing $F$.  Then there exists an edge $ab$ such that we can (after an appropriate cyclic permutation of the walk if necessary)  begin the walk at $a$.  Then, letting $w_1$ and $w_2$ be walk segments (sequences of adjacent nodes, possibly of length zero), we may write $w = ab w_1 ab w_2 a$. The first diagram of Fig.~\ref{extens-prop-fig}.(2.1) provides an illustration of this scenario.
    
    Define $\bar w_1$ to be $w_1$ in reversed order, such that if $w_1 = w_{11}w_{12}\cdots w_{1|w_1|}$, then $\bar w_1 = w_{1|w_1|}w_{1(|w_1|-1)}\cdots w_{11}$; if $w_1 = w_{11}$ then $\bar w_1 = w_{11}$; and if $w_1$ is empty then $\bar w_1$ is in turn empty.
    
    Now consider the closed $k$-walk $w' = aba \bar w_1 b w_2 a$. We note that $F_{w'}=F_w$, and refer to the second diagram of Fig.~\ref{extens-prop-fig}.(2.1). Define $w'' = ala \bar w_1 b w_2 a$, and note that Items~1 and~2 of Definition~\ref{extens-prop} are verified for $F=F_{w'}$ and $F_{w''}$ by construction (refer to the third diagram of Fig.~\ref{extens-prop-fig}.(2.1)). To verify Item~3 of Definition~\ref{extens-prop}, we observe (as before) that $w'$ and $w''$ disagree in exactly one entry. Therefore, we have exhibited $w''$ such that $F\vartriangleleft_k F_{w''}$.

\item We let $k=4$, $k=6$, or $k\geq 8$, and assume that no edge in $e(F)$ is traversed twice in the same direction by some closed $k$-walk inducing $F$. No element of $W_4$ verifies this assumption, since $W_4\setminus(\{C_4\}\cup\mathcal{T}_3)=\{P_2\}$, and $P_2$ is induced only by walks traversing all edges in both directions. Therefore we may assume that $k=6$ or $k\geq 8$.
    
    In this setting, all edges in $F$ are traversed at least twice but none twice in the same direction. Thus all the edges in $e(F)$ must be traversed by $w$ exactly once in each direction. This means all edges are traversed exactly twice, and importantly, it therefore follows that $k$ is even and $|e(F)|=k/2$.
    
    We begin by showing (by contradiction) that $F$ is not a tree. If $F$ were a tree, then $F$ would be a tree over $k/2$ edges, and therefore $F$ would be isomorphic to an element of $\mathcal{T}_{k/2+1}$. Since we have already assumed that $F$ is not isomorphic to an element of $\mathcal{T}_{k/2+1}$, this cannot hold.
    
    We now use that $F$ is not a tree to build a walk $w'$ inducing $F$. First, since $F$ is connected and not a tree, $F$ contains at least one cycle. We fix $C$ to be any cycle in $F$ and fix $ab$ as any edge in $e(C)$. Then, we can (after an appropriate cyclic permutation if necessary), begin the walk at $a$ and can write $w = ab w_1 ba w_2 a$, where $w_1$ and $w_2$ are walk segments. Either $w_1$ or $w_2$ could be empty, but since $k\geq 6$, at least one of the two segments is not empty. We distinguish two mutually exclusive and exhaustive cases:
    \begin{enumerate}
    \item If neither $w_1$ nor $w_2$ is the empty walk, for $C$ to be a subgraph of $F_w$, $w_1$ and $w_2$ must cross over at least one vertex; i.e., must both visit at least one common vertex (see Fig.~\ref{extens-prop-fig}.(2.2)). To prove this, we proceed by contradiction. If $w_1$ and $w_2$ do not cross, then either $w_1$ or $w_2$ must induce $C$, and since $ab$ is an edge of $C$, $w_1$ or $w_2$ must traverse $ab$ and therefore $ab$ is traversed thrice by $w$. However, this contradicts the assumption that $ab$ is traversed exactly twice. Therefore, call without loss of generality $c$ any of the possible vertices where $w_1$ and $w_2$ cross. Then, writing $w_1 = w_{11} c w_{12}$ and $w_2 = w_{21} c w_{22}$, we have $w = ab w_{11} c w_{12} ba w_{21} c w_{22} a$ (refer to the first diagram in Fig.~\ref{extens-prop-fig}.(2.2) and note that $w_{11}$, $w_{12}$, $w_{21}$ and $w_{22}$ could be empty). Define $\bar w_{11}$ and $\bar w_{12}$ to be the walks $w_{11}$ and $w_{12}$ in reversed order and set $w' = aba w_{21} c \bar w_{11} b  \bar w_{12} c w_{22} a$.
    \item If $w_1$ (resp. $w_2$) is the empty walk, then $w = abaw_2a$ (resp. $w = abw_1ba$) and $w_2$ must visit $b$ (resp. $a$) for $ab$ to be an edge of $C$ (and satisfy the constraint of every edge being visited at least twice). We then write $w_2=w_{21}b w_{22}$, and thus we may set $w'=aba w_{21} b w_{22} a$ (resp. $w' = aba \bar w_{11} b  \bar w_{12} a$). Note a slight asymmetry in the proof, namely that $w\equiv w'$ if $w_1$ is the empty walk, while $w'$ had been redefined if $w_2$ is the empty walk.
    \end{enumerate}
    
To simplify notation, in these two cases, we write $w'  = aba\tilde w$. Now we observe that: i) $w'$ is a closed walk of the same length as $w$ and; ii) $w'$ is such that $F_{w'}=F_w$ (refer to the second diagram in Fig.~\ref{extens-prop-fig}.(2.2)). Finally, call $w'' = ala \tilde w$ and note that Items~1 and~2 of Definition~\ref{extens-prop} are verified for $F=F_{w'}$ and $F_{w''}$ by construction (as shown in the third diagram in Fig.~\ref{extens-prop-fig}.(2.2)). To verify Item~3 of Definition~\ref{extens-prop}, we observe once again that $w'$ and $w''$ disagree in exactly one entry. Therefore $F\vartriangleleft_k F_{w''}$.
\end{enumerate}

We now treat the remaining cases of $k=2$, $k=3$, $k=5$, and $k=7$. If $k=2$, then $P_2$ is the only possible walk-induced subgraph. Then $W_2\setminus\mathcal{T}_{2}=\emptyset$, and so the claim trivially holds. If $k=3$, $k=5$, or $k=7$, then no closed $k$-walk $w$ can traverse every edge in the edge set of $F$ at least twice. This is because every closed odd walk must contain an odd cycle, and $k$ is not large enough to traverse every edge in the the smallest odd cycle $C_3$ at least twice.

Thus, whenever $F\in W_k\setminus(\{C_k\}\cup\mathcal{T}_{k/2+1})$ if $k$ is even, or $F\in W_k\setminus\{C_k\}$ if $k$ is odd, we have exhibited an extension of $F$ under the hypothesis of Claim~2.

\vspace{.5\baselineskip}
Having shown the first part of the lemma, we now show that its second statement is a consequence of the constructive proof above. We fix $F\in W_k$ and assume that $F$ admits at least one extension, implying that $k\geq 4$. Thus we may appeal to the four methods illustrated in Fig.~\ref{extens-prop-fig} to construct an extension $F'$ of $F$. As we now show, these four methods enable us to directly verify that there exists orderings $(d_1,\dots,d_v)$ and $(d_1',\dots,d_{v+1}')$ of the degrees of $F$ and $F'$, respectively, such that exactly one of the four cases stated in the lemma holds. We write the degree of the node labeled $x$ in $F$ as $d_x$, and that of the node labeled $y$ in $F'$ as $d_y'$.

\begin{enumerate}
\item We begin by considering $F'$ as constructed in Item~1 of Claim~1 in the proof above, referring to Fig.~\ref{extens-prop-fig}.(1.1) for an illustration. With the labeling of Fig.~\ref{extens-prop-fig}.(1.1), we observe directly that $d_l'=2$ and $d_a'=d_a-2$, while $d_x = d_x'$ for $x\in v(F)\setminus \{a\}$. It follows that if we relabel the vertex labeled $l$ as $v+1$, the vertex labeled $a$ as $v$, and the other vertices arbitrarily, then $F'$ obeys the first case stated in the lemma: $d_t' = d_t$ for all $t \leq v-1$, $d_v' = d_v-2$, and $d_{v+1}' = 2$.
\item We then consider $F'$ as constructed in Item~2 of Claim~1 in the proof above. Referring to Fig.~\ref{extens-prop-fig}.(1.2), we observe directly that $d_l=2$, $d_c' = d_c+1$, and $d_b'= d_b-1$ (because the edge $ab$ is traversed exactly once, and so there is no edge between $a$ and $b$ in $F'$, reducing the degree of $b$ by 1), while $d_x = d_x'$ for $x\in v(F)\setminus \{b,c\}$. It follows that if we relabel the vertex labeled $l$ as $v+1$, the vertex labeled $c$ as $v$, the vertex labeled $b$ as $v-1$, and the other vertices arbitrarily, then $F'$ obeys the second case stated in the lemma: $d_t' = d_t$ for all $t \leq v-2$, $d_{v-1}' = d_{v-1}-1$, $d_v' = d_v+1$, and $d_{v+1}'=2$.
\item We next consider $F'$ as constructed in Item~1 of Claim~2, referring here to Fig.~\ref{extens-prop-fig}.(2.1) to illustrate the change in the degree sequence of $F'$ relative to $F$. We observe directly that $d_l=1$, $d_a' \leq d_a+1$, and $d_b'\leq d_b$, while $d_x = d_x'$ for $x\in v(F)\setminus \{a,b\}$. It follows that if we relabel the vertex labeled $l$ as $v+1$, the vertex labeled $a$ as $v$, the vertex labeled $b$ as $v-1$, and the other vertices arbitrarily, then $F'$ obeys the third case stated in the lemma: $d_t' = d_t$ for all $t \leq v-2$, $d_{v-1}'\leq d_{v-1}$, $d_v' \leq d_v+1$, and $d_{v+1}'=1$.
\item Finally, we consider $F'$ as constructed in Item~2 of Claim~2, referring to Fig.~\ref{extens-prop-fig}.(2.2)) to illustrate how the degrees of $F$ and $F'$ are respectively modified. We observe directly that $d_l=1$, $d_a' = d_a$ and $d_b'=d_b-1$ (because the edge $ab$ is traversed exactly twice), while $d_x = d_x'$ for $x\in v(F)\setminus \{a,b\}$. It follows that if we relabel the vertex labeled $l$ as $v+1$, the vertex labeled $b$ as $v$, and the other vertices arbitrarily, then $F'$ obeys the fourth and final case stated in the lemma: $d_t' = d_t$ for all $t \leq v-1$, $d_v' = d_v-1$, and $d_{v+1}'=1$.\qedhere
\end{enumerate}
\end{proof}

\begin{Remark}
As defined above, ``$\vartriangleleft_k$'' is a binary relation over $W_k$. By Items~1 and~2 of Definition~\ref{def:extens}, the edge density $ |e(\cdot)| / |v(\cdot)| $ is non-increasing in ``$\vartriangleleft_k$,'' while the Euler characteristic $ |v(\cdot)| - |e(\cdot)| $ is non-decreasing in ``$\vartriangleleft_k$.'' Furthermore, ``$\vartriangleleft_k$'' naturally induces a finer partial ordering than the edge density or the Euler characteristic. To see this, consider the directed graph $\mathfrak{W}_k=(W_k,\{(F,F') : F\vartriangleleft_k F'\})$. This graph can be extended into a partial ordering of $W_k$ by writing $F\trianglelefteq_k F'$ if there exists a directed path between $F$ and $F'$ in $\mathfrak{W}_k$; i.e., if $F=F'$, $F\vartriangleleft_k F'$, or there exists $p\geq 1$ and $F_1,\dots, F_p\in W_k$ such that
$F\vartriangleleft_k F_1 \vartriangleleft_k \cdots \vartriangleleft_k F_p \vartriangleleft_k F'$.

With this definition, $(W_k,\trianglelefteq_k)$ is a poset; i.e., a partially ordered set. We see directly that if $k\geq 2$ is even, then a minimal element of $W_k$ according to this partial ordering is $K_2$, while if $k\geq 3$ is odd, a minimal element is $C_3$.  This follows because both $K_2$ and $C_3$ have in-degrees of zero when they are present in $\mathfrak{W}_k$.
A direct consequence of Lemma~\ref{extens-prop} is that if $k\geq 3$ is odd, then $C_k$ is the greatest element of $(W_k,\trianglelefteq_k)$, while if $k\geq 4$ is even, then $\{C_k\}\cup\mathcal{T}_{k/2+1}$ is the set of maximal elements of $(W_k,\trianglelefteq_k)$.
\end{Remark}

\subsection{\!Counting walks that induce isomorphic copies of subgraphs}%

The set $W_k$ and the isomorphism relation ``$\equiv$'' allow us to partition the set of all closed $k$-walks $\mathcal{W}_k(G)$ in any simple graph $G$. Letting $\sqcup$ denote the disjoint union, we have directly from Definition~\ref{closedwalks} that
\begin{equation}
\label{eq:walk-partition}
\mathcal{W}_k(G) = \bigsqcup_{F \in W_k} \left\{ w\in\mathcal{W}_k(G) : F_w \equiv F \right\} .
\end{equation}
With this partition it is natural to count the number of closed $k$-walks in $G$ that induce a given unlabeled graph $F$.

\begin{Definition}[Number $\mathrm{ind}_k(F,G)$ of closed $k$-walks in $G$ inducing an isomorphic copy of $F$]\label{defnocop}
Fix a walk length $k$ and two simple graphs $F,G$. We write
\begin{equation*}
\mathrm{ind}_k(F,G) = \left| \left\{ w\in\mathcal{W}_k(G) : F_w \equiv F \right\} \right|
\end{equation*}
for the number of closed $k$-walks in $G$ inducing a subgraph of $G$ isomorphic to $F$.
\end{Definition}

It follows from~\eqref{eq:walk-partition} that for any $k\in \mathbb{N}$,
\begin{equation*}
\Tr{A(G)^k}
	= \left| \mathcal{W}_k(G) \right|
	= \sum_{F \in W_k} \mathrm{ind}_k(F,G) .
\end{equation*}
Next we relate $\mathrm{ind}_k(F,G)$ to $X_F(G) = \left| \left\{ F' \! \subset \! G : F'\equiv F \right\} \right|$, the number of isomorphic copies of $F$ in $G$.

\begin{Lemma}\label{nb-copies}
Fix a walk length $k$ and two simple graphs $F,G$. Then
\begin{equation*}
\mathrm{ind}_k(F,G)=\mathrm{ind}_k(F,F) \, X_F(G).
\end{equation*}
\end{Lemma}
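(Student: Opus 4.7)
My plan is to prove the equality by partitioning the set of closed $k$-walks in $G$ inducing an isomorphic copy of $F$ according to the specific labeled subgraph of $G$ they induce, then showing each part of the partition has the same cardinality, namely $\mathrm{ind}_k(F,F)$.

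First, I would write
\begin{equation*}
\{w\in\mathcal{W}_k(G) : F_w \equiv F\} \;=\; \bigsqcup_{\substack{F'\subset G \\ F'\equiv F}} \{w\in\mathcal{W}_k(G) : F_w = F'\},
\end{equation*}
noting that this is a disjoint union because each walk $w$ induces a unique labeled subgraph $F_w$. By definition, there are exactly $X_F(G)$ terms in this union.

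Second, for each labeled subgraph $F'\subset G$ with $F'\equiv F$, I would argue that $\{w\in\mathcal{W}_k(G) : F_w = F'\} = \{w\in\mathcal{W}_k(F') : F_w = F'\}$. The inclusion ``$\subseteq$'' follows because if $F_w = F'$, then every vertex visited and every edge traversed by $w$ lies in $F'$, so $w$ is a walk in $F'$. The inclusion ``$\supseteq$'' is immediate since $F'\subset G$. Hence
\begin{equation*}
|\{w\in\mathcal{W}_k(G) : F_w = F'\}| = |\{w\in\mathcal{W}_k(F') : F_w = F'\}|.
\end{equation*}

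Third, I would transport the count from $F'$ to $F$ via the isomorphism $\Phi: v(F)\to v(F')$. The map $w = v_0v_1\cdots v_k \mapsto \Phi(v_0)\Phi(v_1)\cdots \Phi(v_k)$ is a bijection between closed $k$-walks in $F$ inducing $F$ and closed $k$-walks in $F'$ inducing $F'$, since $\Phi$ preserves adjacency in both directions. Therefore $|\{w\in\mathcal{W}_k(F') : F_w = F'\}| = \mathrm{ind}_k(F,F)$.

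Combining these three observations gives $\mathrm{ind}_k(F,G) = \sum_{F'\subset G,\, F'\equiv F} \mathrm{ind}_k(F,F) = X_F(G)\cdot \mathrm{ind}_k(F,F)$. I expect no serious obstacle; the only subtle point is the second step, where one must be careful that ``induces the labeled subgraph $F'$'' really forces the walk to live entirely within $F'$, which is a direct unpacking of Definition~\ref{closedalkdef}.
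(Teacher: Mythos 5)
Your proposal is correct and follows essentially the same route as the paper: the paper likewise decomposes $\mathrm{ind}_k(F,G)$ as a sum over the isomorphic copies $F'\subset G$ of $F$, identifies the inner count as $\mathrm{ind}_k(F,F')$, and transports it to $\mathrm{ind}_k(F,F)$ via the isomorphism, giving $X_F(G)$ equal contributions. Your explicit remark that a walk with $F_w=F'$ must live entirely in $F'$ (so the partition is genuinely disjoint) is exactly the point implicit in the paper's second equality.
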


\begin{proof}
Recall from Definition~\ref{closedwalks} that $\mathcal{W}_k(G)$ is the set of all closed $k$-walks in $G$. By Definition~\ref{defnocop},
\begin{align*}
\mathrm{ind}_k(F,G)
&=\sum_{w\in \mathcal{W}_k(G)}1_{\{F_w\equiv F\}}\\
&=\sum_{F'\subset G \,:\, F'\equiv F} \,\,\, \sum_{w\in \mathcal{W}_k(F')}1_{\{F_w\equiv F\}}\\
&=\sum_{F'\subset G \,:\, F'\equiv F} \,\,\, \mathrm{ind}_k(F,F')\\
&=\mathrm{ind}_k(F,F) \cdot | \{ F'\subset G : F'\equiv F \} |,
\end{align*}
since each $F'$ is isomorphic to $F$. The relation $\left| \left\{ F'\subset G : F'\equiv F \right\} \right|=X_F(G)$ completes the proof.
\end{proof}

\begin{Remark}
The growth rate of $ \mathrm{ind}_k(F,F) $ with $k$ can vary substantially, depending on the structure of $F$. For example, we have $ \mathrm{ind}_k(C_k,C_k) = 2k = \mathrm{aut}(C_k) $, whereas $ \mathrm{ind}_k(F,F) = 2(k/2)! = 2 \mathrm{aut}(F) $ for the $(k/2+1)$-star $F = K_{1,k/2}$. The structure of $F$, or indeed the parity of $k$, can also imply that $ \mathrm{ind}_k(F,F) = 0 $. For example, since every closed walk of odd length contains an odd cycle, $ \mathrm{ind}_k(F,F) = 0 $ whenever $F$ is a tree and $k$ is odd.  More generally, independently of the parity of $k$, $ \mathrm{ind}_k(F,F) = 0 $ if $F$ is not connected, or if $ |v(F)|$ or $ |e(F)| $ exceeds $k$.
\end{Remark}

Lemma~\ref{nb-copies} shows how $\mathrm{ind}_k(F_w,G)$, the number of copies of $w$ in $G$, relates to $X_{F_w}(G)$.  Similarly, we may count the number of copies of $w$ in any vertex-induced subgraph of $G$.  Averaging over all such subgraphs of fixed order leads to the notion of a graph walk density.

\begin{Definition}[Graph walk density $\varphi(w,G)$]\label{varphi}
Fix an integer $k\geq 2$, a closed $k$-walk $w\in\mathcal{W}_k(K_k)$, a graph $G$, and a neighborhood size $|u|$ such that $|v(F_w)| \leq |u| \leq |v(G)|$. We let $u\subset v(G)$ be chosen uniformly at random from amongst all $|u|$-subsets of $v(G)$ and take the expectation with respect to the randomized choice of subset $u$.
Then we define the graph walk density of $w$ in $G$ to be
\begin{equation*}
\varphi(w,G)
= \frac{\E \mathrm{ind}_{|w|}(F_w,G[u])}{\mathrm{ind}_{|w|}(F_w,K_{|u|})}
= \frac{\E X_{F_w}(G[u])}{X_{F_w}(K_{|u|})},
\end{equation*}
where $K_{|u|}$ is the complete graph on $|u|$ vertices, and $G[u] = (u, e(G)\cap u^{_{(2)}})$ is the subgraph of $G$ induced by $u \subset v(G)$, with $u^{_{(2)}}$ the set of all unordered pairs of elements of $u$.
\end{Definition}

We see that $\varphi(w,G)$ takes values between $0$ and $1$, and so we are justified in referring to $\varphi(w,G)$ as a density. Furthermore, as we shall show, $\varphi(w,G)$ is independent of the choice of neighborhood size $|u|$. Setting $|u| = |v(G)|$ immediately allows us to recognize the walk density $\varphi(w,G)$ as a graph homomorphism density~\citep{bollobas2007phase,lovasz2012large}.

\begin{Lemma}\label{lem:sampling}
Let $F,G$ be finite graphs with $|v(F)| \leq |v(G)|$, and fix $1 \leq |u| \leq |v(G)|$. Then for $u \subset v(G)$ chosen uniformly at random from among all $|u|$-subsets of $v(G)$,
\begin{equation*}
\E X_F(G[u])
=  \frac{ X_F(K_{|u|}) }{ X_F(K_{|v(G)|}) } X_F(G).
\end{equation*}
\end{Lemma}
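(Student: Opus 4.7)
The strategy is a standard double-counting argument, applied to the set of ordered pairs $(F', u)$ where $F' \subset G$ is an isomorphic copy of $F$ and $u$ is a $|u|$-subset of $v(G)$ containing $v(F')$. Writing $f = |v(F)|$, I would first observe that a copy $F' \subset G$ satisfies $F' \subset G[u]$ iff $v(F') \subset u$, since $G[u]$ is the \emph{induced} subgraph, and $F' \subset G$ forces $e(F') \subset e(G) \cap u^{(2)} = e(G[u])$ whenever $v(F') \subset u$. (The case $|u| < f$ is trivial: both sides of the claimed identity vanish, because $X_F(G[u]) = 0$ and $X_F(K_{|u|}) = 0$. So I may assume $|u| \geq f$.)

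Next I would sum $X_F(G[u])$ over all $\binom{|v(G)|}{|u|}$ choices of $u$, swap the order of summation, and note that for each fixed copy $F'$ the number of $|u|$-subsets of $v(G)$ containing the $f$ vertices of $F'$ is $\binom{|v(G)| - f}{|u| - f}$. This yields
\begin{equation*}
\sum_{u \subset v(G),\, |u|=|u|} X_F(G[u]) \;=\; X_F(G) \cdot \binom{|v(G)| - f}{|u| - f},
\end{equation*}
and dividing by $\binom{|v(G)|}{|u|}$ gives the expectation under the uniform choice of $u$.

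To match this with the form stated in the lemma, I would then use the identity $X_F(K_n) = \binom{n}{f}\, X_F(K_f)$, which is clear because every $f$-subset of $v(K_n)$ induces a $K_f$ containing the same number $X_F(K_f)$ of isomorphic copies of $F$. Hence
\begin{equation*}
\frac{X_F(K_{|u|})}{X_F(K_{|v(G)|})} \;=\; \frac{\binom{|u|}{f}}{\binom{|v(G)|}{f}},
\end{equation*}
and the elementary binomial identity $\binom{|v(G)|-f}{|u|-f}\binom{|v(G)|}{f} = \binom{|u|}{f}\binom{|v(G)|}{|u|}$ (both sides count flags $v(F') \subset u \subset v(G)$ with $|v(F')|=f$, $|u|=|u|$) then converts the hypergeometric factor from the previous step into exactly this ratio, completing the proof.

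\textbf{Main obstacle.} There is no real obstacle here; the only care needed is the boundary case $|u| < f$, and the observation that on induced subgraphs the condition ``$F' \subset G[u]$'' reduces to ``$v(F') \subset u$,'' so that no edges of $F'$ can be accidentally excluded by passing from $G$ to $G[u]$.
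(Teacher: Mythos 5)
Your proposal is correct and follows essentially the same route as the paper's proof: both compute $\sum_{u} X_F(G[u])$ by exchanging sums over copies $F'$ and subsets $u$, use that $F' \subset G[u]$ reduces to $v(F') \subset u$ for induced subgraphs, count $\binom{|v(G)|-|v(F)|}{|u|-|v(F)|}$ subsets per copy, and then convert the resulting hypergeometric factor into $X_F(K_{|u|})/X_F(K_{|v(G)|})$ (the paper via $X_F(K_m) = (m)_{|v(F)|}/\mathrm{aut}(F)$, you via the equivalent identity $X_F(K_n) = \binom{n}{|v(F)|}X_F(K_{|v(F)|})$ and a standard binomial flag count). Your explicit handling of the boundary case $|u| < |v(F)|$ is a minor, harmless addition.
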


\begin{proof}
For notational convenience, let $n = |v(G)|$ and $s = |u|$. We begin by expressing the expectation sum associated to $\E X_F(G[u])$ directly:
\begin{align}
\sum_{u\subset v(G) \,:\, |u|=s} X_F(G[u])
  & = \sum_{u\subset v(G) \,:\, |u|=s} \,\, \sum_{F'\subset G[u]} 1_{\{F'\equiv F \}} \nonumber
\\ & = \sum_{u\subset v(G) \,:\, |u|=s} \,\, \sum_{F'\subset G[u]} \,\, \sum_{i=1}^{X_F(G)} 1_{\{F'=F_i \}}, \label{eq:three-sums}
\end{align}
where $F_1,\dots,F_{X_F(G)}$ enumerate all copies of $F$ in $G$.  Rearranging the order of these three sums, each of which is finite for finite $G$, and subsequently fixing $i$ and $u$ to focus on the summation in $F'$, we have
\begin{align}
\sum_{F'\subset G[u]} 1_{\{F'=F_i \}}
   & = \sum_{u' \subset v(G[u])} \,\, \sum_{t'\subset e(G[u])\cap u'^{(2)} } 1_{\{(u',t')=F_i \}} \nonumber
\\ & = \sum_{u' \subset u} \,\, \sum_{t'\subset e(G[u'])} 1_{\{(u',t')=F_i \}} \nonumber
\\ & = \sum_{u' \subset u} 1_{\{u'=v(F_i) \}} \sum_{t'\subset e(G[u'])} 1_{\{t'=e(F_i) \}} \nonumber
\\ & = \sum_{v(F_i) \subset u} \,\, \sum_{t'\subset e(G[v(F_i)])} 1_{\{t'=e(F_i) \}} \nonumber
\\ & = \sum_{v(F_i) \subset u} 1, \label{eq:inner-sum}
\end{align}
since $e(F_i) \subset e(G[v(F_i)])$.  We next count how many $s$-subsets of $v(G)$ contain each $v(F_i)$ in turn, whence from~\eqref{eq:three-sums} and~\eqref{eq:inner-sum} we obtain
\begin{equation*}
\sum_{i=1}^{X_F(G)} \,\, \sum_{u\subset v(G) \,:\, |u|=s} \,\, \sum_{v(F_i) \subset u} 1
= \sum_{i=1}^{X_F(G)} \binom{n-|v(F_i)|}{s-|v(F_i)|}.
\end{equation*}
Since $|v(F_i)| = |v(F)|$ for all $i$, we obtain the intermediate result that
\begin{equation*}
\sum_{u\subset v(G) \,:\, |u|=s} X_F(G[u])
= \binom{n-|v(F)|}{s-|v(F)|} X_F(G).
\end{equation*}
Now, if we select $u$ with probability $1 / \tbinom{n}{s} $, the claimed result follows immediately upon noting that $ X_F(K_m) = (m)_{|v(F)|} / \mathrm{aut}(F) $  for any positive integer $m$, and thus $ \tbinom{n-|v(F)|}{s-|v(F)|}  / \tbinom{n}{s}  = X_F(K_s) / X_F(K_n) $.
\end{proof}

A direct consequence of Lemma~\ref{lem:sampling} is that for a closed $k$-walk $w$ in $K_k$, a fixed $G$, and subsets $t,u\subset v(G)$ of sizes $|t|,|u|$, each chosen uniformly at random and with $|v(F_w)| \leq |t|,|u| \leq |v(G)| $, we have
\begin{equation*}
\frac{\E X_{F_w}(G[u])}{X_{F_w}(K_{|u|})}
=  \frac{\E X_{F_w}(G[t])}{X_{F_w}(K_{|t|})}.
\end{equation*}
Thus, using Lemma~\ref{nb-copies}, we find that
\begin{equation*}
\varphi(w,G) = \frac{\E\mathrm{ind}_k(F_w,G[u])}{\mathrm{ind}_k(F_w,K_{|u|})}
=  \frac{\E \mathrm{ind}_k(F_w,G[t])}{\mathrm{ind}_k(F_w,K_{|t|})},
\end{equation*}
verifying that $\varphi(w,G)$ does not depend on the choice of neighborhood size $|u|$.

\begin{Remark}\label{remark:extension}
Fix a graph $G$ and two closed $k$-walks $w$ and $w'$ where $F_w\vartriangleleft_k F_{w'}$. In this setting, $\varphi(w',G)/\varphi(w,G)$ governs $\mathrm{ind}_k(F_w,G[u])/\mathrm{ind}_k(F_{w'},G[u])$, and therefore determines which of $w$ and $w'$ has a larger number of copies. To see this, observe that the number of copies of $w'$ over the number of copies of $w$ takes the form:
\begin{align*}
\frac{\mathrm{ind}_k(F_{w'},G)}{\mathrm{ind}_k(F_w,G)}
&= \frac{\mathrm{ind}_k(F_{w'},K_n)}{\mathrm{ind}_k(F_w,K_n)}
\frac{\mathrm{ind}_k(F_{w'},G) / \mathrm{ind}_k(F_{w'},K_n)}{\mathrm{ind}_k(F_w,G) / \mathrm{ind}_k(F_w,K_n)}\\
&= \frac{\mathrm{ind}_k(F_{w'},K_n)}{\mathrm{ind}_k(F_w,K_n)}\frac{\varphi(w',G)}{\varphi(w,G)}.
\end{align*}
Since for $F\subset G\subset K_n$, $\mathrm{ind}_k(F,K_n) = (n)_{|v(F)|}\mathrm{ind}_k(F,F)/\mathrm{aut}(F)$, and as $|v(F_{w'})| = |v(F_w)|+1$ whenever $F_w\vartriangleleft_k F_{w'}$, we recover
\begin{align*}
\frac{\mathrm{ind}_k(F_{w'},G)}{\mathrm{ind}_k(F_w,G)}
&= \frac{(n)_{|v(F_{w'})|}{\mathrm{ind}_k(F_{w'},F_{w'})}/{\mathrm{aut}(F_{w'})}}{(n)_{|v(F_w)|}{\mathrm{ind}_k(F_w,F_w)}/{\mathrm{aut}(F_w)}}
\frac{\varphi(w',G)}{\varphi(w,G)}\\
&= (n-|v(F_{w'})|)\frac{\mathrm{ind}_k(F_{w'},F_{w'})}{\mathrm{ind}_k(F_w,F_w)}\frac{\mathrm{aut}(F_w)}{\mathrm{aut}(F_{w'})}\frac{\varphi(w',G)}{\varphi(w,G)},
\end{align*}
and the asymptotic behavior of the number of copies of an extension $w'$ over the number of copies of the original walk $w$ is the same as that of $n{\varphi(w',G)}/{\varphi(w,G)}$.
\end{Remark}
%
%
\section{Determining dominating closed walks}\label{sec:S2}
\subsection{Defining asymptotically dominating walks}

Above we have defined two key quantities for any simple graph $G$: $\mathrm{ind}_k(F,G)$, the number of closed $k$-walks in $G$ inducing an isomorphic copy of $F$, and $\varphi(w,G)$, the corresponding graph walk density. Recalling~\eqref{eq:walk-partition}, it is then natural to ask how different walks contribute to the set $\mathcal{W}_k(G)$ of closed $k$-walks in $G$; i.e, whether one or more terms of the form $\mathrm{ind}_k(F,G)$ dominate the sum
\begin{equation}
\label{tracedecomp}
\Tr{A(G)^k} = \left| \mathcal{W}_k(G) \right| = \sum_{F \in W_k} \mathrm{ind}_k(F,G).
\end{equation}
In the theory of dense graph limits (see, e.g.,~\cite{lovasz2012large}), it is well known that
$k$-walks inducing cycles will dominate all other walks in number.

To understand which terms are significant in~\eqref{tracedecomp}, we will study its dominating terms. To do so, we define an asymptotic regime corresponding to a sequence of random graphs $\{G_n\}$ whose number of nodes tends to infinity. We assume that for all $n$ sufficiently large, $\E\left| \mathcal{W}_k(G_n) \right|>0$. This fact implies that eventually in $n$, the ratio $\E\mathrm{ind}_k(F,G_n)/\E\left| \mathcal{W}_k(G_n) \right|$ is well defined. From this, appealing to the compactness of $[0,1]$ and the monotone convergence theorem, we define
\begin{equation}
\label{eq:gamma_propn}
\gamma_{k,F}(\{G_n\}) = \liminf_{n\to\infty} \frac{\E\mathrm{ind}_k(F,G_n)}{\E\left| \mathcal{W}_k(G_n) \right|}.
\end{equation}

The quantity $\left| \mathcal{W}_k(G_n) \right|$ in~\eqref{eq:gamma_propn} is a sum of non-negative counts, as shown by~\eqref{tracedecomp}, and furthermore this sum is over a finite set.  This in turns implies that we may exchange the order of expectation and summation, and so $\gamma_{k,F}(\{G_n\})\in[0,1]$.

\begin{Definition}[Set $W_k^\ast(\{G_n\})$ of asymptotically dominating walk-induced subgraphs]\label{def:wkast2} Fix a walk length $k\in \mathbb{N}$. Let $\{G_n\}$ be a sequence of random simple graphs whose number of nodes tends to infinity and such that for all $n$ sufficiently large, $\E\left| \mathcal{W}_k(G_n) \right|>0$. We then define the set $W_k^\ast(\{G_n\})$ of asymptotically dominating walk-induced subgraphs as follows:
\begin{equation*}
W_k^\ast(\{G_n\}) = \left\{F \in W_k\ : \gamma_{k,F}(\{G_n\})>0\right\}.
\end{equation*}
\end{Definition}

We next exhibit four key properties of $W_k^\ast(\{G_n\})$.

\begin{Proposition}\label{w-k-ast-prop}
Consider a sequence $\{G_n\}$ of random simple graphs whose number of nodes tends to infinity. Under the condition that for all $n$ sufficiently large, $\E\left| \mathcal{W}_k(G_n) \right|>0$, the set of asymptotically dominating walk-induced subgraphs $W_k^\ast(\{G_n\})$, verifies the following four properties:
\begin{enumerate}
\item $W_k^\ast(\{G_n\})$ is non-empty for all $k\geq 2$.
\item \label{item:walksE} $W_k^\ast(\{G_n\})$ fully determines $\E\left| \mathcal{W}_k(G_n) \right|$ asymptotically, in the sense that
\begin{equation}
\E\left| \mathcal{W}_k(G_n) \right| \sim \sum_{F \in W_k^\ast(\{G_n\})} \E\mathrm{ind}_k(F,G_n).
\label{walksE}
\end{equation}
\item Any element of $\,W_k^\ast(\{G_n\})$ dominates all elements of $\,W_k\setminus W_k^\ast(\{G_n\})$, so that for any $F^\ast\in W_k^\ast(\{G_n\})$,
\begin{equation*}
\lim_{n\to\infty} \frac{\sum_{F' \in W_k\setminus W_k^\ast(\{G_n\})}\E\mathrm{ind}_k(F',G_n)}{\E\mathrm{ind}_k(F^\ast,G_n)} = 0.
\end{equation*}
\item Every element of $\,W_k^\ast(\{G_n\})$ is of the same order of magnitude. Specifically, for $F^\ast_1,F^\ast_2\in W_k^\ast(\{G_n\})$, there exist positive constants $C_1$ and $C_2$ such that for all sufficiently large $n$
\begin{equation*}
C_1\E\mathrm{ind}_k(F^\ast_2,G_n)\leq  \E\mathrm{ind}_k(F^\ast_1,G_n)
\leq C_2\E\mathrm{ind}_k(F^\ast_2,G_n).
\end{equation*}
Then, as for any fixed $k$ the number of closed $k$-walks are finite, for all $F^\ast\in W_k^\ast(\{G_n\})$ we have $ \E\mathrm{ind}_k(F^\ast,G_n) = \Theta\left( \E\left| \mathcal{W}_k(G_n) \right| \right) $.
\end{enumerate}
\end{Proposition}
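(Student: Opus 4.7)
The plan is to exploit the partition identity from equation~\eqref{tracedecomp}, which taken in expectation reads $\E|\mathcal{W}_k(G_n)| = \sum_{F\in W_k} \E\,\mathrm{ind}_k(F,G_n)$, together with the finiteness of $W_k$ for every fixed $k$ (its members are connected unlabeled graphs on at most $k$ vertices). Normalizing by $r_n(F) := \E\,\mathrm{ind}_k(F,G_n)/\E|\mathcal{W}_k(G_n)|$ places the vector $(r_n(F))_{F\in W_k}$ on the standard simplex of $\mathbb{R}^{|W_k|}$, with $\sum_F r_n(F)=1$ for every sufficiently large $n$. All four items will then reduce to statements about the limiting behavior of this probability vector on a finite set, with the liminf in the definition of $\gamma_{k,F}$ handled by extracting a convergent subsequence.

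For item 1, I would invoke compactness: since $\{(r_n(F))_{F\in W_k}\}_n$ lies in a compact simplex of fixed dimension, some subsequence $\{n_j\}$ converges to a limit $\alpha$ with $\sum_F \alpha(F)=1$; hence $\alpha(F^\ast)>0$ for at least one $F^\ast$, which certifies $F^\ast\in W_k^\ast$ (at least along this subsequence). For items 2--4, I would continue working along such a convergent subsequence, on which $\{F:\alpha(F)>0\}$ coincides with $W_k^\ast$. Item 2 then follows from $\sum_{F\in W_k^\ast} r_n(F)\to \sum_F\alpha(F)=1$; item 3 from $\sum_{F'\notin W_k^\ast} r_n(F')/r_n(F^\ast)\to 0/\alpha(F^\ast)=0$ for any $F^\ast\in W_k^\ast$; and item 4 from $r_n(F^\ast_1)/r_n(F^\ast_2)\to \alpha(F^\ast_1)/\alpha(F^\ast_2)\in(0,\infty)$ for $F^\ast_1,F^\ast_2\in W_k^\ast$, which yields the two-sided comparison with explicit constants $C_1,C_2$ for all sufficiently large $n$ and thereby $\E\,\mathrm{ind}_k(F^\ast,G_n)=\Theta(\E|\mathcal{W}_k(G_n)|)$.

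The main obstacle is conceptual rather than technical: because $\gamma_{k,F}$ is defined as a liminf rather than a limit, a priori the vector $r_n$ could oscillate between distinct limit points of the simplex, so strict dominance is only guaranteed along appropriate convergent subsequences. Finiteness of $W_k$ ensures that such subsequences always exist, and in the concrete models used in the sequel Propositions~\ref{grg-prop} and~\ref{powlaw-prop} will establish that $r_n(F)$ in fact admits genuine limits---rendering the subsequence passage cosmetic and reducing all four items to routine bookkeeping on the partition identity.
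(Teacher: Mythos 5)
There is a genuine gap: your argument proves the four properties only along a chosen convergent subsequence, while the proposition asserts them for the full sequence, and your key identification of the subsequential limit's support with $W_k^\ast(\{G_n\})$ is not valid. Since $\gamma_{k,F}(\{G_n\})$ is a $\liminf$ over the \emph{whole} sequence, knowing that $\alpha(F^\ast)>0$ along one subsequence does not certify $F^\ast\in W_k^\ast(\{G_n\})$ (another subsequence may drive $r_n(F^\ast)$ to zero); you concede this with ``at least along this subsequence,'' but that concession is fatal, because Item~1 claims actual non-emptiness of $W_k^\ast(\{G_n\})$, Item~2 claims a full-sequence asymptotic equivalence, Item~3 claims $\lim_{n\to\infty}$ of a ratio equals zero, and Item~4 claims two-sided bounds for all sufficiently large $n$. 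Only the inclusion $W_k^\ast(\{G_n\})\subseteq\{F:\alpha(F)>0\}$ holds in general, so your derivations of Items~2--4 from $\alpha$ do not transfer to $W_k^\ast(\{G_n\})$ even along the subsequence, let alone off it. The fallback of invoking Propositions~\ref{grg-prop} and~\ref{powlaw-prop} to say the ratios genuinely converge is not legitimate here: this proposition must hold in the generality of Assumptions~\ref{asump1.1} and~\ref{asump1.3} alone, since it is an ingredient of Theorem~\ref{thm:dominant-walk}, of which those later propositions are themselves consequences.

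The paper's proof never passes to subsequences. It starts from $\sum_{F\in W_k}\E\mathrm{ind}_k(F,G_n)/\E|\mathcal{W}_k(G_n)|=1$, uses finiteness of $W_k$ to take the inferior limit and conclude $\sum_{F\in W_k}\gamma_{k,F}(\{G_n\})=1$, which gives Item~1 and, combined with the trivial upper bound of the ratio by one, the full-sequence limit in Item~2; Items~3 and~4 then follow because for $F^\ast\in W_k^\ast(\{G_n\})$ the quantity $\gamma_{k,F^\ast}(\{G_n\})^{-1}$ is finite, so the $\limsup$ of the reciprocal ratio $\big(\E\mathrm{ind}_k(F^\ast,G_n)/\E|\mathcal{W}_k(G_n)|\big)^{-1}$ is bounded over the whole sequence. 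Your instinct that oscillation of the normalized vector is the delicate point is fair---that is exactly where the paper's argument does its work, in the interchange of the finite sum with the inferior limit---but the correct repair within the statement's framework is to argue with $\liminf$/$\limsup$ bounds valid for the entire sequence (e.g., for Item~4, eventually $r_n(F_i^\ast)\geq\gamma_{k,F_i^\ast}(\{G_n\})/2>0$ and $r_n(F_i^\ast)\leq 1$, giving explicit constants), not to restrict attention to a subsequence, which changes what is being proven.
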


\begin{proof} We will prove the four results in order.

\vspace{.5\baselineskip}
\emph{Proof of 1:} 
Since for $n$ large enough $\E\left| \mathcal{W}_k(G_n) \right|>0$, directly from~\eqref{tracedecomp} it follows
\begin{equation*}
\sum_{F \in W_k} \frac{\E\mathrm{ind}_k(F,G_n)}{\E\left| \mathcal{W}_k(G_n) \right|}=1.
\end{equation*}
Thus, as $W_k$ is a finite set, by taking the inferior limit on both sides of the above equation we obtain $\textstyle{\sum}_{F \in W_k}\gamma_{k,F}(\{G_n\}) = 1$.  This implies that at least one $\gamma_{k,F}(\{G_n\})$ is strictly positive, and so $W_k^\ast(\{G_n\})$ is non-empty.

\vspace{.5\baselineskip}
\emph{Proof of 2:} 
We prove that the limit superior and the limit inferior of the ratio of the two quantities in~\eqref{walksE} are both tending to one. Therefore, the limit exists and is equal to one, yielding the desired result.
First, by Item~1 of the current proposition,
\begin{equation*}
{\liminf}_{n\to\infty} \frac{{\sum}_{F \in W_k^\ast(\{G_n\})} \E\mathrm{ind}_k(F,G_n)}{\E\left| \mathcal{W}_k(G_n) \right|} = 1.
\end{equation*}
Furthermore, by~\eqref{tracedecomp}, the ratio of the two quantities in~\eqref{walksE} is smaller than one, and therefore
\begin{equation*}
{\limsup}_{n\to\infty} \frac{{\sum}_{F \in W_k^\ast(\{G_n\})} \E\mathrm{ind}_k(F,G_n)}{\E\left| \mathcal{W}_k(G_n) \right|}\leq 1.
\end{equation*}
Consequently, we conclude that
\begin{equation}
\label{eq:lim_term}
{\lim}_{n\to\infty}\frac{{\sum}_{F \in W_k^\ast(\{G_n\})} \E\mathrm{ind}_k(F,G_n)}{\E\left| \mathcal{W}_k(G_n) \right|}= 1.
\end{equation}

\vspace{.5\baselineskip}
\emph{Proof of 3:} 
Fix $F^\ast\in W_k^\ast(\{G_n\})$. We first rewrite the ratio of interest as follows:
\begin{multline}\label{eq-wk-props-item3}
\frac{\sum_{F' \in W_k\setminus W_k^\ast(\{G_n\})}\E\mathrm{ind}_k(F',G_n)}{\E\mathrm{ind}_k(F^\ast,G_n)} \\=
\left(1-\frac{\sum_{F \in W_k^\ast(\{G_n\})}\E\mathrm{ind}_k(F,G_n)}{\E\left| \mathcal{W}_k(G_n) \right|}\right)\left(\frac{\E\mathrm{ind}_k(F^\ast,G_n)}{\E\left| \mathcal{W}_k(G_n) \right|}\right)^{-1}.
\end{multline}
The first term in the product comprising the right-hand side of~\eqref{eq-wk-props-item3} admits a limit of zero by~\eqref{eq:lim_term} (Item~2 of the current proposition). To conclude that the product itself is tending to zero, we will show that its second term has a finite superior limit. To this end, we use both the monotonicity and the continuity of the function $g(x)=x^{_{-1}}$ for all $x>0$, yielding
\begin{align*}
\limsup_{n\rightarrow \infty}\left(
	\frac{\E\mathrm{ind}_k(F^\ast,G_n)}
	{\E\left| \mathcal{W}_k(G_n) \right|}\right)^{-1}
& = \lim_{n\rightarrow \infty}
	\sup_{n'>n}\left\{\left(
	     \frac{\E\mathrm{ind}_k(F^\ast,G_{n'})}
	     {\E\left| \mathcal{W}_k(G_{n'}) \right|}\right)^{-1}
	\right\}\\
& = \lim_{n\rightarrow \infty}
	\left(\inf_{n'>n}\left\{
	     \frac{\E\mathrm{ind}_k(F^\ast,G_{n'})}
	     {\E\left| \mathcal{W}_k(G_{n'}) \right|}
	  \right\}\right)^{-1}\\
& = \left(\lim_{n\rightarrow \infty}
	\inf_{n'>n}\left\{
	     \frac{\E\mathrm{ind}_k(F^\ast,G_{n'})}
	     {\E\left| \mathcal{W}_k(G_{n'}) \right|}
	\right\}\right)^{-1}\\
& = \left( \liminf_{n\rightarrow \infty}
	\frac{\E\mathrm{ind}_k(F^\ast,G_n)}
	{\E\left| \mathcal{W}_k(G_n) \right|}\right)^{-1}\\
& = \gamma_{k,F^\ast}(\{G_n\})^{-1},
\end{align*}
whence from~\eqref{eq:gamma_propn} and Definition~\ref{def:wkast2} we observe that $\gamma_{k,F^\ast}(\{G_n\})^{-1}$ is finite and greater than $1$. Thus we have shown that the limit of the first term in the product comprising the right-hand side of~\eqref{eq-wk-props-item3} is zero, and that the superior limit of the second term is finite and strictly positive.  Now, the left-hand side of~\eqref{eq-wk-props-item3}, being a ratio of non-negative terms, is itself non-negative. Therefore we conclude that its limit is zero as claimed.

\vspace{.5\baselineskip}
\emph{Proof of 4:} 
Fix $F^\ast_1,F^\ast_2\in W_k^\ast(\{G_n\})$, and consider the ratio
\begin{equation}
\label{eq:lim_ratio}
\frac{\E\mathrm{ind}_k(F^\ast_1,G_n)}
{\E\mathrm{ind}_k(F^\ast_2,G_n)}
=
\frac{{\E\mathrm{ind}_k(F^\ast_1,G_n)}/
{\left| \mathcal{W}_k(G_n) \right|}}
{{\E\mathrm{ind}_k(F^\ast_2,G_n)}/
{\left| \mathcal{W}_k(G_n) \right|}}.
\end{equation}
The inferior limits of both the numerator and denominator in the right-hand side of~\eqref{eq:lim_ratio} (i.e., $\gamma_{k,F_1^\ast}(\{G_n\})$ and $\gamma_{k,F_2^\ast}(\{G_n\})$, respectively) are finite and strictly positive, while the superior limits of both are upper bounded by unity. Using the same arguments as in the proof of Item~3 of the current proposition, it follows that: i) the superior limit of the left-hand side of~\eqref{eq:lim_ratio} is upper bounded by $\gamma^{-1}_{k,F_2^\ast}(\{G_n\}) \in [1, \infty)$ and ii) its inferior limit is lower bounded by $\gamma_{k,F_1^\ast}(\{G_n\}) \in (0,1]$. These two bounds justify the claimed result and the use of the $\Theta$ notation.
\end{proof}

\subsection{Conditions when walks inducing trees and cycles dominate}

We now characterize the set $W_k^\ast(\{G_n\})$ of asymptotically dominating walk-indu\-ced subgraphs under mild conditions on sequences $\{G_n\}$ of random simple graphs whose number of nodes tends to infinity.  To do so we will use the graph walk density $\varphi(\cdot,\cdot)$.  This density can be related to classical concepts: For example, the assumption underpinning the theory of dense graph limits is that any $\varphi(w,G_n)$ has a strictly positive limit~\citep{lovasz2012large}. For generalized random graphs with bounded kernels, the corresponding assumption is that there exists a sequence $\{ \rho \}$ taking values in $(0,\|\kappa\|_{\infty}^{-1})$ such that $\smash{\rho^{{}_{-|e(F_w)|}}}\varphi(w,G_n)$ always has a strictly positive limit~\citep{bollobas2007phase,olhede2013network}.

Recalling the notion of walk extensions from Definition~\ref{def:extens}, we introduce the following assumptions.

\begin{Assumption}
 \label{asump1.1} $\{G_n\}$ is a sequence of random simple graphs whose number of nodes tend to infinity and such that for $n$ sufficiently large,
\begin{equation*}
\E\left| \mathcal{W}_k(G_n) \right|>0.
\end{equation*}
\end{Assumption}

\begin{Assumption}
\label{asump1.3} For all $w\in\mathcal{W}_k(K_k)$ such that for $n$ sufficiently large $\E\varphi(w,G_n)>0$, if any extensions of $w$ exist in $\mathcal{W}_k(K_k)$, then at least one of them---say $w'$---satisfies
\begin{equation*}
\frac{|v(G_n)| \E \varphi(w',G_n)}{ \E\varphi(w,G_n) }\rightarrow \infty.
\end{equation*}
\end{Assumption}

This allows us to characterize which walks dominate in expectation.

\subsection{Dominance of walks mapping out trees and cycles}

\begin{Theorem}\label{thm:dominant-walk}
Let $\{G_n\}$ be a sequence of random simple graphs satisfying Assumptions~\ref{asump1.1} and~\ref{asump1.3}. Then, for $k=2$ we have $W_2^\ast(\{G_n\}) = W_2 = \{ K_2 \}$. Furthermore, if $k\geq 3$ is odd,
\begin{equation*}
W_k^\ast(\{G_n\}) =  \{C_k\},
\end{equation*}
while if $k\geq 4$ is even,
\begin{equation*}
W_k^\ast(\{G_n\}) \subset  \{C_k\}\cup\mathcal{T}_{k/2+1}.
\end{equation*}
\end{Theorem}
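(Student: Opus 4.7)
The plan is to argue by contradiction: I will show that no element of $W_k^\ast(\{G_n\})$ can admit an extension in the sense of Definition~\ref{def:extens}. Once this is done, Lemma~\ref{extens-prop} forces every member of $W_k^\ast(\{G_n\})$ to be isomorphic either to $C_k$ (for $k\geq 3$) or to an element of $\mathcal{T}_{k/2+1}$ (for even $k$). The three stated conclusions then follow from three easy observations: $W_2 = \{K_2\}$ and $W_2^\ast$ is non-empty by Item~1 of Proposition~\ref{w-k-ast-prop}; for odd $k \geq 3$, Item~7 of Lemma~\ref{wk-props} excludes trees and so $W_k^\ast \subset \{C_k\}$, which by non-emptiness gives equality; and for even $k \geq 4$ we obtain directly the subset containment claimed.

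The first technical step is to translate Assumption~\ref{asump1.3}, phrased in terms of the graph walk density $\varphi$, into a statement about $\E\mathrm{ind}_k$. Combining Lemma~\ref{nb-copies} with Definition~\ref{varphi} evaluated at $|u|=|v(G_n)|$ yields the deterministic identity $X_F(G_n) = \varphi(w,G_n)\, X_F(K_n)$ whenever $F_w\equiv F$, and hence
\[
\E\mathrm{ind}_k(F,G_n) \;=\; \mathrm{ind}_k(F,F)\,X_F(K_n)\,\E\varphi(w,G_n).
\]
Since $|v(F')|=|v(F)|+1$ implies $X_{F'}(K_n)/X_F(K_n) = \Theta(n)$, and $\mathrm{ind}_k(F',F')>0$ for any $F' \in W_k$ by Item~1 of Lemma~\ref{wk-props}, it follows that for any walk $w'$ with $F_w \vartriangleleft_k F_{w'} = F'$,
\[
\frac{\E\mathrm{ind}_k(F',G_n)}{\E\mathrm{ind}_k(F,G_n)} \;=\; \Theta\!\left(n\cdot\frac{\E\varphi(w',G_n)}{\E\varphi(w,G_n)}\right).
\]
Note that $\E\varphi(w,G_n)>0$ for $n$ large (since otherwise $F$ could not belong to $W_k^\ast$), which legitimizes the appeal to Assumption~\ref{asump1.3} and makes the right-hand side tend to infinity for at least one extension $F'$.

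To close the contradiction, suppose $F \in W_k^\ast(\{G_n\})$ admits some extension, and let $F'$ be the extension supplied by Assumption~\ref{asump1.3}. There are two cases. If $F' \in W_k^\ast(\{G_n\})$, then Item~4 of Proposition~\ref{w-k-ast-prop} forces $\E\mathrm{ind}_k(F,G_n)$ and $\E\mathrm{ind}_k(F',G_n)$ to have the same order of magnitude, contradicting the blow-up above. If instead $F' \notin W_k^\ast(\{G_n\})$, then $\gamma_{k,F'}(\{G_n\})=0$ while $\gamma_{k,F}(\{G_n\})>0$, so along a subsequence $\E\mathrm{ind}_k(F',G_n)/\E|\mathcal{W}_k(G_n)| \to 0$ while $\E\mathrm{ind}_k(F,G_n)/\E|\mathcal{W}_k(G_n)|$ stays bounded below; dividing these gives $\E\mathrm{ind}_k(F',G_n)/\E\mathrm{ind}_k(F,G_n) \to 0$ along the subsequence, again contradicting the blow-up.

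The main obstacle, though a modest one, is ensuring that the extension guaranteed by Assumption~\ref{asump1.3} (which only produces \emph{some} $w'$ with ratio blow-up, given that extensions exist) meshes with Lemma~\ref{extens-prop} (which only produces \emph{some} extension when $F$ is neither $C_k$ nor a tree in $\mathcal{T}_{k/2+1}$). They fit cleanly: Lemma~\ref{extens-prop} guarantees that the hypothesis of Assumption~\ref{asump1.3} is non-vacuous whenever $F \notin \{C_k\} \cup \mathcal{T}_{k/2+1}$, and then any single extension whose ratio blows up is enough to trigger the two-case contradiction. With this in place, everything reduces to the bookkeeping in the first paragraph.
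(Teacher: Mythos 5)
Your proposal is correct and follows essentially the same route as the paper: reduce via Lemma~\ref{extens-prop} to showing that no extension-admitting $F\in W_k$ can be asymptotically dominant, translate Assumption~\ref{asump1.3} into a blow-up of $\E\mathrm{ind}_k(F',G_n)/\E\mathrm{ind}_k(F,G_n)$ exactly as in Remark~\ref{remark:extension}, and finish with the properties in Proposition~\ref{w-k-ast-prop}. The only cosmetic difference lies in the last step: the paper bounds $\E\mathrm{ind}_k(F,G_n)/\E\left|\mathcal{W}_k(G_n)\right|$ directly by $\E\mathrm{ind}_k(F,G_n)/\E\mathrm{ind}_k(F',G_n)$ (treating separately the case where the counts vanish infinitely often), whereas you argue by contradiction with a case split on whether $F'\in W_k^\ast(\{G_n\})$; both versions are sound.
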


\begin{proof}
First, we appeal to Item~1 of Proposition~\ref{w-k-ast-prop} (which holds under Assumption~\ref{asump1.1}), and conclude that $W_k^\ast(\{G_n\})$ is not empty for all $k\geq 2$. Therefore, as $W_2$ is the singleton $\{ K_2 \}$, we have $W_2^\ast(\{G_n\}) = W_2 = \{ K_2 \}$. Similarly, $W_3^\ast(\{G_n\}) = W_3 = \{ C_3 \}$.

Having established that the set $W_k^\ast(\{G_n\})$ is non-empty and having determined $W_2^\ast(\{G_n\})$ and $W_3^\ast(\{G_n\})$, we next show that for all $k\geq 4$, if $F\in W_k$ admits at least one extension, then $F\not\in W_k^\ast(\{G_n\})$.  The result will then follow immediately from Lemma~\ref{extens-prop}, which asserts that every $F\in W_k$ not isomorphic either to $C_k$ when $k\geq 3$, or to an element of $\mathcal{T}_{k/2+1}$ when $k\geq 2$ is even, admits an extension.

Thus, we fix $k\geq 4$ and $F\in W_k$ such that $F$ admits at least one extension. To show that $F\not\in W_k^\ast(\{G_n\})$, we will appeal to the definition of $W_k^\ast(\{G_n\})$ in Definition~\ref{def:wkast2}, whereby $F\in W_k^\ast(\{G_n\})$ if and only if
\begin{equation}\label{proof-thms1-case0}
\liminf_{n\to\infty}\frac{\E\mathrm{ind}_k(F,G_n)}{\E\left| \mathcal{W}_k(G_n) \right|}>0.
\end{equation}
Note that this ratio is always well defined under Assumption~\ref{asump1.1} since for all $n$ sufficiently large, $\E\left| \mathcal{W}_k(G_n) \right|>0$. We distinguish two (exhaustive) cases, namely the case where $\E \mathrm{ind}_k(F,G_n)$ is positive but becomes negligble
in comparison to $\E\left| \mathcal{W}_k(G_n) \right|$, or when we allow $\E \mathrm{ind}_k(F,G_n)$ to be zero for large $n$. Then:
\begin{enumerate}
\item Suppose $\E\mathrm{ind}_k(F,G_n) > 0$ for all $n$ sufficiently large. First, since $F$ admits at least an extension, we can fix $w\in\mathcal{W}_k(G_n)$ such that $F_w \equiv F$ and $w$ admits at least one extension (see Definition~\ref{def:extens}). Then, by construction, $\E\mathrm{ind}_k(F_w,G_n) = \E\mathrm{ind}_k(F,G_n) > 0$, so that $\E\varphi(w,G_n) > 0$. Therefore, we can appeal to Assumption~\ref{asump1.3}, and fix $w'\in\mathcal{W}_k(G_n)$ to be an extension of $w$ such that
\begin{equation}\label{proof-thms1-assump1}
\lim_{n\to\infty}\frac{|v(G_n)|\E\varphi(w',G_n)}{\E\varphi(w,G_n)}=\infty.
\end{equation}

We now use~\eqref{proof-thms1-assump1} to show that ${\E\mathrm{ind}_k(F,G_n)}/{\E\left| \mathcal{W}_k(G_n) \right|}$ is tending to zero. To this end we first lower-bound the ratio and note that by construction
\begin{equation}\label{proof-thms1-case1.1}
0\leq \frac{\E\mathrm{ind}_k(F,G_n)}{\E\left| \mathcal{W}_k(G_n) \right|}.
\end{equation}
Next we upper bound this ratio. To this end we note that since $\E\left| \mathcal{W}_k(G_n) \right|$ is larger than $\E\mathrm{ind}_k(F_{w'},G_n)$, we have
\begin{equation}\label{proof-thms1-case1.2}
\frac{\E\mathrm{ind}_k(F,G_n)}{\E\left| \mathcal{W}_k(G_n) \right|}
\leq
\frac{\E\mathrm{ind}_k(F,G_n)}{\E\mathrm{ind}_k(F_{w'},G_n)}.
\end{equation}
Then, as for any $w''$, $\E\mathrm{ind}_k(F_{w''},G_n) = \Theta(|v(G_n)|^{|v(F_{w''})|}\E\varphi(w'',G_n))$ (see Definition~\ref{varphi}), we obtain from~\eqref{proof-thms1-case1.2} that
\begin{align}
\nonumber
\frac{\E\mathrm{ind}_k(F,G_n)}{\E\left| \mathcal{W}_k(G_n) \right|}
&=\mathcal{O}\left(\frac{|v(G_n)|^{|v(F_w)|}\E\varphi(w,G_n)}{|v(G_n)|^{|v(F_{w'})|}\E\varphi(w',G_n)}\right)\\
\label{proof-thms1-case1.3}
&=\mathcal{O}\left(\frac{\E\varphi(w,G_n)}{|v(G_n)|\E\varphi(w',G_n)}\right),
\end{align}
since as $w'$ is an extension of $w$ it visits exactly one more vertex than $w$, and therefore $|v(F_{w'})| = |v(F_{w})|+1$. We recognize on the right hand side of~\eqref{proof-thms1-case1.3} the inverse of the term on the left hand side of~\eqref{proof-thms1-assump1}. Therefore, jointly with~\eqref{proof-thms1-case1.1}, we recover
\begin{equation*}
0\leq\frac{\E\mathrm{ind}_k(F,G_n)}{\E\left| \mathcal{W}_k(G_n) \right|}
=o(1),
\end{equation*}
and therefore that
\begin{equation*}
\liminf_{n\to\infty}\frac{\E\mathrm{ind}_k(F,G_n)}{\E\left| \mathcal{W}_k(G_n) \right|}=0.
\end{equation*}
This shows that $F\not\in W_k^\ast(\{G_n\})$ as it does not verify the necessary condition presented in~\eqref{proof-thms1-case0}.

\item Alternatively, assume there exists no $n$ such that $ \E\mathrm{ind}_k(F,G_{n'}) > 0 $ for all $n'>n$. Then, for all $n$ we have that $ \textstyle{\inf}_{n'>n}\,\{\E\mathrm{ind}_k(F,G_{n'})\}=0$. It follows that for all $n$, $ \textstyle{\inf}_{n'>n}\,\{{\E\mathrm{ind}_k(F,G_n)}/{\E\left| \mathcal{W}_k(G_n) \right|}\} = 0$.
Therefore
\begin{align*}
\liminf_{n\to\infty} \, \frac{\E\mathrm{ind}_k(F,G_n)}{\E\left| \mathcal{W}_k(G_n) \right|}
&= \lim_{n\to\infty}\inf_{n'>n} \, \left\{\frac{\E\mathrm{ind}_k(F,G_{n'})}{\E\left| \mathcal{W}_k(G_{n'}) \right|}\right\}\\
&= \lim_{n\to\infty}0 = 0.
\end{align*}
As $F$ does not verify the necessary condition presented in~\eqref{proof-thms1-case0}, this shows that $F\not\in W_k^\ast(\{G_n\})$.
\end{enumerate}
To conclude, in both cases we have shown that $F\not\in W_k^\ast(\{G_n\})$. As explained above, this yields the result.
\end{proof}

\section{\!Dominating walks in kernel-based random graphs}\label{sec:S3}
\subsection{Kernel-based random graphs}

We now undertake a detailed analysis of walk dominance within kernel-based random graphs~\citep{bollobas2007phase, bickel2009nonparametric}, assuming that the kernel is bounded.

\begin{Definition}[Kernel $\kappa$]\label{kernel}
A kernel $\kappa$ is a bounded symmetric map from $(0,1)^2$ to $[0,\infty)$, normalized to integrate to unity so that $\|\kappa\|_1 = 1$.
\end{Definition}

\begin{Definition}[Kernel-based random graph $G(n,\rho\kappa)$]\label{model}
Fix a kernel $\kappa$ and a scalar $\rho\in(0,\|\kappa\|_{\infty}^{-1})$. We call $G(n,\rho\kappa)$ the kernel-based random graph whose symmetric adjacency matrix $A \in \{0,1\}^{n\times n}$ is obtained from a sequence $ x = (x_i)_{i=1}^n$ of $n$ independent $\operatorname{Uniform}(0,1)$ variates by independently setting
\begin{equation*}
	\mathbb{P}(A_{ij}=1\,\vert\, x_i,x_j) =
		\rho\cdot\kappa(x_i,x_j)
\end{equation*}
for all $1 \leq i < j \leq n$.  The matrix $A$ is completed by taking $A_{ji} =A_{ij}$ for $1\leq i<j\leq n$, and $A_{ii} =0$ for $1\leq i\leq n$, yielding a simple random graph $G = G(n,\rho\kappa)$ with vertex labels $1,2,\dots,n$ and $\mathbb{P}(A_{ij}=1) = \rho$.
\end{Definition}

By construction, under the model of Definition~\ref{model}, whenever graphs $F$ and $F'$ are defined on the same vertex set as $G$, then $\mathbb{P} \left\{F' \subset G\right\} = \mathbb{P} \left\{F \subset G\right\}$ if $F'\equiv F$.  In particular, for $v(F) \subset \{1,2,\dots,n\}$,
\begin{multline}\label{PrFinG}
\!\mathbb{P} \left\{F \subset G\right\}=\E \prod_{ij\in e(F)} A_{ij}
=\E_x \left\{\prod_{ij\in e(F)}\rho \kappa(x_i,x_j)\right\}\\
=\rho^{|e(F)|}\int_{[0,1]^{|v(F)|}} \prod_{ij\in e(F)}\kappa(x_i,x_j)\prod_{i\in v(F)} \, dx_i.\!\!
\end{multline}

Equipped with~\eqref{PrFinG}, we may state the following classical result. It allows us to deduce $ \E \mathrm{ind}_k(F,G)$, which is necessary to characterize walks within kernel-based random graphs.

\begin{Lemma}\label{exp-XFG}
Fix a graph $F$, assume without loss of generality that $v(F)$ is a subset of $\{1,2,\dots,n\}$, and let $G = G(n,\rho\kappa)$ in accordance with Definition~\ref{model}. Then the expected number of copies of $F$ in $G$ is
\begin{equation*}
\E X_F(G) = X_F(K_n) \, \mathbb{P} \left\{F \subset G\right\} .
\end{equation*}
\end{Lemma}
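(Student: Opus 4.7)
The plan is to write $X_F(G)$ as a sum of indicator variables over all labeled copies of $F$ in $K_n$, apply linearity of expectation, and then exploit the exchangeability of the model in Definition~\ref{model} to collapse every term in the resulting sum to the common value $\mathbb{P}\{F\subset G\}$.

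Concretely, I would begin by enumerating the subgraphs of $K_n$ isomorphic to $F$ as $F_1,\dots,F_{X_F(K_n)}$, so that
\begin{equation*}
X_F(G) = \sum_{i=1}^{X_F(K_n)} \mathbb{1}\{F_i\subset G\}.
\end{equation*}
Linearity of expectation then yields $\E X_F(G) = \sum_{i} \mathbb{P}\{F_i\subset G\}$, and it remains to argue that each summand equals $\mathbb{P}\{F\subset G\}$.

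For this symmetry step, I would appeal directly to the closed form~\eqref{PrFinG} derived immediately above the lemma: since the $x_i$ are i.i.d.\ $\operatorname{Uniform}(0,1)$ and $\kappa$ is symmetric, the integral in~\eqref{PrFinG} depends only on the isomorphism class of $F$ (a relabeling of vertices simply permutes dummy integration variables). Hence $\mathbb{P}\{F_i\subset G\} = \mathbb{P}\{F\subset G\}$ for every $i$, so that
\begin{equation*}
\E X_F(G) = \sum_{i=1}^{X_F(K_n)} \mathbb{P}\{F\subset G\} = X_F(K_n)\,\mathbb{P}\{F\subset G\},
\end{equation*}
which is the claimed identity.

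There is really no serious obstacle here; the only point that merits care is making the exchangeability argument explicit, i.e.\ verifying formally that permuting the vertex labels of $F$ leaves the multiple integral in~\eqref{PrFinG} invariant. This is a one-line change of variables in the product of uniform densities, but it is the single substantive step on which the identity rests.
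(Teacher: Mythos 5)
Your proposal is correct and follows essentially the same route as the paper: both expand $X_F(G)$ as a sum of indicators over the $X_F(K_n)$ copies of $F$ in $K_n$, apply linearity of expectation, and use the model's exchangeability (made explicit in~\eqref{PrFinG}) to conclude that $\mathbb{P}\{F'\subset G\}$ is constant over the isomorphism class of $F$. Your extra remark about verifying invariance of the integral under relabeling is exactly the symmetry fact the paper invokes when stating that $\mathbb{P}\{F'\subset G\}=\mathbb{P}\{F\subset G\}$ for all $F'\equiv F$.
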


\begin{proof}
From the definition $X_F(G) = \left| \left\{ F'\subset G : F'\equiv F \right\} \right|$, we obtain
\begin{align*}
\nonumber
\E X_F(G) &= \E \sum_{F' \subset G}1_{\{F'\equiv F\}}
\\ &=\E \sum_{F' \subset K_n}1_{\{F'\equiv F\}}1_{\{F' \subset G\}}
\\ &= \sum_{F' \subset K_n}1_{\{F'\equiv F\}}\mathbb{P} \left\{F' \subset G\right\}
\\ &= \sum_{F' \subset K_n}1_{\{F'\equiv F\}}\mathbb{P} \left\{F \subset G\right\} ,
\end{align*}
where the final equality follows since $ \mathbb{P} \left\{F' \subset G\right\} $ is constant for all $F'\equiv F$.
\end{proof}
Thus we deduce the form of $ \E \mathrm{ind}_k(F,G)$ under the model of Definition~\ref{model} as we shall now show.

\begin{Lemma}\label{exp-ind}
Fix a graph $F$, and let $G = G(n,\rho\kappa)$ in accordance with Definition~\ref{model}. Then
\begin{equation*}
\E \mathrm{ind}_k(F,G) = \mathrm{ind}_k(F,K_n) \, \mathbb{P} \left\{F \subset G\right\} .
\end{equation*}
\end{Lemma}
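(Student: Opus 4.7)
The plan is to chain together the two immediately preceding lemmas, since the statement is essentially a bookkeeping identity once $\mathrm{ind}_k(F,\cdot)$ has been factored through $X_F(\cdot)$. The only probabilistic input needed is Lemma~\ref{exp-XFG}, and the combinatorial identity linking walks to subgraph counts is already furnished by Lemma~\ref{nb-copies}.

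First I would apply Lemma~\ref{nb-copies} with the random graph $G = G(n,\rho\kappa)$ in the role of the second argument, giving the pointwise (in the random draw) identity
\begin{equation*}
\mathrm{ind}_k(F,G) = \mathrm{ind}_k(F,F)\, X_F(G).
\end{equation*}
Since $\mathrm{ind}_k(F,F)$ is a deterministic constant depending only on $F$ and $k$, taking expectations and pulling it outside yields $\E\,\mathrm{ind}_k(F,G) = \mathrm{ind}_k(F,F)\cdot \E X_F(G)$.

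Next I would insert the expression for $\E X_F(G)$ provided by Lemma~\ref{exp-XFG}, obtaining
\begin{equation*}
\E\,\mathrm{ind}_k(F,G) = \mathrm{ind}_k(F,F)\cdot X_F(K_n)\cdot \mathbb{P}\{F\subset G\}.
\end{equation*}
Finally I would apply Lemma~\ref{nb-copies} once more, this time with $K_n$ as the second argument, to recognize that $\mathrm{ind}_k(F,F)\cdot X_F(K_n) = \mathrm{ind}_k(F,K_n)$. Substituting gives the claimed equality.

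There is no substantive obstacle here; the statement is a direct algebraic combination of Lemmas~\ref{nb-copies} and~\ref{exp-XFG}. The only mild subtlety worth flagging explicitly is that Lemma~\ref{nb-copies} is purely graph-theoretic and holds for each realization of $G$, so the expectation in Lemma~\ref{exp-ind} acts only on $X_F(G)$ and leaves the factor $\mathrm{ind}_k(F,F)$ untouched; this is what allows the probability $\mathbb{P}\{F\subset G\}$ to appear cleanly via Lemma~\ref{exp-XFG}, which in turn depends on the invariance property $\mathbb{P}\{F'\subset G\}=\mathbb{P}\{F\subset G\}$ for $F'\equiv F$ guaranteed by Definition~\ref{model}.
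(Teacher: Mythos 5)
Your proposal is correct and follows exactly the paper's own argument: apply Lemma~\ref{nb-copies} to write $\mathrm{ind}_k(F,G)=\mathrm{ind}_k(F,F)\,X_F(G)$, take expectations and invoke Lemma~\ref{exp-XFG}, then use Lemma~\ref{nb-copies} again with $K_n$ to rewrite $\mathrm{ind}_k(F,F)\,X_F(K_n)$ as $\mathrm{ind}_k(F,K_n)$. Nothing is missing, and your remark that the expectation acts only on $X_F(G)$ is precisely the (implicit) justification in the paper as well.
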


\begin{proof}
Lemma~\ref{nb-copies} shows that for fixed graphs $F,G$ and $k\in\mathbb{N}$, it follows that $ \mathrm{ind}_k(F,G)\! =\! \mathrm{ind}_k(F,F) X_F(G) $.  Taking expectations under the model of Definition~\ref{model} and then applying Lemma~\ref{exp-XFG} for $ \E X_F(G) $, we have
\begin{align*}
\E \mathrm{ind}_k(F,G) & = \mathrm{ind}_k(F,F) \E X_F(G)
\\ & = \mathrm{ind}_k(F,F) \, X_F(K_n) \, \mathbb{P} \left\{F \subset G\right\}.
\end{align*}
Applying Lemma~\ref{nb-copies} again via $ \mathrm{ind}_k(F,K_n) = \mathrm{ind}_k(F,F) X_F(K_n) $, we obtain the claimed result.
\end{proof}

The ratios $ \mathrm{ind}_k(F,G) / \mathrm{ind}_k(F,K_n) = X_F(G) / X_F(K_n) = \mathrm{emb}(F,G) / \mathrm{emb}(F,K_n) $ are natural empirical counterparts to $ \mathbb{P} \left\{F \subset G\right\} $, as we have $ \E X_F(G) / X_F(K_n)$ $=$ $\E \mathrm{ind}_k(F,G) / \mathrm{ind}_k(F,K_n)$ $=$ $\mathbb{P} \left\{F \subset G\right\} $ by Lemmas~\ref{exp-XFG} and~\ref{exp-ind}, and in turn $X_F(G) = \mathrm{emb}(F,G) / \mathrm{aut}(F) $ for any graph $G$.  This leads to the following standard definitions of embedding densities.

\begin{Definition}[Embedding densities $s_\rho(F,G)$ and $s(F,\kappa)$~\citep{bollobas2009metric}]\label{embedding}
Fix a graph $F$, assume without loss of generality that $v(F) \subset \{1,2,\dots,n\}$, and let $G = G(n,\rho\kappa)$ in accordance with Definition~\ref{model}. Then, we denote the $\rho$-normalized embedding density of $F$ in $G$ by $s_\rho(F,G)$ and its expectation under the model of Definition~\ref{model}, the kernel embedding density, by $s(F,\kappa)$:
\begin{align*}
s_\rho(F,G) & = \frac{ X_F(G) }{ \rho^{|e(F)|} X_F(K_n) } ,
\\ s(F,\kappa) = \E s_\rho(F,G) & = \int_{[0,1]^{|v(F)|}}\prod_{ij\in e(F)}\kappa(x_i,x_j)\prod_{i\in v(F)} \, dx_i.
\end{align*}
\end{Definition}

From Definition~\ref{embedding}, we may use Lemma~\ref{exp-ind} and that $ \mathrm{emb}(F,K_n) = (n)_{|v(F)|} $, with $ (n)_{|v(F)|} = n!/(n-|v(F)|)!$ the falling factorial, and conclude that
\begin{align}
\E \mathrm{ind}_k(F,G) & = \mathrm{ind}_k(F,K_n) \, \mathbb{P} \left\{F \subset G\right\}
\nonumber \\ & = \mathrm{ind}_k(F,K_n) \rho^{|e(F)|} s(F,\kappa)
\nonumber \\ & = \mathrm{ind}_k(F,F) X_F(K_n) \rho^{|e(F)|} s(F,\kappa)
\nonumber \\ & = (n)_{|v(F)|} \rho^{|e(F)|} \frac{ \mathrm{ind}_k(F,F)  s(F,\kappa) }{ \mathrm{aut}(F) } .
\label{ind-const}
\end{align}

Inspecting~\eqref{ind-const}, we quantify the order of magnitude of the expected number of walk-induced copies of $F$ in the random graph model of Definition~\ref{model} as follows.

\begin{Definition}[Order $\Psi_F$ of the expected number of walks inducing $F$~\citep{janson2004tails}]\label{def:varphi}
Fix a graph $F$, a network size $n$ and a kernel $\kappa$. Let $\Psi_F$ be such that
\begin{equation*}
\Psi_F = n^{|v(F)|}\rho^{|e(F)|}.
\end{equation*}
\end{Definition}

\subsection{Dominating closed walks in kernel-based random graphs}

When considering sequences $\{G_n\}$ of kernel-based random graphs, it is convenient to work only with the sequence $\{\rho\}$, rather than with the sequences $\{\E \mathrm{ind}_k(F,G)\}$ for $F\in W_k$, as we now show.

\begin{Remark}\label{remark:kernelver}
Fix a walk length $k$ and a sequence of random graphs $\{G_n\}$ where each $G_n$ is generated in accordance with $G(n,\rho\kappa)$ from Definition~\ref{model} for some sequence $\{\rho\}$ taking values in $(0,\|\kappa\|_{\infty}^{-1})$. In this setting, if $n\rho=\omega(1)$, then Assumptions~\ref{asump1.1} and~\ref{asump1.3} are satisfied, and so Theorem~\ref{thm:dominant-walk} applies to the sequence of graphs $\{G_n\}$. To see this, first observe that from~\eqref{tracedecomp} and~\eqref{ind-const}, we have
\begin{align*}
\E\left| \mathcal{W}_k(G_n) \right| & =\sum_{F\in W_k} \E \mathrm{ind}_k(F,G_n) \\ & =\sum_{F\in W_k}
(n)_{|v(F)|} \rho^{|e(F)|} \frac{ \mathrm{ind}_k(F,F)  s(F,\kappa) }{ \mathrm{aut}(F) } \\ & >0,
\end{align*}
as $s\neq 0$ by construction, $W_k$ is non-empty and $\rho>0$ by assumption. Therefore, we see directly that Assumption~\ref{asump1.1} is satisfied. Second, for $w$ and $w'$ two closed $k$-walks such that $w'$ is an extension of $w$ (see Definition~\ref{def:extens}), from Lemma~\ref{exp-XFG} and recalling from Definition~\ref{def:extens} that $|e(F_{w'})|-|e(F_w)| \leq 1$, we obtain
\begin{align*}
\frac{|v(G_n)|\E\varphi(w',G_n)}{\E\varphi(w,G_n)} &= \frac{n \E X_{F_{w'}}(G_n)/ X_{F_{w'}}(K_n)}{\E X_{F_w}(G_n)/ X_{F_w}(K_n)}\\
&=\frac{n X_{F_{w'}}(K_n) \, \rho^{|e(F_{w'})|}\E s_\rho(F_{w'},G_n)/X_{F_{w'}}(K_n) }{X_{F_w}(K_n) \, \rho^{|e(F_w)|}\E s_\rho(F_w,G_n)/X_{F_w}(K_n) }\\
&=n \rho^{|e(F_{w'})|-|e(F_w)|} s(F_{w'},\kappa) / s(F_w,\kappa)\\
&=\Omega \left( n \rho \right),
\end{align*}
since $ \rho < \|\kappa\|_{\infty}^{-1} \leq 1 $. Therefore $\{G_n\}$ verifies Assumption~\ref{asump1.3} if $n\rho=\omega(1)$.
\end{Remark}

A uniform scaling of network edge probabilities is implicit to the model of Definition~\ref{model}. This uniform scaling will permit us to derive explicit error rates for the remainder terms in Theorem~\ref{thm:dominant-walk}. We now introduce $\widetilde{W}_k^\ast$, which for every fixed $n$ plays a role analogous to $W_k^\ast(\{G_n\})$ but is constructed directly from $\Psi_F$. This simplification enables us to introduce $W_k'$, which contains second-order or sub-dominating walk-induced subgraphs, and will be crucial to obtaining error rates in Theorem~\ref{thm:dominant-walk}.

\begin{Definition}[Sets $\widetilde{W}_k^\ast, W_k'$ of dominating walk-induced subgraphs]\label{def:wkast} Fix a walk length $k\in \mathbb{N}$. Define the set of dominating walk-induced subgraphs to be $\widetilde{W}_k^\ast$, and the set of sub-dominating walk-induced subgraphs to be  $W_k'$, where the two sets satisfy the following equations
\begin{align*}
\widetilde{W}_k^\ast &= \{F \in W_k\ : \Psi_F = \max_{F'\in W_k}\ \Psi_{F'}\}, \\
W_k' &= \{F \in \overline{W_k^\ast} : \Psi_F = \max_{F'\in \overline{W_k^\ast}} \Psi_{F'}\}; \quad \overline{W_k^\ast} = W_k \setminus \widetilde{W}_k^\ast.
\end{align*}
\end{Definition}

\begin{Lemma}\label{dominantscaling}
Consider the setting of Definition~\ref{model}. Fix an integer $k >3$ and a scalar $\rho\in(0,\|\kappa\|_{\infty}^{-1})$ such that $n\rho>1$ for all integers $n>k$. Set
\begin{equation*}
k^\ast = \frac{\log n}{\log \sqrt{n\rho} },
\end{equation*}
and recall that $C_k$ is the $k$-cycle, $C_kP_l$ is the $(k,l)$-tadpole and $\mathcal{T}_k$ is the set of unlabeled trees on $k$ vertices. Then for every fixed $n>k$,
\begin{alignat}{2}
\nonumber
\widetilde{W}_k^\ast & =
\begin{cases}
\{ C_k \} & \textnormal{if $k$ is odd, or if $k$ is even and $k > k^\ast$,}
\\ \{ C_k \} \cup \mathcal{T}_{k/2+1} & \textnormal{if $k$ is even and $k = k^\ast$ for $k^\ast$ an integer,}
\\ \mathcal{T}_{k/2+1} & \textnormal{if $k$ is even and $k < k^\ast$.}
\end{cases}
\label{W_k_prime}
\\ W_k' & =
\begin{cases}
\{C_{k-2}P_2\} & \textnormal{if $k$ is odd, or if $k$ is even and $k>k^\ast+2$,}\\
\{C_{k-2}P_2\} \cup \mathcal{T}_{k/2+1}& \textnormal{if $k$ is even and $k = k^\ast+2$ for $k^\ast$ an integer,}\\
\mathcal{T}_{k/2+1} & \textnormal{if $k$ is even and $k\in (k^\ast,k^\ast+2)$,}\\
\{C_{k-2}P_2\} \cup \mathcal{T}_{k/2} & \textnormal{if $k$ is even and $k = k^\ast$ for $k^\ast$ an integer,}\\
\{ C_k \} & \textnormal{if $k$ is even and $k\in (k^\ast-2,k^\ast)$,}\\
\{ C_k \} \cup \mathcal{T}_{k/2}& \textnormal{if $k$ is even and $k = k^\ast-2$ for $k^\ast$ an integer,}\\
\mathcal{T}_{k/2} & \textnormal{if $k$ is even and $k < k^\ast-2$.}
\end{cases}
\end{alignat}
\end{Lemma}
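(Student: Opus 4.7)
The plan is to exploit the observation that $\Psi_F = n^{|v(F)|}\rho^{|e(F)|}$ depends only on the vertex and edge counts of $F$. Items~1 and~2 of Definition~\ref{def:extens} then show that any extension $F\vartriangleleft_k F'$ multiplies $\Psi$ either by $n$ (when $|e(F')|=|e(F)|$) or by $n\rho$ (when $|e(F')|=|e(F)|+1$), and the hypothesis $n\rho>1$ makes both factors strictly greater than $1$. Consequently any $F\in W_k$ admitting an extension is strictly $\Psi$-dominated and cannot lie in $\widetilde{W}_k^\ast$. Lemma~\ref{extens-prop} identifies the only non-extendable elements of $W_k$ as $C_k$ (always) and $\mathcal{T}_{k/2+1}$ (only when $k$ is even), giving $\widetilde{W}_k^\ast \subseteq \{C_k\}\cup\mathcal{T}_{k/2+1}$.

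The three cases for $\widetilde{W}_k^\ast$ then follow from comparing $\Psi_{C_k}=(n\rho)^k$ with $\Psi_T=n(n\rho)^{k/2}$ for $T\in\mathcal{T}_{k/2+1}$. Their ratio $(n\rho)^{k/2}/n$ equals $1$ precisely at $k=k^\ast$ by the defining equation $k^\ast=\log n/\log\sqrt{n\rho}$, exceeds $1$ for $k>k^\ast$, and is less than $1$ for $k<k^\ast$. For odd $k$, Item~7 of Lemma~\ref{wk-props} removes all trees and forces $\widetilde{W}_k^\ast=\{C_k\}$.

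For the sub-dominant set $W_k'$ I partition candidates by the tree/non-tree dichotomy using the connectedness bound $|e(F)|\geq |v(F)|-1$, with equality iff $F$ is a tree. On the non-tree side, $\Psi_F\leq (n\rho)^{|v(F)|}\leq (n\rho)^k$, and the top two rungs $(n\rho)^k$ and $(n\rho)^{k-1}$ are attained uniquely by $C_k$ and the tadpole $C_{k-2}P_2$, by Items~8 and~9 of Lemma~\ref{wk-props}. On the tree side, $\Psi_F=n(n\rho)^{|v(F)|-1}$, and the top two rungs are $\mathcal{T}_{k/2+1}$ and $\mathcal{T}_{k/2}$. After removing $\widetilde{W}_k^\ast$, the surviving candidates are the appropriate next rung on each side ($C_{k-2}P_2$ replacing $C_k$ when needed, $\mathcal{T}_{k/2}$ replacing $\mathcal{T}_{k/2+1}$ when needed). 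Pairwise comparisons among $(n\rho)^k$, $(n\rho)^{k-1}$, $n(n\rho)^{k/2}$, and $n(n\rho)^{k/2-1}$ produce the seven stated cases, with boundary transitions at $k=k^\ast-2$, $k^\ast$, and $k^\ast+2$ all arising from the same definition of $k^\ast$.

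The main obstacle is excluding alternative sub-dominant candidates at intermediate vertex counts. For any non-tree $F\in W_k$ with $|v(F)|\leq k-2$, the uniform bound $\Psi_F\leq(n\rho)^{k-2}$ is strictly dominated by both $(n\rho)^{k-1}$ (tadpole) and $(n\rho)^k$ (cycle), and an analogous monotonicity argument handles trees with $|v(F)|\leq k/2-1$, so no such $F$ ever enters $W_k'$. The uniqueness statements in Items~8 and~9 of Lemma~\ref{wk-props} are essential here, since they guarantee that $C_k$ and $C_{k-2}P_2$ are the sole inhabitants of their $(|v|,|e|)$-levels and thereby avoid an otherwise delicate enumeration of competing graphs.
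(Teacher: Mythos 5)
Your proposal is correct, and for the sub-dominant set $W_k'$ it coincides with the paper's argument: both rest on writing $\Psi_F=(n\rho)^{|v(F)|}\rho^{|e(F)|-|v(F)|}$, using monotonicity in $|v(F)|$ and in $|e(F)|-|v(F)|$ under $n\rho>1$, $\rho<1$, invoking Items~8 and~9 of Lemma~\ref{wk-props} to make $C_k$ and $C_{k-2}P_2$ the unique occupants of the top two non-tree levels, bounding trees via $|v(T)|\leq k/2+1$, and locating the boundaries $k^\ast-2$, $k^\ast$, $k^\ast+2$ from the same defining relation $(n\rho)^{k^\ast/2}=n$. Where you genuinely diverge is in identifying the candidates for $\widetilde{W}_k^\ast$: you observe that any extension $F\vartriangleleft_k F'$ multiplies $\Psi$ by $n$ or $n\rho$, both exceeding one, so extendable graphs cannot maximize $\Psi$, and then Lemma~\ref{extens-prop} immediately confines $\widetilde{W}_k^\ast$ to $\{C_k\}\cup\mathcal{T}_{k/2+1}$ (to $\{C_k\}$ for odd $k$). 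The paper instead never touches the extension machinery in this proof; it runs a direct three-way case analysis on the sign of $|e(F)|-|v(F)|$ and reads off the maximizers level by level. Your route is conceptually tidy and reuses an already-proved (if heavy) structural lemma, tying the result to the poset picture behind Theorem~\ref{thm:dominant-walk}; the paper's route is more elementary and self-contained, and its case split is then recycled verbatim for $W_k'$ -- which is in fact what your own argument also ends up doing, since the extension lemma alone cannot rank the second tier. One small point to make explicit if you write this up: the tadpole $C_{k-2}P_2$ lies in $W_k$ only for $k\geq5$ (Item~9), so the $k=4$ instance should be checked separately (there $W_4\setminus\{C_4\}$ consists only of trees, and the comparison degenerates accordingly), a caveat that applies equally to the paper's own statement.
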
%

\begin{proof}
First, observe that $\textstyle \max_{F\in W_k} \Psi_F $ exists, since $W_k$ is finite and non-empty for all $k \geq 2$.  Now write
\begin{equation*}
\Psi_F=(n\rho)^{|v(F)|}\rho^{|e(F)|-|v(F)|}.
\end{equation*}
We draw two conclusions from the above equation: First, the hypothesis $ n\rho > 1 $ implies that $\Psi_F$ is monotone increasing in $|v(F)|$ for fixed $|e(F)|-|v(F)|$.  Second, the hypothesis $ 0 < \rho < \|\kappa\|_\infty^{-1} \leq 1 $ implies that $\Psi_F$ is monotone decreasing in $|e(F)|-|v(F)|$ for fixed $|v(F)|$. Using these two facts we proceed as follows:

\begin{enumerate}
\item Suppose $|e(F)|=|v(F)|$. Then $ \Psi_F = (n\rho)^{|v(F)|} $.  By monotonicity, we have $\smash{ \textstyle \max_{F \in W_k : |e(F)| = |v(F)|} \Psi_F = (n \rho)^k} $, achieved uniquely by the $k$-cycle $C_k$, since no other closed $k$-walk visits $k$ vertices.

\item Suppose $|e(F)|>|v(F)|$. Then $ \Psi_F = (n \rho)^{|v(F)|} \rho^{|e(F)|-|v(F)|} < (n \rho)^{|v(F)|} \leq (n \rho)^k $ for $C_k$, since $\rho<1$.

\item Suppose $|e(F)|<|v(F)|$. Since an odd closed walk contains an odd cycle, $ F \in W_{k = 2r+1} \Rightarrow |e(F)| \geq |v(F)| $.  By Lemma~\ref{wk-props}, $F \in W_{k = 2r} \Leftrightarrow F \in \cup_{i=1}^{r} \mathcal{T}_{i+1} $, implying that $ |e(F)| = |v(F)| - 1 \leq k/2 $.  Thus
\begin{equation*}
\Psi_F = (n \rho)^{|v(F)|} \rho^{|e(F)|-|v(F)|} = (n \rho)^{|v(F)|} \rho^{-1} \leq (n \rho)^{k/2+1} \rho^{-1}
\end{equation*}
by monotonicity, with
\begin{equation*}
\max_{F \in W_{k=2r} \,:\, |e(F)| < |v(F)|} \Psi_F = (n \rho)^{k/2+1} \rho^{-1}
\end{equation*}
achieved by all $(k/2+1)$-trees $ F \in \mathcal{T}_{k/2+1} $ for $k$ even; this shape is not achievable for $k$ odd.
\end{enumerate}

Thus, if $k$ is odd we only compare Cases~1 and~2 to determine the maximizer of $\Psi_F$ over $W_k$.  We see that Case~1 always dominates Case~2.  If $k$ is even, we may still discard Case~2 in favor of Case~1, leaving us to compare which of Cases~1 and~3 dominates.  Setting $ (n \rho)^{k^\ast} = (n \rho)^{k^\ast/2+1} \rho^{-1} $ and solving for $k^\ast$, we determine the two forms of $\widetilde{W}_k^\ast$ as claimed.

Our next step is to determine the maximizer of $\Psi_F$ over $W_k \setminus \widetilde{W}_k^\ast$:
\begin{enumerate}
\item If $k \geq 5$ is odd, we repeat the above arguments mutatis mutandis, with the $(k-2,1)$-tadpole $C_{k-2}P_2$ replacing the cycle $C_k$ in Case~1.  This follows by Lemma~\ref{wk-props}, which asserts that it is the unique $F \in W_k$ such that $|v(F)|=|e(F)|=k-1$. Hence $\smash{ \textstyle \max_{F\in W_k \setminus \{ C_k \} } \Psi_F = (n\rho)^{k-1}} $.

\item If $k$ is even and greater than $k^\ast$ we must compare Cases~1 and~3. In this setting, $C_k\in \widetilde{W}_k^\ast$ and $C_{k-2}P_2$ replaces $C_k$ in Case~1, so that it is sufficient to compare $\smash{\Psi_{C_{k-2}P_2}=(n\rho)^{k-1}}$ to $\Psi_F=(n\rho)^{k/2+1}\rho^{-1}$ for $F$ any $(k/2+1)$-tree. Since we have that $(n\rho)^{(k^\ast+2)-1} = (n\rho)^{(k^\ast+2)/2+1}\rho^{-1}$, we obtain the first three lines of~\eqref{W_k_prime}.

\item If $k$ is even and equal to $k^\ast$, for $k^\ast$ an integer, we must compare Cases~1 and~3. In this setting, $C_k\in \widetilde{W}_k^\ast$ and $\mathcal{T}_{k/2+1}\subset \widetilde{W}_k^\ast$, hence $C_{k-2}P_2$ replaces $C_k$ in Case~1 and $\mathcal{T}_{k/2}$ replaces $\mathcal{T}_{k/2+1}$ in Case~3 so that it is sufficient to compare $\Psi_{C_{k-2}P_2}=(n\rho)^{k-1}$ to $\Psi_F=(n\rho)^{k/2}\rho^{-1}$ for $F$ any $(k/2)$-tree. Since we have that $(n\rho)^{k^\ast-1} = (n\rho)^{k^\ast/2}\rho^{-1}$, we obtain the fourth line of~\eqref{W_k_prime}.

\item If $k$ is even and smaller than $k^\ast$ we must compare Cases~1 and~3. In this setting case $\mathcal{T}_{k/2+1} = \widetilde{W}_k^\ast$ and $\mathcal{T}_{k/2}$ replaces $\mathcal{T}_{k/2+1}$ in Case~3, so that it is sufficient to compare $\Psi_{C_k}=(n\rho)^k$ to $\Psi_F=(n\rho)^{k/2}\rho^{-1}$ for $F$ a $(k/2)$-tree. Since we have that $(n\rho)^{(k^\ast-2)} = (n\rho)^{(k^\ast-2)/2}\rho^{-1}$, we obtain the last three lines of~\eqref{W_k_prime}.
\qedhere
\end{enumerate}
\end{proof}

We have now fully described the sets $\widetilde{W}_k^\ast$ and $W_k'$ in the setting of Definition~\ref{model}. In this way Lemma~\ref{dominantscaling} will allow us to show that the number of $k$-walks inducing the elements of $\widetilde{W}_k^\ast$ dominates (in expectation) the number of $k$-walks inducing any other graph.

\subsection{Expected number of walks mapping out trees and cycles}\label{sec:phase}
\begin{Definition}[Walk embedding density $\nu_k(F,\kappa)$]\label{def:nu_k}
Fix a graph $F$, a walk length $k$, and a generalized random graph kernel $\kappa$. In analogy to the kernel embedding density of Definition~\ref{embedding}, define
\begin{equation*}
\nu_k(F,\kappa)=\frac{\mathrm{ind}_k(F,F)}{\mathrm{aut}(F)}s(F,\kappa).
\end{equation*}
\end{Definition}
Recalling $\Psi_F = n^{|v(F)|}\rho^{|e(F)|}$ from Definition~\ref{def:varphi}, we see that $\nu_k(F,\kappa)$ is the limit of $\E\mathrm{ind}_k(F,G)/\Psi_F$. Furthermore, $\nu_k(F,\kappa) > 0 $ whenever $\mathrm{ind}_k(F,F)>0$.

\begin{Theorem}\label{thm:phase}
Fix $k>3$ and $n\in {\mathbb{N}}$ such that $n>k$. Let $G$ be a random graph distributed according to the kernel-based random graph model $G(n,\rho\kappa)$ as given by Definition~\ref{model}. Set $\mu=n\rho$ and assume that $\mu>1$. Then, with $k^\ast = {\log n}/{\log\sqrt\mu}$,
\begin{enumerate}[leftmargin=.2 in]
\item If $k$ is odd or $k$ is even and $k> k^\ast+2$:
\begin{equation*}
\frac{\E\left|\mathcal{W}_k(G)\right|}{\E\left|\{w\in\mathcal{W}_k(G) : F_w \equiv C_k\}\right|}
=1 + \frac{1}{\mu}\left(\frac{\nu_k(C_{k-2}P_2,\kappa)}{\nu_k(C_k,\kappa)}+\epsilon_1(n;k,\kappa)\right).
\end{equation*}
\item If $k$ is even and $k= k^\ast+2$ for $k^\ast$ an integer:
\begin{multline*}
\frac{\E\left|\mathcal{W}_k(G)\right|}{\E\left|\{w\in\mathcal{W}_k(G) : F_w \equiv C_k\}\right|}
= 1\\
+ \frac{1}{\mu}\left(\frac{\nu_k(C_{k-2}P_2,\kappa)+\sum_{T\in\mathcal{T}_{k/2+1}}\nu_k(T,\kappa)}{\nu_k(C_k,\kappa)}
\vphantom{\frac{\sum_{T\in\mathcal{T}_{k/2+1}}\nu_k(T,\kappa)}{\nu_k(C_k,\kappa)}}+\epsilon_2(n;k,\kappa)\right).
\end{multline*}
\item If $k$ is even and $k\in (k^\ast,k^\ast+2)$:
\begin{multline*}
\!\!\!\!\!\!\!\!\!\!\!\!\!\!\!\!\frac{\E\left|\mathcal{W}_k(G)\right|}{\E\left|\{w\in\mathcal{W}_k(G) : F_w \equiv C_k\}\right|}
= 1 + \frac{1}{\sqrt{\mu}^{k-k^\ast}}\left(\frac{\sum_{T\in\mathcal{T}_{k/2+1}}\nu_k(T,\kappa)}{\nu_k(C_k,\kappa)}
+\epsilon_3(n;k,\kappa)\right).
\end{multline*}
\item If $k$ is even and $k = k^\ast$ for $k^\ast$ an integer:
\begin{multline*}
\frac{\E\left|\mathcal{W}_k(G)\right|}{\E\left|\{w\in\mathcal{W}_k(G) : F_w \in \{C_k\}\cup\mathcal{T}_{k/2+1}\}\right|}
= 1\\ + \frac{1}{\mu}\left(\frac{\nu_k(C_{k-2}P_2,\kappa)+\sum_{T\in\mathcal{T}_{k/2}}\nu_k(T,\kappa)}{\nu_k(C_k,\kappa)+\sum_{T\in\mathcal{T}_{k/2+1}}\nu_k(T,\kappa)}\right.
\left.\vphantom{\frac{\sum_{T\in\mathcal{T}_{k/2+1}}\nu_k(T,\kappa)}{\nu_k(C_k,\kappa)}}+\epsilon_4(n;k,\kappa)\right).
\end{multline*}
\item If $k$ is even and $k\in (k^\ast-2,k^\ast)$:
\begin{multline*}
\frac{\E\left|\mathcal{W}_k(G)\right|}{\E\left|\{w\in\mathcal{W}_k(G) : F_w \in \mathcal{T}_{k/2+1}\}\right|}
= 1\\ + \frac{1}{\sqrt{\mu}^{k^\ast-k}}\left(\frac{\nu_k(C_k,\kappa)}{\sum_{T\in\mathcal{T}_{k/2+1}}\nu_k(T,\kappa)}+\epsilon_5(n;k,\kappa)\right).
\end{multline*}
\item If $k$ is even and $k = k^\ast-2$ for $k^\ast$ an integer:
\begin{multline*}
\frac{\E\left|\mathcal{W}_k(G)\right|}{\E\left|\{w\in\mathcal{W}_k(G) : F_w \in \mathcal{T}_{k/2+1}\}\right|}
= 1\\ + \frac{1}{\mu}\left(\frac{\nu_k(C_k,\kappa)+\sum_{T\in\mathcal{T}_{k/2}}\nu_k(T,\kappa)}{\sum_{T\in\mathcal{T}_{k/2+1}}\nu_k(T,\kappa)}+\epsilon_6(n;k,\kappa)\right).
\end{multline*}
\item If $k$ is even and $k < k^\ast-2$:
\begin{multline*}
\!\!\!\!\!\!\!\!\!\!\!\frac{\E\left|\mathcal{W}_k(G)\right|}{\E\left|\{w\in\mathcal{W}_k(G) : F_w \in \mathcal{T}_{k/2+1}\}\right|}
= 1 + \frac{1}{\mu}\left(\frac{\sum_{T\in\mathcal{T}_{k/2}}\nu_k(T,\kappa)}{\sum_{T\in\mathcal{T}_{k/2+1}}\nu_k(T,\kappa)}+\epsilon_7(n;k,\kappa)\right).
\end{multline*}
\end{enumerate}

Error terms $\{\epsilon_i(n;k,\kappa)\}$ for $1\leq i\leq7$ are upper bounded by the quantity $n^k\textstyle{\sum}_{F\in W_k\setminus \widetilde{W}_k^\ast} \nu_k(F,\kappa)/(n)_k\textstyle{\sum}_{F\in \widetilde{W}_k^\ast} \nu_k(F,\kappa)$. Each error term is positive, and furthermore is such that if $\mu$ diverges, then $\lim_{n\to\infty}\epsilon_i(n;k,\kappa) = 0$.
\end{Theorem}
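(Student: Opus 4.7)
The plan is to reduce the theorem to a case analysis driven by Lemma~\ref{dominantscaling}. The starting point is the identity $\E\lvert\mathcal{W}_k(G)\rvert = \sum_{F \in W_k} \E\mathrm{ind}_k(F,G)$ from~\eqref{tracedecomp}, together with the closed form $\E\mathrm{ind}_k(F,G) = (n)_{\lvert v(F)\rvert}\rho^{\lvert e(F)\rvert}\nu_k(F,\kappa)$ obtained in~\eqref{ind-const} and Definition~\ref{def:nu_k}. Each denominator appearing in the seven statements is this same sum restricted to the isomorphism classes lying in $\widetilde{W}_k^\ast$, precisely as described case-by-case by Lemma~\ref{dominantscaling}. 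Writing the left-hand side as $1 + R_k$, the task reduces to analysing $R_k = \sum_{F \notin \widetilde{W}_k^\ast}\E\mathrm{ind}_k(F,G) / \sum_{F \in \widetilde{W}_k^\ast}\E\mathrm{ind}_k(F,G)$.

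To extract the leading correction, split the numerator of $R_k$ into the contributions from $F \in W_k'$ and $F \in W_k \setminus (\widetilde{W}_k^\ast \cup W_k')$, and use $(n)_{\lvert v(F)\rvert}\rho^{\lvert e(F)\rvert} = \Psi_F \cdot (n)_{\lvert v(F)\rvert}/n^{\lvert v(F)\rvert}$ to expose the $\Psi$-order. The explicit prefactor in each of the seven statements is then the ratio $\Psi_F/\Psi_{F^\ast}$ for $F \in W_k'$ and $F^\ast \in \widetilde{W}_k^\ast$, which by Lemma~\ref{dominantscaling} equals either $1/\mu$ (the tadpole-versus-cycle, tree-versus-smaller-tree, and corresponding mixed cases) or $1/\sqrt{\mu}^{\,\lvert k - k^\ast\rvert}$ (when cycles and $(k/2{+}1)$-trees straddle the phase-transition boundary, using $\sqrt{\mu}^{k^\ast} = n$). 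The $\nu_k$ coefficients then aggregate exactly as stated because only the $W_k'$ isomorphism classes contribute to the leading correction; all other non-dominant classes have strictly smaller $\Psi$-order.

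Everything else, namely the tail sum indexed by $W_k \setminus (\widetilde{W}_k^\ast \cup W_k')$ and the $(n)_{\lvert v(F)\rvert}/n^{\lvert v(F)\rvert} - 1$ residuals produced by $W_k'$ itself, is packaged into $\epsilon_i(n;k,\kappa)$. Positivity is immediate since every term in the package is non-negative. The uniform upper bound $\epsilon_i \leq n^k\sum_{F\notin\widetilde{W}_k^\ast}\nu_k(F,\kappa) / \bigl((n)_k\sum_{F\in\widetilde{W}_k^\ast}\nu_k(F,\kappa)\bigr)$ follows from the crude estimates $(n)_{\lvert v(F)\rvert}\rho^{\lvert e(F)\rvert} \leq n^k$ in the numerator and $(n)_{\lvert v(F^\ast)\rvert} \geq (n)_k$ in the denominator, combined with cancellation of the $\rho^{\lvert e(F^\ast)\rvert}$ factor between numerator and denominator after the explicit prefactor has been extracted. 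The vanishing statement $\epsilon_i \to 0$ whenever $\mu \to \infty$ is separate and finer: it follows from the fact that every $F \notin \widetilde{W}_k^\ast \cup W_k'$ contributes a strictly smaller power of $\mu$ than $W_k'$ does, and from $n/(n-k+1) - 1 \to 0$.

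The main obstacle is the case bookkeeping: seven cases, each with a distinct pairing of $\widetilde{W}_k^\ast$ and $W_k'$ supplied by Lemma~\ref{dominantscaling}, and each producing its own prefactor. A clean way around this is to first establish a single master identity expressing $R_k$ as a $\Psi$-weighted sum of $\nu_k$-ratios, and then specialise it seven times. Once that master identity is in place, each of the seven conclusions is a direct substitution, and the bound on $\epsilon_i$ and its vanishing under $\mu \to \infty$ are uniform across all cases.
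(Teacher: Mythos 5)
Your proposal takes essentially the same route as the paper's proof: the identical split of $\E\left|\mathcal{W}_k(G)\right|$ into the contributions of $\widetilde{W}_k^\ast$, $W_k'$, and the remaining classes, the same use of Lemma~\ref{dominantscaling} to read off the prefactor $\Psi_{F'}/\Psi_{F^\ast}$ as $1/\mu$ or $1/\sqrt{\mu}^{\,|k-k^\ast|}$ in the seven cases, and the same packaging and bounding of the $\epsilon_i$ terms, so it is correct in substance. One small repair: the denominator estimate should be the normalized inequality $(n)_{|v(F^\ast)|}/n^{|v(F^\ast)|}\geq (n)_k/n^k$ (monotonicity of $x\mapsto (n)_x/n^x$, which also handles positivity), rather than $(n)_{|v(F^\ast)|}\geq (n)_k$, which is false for $|v(F^\ast)|<k$ when $n>k$; with that substitution your bound derivation coincides with the paper's.
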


\begin{proof}
Combining Definition~\ref{def:wkast} with~\eqref{tracedecomp} and taking expectations, we have
\begin{equation}
\label{sumsplit}
\E\left|\mathcal{W}_k(G)\right| =
	\sum_{F^\ast\in \widetilde{W}_k^\ast}\E\mathrm{ind}_k(F^\ast,G)+
	\sum_{F\in \overline{\widetilde{W}_k^\ast}}\E\mathrm{ind}_k(F,G).
\end{equation}
Then, from Definition~\ref{embedding} and Lemma~\ref{exp-ind}, we obtain directly that $\E \mathrm{ind}_k(F,G)= \Psi_F\nu_k(F,\kappa)(n)_{|v(F)|}/n^{|v(F)|}$. Thus, from~\eqref{sumsplit} we recover
\begin{align*}
\nonumber
\frac{\E\left|\mathcal{W}_k(G)\right|}{\E\left|\{w\in\mathcal{W}_k(G) : F_w \in \widetilde{W}_k^\ast\}\right|}
& = 1 + \frac{\sum_{F\in \overline{\widetilde{W}_k^\ast}}\E\mathrm{ind}_k(F,G)}{\sum_{F^\ast\in \widetilde{W}_k^\ast}\E\mathrm{ind}_k(F^\ast,G)}\\
&=1+
\frac{\sum_{F\in \overline{\widetilde{W}_k^\ast}}\Psi_F\nu_k(F,\kappa)\frac{(n)_{|v(F)|}}{n^{|v(F)|}}}{\sum_{F^\ast\in \widetilde{W}_k^\ast}\Psi_{F^\ast}\nu_k(F^\ast,\kappa)\frac{(n)_{|v(F^\ast)|}}{n^{|v(F^\ast)|}}}\cdot
\end{align*}
From Lemma~\ref{dominantscaling}, we know that $\Psi$ is constant over $\widetilde{W}_k^\ast$ and $W_k'$. Thus, for any fixed $F^\ast\in \widetilde{W}_k^\ast$ and $F'\in W_k'$, we may write
\begin{equation}
\label{thm:form:rel-error-split}
\frac{\E\left|\mathcal{W}_k(G)\right|}{\E\left|\{w\in\mathcal{W}_k(G) : F_w \in \widetilde{W}_k^\ast\}\right|} = 1+
\frac{\Psi_{F'}}{\Psi_{F^\ast}}\left(\frac{\sum_{F'\in W_k'}\nu_k(F',\kappa)}{\sum_{F^\ast\in \widetilde{W}_k^\ast}\nu_k(F^\ast,\kappa)}+\epsilon(n;k,\kappa)\right),
\end{equation}
where
\begin{multline}\label{thm:epsilon-def}
\epsilon(n;k,\kappa) =
\left(\frac{\sum_{F'\in W_k'}\nu_k(F',\kappa)\frac{(n)_{|v(F)|}}{n^{|v(F)|}}}{\sum_{F^\ast\in \widetilde{W}_k^\ast}\nu_k(F^\ast,\kappa)\frac{(n)_{|v(F^\ast)|}}{n^{|v(F^\ast)|}}}
-
\frac{\sum_{F'\in W_k'}\nu_k(F',\kappa)\vphantom{\frac{(n)_{|v(F)|}}{n^{|v(F)|}}}}{\sum_{F^\ast\in \widetilde{W}_k^\ast}\nu_k(F^\ast,\kappa)\vphantom{\frac{(n)_{|v(F^\ast)|}}{n^{|v(F^\ast)|}}}}\right)\\
+
\frac{\sum_{F\in W_k\setminus (\widetilde{W}_k^\ast\cup W_k')}\frac{\Psi_{F}}{\Psi_{F'}}\nu_k(F,\kappa)\frac{(n)_{|v(F)|}}{n^{|v(F)|}}}{\sum_{F^\ast\in \widetilde{W}_k^\ast}\nu_k(F^\ast,\kappa)\frac{(n)_{|v(F^\ast)|}}{n^{|v(F^\ast)|}}} .
\end{multline}

We first consider the second term in~\eqref{thm:form:rel-error-split}. A direct consequence of Lemma~\ref{dominantscaling} is that for $k>3$ and any pair $F',F^\ast\in W_k'\times \widetilde{W}_k^\ast$, we have
\begin{equation*}
\begin{dcases}
\frac{\Psi_{F^\ast}}{\Psi_{F'}} \mu^{-1}\to1& \textnormal{if $k$ is odd or $k$ is even and either $|k-k^\ast|\geq2$ or $k= k^\ast$,}\\
\frac{\Psi_{F^\ast}}{\Psi_{F'}} \mu^{-|k-k^\ast|/2}\to1 &  \textnormal{if $k$ is even and $0<|k-k^\ast|<2$.}
\end{dcases}
\end{equation*}
Then, enumerating the cases of Lemma~\ref{dominantscaling} in the context of~\eqref{thm:form:rel-error-split}, we match the first- and second-order terms of the expressions in the statement of Theorem~\ref{thm:phase}.

Next, we consider $\epsilon(n;k,\kappa)$ from~\eqref{thm:epsilon-def}, which is the general form of the $\epsilon_i(n;k,\kappa)$ in the statement of Theorem~\ref{thm:phase}. The upper bound follows from noting that: i) all terms are positive; ii) $\Psi_F/\Psi_{F'}< 1$ by construction; and iii) the map $x:\mapsto {(n)_{x}}/{n^x}$ is decreasing with $x$, and bounded above by unity.

We finally consider the limit of $\epsilon(n;k,\kappa)$ as $n$ tends to infinity when $\mu$ diverges. The first term in parentheses  in~\eqref{thm:epsilon-def} is tending to zero, as $(n)_x/n^x\to1$ as $n\to\infty$ for any fixed $x$. The last term in~\eqref{thm:epsilon-def} is of order $\textstyle \max_{F\in W_k\setminus (\widetilde{W}_k^\ast\cup W_k')}\Psi_F/\Psi_{F'}$. Let $F''\in\textstyle\argmax_{F\in W_k\setminus (\widetilde{W}_k^\ast\cup W_k')}\Psi_F$. Using the argument of Lemma~\ref{dominantscaling}, $F''$ is either a tree or has the same order of magnitude $\Psi$ as a tadpole. If $F'$ and $F''$ are both trees or tadpoles, we directly have $\Psi_{F'}/\Psi_{F''} \geq n\rho = \mu\to\infty$, since $F'$ must contain more edges than $F''$. If instead $F'$ is a tree and $F''$ is a tadpole, we have $\Psi_{F'}/\Psi_{F''} = (n\rho)^l/n$ for some $l>0$, since the tadpole must then contain more edges than the tree. On the other hand, if $F'$ is a tadpole and $F''$ is a tree, we have $\Psi_{F'}/\Psi_{F''} = n/(n\rho)^l$, again for some $l>0$. Recalling that $\sqrt\mu^{k^\ast}=n$, we recover that in both cases $\Psi_{F'}/\Psi_{F''} = \mu^{|l-k^\ast/2|}\to\infty$. The case $2l=k^\ast$ is impossible, since then $\Psi_{F'}=\Psi_{F''}$ and $F''$ would be an element of $W_k'$, which is in contradiction with the definition of $F''$. Finally, in all cases considered $\Psi_{F''}/\Psi_{F'} = o(1)$, so that $\lim_{n\to\infty}\epsilon(n;k,\kappa) = 0$, which concludes the proof.
\end{proof}

Theorem~\ref{thm:phase} describes how the total number of closed walks of even length shorter than $k^\ast$ is dominated (in expectation) by trees or by cycles. The balance between trees and cycles that we describe within generalized random graphs extends a number of known results~\cite{bollobas2009metric,bollobas2010cut,bollobas2011sparse}: i) \citet[Lemma~3.10]{bollobas2009metric} show that if the degrees grow faster than $\sqrt n$, cycles dominate for all $k$, which is consistent with our result, since then $k^\ast <4$; ii) \citet[Lemma~2.10]{bollobas2010cut} show that the number of cycles is influenced more strongly by increased sparsity than is the number of trees; and iii) \citet[Eq.~7]{bollobas2011sparse} establish that generalized random graphs are tree-like when $\mu={\mathcal{O}}(1)$. Finally, in the dense graph regime, \citet[p.~63]{lovasz2012large} relates counts of $k$-cycles to the sum of the $k$th power of the eigenvalues of a graph's adjacency matrix, and hence the expected number of closed $k$-walks (see~\eqref{tracedecomp}).

\begin{Remark}
If $k=2$, $\E\left|\mathcal{W}_2(G)\right|=\E\left|\{w\in\mathcal{W}_2(G) : F_w \equiv K_2\}\right|$, while if $k=3$ we have $\E\left|\mathcal{W}_3(G)\right|=\E\left|\{w\in\mathcal{W}_3(G) : F_w \equiv K_3\}\right|$. Thus, in these two cases, $\widetilde{W}_k^\ast$ does not depend on either the size or the density of the graph $G$.
\end{Remark}

The relative magnitudes of different terms presented in Theorem~\ref{thm:phase} are driven by the growth of $\mu$. The rate at which dominance occurs depends on how quickly $\mu = n \rho$ grows relative to $n$. To emphasize this point, we fix $k=4$ and consider the case where $\smash{\mu=2n^\alpha}>1$ for $\alpha\in [0,1]$. From Theorem~\ref{thm:phase} we know that the rate of convergence towards unity of the ratio ${\E|\mathcal{W}_k(G)|}/{\E|\{w\in\mathcal{W}_k(G) : F_w \in \widetilde{W}_k^\ast\}|}$---which we here denote by $\beta(\alpha;k)$ for convenience---is
\begin{equation*}
\beta(\alpha;k) = \alpha\min(1,|k-k^\ast|/2).
\end{equation*}

\begin{figure}[t]
  \centering
  \includegraphics[width=.5\textwidth]{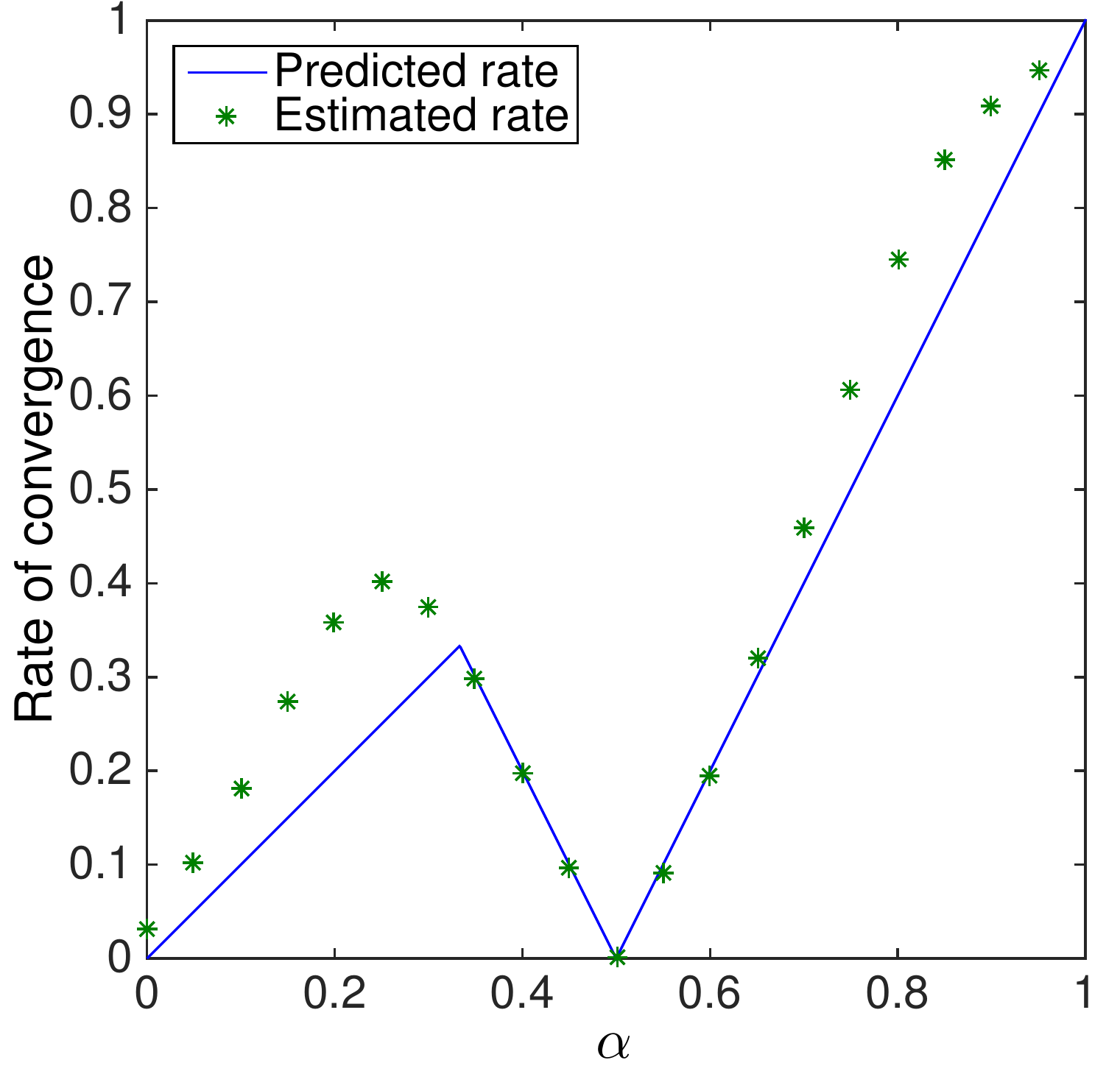}
  \caption{\label{rate-topolo} Predicted asymptotic rates of convergence $\beta(\alpha;4)$ (solid blue line) and rates of convergence estimated in a simulation experiment (green stars).}
\end{figure}

Figure~\ref{rate-topolo} shows an example of $\beta(\alpha;4)$ as a function of $\alpha\in[0,1]$, and compares it to the rate of convergence observed in simulations. For the purposes of this figure, we first fix a kernel $\kappa$ matching a mixed membership model with three communities~\citep{airoldi2008mixed,Onnela2012taxonomies}, in accordance with Definition~\ref{model}. We then let $G_{ij}$ be a random graph distributed according to $\smash{ G(2^j,2(2^j)^{\alpha-1}\kappa) }$. Then, for $\alpha\in\{0,1/20,2/20,\dots,1\}$, we evaluate through repeated simulation the slope of the following set of points:
\begin{equation*}
\left\{\left(\log(2^j),\log\left(\frac{\E|\mathcal{W}_k(G_{ij})|}{\sum_{F\in W_4^\ast}\E\mathrm{ind}_k(F,G_{ij})}-1\right)\right)\right\}_{j\in [7,12]}.
\end{equation*}
We then see from Fig.~\ref{rate-topolo} that the asymptotic rates presented in Theorem~\ref{thm:phase} show reasonable alignment with those estimated through simulation.

\subsection{Domination of walks mapping out cycles}

We now determine dominating sets of non-backtracking, tailless closed walks in the generalized random graph setting. We will see that in this setting cycles dominate, independently of the sparsity the graph, up to the point where the network is no longer connected. As noted in the main text, this result quantifies the observations of~\cite{Krzakala13} for community detection, where non-backtracking, tailless closed walks are observed to improve spectral methods in the sparse graph setting.

We first give definitions for non-backtracking, tailless closed walks that mirror those we have already shown to hold for closed walks in general. We begin by adapting Definition~\ref{closedwalks} to non-backtracking, tailless closed walks. We recall that a closed walk is non-backtracking if it never visits the same edge twice in succession. It is tailless if the first and last edges it traverses are different. We then proceed to adapt all the definitions and results presented in the previous section.

\begin{Definition}[Set $\smash{\mathcal{W}^b_k}(G)$ of non-backtracking, tailless closed $k$-walks]\label{def-non-back}
Fix a simple graph $G$ and $k\in \mathbb{N}$. A non-backtracking, tailless closed walk $w$ of length $k$ in $G$ is a sequence of adjacent vertices in $G$
satisfying
\begin{equation*}
\forall 0\leq i\leq k, \quad v_i\neq v_{(i+2)\!\!\!\!\!\!\!\mod k},
\end{equation*}
where $ \cup_{i=0}^k \{v_i\} \subset v(G) $, $v_0=v_k$ and $ \cup_{i=0}^{k-1} \{ v_i v_{i+1} \} \subset e(G) $.
We denote by $\smash{\mathcal{W}^b_k}(G)$ the set of all non-backtracking, tailless closed $k$-walks.
\end{Definition}

In analogy to Definition~\ref{closedalkdef}, we introduce the set of subgraphs induced by non-backtracking, tailless closed walks.

\begin{Definition}[Set $\smash{W_k^b}$ of unlabeled graphs induced by non-backtracking, tailless closed $k$-walks]\label{defWkb}
We denote by $W_k^b$ the set of all unlabeled graphs induced by non-backtracking, tailless closed walks of length $k$:
\begin{equation*}
W_k^b=\{F\in W_k : \exists w\in\smash{\mathcal{W}^b_k}(K_k),\,F\equiv F_w\}.
\end{equation*}
\end{Definition}

We start by introducing three basic properties of such walks.
\begin{Lemma}[Properties of $W_k^b$]\label{w_k^b}
Fix $k>2$. Then, $W_k^b$ is non-empty, and any $F\in W_k^b$ verifies the following properties:
\begin{enumerate}
\item Every node in $F$ has degree at least $2$, and hence $|e(F)|\geq |v(F)|$.
\item If $|e(F)|=|v(F)|$, then $F\equiv C_{|v(F)|}$ and $|v(F)|$ is a divisor of $k$ greater than or equal to $3$.
\item If $|e(F)|=|v(F)|+1=k$, then $k\geq 6$ and there exists $p\in \{3,4,\dots,\lfloor{k/2}\rfloor\}$ such that $F\equiv C_pC_{k-p}$.
\end{enumerate}
\end{Lemma}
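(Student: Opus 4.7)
The plan is to handle non-emptiness together with the three items in order, exploiting in turn the non-backtracking constraint, the tailless constraint, and a degree-sum argument. For non-emptiness, fix $k\geq 3$ and take $F\equiv C_k$: the walk $w=12\cdots k1$ induces $C_k$, never repeats an edge in succession, and its first and last edges differ, so $w\in\mathcal{W}^b_k(C_k)$ and hence $C_k\in W_k^b$.

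For Item~1, fix $F\in W_k^b$ and a walk $w=v_0v_1\cdots v_k\in\mathcal{W}^b_k$ with $F_w\equiv F$, and suppose for contradiction some vertex $v\in v(F)$ has degree $1$, with unique neighbor $u$. If $v$ is visited at an interior position $0<i<k$, then $v_{i-1},v_{i+1}$ must both be neighbors of $v$, so both equal $u$, contradicting non-backtracking. Otherwise $v$ is visited only at $i=0=k$, in which case $v_1=v_{k-1}=u$, so the first and last traversed edges coincide, contradicting the tailless property. Hence every vertex of $F$ has degree at least $2$, and summing degrees yields $2|e(F)|\geq 2|v(F)|$.

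For Item~2, assume $|e(F)|=|v(F)|$. Combined with Item~1 and the handshake identity $\sum_v d_v = 2|e(F)|$, this forces every vertex to have degree exactly $2$. Since $F$ is connected (Item~2 of Lemma~\ref{wk-props}), $F$ must be a simple cycle, giving $F\equiv C_{|v(F)|}$ with $|v(F)|\geq 3$. To show $|v(F)|$ divides $k$, note that once $w$ chooses a direction at step $0$, the non-backtracking constraint forces every subsequent step to continue in the same direction around the cycle; so $w$ traces $C_{|v(F)|}$ exactly $k/|v(F)|$ times, which must be an integer.

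For Item~3, assume $|e(F)|=|v(F)|+1=k$. Because the walk has length $k$ and traverses $k$ edges while $F$ contains exactly $k$ edges, each edge of $F$ is traversed exactly once, so $w$ is an Eulerian circuit and every vertex has even degree. Combining with Item~1 and $\sum_v d_v=2|e(F)|=2|v(F)|+2$, the only feasible degree sequence is one vertex of degree $4$ and all others of degree $2$ (two vertices of degree $3$ is ruled out by parity). A connected simple graph with this degree sequence is necessarily a lemniscate $C_pC_{k-p}$ obtained by identifying one vertex of two cycles; the constraint $p,k-p\geq 3$ (since simple cycles need $\geq 3$ vertices) together with $p\leq k-p$ gives $p\in\{3,\dots,\lfloor k/2\rfloor\}$ and thus $k\geq 6$. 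Conversely, any $C_pC_{k-p}$ in this range is realized by an Eulerian circuit that, at the shared vertex, switches from one cycle to the other (so no backtracking occurs) and whose first and last edges lie in distinct cycles (so it is tailless), confirming $C_pC_{k-p}\in W_k^b$.

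The main obstacle is Item~3: one must justify that the Eulerian parity rules out the theta-graph alternative (two vertices of degree $3$ joined by three internally disjoint paths) and then verify that the lemniscate actually admits a non-backtracking \emph{and} tailless Eulerian circuit at its articulation vertex; both reduce to simple case checks once the degree sequence is pinned down.
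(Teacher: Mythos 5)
Your proof is correct. Items~1 and~2 follow essentially the same route as the paper: the degree-one contradiction via the non-backtracking/tailless conditions (you are in fact more careful than the paper about the wrap-around case where the pendant vertex is the walk's start), and the degree-sum plus ``direction is forced on a cycle'' argument for divisibility. Item~3, however, is where you genuinely diverge. The paper exploits that $|v(F)|=k-1$ forces exactly one vertex to be visited twice, splits the walk at that vertex into two shorter closed walks, checks each is again non-backtracking and tailless with as many edges as vertices, and applies Item~2 to identify each piece as a cycle; the bound $k\geq 6$ falls out because a non-backtracking tailless walk cannot return to the repeated vertex in fewer than three steps. You instead observe that the walk is an Eulerian circuit of $F$ (each of the $k$ edges traversed exactly once), use parity to pin the degree sequence to one vertex of degree $4$ and the rest degree $2$, and then classify connected simple graphs with that degree sequence as lemniscates. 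Both arguments work; the paper's walk-splitting is self-contained given Item~2 and never needs a structural classification, while your route is arguably more transparent (the theta-graph alternative is killed cleanly by parity) and, as a bonus, your converse construction shows each $C_pC_{k-p}$ really lies in $W_k^b$ --- a fact the lemma does not assert but which the paper implicitly relies on later when computing $W_k^b{}'$. One step you state rather than prove is the classification itself (``one vertex of degree $4$, all others degree $2$, connected and simple $\Rightarrow$ lemniscate''); it does hold --- deleting the degree-$4$ vertex leaves a union of paths (a cycle component would disconnect the graph, an isolated vertex would force a multi-edge), and the four neighbours account for exactly two paths, each of which closes to a cycle of length at least $3$ --- so you should include this short argument rather than defer it to ``simple case checks.''
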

\begin{proof}
We prove the stated results in succession.

\vspace{.5\baselineskip}
\emph{Proof of 1:} 
Since a non-backtracking, tailless closed walk is a walk and therefore the graph it induces connected, we immediately conclude by Lemma~\ref{wk-props} that all nodes in $F$ have positive degree. We proceed to establish the result by contradiction. Assume $F$ contains a node of degree one, and call it $o$. Furthermore call $t$ the only node connected to $o$. Fix $w$ to be any non-backtracking, tailless closed walk such that $F_w=F$. Then $tot$ must be part of $w$, otherwise $w$ cannot be a closed walk such that $F_w=F$. However, since $w$ is non-backtracking, this is not permitted, and we obtain a contradiction. We can therefore conclude that no node can have degree unity; i.e., in a graph where every node can be visited by a non-backtracking, tailless closed walk, no pendant nodes can be present. Hence, since every $F\in W_k^b$  is connected, and $F$ cannot be a tree, it follows that $|e(F)|\geq |v(F)|$.

\vspace{.5\baselineskip}
\emph{Proof of 2:} 
We call $d_i$ the degree of the node labeled $i$ in $F$.
Since $|e(F)|=|v(F)|$, we have
\begin{equation}\label{w_k^b_item2}
|v(F)|^{-1}\sum_{i\in v(F)} d_i = \frac{2|e(F)|}{|v(F)|} = 2.
\end{equation}
On the other hand, from Item~1 of Lemma~\ref{w_k^b}, for all $i\in v(F)$, $d_i\geq2$. Thus, the equality of~\eqref{w_k^b_item2} is possible only if for all $i$, $d_i = 2$. Then all nodes in $F$ have degree two, and since $F$ is connected by Item~2 of Lemma~\ref{wk-props}, we conclude that $F\equiv C_{|v(F)|}$. Finally, for a non-backtracking, tailless closed $k$-walk to induce $C_{|v(F)|}$, it must traverse $C_{|v(F)|}$ one or more times. Hence, $|v(F)|$ must divide $k$.

\vspace{.5\baselineskip}
\emph{Proof of 3:} 
Fix $w$ to be any non-backtracking, tailless closed walk such that $F_w=F$. Since $|v(F)|=k-1$, $w$ must visit one node in $v(F)$ exactly twice, and all the other nodes in $v(F)$ exactly once. Call $o$ the twice visited node. As cyclic permutations of $w$ are still non-backtracking, tailless closed walks, assume without loss of generality that $w$ starts at $o$ and write $w = ov_1\cdots v_{k-1}o$. Let $t\in\{3,\dots,k-3\}$ be such that $v_t=o$. Since $w$ visits $o$ exactly twice, and a non-backtracking, tailless closed walk cannot return to $o$ in less than three steps, $t$ exists and is unique. Since $t$ exists, $k-3\geq3$, and hence $k\geq 6$.

Let $w_1 = ov_1\cdots v_{t-1}o$ and $w_2 = ov_{t+1}\cdots v_{k-1}o$. Then, $w_1$ and $w_2$ are two non-backtracking, tailless closed walks. Since $|e(F)|=k$, $w$ visits a different edge at each step, $w_1$ and $w_2$ also visit different edges at each step. Thus, $|e(F_{w_1})|=t=|v(F_{w_1})|$, and by Item~2 of Lemma~\ref{w_k^b}, $F_{w_1}\equiv C_t$. By the same argument, $F_{w_2}\equiv C_{k-t}$. Since $F_{w_1}\cap F_{w_2} = (\{o\},\emptyset)$, we have that $F\equiv C_tC_{k-t}$. Finally, we conclude that $t\in\{3,\dots,\lfloor{k/2}\rfloor\}$, since $3\leq\min\{t,k-t\}\leq\lfloor{k/2}\rfloor$.
\end{proof}

In the same fashion as for $\smash{W_k}$ in the case of closed $k$-walks, $\smash{W_k^b}$ allows us to partition the set of all non-backtracking, tailless closed $k$-walks $\smash{\mathcal{W}_k^b(G)}$ in any simple graph $G$ as follows:
\begin{equation}
\label{eq:non-back-walk-partition}
\mathcal{W}_k^b(G) = \bigsqcup_{F \in W_k^b} \left\{ w\in\mathcal{W}_k^b(G) : F_w \equiv F \right\} .
\end{equation}
This leads naturally to the following definitions, mirroring Definitions~\ref{defnocop} and~\ref{def:wkast2}.

\begin{Definition}[Number $\mathrm{ind}_k^b(F,G)$ of non-backtracking, tailless closed $k$-walks in $G$ inducing an isomorphic copy of $F$]\label{defnocopb}
Fix a walk length $k$ and two simple graphs $F$ and $G$. We write
\begin{equation*}
\mathrm{ind}_k^b(F,G) = \left| \left\{ w\in\mathcal{W}_k^b(G) : F_w \equiv F \right\} \right|
\end{equation*}
for the number of non-backtracking, tailless closed walks of length $k$ in $G$ that induce a subgraph of $G$ isomorphic to $F$.
\end{Definition}

In analogy to Definition~\ref{def:wkast}, we introduce the sets of dominating non-back\-tracking, tailless walk-induced subgraphs in generalized random graphs. Recall from Definition~\ref{def:varphi} that for fixed $n$ and $\rho$, the map $\Psi$ is such that for a graph $F$, $\Psi_F=n^{|v(F)|}\rho^{|e(F)|}$.

\begin{Definition}[Sets $\widetilde{W}_k^{b\ast},  W_k^b{}'$ of dominating subgraphs induced by non-backtracking, tailless closed walks]\label{def:wkast:back} Fix a walk length $k\in \mathbb{N}$ and let
\begin{align*}
\widetilde{W}_k^{b\ast} &= \{F \in W_k^{b\phantom{\ast}} : \Psi_F = \max_{F'\in W_k^{b\phantom{\ast}}} \Psi_{F'}\},\\
 W_k^b{}' &= \{F \in \overline{\widetilde{W}_k^{b\ast}} : \Psi_F = \max_{F'\in \overline{\widetilde{W}_k^{b\ast}}} \Psi_{F'}\}; \quad \overline{\widetilde{W}_k^{b\ast}} = W_k^b \setminus \widetilde{W}_k^{b\ast}.
\end{align*}
\end{Definition}

From~\eqref{eq:non-back-walk-partition} we recover the partition
\begin{equation}
\label{eq:non-back-walk-sum}
\left| \mathcal{W}_k^b(G) \right| = \sum_{F \in \widetilde{W}_k^{b\ast}} \mathrm{ind}_k^b(F,G) + \sum_{F \in \overline{\widetilde{W}_k^{b\ast}}} \mathrm{ind}_k^b(F,G).
\end{equation}
We now characterize the sets $\smash{\widetilde{W}_k^{b\ast}, W_k^b{}'}$ so as to control the two terms on the right-hand side of~\eqref{eq:non-back-walk-sum}. The following result is analogous to Lemma~\ref{dominantscaling}.

\begin{Lemma}\label{non-back:w} 
Consider the setting of Definition~\ref{model}. Fix an integer $k >3$ and a scalar $\rho\in(0,\|\kappa\|_{\infty}^{-1})$ such that $n\rho>1$ for all integers $n>k$. As in Lemma~\ref{dominantscaling}, set $k^\ast = \log n / \log \sqrt{n\rho} $,
Recall that $C_k$ is the $k$-cycle and $C_kC_l$ is the $(k,l)$-lemniscate.

Finally, denote by $h_k$ the largest divisor of $k$ in $\{3,4,\dots,\max(3,\lfloor k/2 \rfloor\}$, and if there is no such divisor (i.e., if $k$ is prime or equal to $4$), then set $h_k$ to $-\infty$. Then
\begin{align}
\nonumber
\widetilde{W}_k^{b\ast} &= \{C_k\}\\
\label{W_k_b_prime}
 W_k^b{}' &=
	\begin{cases}
	\emptyset&\textnormal{if } k\leq5,\\
	\{C_{h_k}\}&\textnormal{if } k-h_k<k^\ast/2,\\
	\{C_{h_k}\}\cup\{C_pC_{k-p} : p\in \{3,4,\dots,\lfloor{k/2}\rfloor\}\}&\textnormal{if } k-h_k=k^\ast/2,\\
	\{C_pC_{k-p} : p\in \{3,4,\dots,\lfloor{k/2}\rfloor\}\}&\textnormal{otherwise.}
	\end{cases}
\end{align}
\end{Lemma}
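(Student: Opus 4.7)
The plan is to exploit the parametrization $\Psi_F = (n\rho)^{|v(F)|}\rho^{|e(F)|-|v(F)|}$ together with the hypothesis $n\rho > 1$ and $\rho < \|\kappa\|_\infty^{-1} \leq 1$. Under these conditions $\Psi_F$ is strictly increasing in $|v(F)|$ for fixed Euler defect $|e(F)|-|v(F)|$, and strictly decreasing in the Euler defect for fixed $|v(F)|$. Item~1 of Lemma~\ref{w_k^b} ensures $|e(F)| \geq |v(F)|$ for every $F \in W_k^b$, so the defect is always non-negative; meanwhile any closed $k$-walk traverses at most $k$ edges and visits at most $k$ vertices, capping both $|v(F)|$ and $|e(F)|$. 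The whole lemma is then a careful optimization of a two-variable monomial on a finite discrete set, for which Lemma~\ref{w_k^b} supplies the structure of the feasible points.

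For $\widetilde{W}_k^{b\ast}$, I would first maximize $|v(F)|$: the bound $|v(F)| \leq k$ can be attained only if $|e(F)| = k = |v(F)|$, and then Item~2 of Lemma~\ref{w_k^b} forces $F \equiv C_k$. Since $C_k$ lies in $W_k^b$ for all $k \geq 3$ (it is induced by the walk $1\,2\cdots k\,1$), this yields $\widetilde{W}_k^{b\ast} = \{C_k\}$. Next, restricting to $W_k^b \setminus \{C_k\}$ (so $|v(F)| \leq k-1$), I would split the candidate maximizers of $\Psi$ into two exhaustive families: (A) graphs with $|e(F)| = |v(F)|$, which by Item~2 are cycles $C_m$ with $3 \leq m \leq k-1$ and $m \mid k$, the best being $C_{h_k}$ with $\Psi_{C_{h_k}} = (n\rho)^{h_k}$; and (B) graphs with $|e(F)| \geq |v(F)|+1$, where the monotonicity just noted makes the maximum occur at $|e(F)| - |v(F)| = 1$ and $|v(F)| = k-1$, which by Item~3 requires $k \geq 6$ and forces $F$ to be a lemniscate $C_p C_{k-p}$ with $\Psi = (n\rho)^{k-1}\rho = (n\rho)^k/n$. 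To rule out the defect-$\geq 2$ case I would simply observe $(n\rho)^{|v(F)|}\rho^{|e(F)|-|v(F)|} \leq (n\rho)^{k-2}\rho^2 < (n\rho)^{k-1}\rho$ whenever $n>1$, so such $F$ are always strictly dominated either by $C_{h_k}$ or by a lemniscate.

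It remains to compare (A) against (B). Using $n = (n\rho)^{k^\ast/2}$, the ratio of $\Psi_{C_{h_k}}$ to $\Psi_{C_p C_{k-p}}$ rewrites as $n (n\rho)^{h_k-k} = (n\rho)^{k^\ast/2 - (k-h_k)}$, which is $>1$, $=1$, or $<1$ according as $k - h_k$ is $<$, $=$, or $>k^\ast/2$; this gives the three nontrivial lines of~\eqref{W_k_b_prime}. For the edge case $k \leq 5$ (which corresponds to $h_k = -\infty$ and no valid lemniscate), I would check directly that the non-backtracking tailless constraints on a walk of length $\leq 5$ force all its vertices to be distinct, so $W_k^b = \{C_k\}$ and hence $W_k^b{}' = \emptyset$. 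The main obstacle I anticipate is the clean bookkeeping of the threshold comparison at $k - h_k = k^\ast/2$, where both candidate families tie and must both be listed in $W_k^b{}'$; getting the three inequality cases and the prime/small-$k$ degeneracies to line up precisely with the piecewise definition in the statement is the part where one is most likely to slip.
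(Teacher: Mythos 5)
Your proposal is correct and follows essentially the same route as the paper: the same monotonicity of $\Psi_F=(n\rho)^{|v(F)|}\rho^{|e(F)|-|v(F)|}$ in $|v(F)|$ and in the Euler defect, the same use of Items~1--3 of Lemma~\ref{w_k^b} to identify $C_k$, the cycles $C_{h_k}$, and the lemniscates $C_pC_{k-p}$ as the only candidates, the same direct check that $W_k^b=\{C_k\}$ for $k\leq 5$, and the same threshold comparison via $n=(n\rho)^{k^\ast/2}$ yielding the three cases at $k-h_k \lessgtr k^\ast/2$. The only cosmetic difference is that you bound the defect-$\geq 2$ graphs with an explicit inequality, which the paper subsumes in the same one-line estimate $\Psi_F\leq (n\rho)^{k-1}\rho$.
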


\begin{proof}
The proof parallels that of Lemma~\ref{dominantscaling}. The key observation is that $\Psi_F$ is monotone increasing in $|v(F)|$ for fixed $|e(F)|-|v(F)|$, and monotone decreasing in $|e(F)|-|v(F)|$ for fixed $|v(F)|$. We enumerate the possible cases as follows:
\begin{enumerate}
\item Suppose $|e(F)|=|v(F)|$. Then $ \Psi_F = (n\rho)^{|v(F)|} $.  By monotonicity, we have
\begin{equation*}
\max_{F \in W_k^b : |e(F)| = |v(F)|} \Psi_F = (n \rho)^k,
\end{equation*}
achieved uniquely by the $k$-cycle $C_k$, since no other non-backtracking, tailless closed $k$-walk visits $k$ vertices.
\item Suppose $|e(F)|>|v(F)|$. Then, since $\rho<\|\kappa\|_{\infty}^{-1}\leq 1$,
\begin{equation*}
 \Psi_F = (n \rho)^{|v(F)|} \rho^{|e(F)|-|v(F)|} < (n \rho)^{|v(F)|} \leq (n \rho)^k.
\end{equation*}
This last upper bound is not attained, as $|e(F)|>|v(F)|$ and $\rho<1$.
\end{enumerate}
Finally, in contrast to Lemma~\ref{dominantscaling}, we need not consider the case in which $|e(F)|<|v(F)|$. This follows from the observation discussed in Item~1 of Lemma~\ref{w_k^b}: A closed walk inducing a tree will backtrack at each of this tree's leaves, since these have degree $1$. Hence, a non-backtracking, tailless closed walk cannot induce a tree, and thus $|e(F_w)|\geq|v(F_w)|$ for any non-backtracking, tailless closed walk $w$. Thus we conclude $\widetilde{W}_k^{b\ast} = \{C_k\}$ as claimed.

We now consider $W_k^b{}'$. First, if $k\leq 5$, then $W_k^b = \{C_k\}$. Thus, $W_k^b\setminus \widetilde{W}_k^{b\ast} = \emptyset$, and
$ W_k^b{}' = \emptyset$, as per Definition~\ref{def:wkast:back}, and so we obtain the first case in~\eqref{W_k_b_prime}.

Now, if instead $k>5$, we show that there are only two subgraphs to consider. Paralleling Case~1 above, we must consider the largest cycle different from $C_k$ that can be induced by a non-backtracking, tailless closed $k$-walk. From Item~2 of Lemma~\ref{w_k^b}, we know that this subgraph is $C_{h_k}$. As for Case~2, since $|e(F)|-|v(F)|\geq1$ and $|v(F)|<k$, we have
\begin{equation*}
\Psi_F = (n \rho)^{|v(F)|} \rho^{|e(F)|-|v(F)|} \leq (n \rho)^{|v(F)|}\rho \leq (n \rho)^{k-1}\rho.
\end{equation*}
By Item~3 of Lemma~\ref{w_k^b}, this upper bound is attained solely by ${\Psi_{C_pC_{k-p}}}$ for $p\in \{3,4,\dots,\lfloor{k/2}\rfloor\}$. Thus, we only need to consider and compare $\smash{\Psi_{C_{h_k}} = (n\rho)^{h_k}}$ and ${\Psi_{C_pC_{k-p}} = (n \rho)^{k-1}\rho}$. Since $(n\rho)^{k^\ast/2}=n$, we have equality if $(n\rho)^{k-h_k} = (n \rho)^{k^\ast/2}$, and we obtain the last three items in~\eqref{W_k_b_prime}. This completes the proof. \end{proof}

We are now equipped to describe how non-backtracking, tailless closed walks are dominated by cycles, paralleling Theorem~\ref{thm:phase}. Here we will prove that the ratio $\smash{\E|\mathcal{W}^b_k(G)|}/ \smash{\E|\{w\in\mathcal{W}^b_k(G) : F_w\equiv C_k\}|}$ tends to $1$. It is straightforward to show that for $k\in\{3,4,5\}$ we have exact equality---i.e., $\smash{\mathcal{W}^b_k(G)}$ is equal to the set $\smash{\{w\in\mathcal{W}^b_k(G) : F_w\equiv C_k\}}$---and thus in what follows we assume $k\geq 6$.

\begin{Theorem}\label{thm:back}
Fix a walk length $k\geq 6$ and a network size $n>k$. Let $G$ be distributed according to the generalized random graph model $G(n,\rho\kappa)$ from Definition~\ref{model}, and assume $n\rho>1$. Set $\mu=n\rho$ and let $h_k$ be the largest divisor of $k$ in $\{3,4,\dots,\lfloor k/2 \rfloor\}$. If there is no such divisor, i.e., if $k$ is prime, set $h_k$ to $-\infty$. Then:
\begin{multline*}
\frac{\E\left|\mathcal{W}^b_k(G)\right|}
{\E\left|\{w\in\mathcal{W}^b_k(G) : F_w\equiv C_k\}\right|}=1+\\
\begin{dcases}
\frac{1}{\mu^{k-h_k}}
\left(\frac{\nu_k(C_{h_k},\kappa)}{\nu_k(C_k,\kappa)}+\epsilon^b_1(n;k,\kappa)\right)
&
\textnormal{if $k-h_k<k^\ast/2$},\\
\frac{1}{n}
\left(\frac{\nu_k(C_{h_k},\kappa)+\sum_{p\in \{3,4,\dots,\lfloor{k/2}\rfloor\}}\nu_k(C_pC_{k-p},\kappa)}{\nu_k(C_k,\kappa)}\right. &\\ \hspace{6.5cm}+\epsilon^b_2(n;k,\kappa)\bigg)
&
\textnormal{if $k-h_k=k^\ast/2$},\\
\frac{1}{n}
\left(\frac{\sum_{p\in \{3,4,\dots,\lfloor{k/2}\rfloor\}}\nu_k(C_pC_{k-p},\kappa)}{\nu_k(C_k,\kappa)}+\epsilon^b_3(n;k,\kappa)\right)
&
\textnormal{otherwise.}
\end{dcases}
\end{multline*}
In all cases $\epsilon^b_i(n;k,\kappa)$ is bounded by $\smash{n^k\textstyle{\sum}_{F\in W_k^b\setminus \widetilde{W}_k^{b\ast}} \nu_k(F,\kappa)/(n)_k\textstyle{\sum}_{F\in \widetilde{W}_k^{b\ast}} \nu_k(F,\kappa)}$, and if $\mu$ diverges in $n$, then $\lim_{n\to\infty}\epsilon^b_i(n;k,\kappa) = 0$.
\end{Theorem}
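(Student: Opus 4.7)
The plan is to mirror the proof of Theorem~\ref{thm:phase} within the non-backtracking, tailless setting. First, I would establish the analog of Lemma~\ref{exp-ind}: by repeating the argument of Lemma~\ref{nb-copies} for non-backtracking tailless closed walks we have $\mathrm{ind}_k^b(F, G) = \mathrm{ind}_k^b(F, F)\, X_F(G)$, and taking expectations under Definition~\ref{model} via Lemma~\ref{exp-XFG} and~\eqref{PrFinG} yields
\begin{equation*}
\E\,\mathrm{ind}_k^b(F, G) = \Psi_F\, \nu_k^b(F, \kappa)\, \frac{(n)_{|v(F)|}}{n^{|v(F)|}},
\end{equation*}
where $\nu_k^b(F, \kappa) := \mathrm{ind}_k^b(F, F)\, s(F, \kappa)/\mathrm{aut}(F)$ is the non-backtracking counterpart of Definition~\ref{def:nu_k} and is the constant intended by $\nu_k$ in the theorem statement.

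Next, using the partition~\eqref{eq:non-back-walk-sum} together with Lemma~\ref{non-back:w}, which asserts $\widetilde{W}_k^{b\ast} = \{C_k\}$ with $\Psi_{C_k} = (n\rho)^k$, I would write
\begin{equation*}
\frac{\E|\mathcal{W}_k^b(G)|}{\E|\{w \in \mathcal{W}_k^b(G) : F_w \equiv C_k\}|} - 1 = \sum_{F \in W_k^b \setminus \{C_k\}} \frac{\Psi_F}{\Psi_{C_k}}\, \frac{\nu_k^b(F,\kappa)}{\nu_k^b(C_k,\kappa)}\, \frac{(n)_{|v(F)|}/n^{|v(F)|}}{(n)_k/n^k},
\end{equation*}
and split the sum between $W_k^b{}'$ and its complement exactly as in the decomposition~\eqref{thm:form:rel-error-split}--\eqref{thm:epsilon-def}. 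Definition~\ref{def:varphi} gives $\Psi_{C_{h_k}}/\Psi_{C_k} = \mu^{-(k-h_k)}$ and $\Psi_{C_p C_{k-p}}/\Psi_{C_k} = 1/n$, while the relation $\sqrt{\mu}^{k^\ast} = n$ forces $\mu^{k-h_k} = n$ precisely when $k - h_k = k^\ast/2$. The three cases of the theorem then match the three cases of $W_k^b{}'$ in Lemma~\ref{non-back:w}, with $\nu_k^b(C_{h_k},\kappa)/\nu_k^b(C_k,\kappa)$ and $\sum_p \nu_k^b(C_p C_{k-p},\kappa)/\nu_k^b(C_k,\kappa)$ appearing as coefficients of the suppressing factors $\mu^{-(k-h_k)}$ and $1/n$.

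The residual error $\epsilon_i^b(n; k, \kappa)$ comprises, as in~\eqref{thm:epsilon-def}, the remainders $(n)_{|v(F)|}/n^{|v(F)|} - 1$ (vanishing in $n$ for each fixed $|v(F)|$) together with the ratio of sub-sub-dominant to sub-dominant contributions; the stated upper bound then follows by applying $(n)_{|v(F)|}/n^{|v(F)|} \leq 1$ in the numerator and collecting the remaining terms. The main obstacle is the concluding case analysis verifying $\epsilon_i^b \to 0$ when $\mu \to \infty$: for every $F \in W_k^b \setminus (\{C_k\} \cup W_k^b{}')$ and every $F' \in W_k^b{}'$ one must show $\Psi_F/\Psi_{F'} \to 0$. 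Lemma~\ref{w_k^b} rules out trees and forces $|e(F)| \geq |v(F)|$ throughout $W_k^b$, so the remaining shapes split into cycles $C_m$ with $m$ a proper divisor of $k$ strictly smaller than $h_k$ (suppressed relative to $C_{h_k}$ by at least an extra factor of $\mu$), lemniscates different from those in $W_k^b{}'$ (absent by Item~3 of Lemma~\ref{w_k^b}), and shapes with $|e(F)| - |v(F)| \geq 2$ or $|v(F)| \leq k-2$ (suppressed by an extra factor of $\rho$ or $n^{-1}$). Tracking these via $\sqrt{\mu}^{k^\ast} = n$, exactly as in the closing paragraph of the proof of Theorem~\ref{thm:phase}, delivers the claimed vanishing of each $\epsilon_i^b$.
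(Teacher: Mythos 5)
Your proposal is correct and follows essentially the same route as the paper's own proof: the same introduction of $\nu_k^b(F,\kappa)=\mathrm{ind}_k^b(F,F)\,s(F,\kappa)/\mathrm{aut}(F)$, the same expectation identity via Lemmas~\ref{nb-copies} and~\ref{exp-XFG}, the same split of~\eqref{eq:non-back-walk-sum} into $\widetilde{W}_k^{b\ast}=\{C_k\}$, $W_k^b{}'$ and the remainder using Lemma~\ref{non-back:w}, and the same error bound and vanishing argument paralleling~\eqref{thm:form:rel-error-split}--\eqref{thm:epsilon-def}. Your closing case analysis for $\Psi_F/\Psi_{F'}\to0$ is in fact spelled out slightly more explicitly than in the paper, which simply refers back to the arguments of Lemma~\ref{non-back:w} and Theorem~\ref{thm:phase}.
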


\begin{proof}
The proof proceeds in the same fashion as that of Theorem~\ref{thm:phase}. We express the ratio ${\E|\mathcal{W}^b_k(G)|}/
{\E|\{w\in\mathcal{W}^b_k(G) : F_w\equiv C_k\}|}$ using the partition induced by $\widetilde{W}_k^{b\ast}$ and $ W_k^b{}'$, and then control the error terms by using Lemma~\ref{non-back:w}.

To proceed we introduce the following notation:
\begin{equation*}
\nu_k^b(F,\kappa) = \frac{\mathrm{ind}_k^b(F,F)}{\mathrm{aut}(F)}s(F,\kappa).
\end{equation*}
This notation will only be used in this proof, and parallels the introduction of $\nu_k(F,\kappa)$ in Definition~\ref{def:nu_k}. To write the theorem statement using only $\nu_k$ notation, we will use that $\nu_k^b(C_k,\kappa) = \nu_k(C_k,\kappa)$ and that $\nu_k^b(C_pC_{k-p},\kappa) = \nu_k(C_pC_{k-p},\kappa)$ for $p\in \{3,4,\dots,\lfloor{k/2}\rfloor\}$.

We first write ${\E|\mathcal{W}^b_k(G)|}/
{\E|\{w\in\mathcal{W}^b_k(G) : F_w\equiv C_k\}|}$ using~\eqref{eq:non-back-walk-sum}, which we recall here:
\begin{equation*}
\left| \mathcal{W}_k^b(G) \right| = \sum_{F \in \widetilde{W}_k^{b\ast}} \mathrm{ind}_k^b(F,G) + \sum_{F \in \overline{\widetilde{W}_k^{b\ast}}} \mathrm{ind}_k^b(F,G).
\end{equation*}
By taking expectations and considering the ratio of the terms on the right-hand side of this expression, we obtain:
\begin{align}
\nonumber
\frac{\E\left| \mathcal{W}_k^b(G) \right|}{\E\left|\{w\in\mathcal{W}^b_k(G) : F_w\equiv C_k\}\right|}
&= 1 + \frac{ \sum_{F \in \overline{\widetilde{W}_k^{b\ast}}} \E\mathrm{ind}_k^b(F,G)}{\E\mathrm{ind}_k^b(C_k,G)}\\
\label{thm:back:form:rel-error-split}
&= 1 + \frac{ \sum_{F \in \overline{\widetilde{W}_k^{b\ast}}} \Psi_F\nu_k^b(F,\kappa)}{\Psi_{C_k}\nu_k^b(C_k,\kappa)}.
\end{align}

Then, we use Lemma~\ref{non-back:w} to control $\Psi_{F^\ast}/\Psi_{F'}$ for all pairs of subgraphs $F^\ast\in \widetilde{W}_k^{b\ast}$ and $F'\in W_k^b{}'$. This parallels our use of Lemma~\ref{dominantscaling} in the proof of Theorem~\ref{thm:phase}. From Lemma~\ref{non-back:w}, we have that for all pairs of subgraphs $F^\ast\in \widetilde{W}_k^{b\ast}$ and $F'\in W_k^b{}'$, the ratio $\Psi_{F^\ast}/\Psi_{F'}$ is determined by the number of edges and vertices of $F^\ast$ and $F'$. We note from Lemma~\ref{non-back:w} (referring back to the definition of $h_k$) that it follows that $|v(F_w)|=|e(F_w)|=k$. For $k\geq 6$, if $k<k^\ast/2+h_k$ then $|v(F_w')|=|e(F_w')|=h_k$, while if $k> k^\ast/2+h_k$ then $|v(F_w')|=k-1$ and $|e(F_w')|=k$. In conjunction with the form of $\Psi_F$, this implies that
\begin{equation}\label{phi-ratio}
\begin{dcases}
\frac{\Psi_{F^\ast}}{\Psi_{F'}}\mu^{h_k-k}\to 1&\textnormal{if $k<k^\ast/2+h_k$},\\
\frac{\Psi_{F^\ast}}{\Psi_{F'}}n^{-1}\to 1 & \textnormal{otherwise}.
\end{dcases}
\end{equation}

We then fix $F^\ast\in \widetilde{W}_k^{b\ast}$ and $F'\in W_k^b{}'$. Using both~\eqref{thm:back:form:rel-error-split} and~\eqref{phi-ratio} yields---in the same fashion as in~\eqref{thm:form:rel-error-split} and~\eqref{thm:epsilon-def}---the following expression:
\begin{equation}
\label{thm:back:form:rel-error-split-2}
\frac{\E\left| \mathcal{W}_k^b(G) \right|}{\E\left|\{w\in\mathcal{W}^b_k(G) : F_w\equiv C_k\}\right|}
= 1+ \frac{\Psi_{F^\ast}}{\Psi_{F'}}\left(\frac{\sum_{F' \in  W_k^b{}'} \nu_k^b(F',\kappa)}{\nu_k^b(C_k,\kappa)}+\epsilon^b(n;k,\kappa)\right),
\end{equation}
where
\begin{multline}\label{thm:epsilon-b-def}
\epsilon^b(n;k,\kappa) =
\left(\frac{\sum_{F'' \in  W_k^b{}'}\nu_k^b(F'',\kappa)\frac{(n)_{|v(F'')|}}{n^{|v(F'')|}}}{\nu_k^b(C_k,\kappa)\frac{(n)_k}{n^k}}-\frac{\sum_{F'' \in  W_k^b{}'}\nu_k^b(F'',\kappa)}{\nu_k^b(C_k,\kappa)}\right)
+\\
\frac{\sum_{F \in W_b^k\setminus(\{C_k\}\bigcup  W_k^b{}')}\frac{\Psi_F}{\Psi_{F'}} \nu_k^b(F,\kappa)\frac{(n)_{|v(F)|}}{n^{|v(F)|}}}{\nu_k^b(C_k,\kappa)\frac{(n)_k}{n^k}}.
\end{multline}
This is the general form of the $\epsilon_i^b(n;k,\kappa)$ in the statement of Theorem~\ref{thm:back}. We note that~\eqref{thm:back:form:rel-error-split-2} provides all of the cases of Theorem~\ref{thm:back}, using the forms of $\widetilde{W}_k^{b\ast}$ and $ W_k^b{}'$ provided in Lemma~\ref{non-back:w}.

We now bound $\epsilon^b(n;k,\kappa)$ and compute its order in the case where $\mu\to\infty$. Our method of bounding exactly parallels that of $\epsilon(n;k,\kappa)$ in the proof of Theorem~\ref{thm:phase}. The upper bound follows from noting that: i) all terms are positive; ii) $\Psi_F/\Psi_{F'}< 1$ by construction for $F\in W_b^k\setminus(\{C_k\}\cup  W_k^b{}')$; and iii) the map $x\mapsto {(n)_{x}}/{n^x}$ is decreasing in $x>0$, and bounded above by unity.  Now consider the order of $\epsilon^b(n;k,\kappa)$ as $n$ tends to infinity when $\mu$ diverges. The first ratio in~\eqref{thm:epsilon-b-def} tends to zero, since $(n)_x/n^x\to1$ as $n\to\infty$ for any fixed $x>0$. The last term in~\eqref{thm:epsilon-b-def} has the same order of magnitude as $\smash{ \textstyle{\max}_{F\in W_b^k\setminus(\{C_k\}\bigcup  W_k^b{}')}\Psi_F/\Psi_{F'}}$. Following the arguments of Lemma~\ref{non-back:w}, in the same fashion as in Theorem~\ref{thm:phase}, this ratio is smaller than $\mu^{-1}$. Thus we obtain the claimed order of $\epsilon^b(n;k,\kappa)$, thereby concluding the proof.
\end{proof}

\section{Dominating walks in scale-free random graphs}\label{sec:S5}
\subsection{Scale-free inhomogeneous random graphs}
We next determine the set of dominating closed walks for the case of inhomogeneous random graphs with power-law degree distributions~\citep{Soderberg2002inhomogeneous,olhede2012degree}.

\begin{Definition}[Inhomogeneous random graph with power-law degrees]\label{powerlaw}
Fix an exponent $\gamma>0$ and a scalar $\theta \in (0,1]$. Let $G$ be the simple random graph whose adjacency matrix $A$ has elements generated independently according to
\begin{equation*}
\mathbb{P}(A_{ij}=1)=\theta^2 \cdot \left( ij\right)^{-\gamma},\quad 1\leq i<j\leq n,
\end{equation*}
with $A_{ji} =A_{ij}$ for $1\leq i<j\leq n$, and $A_{ii} =0$ for $1\leq i\leq n$.
\end{Definition}

The following proposition describes the expected number of unlabeled copies of a fixed graph in networks generated according to Definition~\ref{powerlaw}.

\begin{Proposition}\label{ihrg-ptop}
Fix an exponent $\gamma > 0$ and a sequence $\{\theta_n\}$ taking values in $(0,1]$. Consider a sequence of random graphs $\{G_n\}$ on $n$ nodes, generated according to Definition~\ref{powerlaw}. Fix a graph $F$ on $v$ vertices with strictly positive degrees $d=\left(d_1,\dots,d_v\right)$ enumerated in non-decreasing order, with
\begin{equation*}
\begin{cases}
d_t \gamma <1 & \text{for $t\in \{1,\dots,q\}$, setting $q=0$ if $\{ t: d_t \gamma < 1\} = \emptyset$,} \\
d_t \gamma =1 & \text{for $t\in\{q+1,\dots,h\}$, setting $h=q$ if $\{ t: d_t \gamma = 1\} = \emptyset$,} \\
d_t \gamma >1 & \text{for $t\in\{h+1,\dots,v\}$, setting $v=h$ if $\{ t: d_t \gamma > 1\} = \emptyset$.}
\end{cases}
\end{equation*}
Then, the expected number of unlabeled copies of $F$ in $G_n$ takes the following form:
\begin{equation*}
\E X_F(G_n)\!=\!\frac
		{\left\{C_\gamma(d_{h+1},\dots,d_v)\right\}^{1_{\{h<v\}}}}
		{\mathrm{aut}(F)\left\{\prod_{t=1}^q(1-d_t\gamma)\right\}^{1_{\{q>0\}}}}
	\theta_n^{\sum_{i=1}^v d_i} n^{q-\sum_{t=1}^qd_t \gamma} (\log n)^{h-q}
	\left\{1\!+\!\epsilon_\gamma(d)\right\},
\end{equation*}
with a multiplicative error term given by
\begin{equation*}\epsilon_\gamma(d)
	= {\mathcal{O}}\left(
		1_{\{q>0\}} n^{-(1-d_q\gamma )}+1_{\{h<v\}} n^{-(d_{h+1}\gamma-1)}
		+1_{\{q<h<v\}} (\log n)^{-1}
	\right).
\end{equation*}
Whenever $h<v$, $C_{\gamma}(\cdot)$ is defined inductively, with $\zeta(\cdot)$ the Riemann zeta function:
\begin{align*}
C_\gamma(d_{h+1}) &= \zeta(d_{h+1}\gamma),\\
C_\gamma(d_{h+1},\dots,d_v)
	&=
	C_\gamma(d_{h+1},\dots,d_{v-1})\zeta(d_v\gamma)\\
	&\qquad-\sum_{j=h+1}^{v-1}
	C_\gamma(d_{h+1},\dots,d_{j-1},d_j+d_v,d_{j+1}, \dots,d_v).
\end{align*}
\end{Proposition}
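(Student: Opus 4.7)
The plan is to compute $\E X_F(G_n)$ directly from the definition and then isolate the contributions of the indices according to whether the basic power-series sum $\sigma_n(\alpha)=\sum_{i=1}^n i^{-\alpha}$ diverges ($\alpha<1$), grows like $\log n$ ($\alpha=1$), or converges ($\alpha>1$). Since edges are independent and $X_F(G_n)=\mathrm{emb}(F,G_n)/\mathrm{aut}(F)$, I would first write
\begin{equation*}
\E X_F(G_n)=\frac{1}{\mathrm{aut}(F)}\sum_{\substack{i_1,\dots,i_v\in[n]\\\text{distinct}}}\prod_{rs\in e(F)}\theta_n^{2}(i_r i_s)^{-\gamma}=\frac{\theta_n^{\sum_t d_t}}{\mathrm{aut}(F)}\sum_{\substack{i_1,\dots,i_v\in[n]\\\text{distinct}}}\prod_{t=1}^v i_t^{-d_t\gamma},
\end{equation*}
where the exponent of $\theta_n$ follows from $\sum_t d_t=2|e(F)|$ and the exponent of $i_t$ from the handshake interpretation of the degree sequence.

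Next I would record the three-case estimate of $\sigma_n(\alpha)$ via Euler--Maclaurin: $\sigma_n(\alpha)=(1-\alpha)^{-1}n^{1-\alpha}(1+O(n^{-(1-\alpha)}))$ for $\alpha<1$, $\sigma_n(\alpha)=\log n+O(1)$ for $\alpha=1$, and $\sigma_n(\alpha)=\zeta(\alpha)-O(n^{-(\alpha-1)})$ for $\alpha>1$. These three cases will supply the factors $n^{1-d_t\gamma}/(1-d_t\gamma)$ for $t\le q$, the factor $\log n$ for $q<t\le h$, and the building blocks $\zeta(d_t\gamma)$ for $t>h$ in the main term.

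The heart of the proof is handling the distinctness constraint. I would argue by a pairwise--coincidence expansion (equivalently, inclusion--exclusion over set partitions of $\{1,\dots,v\}$): replacing a coincidence $i_j=i_k$ by a single variable with exponent $(d_j+d_k)\gamma$ reduces the product of two factors $\sigma_n(d_j\gamma)\sigma_n(d_k\gamma)$ to a single factor $\sigma_n((d_j+d_k)\gamma)$. Whenever at least one of $d_j,d_k$ lies in the divergent or logarithmic block ($t\le h$), the ratio between the merged and unmerged contributions is $O(n^{-\min(1-d_q\gamma,\,d_{h+1}\gamma-1)})$ up to logarithmic factors, so all such coincidences contribute only to the error $\epsilon_\gamma(d)$. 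This argument lets me factor off the divergent block exactly, yielding the prefactor $n^{q-\sum_{t\le q}d_t\gamma}(\log n)^{h-q}\prod_{t\le q}(1-d_t\gamma)^{-1}$ together with the advertised error estimate.

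What remains is the sum over the last $v-h$ (convergent) variables with distinctness enforced among themselves. Here merging does \emph{not} produce a lower-order term, so inclusion--exclusion must be carried out explicitly. I would add these variables one at a time in decreasing-index order: the unrestricted sum over $i_v$ contributes $\zeta(d_v\gamma)$, and the coincidences $i_v=i_j$ for $h<j<v$ contribute a sum over $j$ of the corresponding quantity for the reduced multiset where $d_j$ is replaced by $d_j+d_v$, with distinctness still imposed on the remaining $v-h-1$ indices. This is precisely the recursion defining $C_\gamma(d_{h+1},\dots,d_v)$. The main obstacle will be the bookkeeping in this step: I must verify that the inclusion--exclusion relation at each stage matches the stated recursion exactly (in particular the signs and the treatment of the base case $C_\gamma(d_{h+1})=\zeta(d_{h+1}\gamma)$), and that any residual coincidences across the two blocks fold into the $O(n^{-(d_{h+1}\gamma-1)})$ and $O((\log n)^{-1})$ components of $\epsilon_\gamma(d)$.
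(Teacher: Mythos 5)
Your proposal is correct and follows essentially the same route as the paper: the paper likewise reduces $\E X_F(G_n)$ to the restricted sum $S_\gamma(d_1,\dots,d_v)=\sum_{\text{distinct }r_1,\dots,r_v}\prod_t r_t^{-d_t\gamma}$, applies the same three-regime estimate for $\sum_{r\leq n}r^{-\alpha}$, and handles distinctness through exactly the merge identity you describe, $S_\gamma(d_1,\dots,d_{v+1})=\big(\sum_{r\leq n}r^{-d_{v+1}\gamma}\big)S_\gamma(d_1,\dots,d_v)-\sum_j S_\gamma(d_1,\dots,d_{j-1},d_j+d_{v+1},d_{j+1},\dots,d_v)$, organized as an induction on $v$ that adds the largest-degree variable last. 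Your two-stage bookkeeping---folding every merge that touches the divergent or logarithmic block into $\epsilon_\gamma(d)$, and carrying out the inclusion--exclusion explicitly only within the convergent block to generate the $C_\gamma$ recursion---is precisely what the paper's case analysis ($d_{v+1}\gamma<1$, $=1$, $>1$) accomplishes inside that induction.
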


\begin{proof}
Since $X_F(G_n) = \left| \left\{ F'\subset G_n : F'\equiv F \right\} \right|$ for any fixed graph $F$, we may write
\begin{align}
\nonumber
X_F(G_n) & = \sum_{F'\subset K_n}1_{\{F' \equiv F\}}1_{\{F'\subset G_n\}}
\\ \Rightarrow \E X_F(G_n) & = \sum_{F'\subset K_n \,:\, F'\equiv F} \operatorname{\mathbb{P}} (F'\subset G_n),
\label{eq:XFGirg}
\end{align}
with $K_n$ the complete graph on $v(G_n)$. As in the case of generalized random graphs with bounded kernels, we calculate $\operatorname{\mathbb{P}} (F'\subset G_n)$ using the fact that edges form independently under the assumed model of Definition~\ref{powerlaw}.

Specifically, if we label the vertex set of $F'\equiv F$ as $\{i_1, \ldots, i_v\}$, so as to correspond to our chosen ordering $\left(d_1,\dots,d_v\right)$ of the degrees of $F$, then for $n \geq |v(F)|$,
\begin{align*}
\operatorname{\mathbb{P}} (F'\subset G_n)
&= \prod_{i_si_t\in e(F')}\theta_n^2(i_si_t)^{-\gamma}\\
&= \theta_n^{\sum_{j=1}^v d_j}\prod_{i_si_t\in e(F')}(i_si_t)^{-\gamma}\\
&= \theta_n^{\sum_{j=1}^v d_j}\prod_{t=1}^v\prod_{s \,:\, i_ti_s\in e(F')} i_t^{-\gamma}\\
&= \theta_n^{\sum_{j=1}^v d_j}\prod_{t=1}^vi_t^{-d_t\gamma}.
\end{align*}
Summing $\operatorname{\mathbb{P}} (F'\subset G_n)$ over all $F'\subset K_n : F' \equiv F$ as per~\eqref{eq:XFGirg}, we obtain
\begin{equation}\label{EXFG}
\E X_F(G_n) = \frac{\theta_n^{\sum_{j=1}^v d_j}}{{\mathrm{aut}}(F)} \sum_{r_1\in[n]} \sum_{r_2\in[n]/\{r_1 \}} \dots \sum_{r_v\in[n]/\{r_1,\dots, r_{v-1} \}} \prod_{i=1}^v  r_i ^{-d_i\gamma}.
\end{equation}

To derive from~\eqref{EXFG} the stated result of the proposition, we establish by induction the following result:
\begin{align}
\nonumber
S_\gamma(d_1,\ldots,d_v)&=
	\sum_{r_1\in[n]}
    \sum_{r_2\in[n]/\{r_1 \}}
	\dots
	\sum_{r_v\in [n]\backslash \{r_1,\dots, r_{v-1}\}}\,
	\prod_{i=1 }^v  r_i^{-d_t \gamma}\\
\label{induct}
&=\left(\frac{\left[C_\gamma(d_{h+1},\dots,d_v)\right]^{1_{\{h<v\}}}}
	{\left[\prod_{t=1}^q(1-d_t\gamma)\right]^{1_{\{q>0\}}}}\right)
n^{q-\sum_{t=1}^qd_t \gamma} (\log n)^{h-q}
	\left\{1+\epsilon_\gamma(d)\right\},
\end{align}
where $\epsilon_\gamma(d)$ is defined in the statement of the proposition.

To prove~\eqref{induct} by induction, we define the statement $P(v)$ to be the following: $S_\gamma(d_1,\ldots,d_v)$ takes the form of~\eqref{induct} for any tuple $(d_1,\ldots,d_v)$ of $v$ strictly positive integers in non-decreasing order. 

We start by proving $P(1)$. From~\cite[Equation~3.5]{olhede2012degree},$\!$%
\begin{equation}
\label{eq:pows}
S_\gamma(d_1)=\sum_{r_1\in[n]} r_1^{-d_1 \gamma}=
\begin{dcases}
\left(1- d_1 \gamma\right)^{-1} n^{1-d_1 \gamma}+\mathcal{O}(1)
	& \text{if } d_1 \gamma<1,\\
\log n+\gamma_E+\mathcal{O}\left(n^{-1}\right)
	& \text{if } d_1 \gamma=1,\\
\zeta( d_1 \gamma)+\mathcal{O}\left(n^{-(d_1 \gamma-1)}\right)
	& \text{if } d_1 \gamma>1;
\end{dcases}
\end{equation}
where $\gamma_E$ is the Euler--Mascheroni constant and $\zeta(\cdot)$ is the Riemann zeta function. Thus we conclude from~\eqref{eq:pows} directly that $P(1)$ holds. 

To make the general form of our inductive argument clear, we show how $P(1)$ implies $P(2)$. To do so, we first express $P(2)$ as follows: 
\begin{align*}
S_\gamma\left(d_1,d_2\right)
&=
\sum_{r_1\in[n]} r_1^{-d_1 \gamma}\left[\sum_{r_2\in[n]\setminus\{r_1\}} r_2^{-d_2 \gamma}\right]\\
&=\sum_{r_1\in[n]} r_1^{-d_1 \gamma}
	\left[\sum_{r_2\in[n]} r_2^{-d_2 \gamma}-r_2^{-d2\gamma}\right]\\
&=\left[\sum_{r_1\in[n]} r_1^{-d_1 \gamma}\right]
	\left[\sum_{r_2\in[n] } r_2^{-d_2 \gamma}\right] -\left[\sum_{r_3\in[n]} r_3^{-(d_1+d_2)\gamma}\right].
\end{align*}
We now apply~\eqref{eq:pows} to each term to obtain:
\begin{align*}
S_\gamma\left(d_1,d_2\right) =& 
\Bigg[
\Big\{\left(1-d_1\gamma\right)^{-1} n^{1-d_1 \gamma}+\mathcal{O}(1)\Big\}1_{\{d_1 \gamma<1\}}\\
&\qquad\qquad+
\Big\{\log n +\gamma_E+\mathcal{O}\left(n^{-1}\right)\Big\}1_{\{d_1 \gamma=1\}}\\
&\qquad\qquad+
\Big\{\zeta(d_1\gamma)+\mathcal{O}\left(n^{-(d_1 \gamma-1)}\right)\Big\}1_{\{d_1 \gamma>1\}}
\Bigg]\\
&\qquad\cdot
\Bigg[
\Big\{\left(1- d_2 \gamma\right)^{-1} n^{1-d_2 \gamma}+\mathcal{O}(1)\Big\}1_{\{d_2 \gamma<1\}}\\
&\qquad\qquad+
\Big\{\log n +\gamma_E+\mathcal{O}(n^{-1})\Big\}1_{\{d_2 \gamma=1\}}\\
&\qquad\qquad+
\Big\{\zeta( d_2 \gamma)+\mathcal{O}\left(n^{-(d_2 \gamma-1)}\right)\Big\}1_{\{d_2 \gamma>1\}}
\Bigg]\\
&\qquad-
\Bigg[
\Big\{\left(1- (d_1+d_2) \gamma\right)^{-1} n^{1-(d_1+d_2) \gamma}+\mathcal{O}(1)\Big\}1_{\{(d_1+d_2) \gamma<1\}}\\
&\qquad\qquad+
\Big\{\log n +\gamma_E+\mathcal{O}\left(n^{-1}\right)
	\Big\}1_{\{(d_1+d_2) \gamma=1\}}\\
&\qquad\qquad+
\Big\{\zeta( (d_1+d_2) \gamma)+\mathcal{O}\left(n^{-((d_1+d_2) \gamma-1)}\right)\Big\}1_{\{(d_1+d_2 )\gamma>1\}}\Bigg].
\end{align*}
Since $d_1\leq d_2$ by hypothesis, when multiplying the terms above we may reduce the number of distinct cases to six. We then obtain that $S_\gamma(d_1,d_2)$ is equal to
\begin{equation*}
\begin{dcases}
\left(1- d_1 \gamma \right)^{-1}
\left(1- d_2 \gamma \right)^{-1} n^{2-(d_1+d_2) \gamma}\left\{1+\mathcal{O}\left(n^{-(1-d_2 \gamma)}\right) \right\}
	& {\!\!\mathrm{if}}\ d_1\gamma<1,\; d_2\gamma<1
,\\
\left(1- d_1 \gamma \right)^{-1} n^{1-d_1 \gamma}
\log n \left\{1+\mathcal{O}\left((\log n)^{-1}\right) \right\}
	& {\!\!\mathrm{if}}\ d_1\gamma<1,\; d_2\gamma=1
,\\
\left(1- d_1 \gamma \right)^{-1} n^{1-d_1 \gamma}\zeta(d_2\gamma)
\left\{1+\mathcal{O}\left(n^{d_1\gamma-1}+n^{-(d_2\gamma-1)}\right)\right\}
	& {\!\!\mathrm{if}}\ d_1\gamma<1,\; d_2\gamma>1
,\\
(\log n)^2\left\{1+\mathcal{O}\left((\log n)^{-1}\right) \right\}
	& {\!\!\mathrm{if}}\ d_1\gamma=1,\; d_2\gamma=1
,\\
(\log n) \zeta(d_2\gamma)\left\{1+\mathcal{O}\left( (\log n)^{-1}\right)\right\}
	& {\!\!\mathrm{if}}\ d_1\gamma=1,\; d_2\gamma>1
,\\
\zeta(d_1\gamma)\zeta(d_2 \gamma)-\zeta((d_1+d_2) \gamma) +\mathcal{O}\left(n^{-(d_1 \gamma-1)}\right)
	& {\!\!\mathrm{if}}\ d_1\gamma>1,\; d_2\gamma>1.
\end{dcases}
\end{equation*}
By inspection, this expression agrees with the hypothesized form of~\eqref{induct}, establishing that $P(1)$ implies $P(2)$. Our general inductive step will follow the same logic as this argument.

We now assume that~\eqref{induct} holds for $P(v)$, and will show that this implies in turn that $P(v+1)$ is also true. We begin by rewriting $S_\gamma(d_1,\dots,d_{v+1})$ in terms of $S_\gamma(d_1,\dots,d_v)$ as follows:
\begin{align*}
S_\gamma(d_1,\dots,d_{v+1})&=\sum_{r_1\in[n]} \dots \sum_{r_{v+1}\in [n]\backslash \{r_1,\dots, r_{v}\}} \prod_{i=1 }^{v+1}  r_i^{-d_t \gamma}\\
&=\sum_{r_1\in[n]} \dots \sum_{r_{v}\in [n]\backslash \{r_1,\dots, r_{v-1}\}} \prod_{i=1 }^{v}  r_i^{-d_t \gamma}\\
&\pushright{\qquad\quad \cdot
\left(\sum_{r_{v+1}\in[n]}r_{v+1}^{-d_{v+1} \gamma}-\left\{r_{1}^{-d_{v+1} \gamma} +\dots
+r_{v}^{-d_{v+1} \gamma}\right\}\right)}\\
&= \left(\sum_{r_{v+1}\in[n]}r_{v+1}^{-d_{v+1} \gamma}\right)S_\gamma(d_1,\dots,d_v)\\
&\pushright{-\sum_{j=1}^vS_\gamma(d_1,\dots,d_{j-1},d_j+d_{v+1},d_{j+1},\dots,d_v).}
\end{align*}
We may now apply~\eqref{eq:pows} to each term in $\textstyle{\sum}_{r_{v+1}\in[n]}r_{v+1}^{-d_{v+1} \gamma}$, leading to the expression
\begin{align}
\nonumber
S_\gamma(d_1,\dots,d_{v+1})&=
\Bigg[
\Big\{(1- d_{v+1} \gamma)^{-1} n^{1-d_{v+1} \gamma}+{\mathcal{O}}(1)\Big\} \cdot 1_{\{d_{v+1} \gamma<1\}}\\
\nonumber
&\quad\ \ +
\Big\{\log n +\gamma_E+{\mathcal{O}}\left(n^{-1}\right)\Big\} \cdot 1_{\{d_{v+1} \gamma=1\}}\\
\nonumber
&\quad\ \ +
\Big\{\zeta(d_{v+1} \gamma)+{\mathcal{O}}\left(n^{-(d_{v+1} \gamma-1)}\right)\Big\} \cdot 1_{\{d_{v+1} \gamma>1\}}
\Bigg]
S_\gamma(d_1,\dots,d_v)\\
\label{secondterm-zeta}
&-\sum_{j=1}^vS_\gamma(d_1,\dots,d_{j-1},d_j+d_{v+1},d_{j+1},\dots,d_v).
\end{align}

To evaluate the right-hand side of~\eqref{secondterm-zeta}, we replace $S_\gamma(d_1,\dots,d_v)$ with the form given by our inductive assumption that~\eqref{induct} holds for $P(v)$. We simplify the result for three exhaustive cases as follows, thereby completing the proof:

{\bf Case 1)}
We first assume that $d_{v+1} \gamma<1$. Then by hypothesis, $d_t \gamma < 1$ for all $t\leq v+1$, and $h=q=v$. Combining~\eqref{induct} with~\eqref{secondterm-zeta}, it follows that the first term in~\eqref{secondterm-zeta} is of order $n^{v+1-\sum_{t=1}^{v+1}d_t\gamma}$, while the last term is at most of order $n^{v-\sum_{t=1}^{v+1}d_t\gamma}$:
\begin{align*}
S_\gamma(d_1&,\dots,d_{v+1})\\
&=
\Big\{(1- d_{v+1} \gamma)^{-1} n^{1-d_{v+1} \gamma}+{\mathcal{O}}(1)\Big\}
S_\gamma(d_1,\dots,d_v)\!+\!\mathcal{O}\left(n^{v-\sum_{t=1}^{v+1}d_t\gamma}\right)\\
&=(1- d_{v+1} \gamma)^{-1} n^{1-d_{v+1} \gamma} S_\gamma(d_1,\dots,d_{v}) \left\{1+{\mathcal{O}}\left(n^{-(1-d_{v+1} \gamma)}\right) \right\}.
\end{align*}
Substituting~\eqref{induct} for $S_\gamma(d_1,\dots,d_{v})$ in this expression and comparing to the claimed result of the proposition, we see that
\begin{equation*}
\epsilon_\gamma(d_1,\dots,d_{v+1})
	= {\mathcal{O}}\left(n^{-(1-d_{v+1} \gamma)}\right),
\end{equation*}
and so our claimed result is directly verified for the case in which $h=q=v$.

{\bf Case 2)} We next assume that $d_{v+1} \gamma=1$, whence from~\eqref{secondterm-zeta} we conclude that
\begin{multline*}
S_\gamma(d_1,\dots,d_{v+1})=
\Big\{\log n +\gamma_E+{\mathcal{O}}\left(n^{-1}\right)\Big\}
S_\gamma(d_1,\dots,d_v)\\
-\sum_{j=1}^vS_\gamma(d_1,\dots,d_{j-1},d_j+d_{v+1},d_{j+1},\dots,d_v).
\end{multline*}
To determine the dominant error term in this expression, we note that since $d_{v+1} \gamma=1$, it follows from our inductive assumption that~\eqref{induct} holds for $P(v)$ that
\begin{multline*}
S_\gamma(d_1,\dots,d_{j-1},d_j+d_{v+1},d_{j+1},\dots,d_v)
=
S_\gamma(d_1,\dots,d_{j-1},d_j,d_{j+1},\dots,d_v)
\\ \cdot \left\{ \Theta\left(1_{\{d_{j} \gamma=1\}}(\log n)^{-1}+ 1_{\{d_{j} \gamma<1\}}n^{d_j\gamma-1}\right)\right\}.
\end{multline*}
Therefore, the Euler--Mascheroni constant $\gamma_E$ dominates the error, and
\begin{equation*}
\epsilon_\gamma(d_1,\dots,d_{v+1})
	= {\mathcal{O}}\left((\log n)^{-1}\right).
\end{equation*}
Comparing to the claimed result of the proposition, we then see that it is directly verified for the case in which $d_{v+1} \gamma=1$.

{\bf Case 3)} We finally assume that $d_{v+1} \gamma>1$, and from~\eqref{secondterm-zeta} we see that
\begin{multline}
\label{eq:case3}
S_\gamma(d_1,\dots,d_{v+1})=
\Big\{\zeta(d_{v+1} \gamma)+{\mathcal{O}}\left(n^{-(d_{v+1} \gamma-1)}\right)\Big\}
S_\gamma(d_1,\dots,d_v)\\
-\sum_{j=1}^vS_\gamma(d_1,\dots,d_{j-1},d_j+d_{v+1},d_{j+1},\dots,d_v).
\end{multline}
Recalling our inductive assumption that~\eqref{induct} holds for $P(v)$, we note that:
\begin{multline*}
S_\gamma\left(d_1,\dots,d_{j-1},d_j+d_{v+1},d_{j+1},\dots,d_v\right)\\
=S_\gamma\left(d_1,\dots,d_v\right)\cdot\begin{cases}
\Theta\left(n^{-\left(1-d_j \gamma\right)}\right)
&\text{if $q>0$ and $j\in[q]$,}\\
\Theta\left( (\log n)^{-1} \right)
&\text{if $h>q$ and $j\in[h]\setminus[q]$,}\\
\Theta(1)
&\text{if $v>h$ and $j\in[v]\setminus[h]$,}
\end{cases}
\end{multline*}
and therefore we may simplify~\eqref{eq:case3} to
\begin{multline*}
S_\gamma(d_1,\dots,d_{v+1})=
S_\gamma(d_1,\dots,d_v)\cdot
\Big\{\zeta(d_{v+1}\gamma)+
\\{\mathcal{O}}\left(n^{-(d_{v+1} \gamma-1)}+1_{\{q>0\}}n^{-\left(1-d_q \gamma\right)}+1_{\{q<h<v\}}(\log n)^{-1}\right)\Big\}\\
-\sum_{j=h+1}^vS_\gamma(d_1,\dots,d_{j-1},d_j+d_{v+1},d_{j+1},\dots,d_v).
\end{multline*}
Finally, by substituting~\eqref{induct} for $S_\gamma(d_1,\dots,d_{v})$ in this last expression, we recover the claimed result.
\end{proof}

It follows from Proposition~\ref{ihrg-ptop} that in the setting of Definition~\ref{powerlaw}, the degree sequence of any fixed graph can affect its expected number of unlabeled copies in $G_n$. In particular, as we will now show, for all $\gamma \in (0,1]$, the number of copies of the $(e+1)$-star $K_{1,e}$ grows at least as quickly as that of any other $(e+1)$-tree. Noting that the $2$-star or $2$-path $K_{1,1} \equiv P_2$ is the only element of $\mathcal{T}_2$, and the $3$-star or $3$-path $K_{1,2} \equiv P_3$ is the only element of $\mathcal{T}_3$, we formalize this result as follows.

\begin{Corollary}\label{cor:irg_star_dom}
Fix a number of edges $e \geq 3$ and a tree $T \not\equiv S \in \mathcal{T}_{e+1}$, where $S = K_{1,e}$ is the $(e+1)$-star. Then, for all $\gamma \in (0,1]$ in the setting of Proposition~\ref{ihrg-ptop}:
\begin{equation*}
\frac{\E X_T(G_n)}{\E X_S(G_n)} = 
\begin{cases}
\Theta\left(1\right) & 0 < \gamma < 1/e, \\
\Theta\left( (\log n)^{-1} \right) & \gamma = 1/e, \\
\mathcal{O}\left( (\log n)^{d^\ast} n^{-\min\{\gamma, e\gamma-1\}}\right) & 1/e < \gamma < 1/2, \\
\mathcal{O}\left( (\log n)^{d^\ast} n^{-1} \right) & 1/2 \leq \gamma < 1, \\
\mathcal{O}\left( (\log n)^{-1} \right) & \gamma = 1;
\end{cases}
\end{equation*}
where $d^\ast$ is the number of degrees of $T$ equal to $\gamma^{-1}$.
\end{Corollary}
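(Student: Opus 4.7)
The plan is to apply Proposition~\ref{ihrg-ptop} separately to the star $S = K_{1,e}$ and to an arbitrary tree $T \not\equiv S$ on $e+1$ vertices, and then compare the resulting leading orders. From Proposition~\ref{ihrg-ptop} we have, for any such tree $F$,
\begin{equation*}
\E X_F(G_n) \;=\; \Theta\!\bigl(\theta_n^{2e}\, n^{E(F)}\,(\log n)^{h^F - q^F}\bigr),
\qquad E(F) \;=\; q^F - \gamma \sum_{t \le q^F} d_t^F,
\end{equation*}
and for $T$ the exponent $h^T - q^T$ equals $d^\ast$ by definition. Since every tree on $e+1$ vertices has $\sum d_i = 2e$, the useful reformulation is
\begin{equation*}
E(F) \;=\; \sum_i \max(0,\,1 - d_i^F\gamma) \;=\; (e+1) - 2e\gamma + G(F), \qquad G(F) \;:=\; \sum_i (d_i^F\gamma - 1)_{+}.
\end{equation*}
The ratio then reduces to $\Theta(1)\cdot n^{G(T) - G(S)}(\log n)^{d^\ast - (h^S - q^S)}$, so the entire analysis boils down to comparing $G(T)$ with $G(S)$ and tracking the star's $\log$-exponent.

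I would then read off $G(S)$ and $h^S - q^S$ in each $\gamma$-regime directly from $S$'s degree sequence $(1,\dots,1,e)$: for $\gamma<1/e$ we have $G(S)=0$ and $h^S=q^S$; at $\gamma=1/e$, $G(S)=0$ but $h^S - q^S = 1$; for $1/e<\gamma<1$, $G(S) = e\gamma - 1$ and $h^S=q^S$; at $\gamma=1$, $G(S) = e-1$ and $h^S - q^S = e$. The essential structural observation is that $d\mapsto (d\gamma - 1)_{+}$ is convex and non-decreasing, so $G$ is maximized (at fixed degree sum $2e$) by the most concentrated degree sequence. The star is the unique maximizer on $\mathcal{T}_{e+1}$; for $T\not\equiv S$ the constraint tightens to $\max_i d_i^T \le e-1$, which quantifies the drop from $G(S)$ to $G(T)$.

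Case-by-case the proof then reduces to short finite checks: for $\gamma < 1/e$ both $G$'s vanish and the log-exponents match, giving $\Theta(1)$; at $\gamma = 1/e$ both $G$'s vanish but only $S$ carries a $\log n$, giving $\Theta((\log n)^{-1})$; for $1/e < \gamma < 1/2$, the admissible $T$ either has all degrees $\le 1/\gamma$ (so $G(T) = 0$ and the gap is $e\gamma - 1$) or exactly one degree $> 1/\gamma$, necessarily at most $e-1$, yielding $G(T) \le (e-1)\gamma - 1$ and a gap $\ge \gamma$ — taking the smaller gap gives $n^{-\min\{\gamma,\,e\gamma-1\}}$; for $1/2\le\gamma<1$ one must additionally quantify the effect of multiple ``large'' degrees in $T$, using the tree constraint $\sum_i d_i^T = 2e$ and $d_i^T\le e-1$ to obtain the stated bound $n^{-1}$; and at $\gamma=1$ one has $G(T) = e-1 = G(S)$ so $n$-exponents tie and the ratio is controlled by $(\log n)^{\ell(T)-e}\le (\log n)^{-1}$, where $\ell(T)$ is the number of leaves. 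The main obstacle is the precise optimization in Case~4: one must maximize $G(T) = \gamma\sum_{i\in B} d_i^T - |B|$ over the set $B$ of ``big'' vertices of $T$, under the constraints $d_i^T\ge \lceil 1/\gamma\rceil$ for $i\in B$ and $\sum_{i\in B}d_i^T \le e-1+|B|$ (coming from the tree degree-sum), and verify that the resulting maximum is always at least $1$ below $G(S) = e\gamma - 1$; the remaining cases are immediate algebraic consequences of the convexity comparison.
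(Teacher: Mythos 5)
Your framework---writing the $n$-exponent from Proposition~\ref{ihrg-ptop} as $(e+1)-2e\gamma+G(F)$ with $G(F)=\sum_i(d_i\gamma-1)_+$ and comparing $G(T)$ to $G(S)$ via concentration of the degree sequence---is a sound repackaging of the paper's own case-by-case comparison, and your cases $0<\gamma<1/e$, $\gamma=1/e$, $1/e<\gamma<1/2$ and $\gamma=1$ do go through. One small repair in the third case: the dichotomy ``all degrees $\le 1/\gamma$ or exactly one degree $>1/\gamma$'' is not literally correct once $1/\gamma<e-1$ (a tree can have several vertices of degree exceeding $1/\gamma$ when $\gamma$ is near $1/2$), but the general optimization over the set $B$ of big vertices that you set up at the end, with the constraint $\sum_{i\in B}d_i\le e-1+|B|$, gives a gap $\ge\gamma$ when $|B|=1$ and $\ge 1-\gamma\ge\gamma$ when $|B|\ge2$, so the bound $\min\{\gamma,e\gamma-1\}$ survives.

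The genuine gap is the case $1/2\le\gamma<1$. The verification you propose---that the constrained maximum of $G(T)$ is always at least $1$ below $G(S)=e\gamma-1$---is false. For $1/2<\gamma<1$ every non-leaf vertex is ``big,'' so $G(T)=\gamma(2e-\ell)-(e+1-\ell)$ with $\ell$ the number of leaves of $T$, and hence $G(S)-G(T)=(e-\ell)(1-\gamma)$. The maximizer over $T\not\equiv S$ has $\ell=e-1$ (degree sequence $(e-1,2,1,\dots,1)$; for $e=3$ this is the path $P_4$), so the gap is exactly $1-\gamma<1$. Concretely, for $e=3$ and $\gamma=3/4$, Proposition~\ref{ihrg-ptop} gives $\E X_{P_4}(G_n)=\Theta(\theta_n^6 n^{1/2})$ and $\E X_{K_{1,3}}(G_n)=\Theta(\theta_n^6 n^{3/4})$, so the ratio is $\Theta(n^{-1/4})$, which is not $\mathcal{O}(n^{-1})$. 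Your optimization, carried out honestly, therefore yields $\mathcal{O}(n^{-(1-\gamma)})$ for $1/2<\gamma<1$ and $\mathcal{O}((\log n)^{d^\ast}n^{-1/2})$ at $\gamma=1/2$, and no refinement of it can reach the stated $n^{-1}$: the stated rate traces to the display in the paper's own treatment of this case, $\E X_T(G_n)=\Theta(\theta_n^{2e}n^{|\{t:d_t=1\}|})$, which omits the factor $(1-\gamma)$ that Proposition~\ref{ihrg-ptop} attaches to each leaf (the correct exponent is $(1-\gamma)|\{t:d_t=1\}|$). So do not try to force the gap-$\ge 1$ step; either prove the corrected rate $(e-\ell(T))(1-\gamma)$ for this regime or note explicitly that the claimed $n^{-1}$ cannot hold as stated.
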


\begin{proof}
Let $(d_1,\dots, d_{e+1})$ be the degrees of $T\not\equiv S$, enumerated in non-decreasing order. If $\gamma=1$, then all degrees of $T$ and $S$ are at least as large as $\smash{\gamma^{-1}} = 1$, and so from Proposition~\ref{ihrg-ptop}, we deduce that only logarithmic growth terms are present:
\begin{equation*}
\E X_T(G_n) = \Theta\left(\theta_n^{2e}(\log n)^{\left|\{t \,:\, d_t=1\}\right|}\right)
\text{ vs.\ }
\E X_S(G_n) = \Theta\left(\theta_n^{2e}(\log n)^{e + 1_{\{e=1\}} }\right).
\end{equation*}
Then, since for $e \geq 3$, the $(e+1)$-star $S=K_{1,e}$ is distinct in having the maximum number $e$ of leaves among all trees of order $e+1$, we recover the claimed result.

By this same reasoning, if $1/2 < \gamma < 1$, then $\E X_T(G_n) / \E X_S(G_n) = \smash{ \mathcal{O}\big( n^{-1} \big)}$:
\begin{equation*}
\E X_T(G_n) = \Theta\left(\theta_n^{2e} n^{\left|\{t \,:\, d_t=1\}\right|}\right)
\text{ vs.\ }
\E X_S(G_n) = \Theta\left(\theta_n^{2e} n^{e + 1_{\{e=1\}} }\right),
\end{equation*}
and if $\gamma = 1/2$, then $\E X_T(G_n)$ will gain an extra factor of $\smash{ (\log n)^{\left|\{t \,:\, d_t=2\}\right|} }$.

If $1/e < \gamma < 1/2$, then we consider two cases.  First, assume that $d_{e+1} \gamma < 1$. By Proposition~\ref{ihrg-ptop} we deduce that $\E X_T(G_n) / \E X_S(G_n) = \smash{\Theta\big(n^{-(e\gamma-1)}\big)}$, since:
\begin{equation*}
\E X_T(G_n) = \Theta\left(\theta_n^{2e}n^{e(1-2\gamma)+1}\right)
\text{ vs.\ }
\E X_S(G_n) = \Theta\left(\theta_n^{2e}n^{e(1-2\gamma) + e\gamma}\right).
\end{equation*}
If $d_{e+1} \gamma \geq 1$, then for $q = \left| \{ t: d_t \gamma < 1\} \right| \in \{2, \ldots, e\}$ and $d^\ast = \left| \{ t: d_t \gamma = 1\} \right|$,
\begin{align*}
\E X_T(G_n)
& = \Theta\left(\theta_n^{2e}n^{q-\gamma\sum_{t=1}^{q}d_t}(\log n)^{d^\ast}\right) \\
& = \mathcal{O}\left(\theta_n^{2e}n^{q(1-\gamma)-\gamma\, \cdot 1_{\{d_q\geq2\}}}(\log n)^{d^\ast}\right).
\end{align*}
where the lower bound $\smash{\textstyle \sum_{t=1}^{q}d_t \geq q+1_{\{d_q\geq2\}}}$ follows since $\textstyle\sum_{t=1}^{q}d_t=q$ if $d_q=1$, whereas if $d_q\geq 2$, then $\smash{\textstyle\sum_{t=1}^{q}d_t\geq q+1}$.

Thus we have that $\E X_T(G_n) / \E X_S(G_n) = \smash{\mathcal{O}\big(n^{(q-e)(1-\gamma)-\gamma \,\cdot 1_{\{d_q\geq2\}}}(\log n)^{d^\ast}\big)}$. As a result, if $2 \leq q\leq e-1$, this rate is $\smash{\mathcal{O}( n^{-(1-\gamma)} (\log n)^{d^\ast} )}$. If instead $q=e$, then we must have $d_q\geq 2$ (otherwise $T \equiv S$), yielding the rate $\smash{\mathcal{O}( n^{-\gamma}(\log n)^{d^\ast})}$.

Finally, if $0 < \gamma\leq 1/e$, then all degrees of $T\not\equiv S$ and $S$ are strictly less than $\smash{\gamma^{-1}}$, except possibly the largest degree $e$ of $S$. From Proposition~\ref{ihrg-ptop} we then deduce:
\begin{equation*}
\E X_T(G_n) = \Theta\left(\theta_n^{2e}n^{e(1-2\gamma)+1}\right)
\text{ vs.\ }
\E X_S(G_n) = \Theta\left(\theta_n^{2e}n^{e(1-2\gamma)+1} (\log n)^{1_{\{e\gamma=1\}}} \right).
\end{equation*}
Thus if $\gamma< 1/e$, then $\E X_T(G_n) / \E X_S(G_n) = \Theta(1)$, recovering the bounded kernel setting as $\gamma \rightarrow 0$. If $\gamma=1/e$, then $\E X_T(G_n) / \E X_S(G_n) = \Theta\big((\log n)^{-1}\big)$.
\end{proof}

The following result, analogous to Corollary~\ref{cor:irg_star_dom}, holds for unicyclic graphs.

\begin{Corollary}\label{cor:irg_octo_dom}
Fix a number of edges $e \geq 4$ and a connected graph $U \not\equiv C_3K_{1,e-3}$ containing exactly one cycle, where $C_3K_{1,e-3}$ is the graph obtained by identifying any node of the 3-cycle $C_3$ with the unique node of degree $e-3$ in the $(e-2)$-star $K_{1,e-3}$. Then, for all $\gamma \in (0,1/2]$ in the setting of Proposition~\ref{ihrg-ptop}:
\begin{equation*}
\frac{\E X_U(G_n)}{\E X_{C_3K_{1,e-3}}(G_n)} = 
\begin{cases}
\Theta\left(1\right) & 0 < \gamma < 1/(e-1), \\
\Theta\left( (\log n)^{-1} \right) & \gamma = 1/(e-1), \\
\mathcal{O}\left( (\log n)^{d^\ast} n^{-\min\{\gamma,1-2\gamma, (e-1)\gamma-1\}}\right) & 1/(e-1) < \gamma < 1/2, \\
\mathcal{O}\left( (\log n)^{-1} \right) & \gamma = 1/2;
\end{cases}
\end{equation*}
where $d^\ast$ is the number of degrees of $U$ equal to $\gamma^{-1}$. Furthermore, if $\smash{ \mathcal{U}_3^{e-3} }$ is the set of unlabeled graphs formed by attaching $e-3$ edges to $C_3$ to form $e-3$ pendant nodes, then for $\smash{ U' \not\in \mathcal{U}_3^{e-3} }$ and $\smash{ F, F'\in \mathcal{U}_3^{e-3} }$, $\E X_{F'}(G_n) / \E X_F(G_n) = \Theta(1)$ and
\begin{equation*}
\frac{\E X_{U'}(G_n)}{\E X_F(G_n)} = 
\begin{cases}
\mathcal{O}\left( n^{-1}\right) & 1/2 < \gamma < 1, \\
\mathcal{O}\left( (\log n)^{-1} \right) & \gamma = 1.
\end{cases}
\end{equation*}
\end{Corollary}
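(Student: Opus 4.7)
The plan is to mirror the proof of Corollary~\ref{cor:irg_star_dom}: apply Proposition~\ref{ihrg-ptop} to the reference graph $F=C_3K_{1,e-3}$ and to each competing unicyclic graph $U$, and compare the resulting exponents of $n$ and $\log n$. Since every connected unicyclic graph on $e$ edges has $e$ vertices and degree sum $2e$, the factor $\theta_n^{2e}$ cancels in every ratio $\E X_U(G_n)/\E X_F(G_n)$, so only the $n$-exponent $\sum_{t:d_t\gamma<1}(1-d_t\gamma)$ and the $\log n$-exponent $|\{t:d_t\gamma=1\}|$ play a role. The basic combinatorial facts I would first verify are that $F$ has degree sequence $(1^{e-3},2,2,e-1)$ and is the unique connected unicyclic graph on $e$ edges of maximum degree $e-1$ (a degree-$(e-1)$ vertex forces a star, and the unique cycle-closing edge must produce a triangle), while the graphs in $\mathcal{U}_3^{e-3}$ are precisely those of degree type $(1^{e-3},\,2{+}a_1,\,2{+}a_2,\,2{+}a_3)$ with $a_1+a_2+a_3=e-3$, and any other unicyclic $U$ has strictly fewer than $e-3$ pendants.

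For the first claim I would proceed by case analysis in $\gamma$. For $\gamma<1/(e-1)$ every degree of both $F$ and any $U$ is small, so the two $n$-exponents agree; at $\gamma=1/(e-1)$ only $F$'s hub turns medium, contributing the single extra $\log n$ factor. The crux is the range $1/(e-1)<\gamma<1/2$, where $F$'s $n$-exponent is $(e-1)-\gamma(e+1)$ and the three terms in $\min\{\gamma,\,1-2\gamma,\,(e-1)\gamma-1\}$ correspond to three families of competitors: (i) any $U$ all of whose degrees are small has exponent $e-2e\gamma$ and deficit $(e-1)\gamma-1$; (ii) any $U$ with exactly one large degree $d^*\leq e-2$ has deficit $\gamma(e-1-d^*)\geq\gamma$, saturated by $d^*=e-2$; (iii) the graph $U\in\mathcal{U}_3^{e-3}$ with pendant distribution $(e-4,1,0)$ has degree sequence $(1^{e-3},2,3,e-2)$ and, once $\gamma>1/3$ so that both $3$ and $e-2$ are large, produces deficit exactly $1-2\gamma$. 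A direct computation confirms each exponent, and one must then check that no other degree sequence---with more than two large degrees, or hitting a medium boundary---can beat these three bounds.

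For the second claim, observe that when $\gamma\in(1/2,1)$ every non-pendant degree is large, so only pendants enter the $n$-exponent; any $F,F'\in\mathcal{U}_3^{e-3}$ thus contribute exponent $(e-3)(1-\gamma)$ and hence ratio $\Theta(1)$, while any $U'\notin\mathcal{U}_3^{e-3}$ has strictly fewer pendants and the stated polynomial decay follows from a direct count, with $\gamma=1$ turning pendants medium and converting the pendant gap into the $(\log n)^{-1}$ factor. The main obstacle is the enumeration in case~(iii) of the first claim: ruling out every multi-hub degree sequence that might saturate the $1-2\gamma$ bound requires combining the identity $\sum_t d_t=2e$ with the unicyclic constraint $|v(U)|=|e(U)|$ to bound how much degree-budget can concentrate on a handful of large-degree vertices, and to certify that precisely the $(e-4,1,0)$ pendant distribution attains the tightest deficit.
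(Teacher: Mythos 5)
Your proposal follows essentially the same route as the paper's proof: apply Proposition~\ref{ihrg-ptop}, cancel the common factor $\theta_n^{2e}$, and compare the $n$-exponents $\sum_{t:\,d_t\gamma<1}(1-d_t\gamma)$ and the $\log n$ multiplicities over the same ranges of $\gamma$, with your competitor families (i)--(iii) corresponding exactly to the paper's cases $d_e\gamma<1$, one large degree ($q=e-1$), and two large degrees ($q=e-2$), and with the second claim handled identically via pendant counts. The enumeration you flag as the main obstacle is closed in the paper by precisely the counting you anticipate --- if the largest small degree satisfies $d_q\geq 2$ then $\sum_{t\leq q}d_t\geq q+1$, while if every non-pendant degree is at least $\gamma^{-1}$ then $q\leq e-3$ because a unicyclic graph has at least three vertices of degree two or more, combined with the uniqueness of $C_3K_{1,e-3}$ as the unicyclic graph of maximum degree $e-1$ --- so that step is routine within your framework rather than a genuine gap.
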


\begin{proof}
Let $(d_1,\dots, d_e)$ be the degrees of $U \not\equiv C_3K_{1,e-3}$ in non-decreasing order. If $\gamma=1$, then all degrees of $U$ and $C_3K_{1,e-3}$ are at least as large as $\smash{\gamma^{-1}} = 1$, and so from Proposition~\ref{ihrg-ptop}, we deduce that only logarithmic growth terms are present:
\begin{equation*}
\E X_U(G_n) = \Theta\left(\theta_n^{2e}(\log n)^{\left|\{t \,:\, d_t=1\}\right|}\right)
\text{ vs.\ }
\E X_{C_3K_{1,e-3}}(G_n) = \Theta\left(\theta_n^{2e}(\log n)^{e-3}\right).
\end{equation*}
Attaching $e-3$ pendant nodes to $C_3$ will satisfy $\smash{ \E X_U(G_n) = \Theta\big(\E X_{C_3K_{1,e-3}}(G_n)}$, whereas attaching $e-4$ pendant nodes to $C_4$ will lose a power of $\log n$. Thus $\smash{ \E X_{U'}(G_n) / \E X_F(G_n) = \mathcal{O}\big( (\log n)^{-1} \big)}$ for $\smash{ U' \not\in \mathcal{U}_3^{e-3} }$ and $\smash{ F \in \mathcal{U}_3^{e-3} }$ as claimed.

By this same reasoning, if $1/2 < \gamma < 1$, then $\smash{ \E X_{U'}(G_n) / \E X_F(G_n) = \mathcal{O}\big( n^{-1} \big)}$:
\begin{equation*}
\E X_{U'}(G_n) = \Theta\left(\theta_n^{2e} n^{\left|\{t \,:\, d_t=1\}\right|}\right)
\text{ vs.\ }
\E X_F(G_n) = \Theta\left(\theta_n^{2e}n^{e-3}\right).
\end{equation*}
If $\gamma = 1/2$, then $\E X_{C_3K_{1,e-3}}(G_n)$ will gain an extra $\log n$ factor relative to any other $\smash{ F \in \mathcal{U}_3^{e-3} }$, since $C_3K_{1,e-3}$ is unique in $\smash{\mathcal{U}_3^{e-3} }$ in containing two nodes of degree 2.

If $1/(e-1) < \gamma < 1/2$, then we consider two cases.  First, assume that $d_e \gamma < 1$. By Proposition~\ref{ihrg-ptop} we deduce that $\E X_U(G_n) / \E X_{C_3K_{1,e-3}}(G_n) = \smash{\Theta\big(n^{-[(e-1)\gamma - 1]}\big)}$:
\begin{equation}
\label{eq:C3S-bnd}
\E X_U(G_n) = \Theta\left(\theta_n^{2e}n^{e(1-2\gamma)}\right)
\text{ vs.\ }
\E X_{C_3K_{1,e-3}}(G_n) = \Theta\left(\theta_n^{2e}n^{e(1-2\gamma) + (e-1)\gamma - 1}\right).
\end{equation}

If $d_e \gamma \geq 1$, then for $q = \left| \{ t: d_t \gamma < 1\} \right| \in \{3, \ldots, e-1\}$ and $d^\ast = \left| \{ t: d_t \gamma = 1\} \right|$,
\begin{align}
\nonumber
\E X_U(G_n) & = \Theta\left(\theta_n^{2e}n^{
q-\gamma\sum_{t=1}^{q}d_t}(\log n)^{d^\ast}\right)
\\ & = \mathcal{O}\left(\theta_n^{2e}n^{q(1-\gamma) - \gamma (d_q-1) \cdot 1_{\{d_q\geq2\}} - \gamma (d_{q-1}-1) \cdot 1_{\{d_{q-1}\geq2\}} } (\log n)^{d^\ast}\right),
\label{eq:U-bnd}
\end{align}
where the lower bound $\smash{ \textstyle \sum_{t=1}^{q}d_t \geq q + (d_q-1) \cdot 1_{\{d_q\geq2\}} + (d_{q-1}-1) \cdot 1_{\{d_{q-1}\geq2\}} }$ follows from that fact that $d_q \geq 1$. If $d_q=1$, then comparing~\eqref{eq:U-bnd} and~\eqref{eq:C3S-bnd} yields
\begin{equation*}
\frac{\E X_U(G_n)}{\E X_{C_3K_{1,e-3}}(G_n)} =
\mathcal{O}\left(n^{
(q-e)(1-\gamma) + \gamma + 1} (\log n)^{d^\ast}
\right).
\end{equation*}
By hypothesis, $U$ contains exactly one cycle, each vertex of which is of degree at least $2$. Therefore $U$ must possess at least three vertices of degree larger than $1$, and so if $d_q=1$, then we must have $q \leq e - 3$. Consequently,    
\begin{equation*}
\frac{\E X_U(G_n)}{\E X_{C_3K_{1,e-3}}(G_n)} =
\mathcal{O}\left(n^{
3(\gamma-1) + \gamma + 1} (\log n)^{d^\ast}
\right)
=
\mathcal{O}\left(n^{
4(\gamma-1/2)} (\log n)^{d^\ast}
\right).
\end{equation*}
By this same reasoning, if $q = e - 2$, then $d_q\geq 2$ and so $\textstyle\sum_{t=1}^{q}d_t\geq q+1$. Thus,
\begin{equation*}
\frac{\E X_U(G_n)}{\E X_{C_3K_{1,e-3}}(G_n)} 
=
\mathcal{O}\left(n^{
2(\gamma-1)-\gamma+\gamma+1} (\log n)^{d^\ast}
\right)
=
\mathcal{O}\left(n^{
2(\gamma-1/2)} (\log n)^{d^\ast}
\right).
\end{equation*}
Likewise, if $q=e-1$, then $d_q\geq 2$ and $d_{q-1} \geq 2$. If $d_q\geq 3$ then $\textstyle\sum_{t=1}^{q}d_t\geq q+3$. If instead $d_q = 2$, then necessarily $d_{q-1} = 2$. In this case, since $C_3K_{1,e-3}$ is the only unicyclic graph with $d_{e-2} = d_{e-1} = 2$ and $d_e = e-1$, it follows that $d_e \leq e - 2$, and so again we conclude that $\textstyle\sum_{t=1}^{q}d_t = 2e - d_e \geq q+3$. Thus we obtain
\begin{equation*}
\frac{\E X_U(G_n)}{\E X_{C_3K_{1,e-3}}(G_n)} 
=
\mathcal{O}\left(n^{
(\gamma-1)-3\gamma+\gamma+1} (\log n)^{d^\ast}
\right)
=
\mathcal{O}\left(n^{-\gamma} (\log n)^{d^\ast}\right).
\end{equation*}

Finally, if $0 < \gamma\leq 1/(e-1)$, then all degrees of $U \not\equiv C_3K_{1,e-3}$ and $C_3K_{1,e-3}$ are strictly less than $\smash{\gamma^{-1}}$, except possibly the largest degree $e-1$ of $C_3K_{1,e-3}$. Thus:
\begin{equation*}
\E X_U(G_n)\!= \Theta\left(\!\theta_n^{2e}n^{e(1-2\gamma)}\right)
\text{\! vs.\!\ }
\E X_{C_3K_{1,e-3}}\!(G_n)\!=\!\Theta\left(\!\theta_n^{2e}n^{e(1-2\gamma)} (\log n)^{1_{\{(e-1)\gamma=1\}}}\!\right).
\end{equation*}
If $\gamma< 1/(e-1)$, $\E X_U(G_n) / \E X_{C_3K_{1,e-3}}(G_n) = \Theta(1)$, recovering the bounded kernel setting as $\gamma \rightarrow 0$. If $\gamma= 1/(e-1)$, $\smash{ \E X_U(G_n) / \E X_{C_3K_{1,e-3}}(G_n) = \Theta\big((\log n)^{-1}\big) }$.
\end{proof}

We next establish the form of the logarithm of the expected graph walk density in the setting of Proposition~\ref{ihrg-ptop}.
\begin{Corollary}\label{cor:irg_walk_dens}
Fix an integer $k\geq 2$ and a closed $k$-walk $w\in\mathcal{W}_k(K_k)$ whose induced subgraph has degrees $(d_1,\dots, d_v)$. Then, in the setting of Proposition~\ref{ihrg-ptop}, the graph walk density $\varphi(w,G_n)$ satisfies
\begin{equation*}
\log \E\varphi(w,G_n)= - \left[ 
\left(\sum_{i=1}^v\min\{1, d_i\gamma \}\right)\log n
+ \left(\sum_{i=1}^v d_i\right)\left|\log\theta_n\right|
\right]\left\{1+o(1)\right\}.
\end{equation*}
\end{Corollary}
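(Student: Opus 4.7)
The plan is to evaluate $\log \E\varphi(w,G_n)$ directly by applying Proposition~\ref{ihrg-ptop}. First, choosing the neighborhood size $|u| = n$ in Definition~\ref{varphi}, which we have shown leaves $\varphi(w,G_n)$ unchanged, gives
\begin{equation*}
\E\varphi(w,G_n) = \frac{\E X_{F_w}(G_n)}{X_{F_w}(K_n)},
\end{equation*}
with denominator $X_{F_w}(K_n) = (n)_v/\mathrm{aut}(F_w)$. Taking logarithms, I would substitute the closed form for $\E X_{F_w}(G_n)$ from Proposition~\ref{ihrg-ptop}, which (after splitting the degrees into the three regimes indexed by $q$ and $h$) produces a prefactor that is $O(1)$ in $n$, together with the explicit monomial $\theta_n^{\sum_i d_i}\,n^{q-\gamma\sum_{t=1}^q d_t}\,(\log n)^{h-q}$ and a multiplicative $1+\epsilon_\gamma(d)\to1$.

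Next I would collect the $\log n$ contributions. The numerator contributes $(q-\gamma\sum_{t=1}^q d_t)\log n$ and $(h-q)\log\log n$, while the denominator contributes $-v\log n + O(1)$. The key algebraic identity driving the result is
\begin{equation*}
q - v - \gamma\textstyle\sum_{t=1}^q d_t \;=\; -\sum_{i=1}^v \min\{1,d_i\gamma\},
\end{equation*}
which holds because $\min\{1,d_i\gamma\}=d_i\gamma$ for $i\leq q$ and $\min\{1,d_i\gamma\}=1$ for $i>q$. On the $\theta_n$ side, since $\theta_n\in(0,1]$ implies $\log\theta_n=-|\log\theta_n|$, the factor $\theta_n^{\sum_i d_i}$ contributes $-(\sum_i d_i)|\log\theta_n|$. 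Combining these yields
\begin{equation*}
\log\E\varphi(w,G_n) = -\Bigl[\textstyle\sum_{i=1}^v\min\{1,d_i\gamma\}\Bigr]\log n - \Bigl[\textstyle\sum_{i=1}^v d_i\Bigr]|\log\theta_n| + O(\log\log n),
\end{equation*}
where the $O(\log\log n)$ absorbs the $(h-q)\log\log n$ term, $\log\mathrm{aut}(F_w)$, the bounded prefactor from Proposition~\ref{ihrg-ptop}, and $\log(1+\epsilon_\gamma(d))$.

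Finally, to convert this additive remainder into the multiplicative $\{1+o(1)\}$ form stated, I would observe that the bracketed leading expression diverges: $\log n\to\infty$ while $\sum_i\min\{1,d_i\gamma\}>0$ since all degrees $d_i$ of $F_w$ are strictly positive (recall $F_w$ is connected with at least one edge whenever $k\geq 2$). Hence $O(\log\log n)$ is $o(\log n)$, and dividing through by the bracketed leading expression gives the claimed $\{1+o(1)\}$. There is no real obstacle beyond careful bookkeeping; the only subtle point is ensuring that the leading bracket is nonzero for every admissible $w$ so that the relative-error formulation is legitimate, and this is guaranteed by $d_i\geq1$ for all $i$.
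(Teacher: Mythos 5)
Your proposal is correct and follows essentially the same route as the paper: both express $\E\varphi(w,G_n)$ as $\mathrm{aut}(F_w)\,n^{-v}\E X_{F_w}(G_n)$ up to negligible factors, substitute Proposition~\ref{ihrg-ptop}, use the identity $q-v-\gamma\sum_{t=1}^q d_t=-\sum_{i=1}^v\min\{1,d_i\gamma\}$, and absorb the constant prefactor, the $(h-q)\log\log n$ term, and the error term into the multiplicative $\{1+o(1)\}$, justified by the strict positivity of $\sum_i\min\{1,d_i\gamma\}$.
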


\begin{proof}
We first recall Definition~\ref{varphi}, and combine it with Lemma~\ref{nb-copies} to obtain
\begin{align}
\nonumber
\E\varphi(w,G_n)
=\frac{\E\mathrm{ind}_k(F_w,G_n)}{\mathrm{ind}_k(F_w,K_n)}
&=\frac{\mathrm{ind}_k(F_w,F_w) \E X_{F_w}(G_n)}{\mathrm{ind}_k(F_w,F_w) X_{F_w}(K_n)}\\
\nonumber
&= \frac{\E X_{F_w}(G_n)}{(n)_v/{\mathrm{aut}}(F_w)}\\
\label{eq:evarphi_irg}
&= {\mathrm{aut}}(F_w)\,n^{-v}\E X_{F_w}(G_n)\left\{1+\mathcal{O}\left(n^{-1}\right)\right\}.
\end{align}
Then, replacing $\E X_{F_w}(G_n)$ in~\eqref{eq:evarphi_irg} with the result of Proposition~\ref{ihrg-ptop}, we obtain
\begin{multline}
\label{eq:evarphi_irg_ii}
\E\varphi(w,G_n) = \frac
	{\left[C_\gamma(d_{h+1},\dots,d_v)\right]^{1_{\{h<v\}}}}
	{\left[\prod_{t=1}^q(1-d_t\gamma)\right]^{1_{\{q>0\}}}}
	\theta_n^{\sum_{i=1}^v d_i}
	n^{-v +q-\sum_{t=1}^qd_t \gamma} (\log n)^{h-q}
	\\ \cdot \left\{1+\mathcal{O}\left(n^{-1}\right)+\epsilon_\gamma(d)\right\}.
\end{multline}
The logarithm of~\eqref{eq:evarphi_irg_ii} will yield the claimed result, after noting $\theta_n \in (0,1]$ and
\begin{equation*}
-v +q-\sum_{t=1}^qd_t \gamma
	=-\sum_{t=q+1}^v 1-\sum_{t=1}^qd_t \gamma
	=-\sum_{t=1}^v\min(1, d_t\gamma ).
\end{equation*}
These substitutions allow us to write $\log\E\varphi(w,G_n)$, following on from~\eqref{eq:evarphi_irg_ii}, as 
\begin{multline}
\label{eq:full_varphi}
\log\E\varphi(w,G_n)
=\log
\frac{\left\{C_\gamma(d_{h+1},\dots,d_v)\right\}^{1_{\{h<v\}}}}{\left\{\prod_{t=1}^q(1-d_t\gamma)\right\}^{1_{\{q>0\}}}}
-\left(\sum_{i=1}^v\min\{1, d_i\gamma\}\right)\log n\\
	-\left(\sum_{i=1}^v d_i\right)\left|\log\theta_n\right|
	+(h-q)\log \log n
	+\log\left(1+\mathcal{O}\left(n^{-1}\right)+\epsilon_\gamma(d)\right).
\end{multline}
Noting that $\textstyle \sum_{i=1}^v\min\{1, d_i\gamma\} \geq 2\gamma > 0$ for $k \geq 2$, we conclude by absorbing the first, fourth, and fifth terms of~\eqref{eq:full_varphi} into a multiplicative $\left\{1+o(1)\right\}$ error term.
\end{proof}

\subsection{Dominating closed walks in scale-free random graphs}

Above we have compared the expected numbers of unlabeled copies of trees (Corollary~\ref{cor:irg_star_dom}) and unicylic graphs (Corollary~\ref{cor:irg_octo_dom}) in the setting of Proposition~\ref{ihrg-ptop}. We now characterize dominant walks in this setting. To do so we will study the model of Definition~\ref{powerlaw} for a sequence $\{G_n\}$ of random graphs, in conjunction with a scaling sequence $\{\theta_n\}$ taking values in $(0,1]$. We therefore define a measure of the speed of decay of $\theta_n$ in $n$ as follows, recalling from Definition~\ref{powerlaw} that $\theta_n \in (0,1]$:
\begin{equation}
\label{betadef}
\beta_n=\frac{\left| \log \theta_n \right|}{\log n }.
\end{equation}
Thus $\theta_n = n^{-\beta_n}$, and so $\beta_n$ measures the rate of decay of $\log \theta_n$ relative to $\log n$.

\begin{Theorem}\label{propn:irg_thm_validity}
Fix a walk length $k\geq 2$, and consider the setting of Proposition~\ref{ihrg-ptop} with the additional assumption that $\textstyle \lim_{n\rightarrow\infty}\beta_n = \beta \geq 0$ exists. If $\beta+\gamma<1/2$, so that a randomly chosen degree in this setting will grow in expectation, then the set of asymptotically dominating walk-induced subgraphs ($W_k^\ast(\{G_n\})$) is as follows.

If $\beta+\gamma<(1-\gamma)/2$, then Theorem~\ref{thm:dominant-walk} holds, and with $\,\smash{k^\ast = [1/2-(\beta_n+\gamma)]^{-1}}$,%
\begin{equation}\label{ihrg-prop-W}
\E|\mathcal{W}_k(G_n)| \sim
\begin{cases}
\E\mathrm{ind}_k(C_k,G_n) & \text{if $k$ is odd or if $k>k^\ast$,}\\
\E\mathrm{ind}_k(C_k,G_n) & \\ \quad + \textstyle \sum_{T\in\mathcal{T}_{k/2+1}} \E\mathrm{ind}_k(T,G_n)
	& \text{if $k$ is even and $k=k^\ast$,}\\
\textstyle \sum_{T\in\mathcal{T}_{k/2+1}} \E\mathrm{ind}_k(T,G_n)
	& \text{if $k$ is even and $k<k^\ast$.}
\end{cases}\end{equation}

If $\beta + \gamma=(1-\gamma)/2$, then $k^\ast=2/\gamma$, and
\begin{equation}\label{ihrg-prop-W-2}
W_k^\ast(\{G_n\}) = \begin{cases}
\{C_k\}&\text{if $k$ is odd,}\\
\mathcal{T}_{k/2+1}&\text{if $k$ is even and }k<k^\ast,\\
\{C_k,K_{1,k/2}\}&\text{if $k$ is even and }k\geq k^\ast.
\end{cases}\end{equation} 

If $(1-\gamma)/2 < \beta + \gamma < 1/2$, then with $k^\dagger = \tfrac2\gamma$, $k^\circ=2\tfrac{1/2 - \gamma}{\beta + \gamma - (1-\gamma)/2}+3$, and $k^{+}=\max\{k^\circ,k^\dagger+1\}$,
\begin{equation}\label{ihrg-prop-W-1}
W_k^\ast(\{G_n\})   =
\begin{cases}
\{C_k\} & \text{if $k$ is odd and $k<k^{+}$},\\
\{C_k\} & \text{if $k$ is odd, $k=k^{+}$ and $k^{+}\leq k^{\dagger}+1$},\\
\{C_k,C_3K_{1,(k-3)/2}\} & \text{if $k$ is odd, $k=k^{+}$ and $k^{+}>k^\dagger+1$}, \\
\{C_3K_{1,(k-3)/2}\} & \text{if $k$ is odd and $k>k^{+}$},\\
\mathcal{T}_{k/2+1} & \text{if $k$ is even and $k<k^\dagger$},\\
\{K_{1,k/2}\} & \text{if $k$ is even and $k\geq k^\dagger$}.
\end{cases}
\end{equation} 

Furthermore, for all $k\geq 3$, non-backtracking, tailless closed $k$-walks are dominated by the $k$-cycle: 
\begin{equation}
\label{ihrg-prop-Wb}
\E|\mathcal{W}_k^b(G_n)| \sim \E\mathrm{ind}_k(C_k,G_n),
\end{equation}
if and only if $\beta+\gamma<1/2$ or $\gamma= 1/2$ and $\beta_n \log n / \log \log n < 1/2$ eventually in $n$.
\end{Theorem}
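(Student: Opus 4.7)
The plan is to compute $\E\mathrm{ind}_k(F,G_n)=\mathrm{ind}_k(F,F)\cdot\E X_F(G_n)$ explicitly for each $F\in W_k$ via Proposition~\ref{ihrg-ptop}, and then identify the maximizers of this order. The order of $\E X_F(G_n)$ is controlled by how many degrees of $F$ lie below, at, or above the hub threshold $1/\gamma$: each sub-threshold degree contributes a factor $n^{1-d\gamma}$, threshold degrees contribute $\log n$, and super-threshold degrees contribute a $\Theta(1)$ Zipf factor. The three regimes for $\beta+\gamma$ in the statement correspond exactly to the three possibilities for whether walks of length $\leq k$ can contain threshold or super-threshold degrees.

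For the first regime $\beta+\gamma<(1-\gamma)/2$, I would first verify Assumption~\ref{asump1.3}. Using Corollary~\ref{cor:irg_walk_dens} and the four extension cases of Lemma~\ref{extens-prop}, the condition $n\E\varphi(w',G_n)/\E\varphi(w,G_n)\to\infty$ reduces to $1-2(\gamma+\beta_n)>0$ whenever every relevant walk satisfies $d\gamma<1$ at every vertex. This sub-hub property holds in this regime because $k^\ast<k^\dagger=2/\gamma$. Applying Theorem~\ref{thm:dominant-walk} then gives $W_k^\ast\subseteq\{C_k\}\cup\mathcal{T}_{k/2+1}$, and Corollary~\ref{cor:irg_star_dom} (with $e=k/2<1/\gamma$) shows that all $(k/2+1)$-trees have the same order. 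Balancing $\E\mathrm{ind}_k(C_k,G_n)$ against $\E\mathrm{ind}_k(T,G_n)$ via Proposition~\ref{ihrg-ptop} yields~\eqref{ihrg-prop-W} with the claimed threshold $k^\ast=1/[1/2-(\beta+\gamma)]$.

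For the regimes $\beta+\gamma\geq(1-\gamma)/2$, I would work directly with Proposition~\ref{ihrg-ptop}. For even $k\geq k^\dagger=2/\gamma$, Corollary~\ref{cor:irg_star_dom} eliminates all trees except the star $K_{1,k/2}$, whose central hub replaces a lost $n^{1-(k/2)\gamma}$ factor by a Zipf boost, making it competitive with $C_k$ exactly when $\beta+\gamma\geq(1-\gamma)/2$; this gives~\eqref{ihrg-prop-W-2}. For odd $k$, the odd-cycle constraint forces any competitor to contain an odd cycle, and Corollary~\ref{cor:irg_octo_dom} eliminates all unicyclic shapes except $C_3K_{1,(k-3)/2}$. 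Equating the orders of $\E\mathrm{ind}_k(C_k,G_n)$ and $\E\mathrm{ind}_k(C_3K_{1,(k-3)/2},G_n)$ yields the threshold $k^\circ$, and the parity-adjusted $k^+=\max(k^\circ,k^\dagger+1)$ accounts for when the $C_3K_{1,\cdot}$ shape first exists with a hub above $1/\gamma$; this yields~\eqref{ihrg-prop-W-1}.

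For the non-backtracking assertion~\eqref{ihrg-prop-Wb}, Lemma~\ref{w_k^b} restricts $W_k^b$ to graphs of minimum degree $\geq 2$, excluding all tree-like shapes. The two families of competitors against $C_k$ are shorter cycles $C_d$ with $d\mid k$ (traversed $k/d$ times) and degree-$4$-hub lemniscates $C_pC_{k-p}$. Proposition~\ref{ihrg-ptop} gives $\E\mathrm{ind}_k^b(C_d,G_n)/\E\mathrm{ind}_k^b(C_k,G_n)=\Theta(n^{(k-d)[2(\beta_n+\gamma)-1]})$, which tends to zero iff $\beta+\gamma<1/2$; at the boundary $\gamma=1/2$ this becomes $[n^{2\beta_n}/\log n]^{k-d}$, which tends to zero precisely under the refinement $\beta_n\log n/\log\log n<1/2$. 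Lemniscates become comparable only at $\gamma\geq 1/2$, which is ruled out by the main condition. The ``only if'' direction follows by exhibiting $C_d$ (or a lemniscate at $\gamma>1/2$) as a non-negligible competitor to $C_k$ when the condition fails. The main obstacle throughout is the case analysis for~\eqref{ihrg-prop-W-1}: careful tracking of leading powers of $n$ and sub-leading logarithmic corrections via Proposition~\ref{ihrg-ptop} is needed to rule out intermediate shapes such as tadpoles with two hubs or paths with internal hubs near the transition values $k^\circ$ and $k^\dagger+1$.
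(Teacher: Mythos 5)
Your route is the paper's route in outline (verify the extension assumption via Corollary~\ref{cor:irg_walk_dens} and Lemma~\ref{extens-prop}, apply Theorem~\ref{thm:dominant-walk} and Corollary~\ref{cor:irg_star_dom} in the first regime, work directly from Proposition~\ref{ihrg-ptop} with $C_k$, $K_{1,k/2}$ and $C_3K_{1,(k-3)/2}$ in the middle regimes, and use Lemma~\ref{w_k^b} plus cycle/lemniscate comparisons for the non-backtracking claim), but there is a genuine gap in your verification of Assumption~\ref{asump1.3}. You reduce the extension condition to $1-2(\beta_n+\gamma)>0$ under the hypothesis that every relevant walk has all degrees below $1/\gamma$, and you justify that hypothesis by ``$k^\ast<k^\dagger=2/\gamma$''. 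That is a non sequitur: $k^\ast$ is the tree/cycle crossover, not the walk length. The theorem fixes an arbitrary scale $k\ge 2$, and Assumption~\ref{asump1.3} must hold for \emph{every} non-fully-extended closed $k$-walk at that scale; once $k\ge 2/\gamma+2$ such walks exist with hub vertices (e.g.\ a walk inducing the star on $k/2$ vertices, which is extendable and whose center degree $(k-2)/2$ exceeds $1/\gamma$), and for these the sub-hub reduction does not apply. Indeed, if your reduction sufficed, Theorem~\ref{thm:dominant-walk} would hold under the weaker condition $\beta+\gamma<1/2$, which is contradicted by the regime $(1-\gamma)/2<\beta+\gamma<1/2$ where $K_{1,k/2}$ and $C_3K_{1,(k-3)/2}$ dominate. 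The paper closes exactly this hole: it uses the four degree-change cases of Lemma~\ref{extens-prop} to prove the uniform, hub-aware bound $\lambda-\lambda'\ge-3\gamma$ for $\lambda=\sum_t\min(1,d_t\gamma)$, and it is this bound that produces the threshold $(1-\gamma)/2$ rather than $1/2$. You need this (or an equivalent argument covering extensions at hub vertices), not the sub-hub shortcut.

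A second, smaller shortfall is that your treatment of the regime $(1-\gamma)/2\le\beta+\gamma<1/2$ only names the winners; the elimination of all other shapes is announced but not performed, and Corollary~\ref{cor:irg_octo_dom} covers only unicyclic graphs, whereas for odd $k$ you must also dispose of walk-induced shapes with $|e(F)|>|v(F)|$ (lemniscates and chorded shapes with pendant edges, possibly with several high-degree vertices). The paper does this with combinatorial bookkeeping specific to closed $k$-walks --- $v+v_{(1)}\le k$ and $2v_{(1)}\le k-3$, combined with $1\le 2\beta_n+3\gamma$ so that every vertex of degree at least three is cost-neutral or worse --- to show any $F$ outside $\{C_k,\,K_{1,k/2},\,C_3K_{1,(k-3)/2}\}$ with $|e(F)|\ge|v(F)|$ is negligible; some such mechanism is indispensable near $k^\circ$ and $k^\dagger$. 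Your non-backtracking argument, by contrast, matches the paper and is essentially correct, including the $\gamma=1/2$ refinement via $\beta_n\log n/\log\log n<1/2$; the only quibble is that lemniscates are already dominated at $\gamma=1/2$ (by a factor $(\log n)^{-2}$), so they are a concern only for $\gamma>1/2$, and your threshold heuristic for $k^{+}$ (``hub above $1/\gamma$'') actually kicks in at $k\ge k^\dagger-1$, not $k^\dagger+1$.
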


\begin{proof}
We will prove each of~\eqref{ihrg-prop-W}--\eqref{ihrg-prop-Wb} in order. To begin, note that for $F=K_2$, Proposition~\ref{ihrg-ptop} immediately implies that for $\gamma \in (0,1)$, the probability of an edge satisfies $ \textstyle \smash{ \E X_{K_2}(G_n) / \binom{n}{2} = \Theta\big( n^{-2( \beta_n+\gamma )} \big) }$, and thus the expected value $ \textstyle \smash{ \E X_{K_2}(G_n) / n }$ of a randomly chosen degree diverges if and only if $\beta_n+\gamma<1/2$. 

\vspace{.5\baselineskip}
\emph{Proof of~\eqref{ihrg-prop-W}:} 
We begin by showing that Assumptions~\ref{asump1.1} and~\ref{asump1.3} are satisfied, and thus that Theorem~\ref{thm:dominant-walk} holds, whenever $\beta_n+\gamma<(1-\gamma)/2$. Proposition~\ref{ihrg-ptop}, specifically~\eqref{EXFG}, implies that $\E X_F(G_n)$ is strictly positive for any $F$ with strictly positive degrees whenever $|v(F)| \leq n$. Thus  $\E\left|\mathcal{W}_k(G_n)\right|>0$ eventually in $n$ for all $k\geq 2$, and so Assumption~\ref{asump1.1} is satisfied.

If we can show that if $\beta_n+\gamma<(1-\gamma)/2$, then Assumption~\ref{asump1.3} is satisfied, then we can apply Theorem~\ref{thm:dominant-walk}. To this end, we fix $w$ to be an arbitrary closed $k$-walk that admits an extension, and is such that for $n$ sufficiently large, $\E\varphi(w,G_n)>0$. To satisfy Assumption~\ref{asump1.3}, it is sufficient to exhibit an extension $w'$ of $w$ such that
\begin{equation}
\label{exten-eq-pl}
\log \left( \frac{n \E\varphi(w',G_n)}{\E\varphi(w,G_n)} \right) = \omega(1).
\end{equation}

Let $w'$ be any extension of $w$, and denote the degree sequences of $F_w$ and $F_{w'}$ by $d=(d_1,\dots,d_v)$ and $d' = (d'_1,\dots, d'_{v+1})$, respectively. We write
\begin{equation*}
\lambda=\sum_{t=1}^v \min(1,d_t \gamma)
\quad\text{and}\quad
\lambda'=\sum_{t=1}^{v+1} \min(1,d_t' \gamma).
\end{equation*}
Then, since $F_{w'}$ has one more vertex and up to one more edge than $F_w$, it follows from Corollary~\ref{cor:irg_walk_dens} that
\begin{align*}
\log\left(\frac{n\E\varphi(w',G_n)}{\E\varphi(w,G_n)}\right)
&=\left[1-2\beta_n \left(|e(F_{w'})|-|e(F_w)|\right) -\left(\lambda'-\lambda\right) \right] \log n \left\{1+o(1)\right\} \\
&\geq\left(1-2\beta_n+\lambda-\lambda'\right) \log n \left\{1+o(1)\right\}.
\end{align*}
Thus,~\eqref{exten-eq-pl} will hold if $w$ and $w'$ are such that, eventually in $n$,
\begin{equation}
\label{pow-law-lambda-cond}
\lambda+1-2\beta_n-\lambda'>0.
\end{equation}

We now appeal to Lemma~\ref{extens-prop} to exhibit a $w'$ such that~\eqref{pow-law-lambda-cond}---and consequently~\eqref{exten-eq-pl}---is satisfied. Lemma~\ref{extens-prop} characterizes the degree sequence of $F_{w'}$ as a function of that of $F_w$ through four exhaustive cases. First, by inspection of each case in Lemma~\ref{extens-prop}, we see that $d_t=d'_t$ for $t\leq v-2$, as we have added one node and at most one edge, and therefore
\begin{multline*}
\lambda-\lambda'
=[\min(1,d_{v-1}\gamma)-\min(1,d_{v-1}'\gamma)]\\
+[\min(1,d_v\gamma)-\min(1,d_v'\gamma)]
-\min(1,d_{v+1}'\gamma).
\end{multline*}
Again by inspection of each case, we see that $d_{v-1}'\leq d_{v-1}$ and $d_v'\leq d_v+1$, so that
\begin{align*}
\min(1,d_{v-1}\gamma)-\min(1,d_{v-1}'\gamma) & \geq 0,
\\ \min(1,d_v\gamma)-\min(1,d_v'\gamma) & \geq\min(1,d_v\gamma)-\min(1,(d_v+1)\gamma).
\end{align*}
Finally, since $\min(1,d_v \gamma) - \min(1,(d_v+1) \gamma) \geq - \gamma$, we conclude that
\begin{equation*}
\lambda - \lambda'\geq
\begin{cases}
-\min(1,2\gamma) & \text{Case 1 of Lemma~\ref{extens-prop},}\\
-\gamma - \min(1,2\gamma) & \text{Case 2 of Lemma~\ref{extens-prop},}\\
-\gamma - \min(1,\gamma) & \text{Case 3 of Lemma~\ref{extens-prop},}\\
-\min(1,\gamma) & \text{Case 4 of Lemma~\ref{extens-prop}.}
\end{cases}
\end{equation*}
The bound $\lambda - \lambda' \geq -3\gamma$ therefore holds in all four cases. Then, since $\textstyle \beta = \lim_{n\rightarrow\infty}\beta_n $ exists by hypothesis, we see that the condition $\beta+\gamma<(1-\gamma)/2$ ensures that~\eqref{pow-law-lambda-cond} holds eventually in $n$---and thus that Assumption~\ref{asump1.3} holds.

We now establish~\eqref{ihrg-prop-W} under the hypothesis $\beta_n+\gamma<(1-\gamma)/2$. Since Assumptions~\ref{asump1.1} and~\ref{asump1.3} hold under this condition, Theorem~\ref{thm:dominant-walk} is in force, and we may use it to deduce that walks mapping out trees and cycles at maximal scales are dominant. In particular, combining Theorem~\ref{thm:dominant-walk} with Item~\ref{item:walksE} of Proposition~\ref{w-k-ast-prop} yields directly that
\begin{equation*}
\E|\mathcal{W}_k(G_n)|
\sim
\E\mathrm{ind}_k(C_k,G_n)+
1_{\{k\text{ even}\}}\sum_{T\in\mathcal{T}_{k/2+1}} \E\mathrm{ind}_k(T,G_n).
\end{equation*}
This result, along with Proposition~\ref{w-k-ast-prop}, implies~\eqref{ihrg-prop-W} when $k=2$ or when $k\geq 3$ is odd. When $k\geq 4$ is even, we must instead compare the behavior of trees and cycles. Corollary~\ref{cor:irg_star_dom} establishes that the expected number of closed walks inducing the $(k/2+1)$-star $K_{1,k/2}$ grows at least as quickly as that of any other tree in $\mathcal{T}_{k/2+1}$, and thus it is sufficient to compare the expected counts of walks inducing the $k$-cycle $C_k$ and the $(k/2+1)$-star $K_{1,k/2}$. Corollary~\ref{cor:irg_star_dom} then asserts that if $k<2/\gamma$, all trees in $\mathcal{T}_{k/2+1}$ will exhibit the same rate of growth.

Since $\beta_n+\gamma<(1-\gamma)/2$ implies $\gamma<1/2$, Proposition~\ref{ihrg-ptop} implies that
\begin{equation}
\label{eq:star-cycle}
\frac{\E X_{K_{1,k/2}}(G_n)}{\E X_{C_k}(G_n)}
=\Theta\left(
n^{\max\{1,k\gamma/2\}-\frac k2(1-2\beta_n-2\gamma)}
(\log n)^{1_{\{k\gamma=2\}}}
\right).
\end{equation}
The map $k\mapsto\max\{1,k\gamma/2\}-k(1-2\beta_n-2\gamma)/2$ is monotone decreasing in $k$ if $\beta_n+\gamma<(1-\gamma)/2$, and also admits a unique zero at $\smash{k^\ast = [1/2-(\beta_n+\gamma)]^{-1}}$. Therefore, when $k\geq 4$ is even, we conclude the following:
\begin{equation*}
\frac{\E X_{C_k}(G_n)}{\E X_{K_{1,k/2}}(G_n)}=
\begin{cases}
	o(1) & \text{if }k<\smash{k^\ast},\\
	\Theta(1) & \text{if }k=\smash{k^\ast},\\
	\omega(1) & \text{if }k>\smash{k^\ast}.
\end{cases}
\end{equation*}
Finally, the hypothesis $\beta_n+\gamma<(1-\gamma)/2$ implies that $k^\ast<2/\gamma$, and so by Corollary~\ref{cor:irg_star_dom}, we conclude that all trees in $\mathcal{T}_{k/2+1}$ exhibit the same rate of growth whenever $k\leq k^\ast$. Together with Proposition~\ref{w-k-ast-prop}, these results imply~\eqref{ihrg-prop-W} when $k\geq 4$.

\vspace{.5\baselineskip}
\emph{Proof of~\eqref{ihrg-prop-W-2} and~\eqref{ihrg-prop-W-1}:} 
We now treat the case $(1-\gamma)/2 \leq \beta_n+\gamma < 1/2$. Fix $F\in W_k$, let $v=|v(F)|$ and $e=|e(F)|$, write $(d_1,\dots, d_v)$ for the degree sequence of $F$, enumerated in non-decreasing order, and set $v_{(s)} = |\{t\,:\,d_t=s\}|$. 

First, assume that $e-v \geq 0$. Fix $F\in W_k\setminus\{C_k, C_3K_{1,(k-3)/2}\}$, where the graph $C_3K_{1,(k-3)/2}$ is obtained by identifying any node of the 3-cycle $C_3$ with the unique node of degree $(k-3)/2$ in the $[(k-3)/2+1]$-star $K_{1,(k-3)/2}$.

To bound $\E X_F(G_n)$, we start from the expression for $\E X_F(G_n)$ in Proposition~\ref{ihrg-ptop}, with its definitions of $q$ and $h$. Then, with $v_{(s)} = |\{t\,:\,d_t=s\}|$, we have 
\begin{align}
\label{eav2}
q-\sum_{t=1}^qd_t \gamma &= \sum_{t=1}^{v} \max\{1-d_t\gamma,0\} = \sum_{s=1}^{d_v}v_{(s)}\max\{1-s\gamma,0\},
\\
\label{eav}
v&=v_{(1)}+v_{(2)}+\dots+v_{(d_v)}.
\end{align}
We may use~\eqref{eav2} and~\eqref{eav}, along with the fact that $\gamma < 1/2$, to obtain
\begin{align*}
\E & \,X_F(G_n)=
\frac
		{\left\{C_\gamma(d_{h+1},\dots,d_v)\right\}^{1_{\{h<v\}}}}
		{\mathrm{aut}(F)\left\{\prod_{t=1}^q(1-d_t\gamma)\right\}^{1_{\{q>0\}}}}
	\theta_n^{2e} n^{q-\sum_{t=1}^qd_t \gamma} (\log n)^{h-q}
	\left\{1\!+\!\epsilon_\gamma(d)\right\}
\\
&= \Theta\left(n^{-2e\beta_n+
 \sum_{s=1}^{d_v}v_{(s)}\max\{1-s\gamma,0\}}
(\log n)^{v_{(1/\gamma)}}
\right)
\\ 
&=\Theta\left(n^{
v_{(1)}(1-\gamma)
	+v_{(2)}(1-2\gamma)
	+ \sum_{s=3}^{d_v}v_{(s)}\max\{1-s\gamma,0\}
	-2\beta_n v-2\beta_n(e-v)}
(\log n)^{v_{(1/\gamma)}}
\right)
\\ &=\Theta\left(n^{
v_{(1)}(1-2\beta_n-\gamma)
	+v_{(2)}(1-2\beta_n-2\gamma)
	+\sum_{s=3}^{d_v}v_{(s)}[\max\{1-s\gamma,0\}-2\beta_n]
	-2\beta_n(e-v)}
(\log n)^{v_{(1/\gamma)}}
\right).
\end{align*}
Then, as we have assumed $(1-\gamma)/2 \leq \beta_n+\gamma $, or equivalently $1 \leq 2\beta_n+3\gamma $, it follows that $\max\{1-2\beta_n-s\gamma,-2\beta_n\}\leq0$ for $s\geq 3$. As a consequence, we obtain
\begin{equation}
\label{ihrg-prop-W-1x-odd}
\E X_F(G_n)
=\mathcal{O}\left(n^{
v_{(1)}(1-2\beta_n-\gamma)
	+v_{(2)}(1-2\beta_n-2\gamma)
	-2\beta_n(e-v)}
(\log n)^{v_{(1/\gamma)}}
\right).
\end{equation}

Now suppose that $v=v_{(1)}+v_{(2)}$. Then, since $e\geq v$, we deduce that $2e=v_{(1)}+2v_{(2)}\geq 2v=2v_{(1)}+2v_{(2)}$. Thus, $v_{(1)}=0$, and it follows that $v_{(2)}=v$.  The only graphs in $W_k$ satisfying these constraints are $C_v$, for $v \in \{3,\ldots,k\}$. Since $F \not\equiv C_k$ by hypothesis, it follows by Item~8 of Lemma~\ref{wk-props} that $v_{(2)}\leq k - 1$, and hence, since $\beta + \gamma < 1/2$, we conclude that $\E X_F(G_n)  / \E X_{C_k}(G_n) = o(1)$, since
\begin{equation*}
  \E X_{C_k}(G_n)=\Theta\left( n^{
	k(1-2\beta_n-2\gamma)}(\log n)^{v_{(1/\gamma)}}\right).
\end{equation*}

If instead $v\geq v_{(1)}+v_{(2)}+1$, then
\begin{equation*}
\E X_F(G_n)=\mathcal{O}\left(
n^{v_{(1)}(1-2\beta_n-\gamma)
	+(v-v_{(1)}-1)(1-2\beta_n-2\gamma)
	-2\beta_n(e-v)}
(\log n)^{v_{(1/\gamma)}}\right).
\end{equation*}
We now bound $v$, using the fact that $F\in W_k\setminus\{C_k, C_3K_{1,(k-3)/2}\}$ is induced by a closed $k$-walk. To this end, observe that at least $2v_{(1)}$ walk steps are used to traverse all $v_{(1)}$ edges attached to pendant nodes, and at least $v-v_{(1)}$ steps are used to visit the remaining vertices. Combining these observations, we conclude that $v+v_{(1)}\leq k$. Thus:
\begin{align}
\nonumber
\!\!\E X_F(G_n)&=\mathcal{O}\left(
n^{v_{(1)}(1-2\beta_n-\gamma)
	+(k-2v_{(1)}-1)(1-2\beta_n-2\gamma)
	-2\beta_n(e-v)}
 (\log n)^{v_{(1/\gamma)}}\right)\\
&=\mathcal{O}\left(
n^{v_{(1)}(-1+2\beta_n+3\gamma)
	+(k-1)(1-2\beta_n-2\gamma)
	-2\beta_n(e-v)}
 (\log n)^{v_{(1/\gamma)}}\right).
\label{boundedgerich}
\end{align}

Next, we bound $v_{(1)}$ in~\eqref{boundedgerich} using the fact that $F$ is induced by a closed $k$-walk. To this end, first observe that since $F$ is not a tree, it must contain a cycle. It follows that at least three walk steps are required to traverse the edges of this cycle.  At the same time, $2v_{(1)}$ steps are required to traverse all $v_{(1)}$ edges attached to pendant nodes. Thus $2v_{(1)}\leq k-3$, with equality only if $k$ is odd.

If $2v_{(1)} \leq k-4$, then since $-1+2\beta_n+3\gamma\geq0$, we obtain from~\eqref{boundedgerich} that
\begin{align}
\nonumber
\E X_F(G_n)
&=\mathcal{O}\left(
n^{\frac{k-4}{2}(-1+2\beta_n+3\gamma)
	+(k-1)(1-2\beta_n-2\gamma)
	-2\beta_n(e-v)}
(\log n)^{v_{(1/\gamma)}}\right)
\\
&=\mathcal{O}\left(n^{
	\frac{k-4}{2}(1-2\beta_n-\gamma)
	+2(1-2\beta_n-2\gamma)
	-2\beta_n(e-v)}(\log n)^{v_{(1/\gamma)}}\right).
\label{eq:irg-3gamma-4odd}
\end{align}
Comparing with the expression for $\E X_{C_3 K_{1,(k-3)/2}}(G_n)$ from Proposition~\ref{ihrg-ptop}:
\begin{equation}
\label{eqn:summary}
\E X_{C_3 K_{1,(k-3)/2}}(G_n)=\Theta\left(n^{
	\frac{k-3}{2}(1-2\beta_n-\gamma)+2(1-2\beta_n-2\gamma)
	+\max\{1-(k+1)\gamma/2,0\}}(\log n)^{v_{(1/\gamma)}}\right),
\end{equation}
we see that $\E X_F(G_n) / \E X_{C_3K_{1,(k-3)/2}}(G_n)=o(1)$ if $k$ is odd.  If $k$ is even, then similarly $\E X_F(G_n) / \E X_{K_{1,k/2}}(G_n)=o(1)$, since
\begin{equation*}
  \E X_{K_{1,k/2}}(G_n) =\Theta\left(n^{
	\frac{k}{2}(1-2\beta_n-\gamma)
	+\max\{1-k\gamma/2,0\}}(\log n)^{v_{(1/\gamma)}}\right).
\end{equation*}

If instead $2v_{(1)}=k-3$, then $F$ must take the form of a triangle with $(k-3)/2$ pendant nodes and degree sequence $(1,\dots,1,2+l_1,2+l_2,2+l_3)$ for natural numbers $l_1,l_2,l_3$ such that $l_1+l_2+l_3=(k-3)/2$. Proposition~\ref{ihrg-ptop} then asserts that~\eqref{eq:irg-3gamma-4odd} holds unless $l_1=l_2=0$, in which case $F\equiv C_3K_{1,(k-3)/2}$ and we obtain a contradiction. 

From the above, we conclude that if $F\in W_k\setminus\{C_k, C_3K_{1,(k-3)/2}\}$ and furthermore $F$ satisfies $e\geq v$, then
\begin{equation*}
\E X_F(G_n)=o\left(\E X_{C_k}(G_n)+\E X_{K_{1,k/2}}(G_n)+\E X_{C_3K_{1,(k-3)/2}} (G_n)\right).
\end{equation*}
If $F$ is such that $e<v$, then Lemma~\ref{wk-props} asserts that $e = v - 1$ and that $k$ must be even. By Corollary~\ref{cor:irg_star_dom}, the expected number of unlabeled copies in $G_n$ of a tree with $e$ edges is bounded by that of the star $K_{1,e}$. If $e < 1/\gamma$, then all trees $T \in \mathcal{T}_{e+1}$ will exhibit similar behavior; otherwise, if $e \geq 1/\gamma$, then $\smash{ \E X_{K_{1,e}}(G_n) }$ will dominate. Furthermore, Proposition~\ref{ihrg-ptop} shows that $\E X_T(G_n)$ increases in $e$ and $n$ whenever $\beta_n+\gamma<1/2$, implying that dominant trees must have $e=k/2$ edges.

We now turn to comparing the rates of growth of $C_k$, $K_{1,k/2}$, and $C_3K_{1,(k-3)/2}$. If  $k\gamma/2<1$, then by Corollary~\ref{cor:irg_star_dom}, all trees in $\mathcal{T}_{k/2+1}$ grow at the same rate as $K_{1,k/2}$. This leads to the same comparison as in the case of Lemma~\ref{dominantscaling}, and so trees dominate cycles if $k$ is even and $\smash{k < k^\ast = [1/2-(\beta_n+\gamma)]^{-1}}$. If $k$ is odd and $(k+1)\gamma/2<1$, then once again replicating the calculations of Lemma~\ref{dominantscaling}, we conclude that cycles dominate.

If $k\gamma/2\geq 1$ and if $k$ is even, then we compare $\E X_{K_{1,k/2}}(G_n)$ to $\E X_{C_k}(G_n)$, as per~\eqref{eq:star-cycle}. Then, if $(1-\gamma)/2 < \beta_n + \gamma$, the map $k\mapsto \max\{1,k\gamma/2\}-k(1-2\gamma-2\beta_n)/2$ is monotone increasing, and for all $k\geq 3$, $\smash{{\E X_{K_{1,k/2}}}/{\E X_{C_k}(G_n)} = \omega(1)}$. If $(1-\gamma)/2 = \beta_n + \gamma$, then $\smash{{\E X_{K_{1,k/2}}}/{\E X_{C_k}(G_n)} = \omega(1)}$ for $\smash{k\leq k^\dagger}$, and otherwise $\smash{{\E X_{K_{1,k/2}}}/{\E X_{C_k}(G_n)} = \Theta(1)}$. This recovers the last two lines of~\eqref{ihrg-prop-W-2}.

Finally, if $(k+1)\gamma/2\geq 1$ and $k$ is odd, then we compare
\begin{align}
\nonumber
\frac{\E X_{C_3K_{1,(k-3)/2}}(G_n)}{\E X_{C_k}(G_n)}
&=\frac{\Theta\left(n^{
	\frac{k-3}{2}(1-2\beta_n-\gamma)+2(1-2\beta_n-2\gamma)
	+\max\{1-(k+1)\gamma/2,0\}}(\log n)^{v_{(1/\gamma)}}\right)}
	{\Theta\left(n^{
	k(1-2\beta_n-2\gamma)}\right)}\\
	&=\Theta\left(n^{
	\frac{k-3}{2}(-1+2\beta_n+3\gamma)-(1-2\beta_n-2\gamma)}(\log n)^{v_{(1/\gamma)}}\right).
\end{align}
When $1<2\beta_n+3\gamma$, then we observe that $C_3K_{1,(k-3)/2}$ dominates whenever $k>3+{2(1-2\beta_n-2\gamma)}/{(-1+2\beta_n+3\gamma)}$. Otherwise, or if $(1-\gamma)/2 = \beta_n + \gamma$, then $C_k$ dominates.

\vspace{.5\baselineskip}
\emph{Proof of~\eqref{ihrg-prop-Wb}:} 
Finally, we show that~\eqref{ihrg-prop-Wb} holds if and only if $\gamma + \beta_n < 1/2$ or $\gamma = 1/2$ and $\beta_n \log n/ \log\log n < 1/2$ eventually in $n$. We recast the result of Proposition~\ref{ihrg-ptop} to refine~\eqref{ihrg-prop-W-1x-odd} as follows, where again we write $\smash{v_{(i)}}$ as the number of nodes of $F$ having degree $i$. We use the identity $\smash{ 2e = \textstyle \sum_{t=1}^v d_t = \sum_{i=1}^{v-1} i v_{(i)} }$: 
\begin{align}
\nonumber
\E X_F(G_n)
& = \Theta\left( n^{ \sum_{t=1}^v \max\{1 - d_t \gamma,0\} }\left(\log n\right)^{\left| t'\,:\,d_{t'} \gamma = 1 \right|} n^{-2e\beta_n} \right)
\\ \nonumber
& = \Theta\left( n^{ \sum_{i=1}^{d_v} v_{(i)}  \max\{1 - i\gamma,0\}  -2e\beta_n} \left(\log n\right)^{ v_{(1/\gamma)} } \right)
\\ \nonumber
& = \Theta\left( n^{ \sum_{i=1}^{d_v} v_{(i)} \left[ \max\{1 - i\gamma,0\}+ x_n \cdot 1_{\{i=1/\gamma\}} \right] -2e\beta_n} \right)
\\
& = \Theta\left( n^{ v \sum_{i=1}^{d_v} \frac{ v_{(i)} }{ v } \left[ \max\{1 - i\gamma,0\}+ x_n \cdot 1_{\{i=1/\gamma\}} - i\beta_n \right] } \right),
\label{eq:EXF-convex-form-i}
\end{align}
where $x_n = \log \log n / \log n = o(1)$. For any fixed $v$, the summation in~\eqref{eq:EXF-convex-form-i} is a convex combination of the values $\big[ \max\{1 - i\gamma,0\} + x_n 1_{\{i=1/\gamma\}} - i\beta_n \big]$ for $i \in \{1,\ldots,d_v\}$. These values are strictly decreasing in $i$ (until $i$ exceeds $1/\gamma$ if $\beta_n = 0$). 

Now, suppose that $v_{(1)} = 0$ and that $0 < \gamma \leq 1/2$. Since the values in the summation of~\eqref{eq:EXF-convex-form-i} are then strictly decreasing in $i$ until at least $i=3$, the maximum of this sum over the entire standard $(v-3)$-simplex is attained uniquely by the extreme point $v_{(2)} = v$. This point is achieved by $F \equiv C_v$ for any fixed $v \in \{3,4,\ldots\}$, and so $\smash{ \E X_{C_v}(G_n) = \Theta\big( n^{v\left[ \max\{1 - 2\gamma,0\}+ x_n \cdot 1_{\{\gamma=1/2\}} - 2\beta_n \right]} \big) }$. Thus $\E X_{C_v}(G_n) \rightarrow \infty$ if and only if $\gamma + \beta_n < 1/2$ eventually in $n$ or $\gamma = 1/2$ and $\beta_n < x_n/2$ eventually in $n$, implying then that $\E X_{F_v}(G_n) = o\big( \E X_{C_v}(G_n) \big)$ for any graph $F_v$ on $v$ vertices with strictly positive degrees and no pendant nodes.

Therefore, if we consider non-backtracking, tailless closed $k$-walks inducing elements of $\smash{W_k^b}$ in this setting, we see that $\smash{ \E X_{C_k}(G_n) = \textstyle \max_{\{v\,:\, C_v \in W_k^b\}} \E X_{C_v}(G_n) }$ grows strictly more quickly than any other element of $\smash{W_k^b}$.  Thus we conclude that $k$-walks inducing $C_k$ will grow in prevalence and uniquely dominate amongst all non-backtracking, tailless closed $k$-walks whenever $  \beta_n +\gamma< 1/2$ or $\gamma = 1/2$ and $\beta_n / x_n < 1/2$ eventually in $n$.
\end{proof}

\section{Methods and algorithms}\label{sec:S6}

\begin{algorithm}[!t]
  \SetAlgoLined
    \KwInput{Simple graph $G$ on $n$ nodes; $N_s$ subgraph sizes $ 1 < s_1, \ldots, s_{N_s} < n$; maximal scale $ \textstyle 2 < k_{\max} <\min_l s_l$; level $0<\alpha<1$.}
    \KwOutput{Violin plots~\cite{hintze1998violin} based on $\{t_k(r)\}_{r=1}^R$ for each $k \in \{2,\dots, k_{\max}\}$.}
    {$ \textstyle R \gets \left\lceil \left\{\frac{1}{2\alpha}\Phi^{-1}\left(1-\frac{\alpha}{2(k_{\max}-1)} \right)\right\}^2\right\rceil$, for $\Phi(\cdot)$ the CDF of a standard normal\;}
    \For(\tcp*[h]{Subsample independently in parallel}){$r \gets 1$ \KwTo $R$}
    {
    {Sample $u_r$ uniformly at random over all $s_{\lceil N_s r/R \rceil}$-subsets of nodes of $G$\;}
    {Construct the node-induced subgraph $G[u_r]$ on $|u_r|$ nodes of $G$\;}
    {$ t_2(r) \gets \#\{\textnormal{\Textpathtwo\!\! in }G[u_r]\} / \#\{\textnormal{\Textpathtwob in }G[u_r]\}$ as defined in~\eqref{count1}\;}
    \For(\tcp*[h]{Iterate over matrix powers}){$k \gets 3$ \KwTo $k_{\max}$}
    {$ t_k(r) \gets \smash{ \big( \#\{C_{k}\text{ in }G[u_r]\} / \#\{C_{k}\text{ in }K_{|u_r|}\} \big)^{1/k} }$ as defined in~\eqref{count2}\;}}
   \For(\tcp*[h]{Construct density estimates}){$k \gets 2$ \KwTo $k_{\max}$}
   {
   {Form a kernel density estimate from the set $\smash{\{t_k(r) : t_k(r)>0\}}$ using any user-defined automatic bandwidth selection method (see, e.g.,~\cite{silverman1986density})\;}
   \If{$\#\{t_k(r) : t_k(r)=0\}/R>\alpha$}
   {
   {Re-weight the kernel density estimate by $\#\{t_k(r) : t_k(r)>0\}/R$\;}
   {Add a mass at zero of height $\#\{t_k(r) : t_k(r)=0\}/R$\;}
   }
   }
   {}
{Display all $k_{\max}-1$ kernel density estimates together as violin plots.}
\caption{\label{alg:violins}Network summarization routine used in the main text}
\end{algorithm}

\begin{algorithm}[!t]
  \SetAlgoLined
    \KwInput{Simple graph $G$ on $n$ nodes; maximal scale $2 < k_{\max} < n/2$; level $0<\alpha<1$, number of subgraph sizes $N_s$, size increment $\delta > 0$.}
    \KwOutput{Multiset of $N_s$ subgraph sizes $ 1 < s_1, \ldots, s_{N_s} < n$.}
	{$s^\ast \gets \min \left\{ \max \left\{k_{\max}+1,\min\left\{\lfloor n/4\rfloor,3(k_{\max}+1)\right\}\right\} , n \right\} / \left(1+\delta\right) $\;}
    {$\mathcal{K}^\ast \gets \{2,\dots,k_{\max}\}$\;}
    {$s_{\max} \gets n$\;}
	\For(\tcp*[h]{Increment subgraph size}){$i \gets 1$ \KwTo $\lceil \log n / \log \left(1+\delta\right) \rceil$}{
	{$s^\ast \gets \min\{ [\left(1+\delta\right) s^\ast ], n\} $\;}
	{$\{t_k(r) : 2 \leq k \leq k_{\max} \}_{r=1}^R \gets $ Algorithm~\ref{alg:violins} with inputs $G$, $s^\ast$, $k_{\max}$, $\alpha$\;}
    \ForEach(\tcp*[h]{Check summary separated from 0}){$k \in \mathcal{K}^\ast$}
    {
    \eIf{$\#\{t_k(r) : t_k(r)>0\} > 0$}
    {
    {$\textstyle p_k \gets 1 - \Phi\left(\frac{1}{R} \sum_r t_k(r) / \sqrt{ \frac{1}{R-1} \sum_r t_k^2(r)  - \frac{1}{R(R-1)} \big[ \sum_r t_k(r)  \big]^2 } \right)$\;}
	}
    {$p_k \gets 1/2$\;}
    }
    {$\textstyle p \gets \max_{k\in\mathcal{K}^\ast} p_k$ \tcp{Quantify least-separated summary}}
    \If(\tcp*[h]{Reset, then halt}){$ p \leq \alpha / (k_{\max}-1) $ or $s^\ast \geq s_{\max}$}{
    \If{$s_{\max} = \lfloor 0.8 n\rfloor$}{Terminate and return $N_s$ equispaced values $s_1,\dots,s_{N_s}$ over the range $0.9 s^\ast$--$1.1 s^\ast$, rounding each to the nearest integer.}
	{$s^\ast \gets \min \left\{ \max \left\{k_{\max}+1,\min\left\{\lfloor n/4\rfloor,3(k_{\max}+1)\right\}\right\} , n \right\} / \left(1+\delta\right) $\;}
	{$\mathcal{K}^\ast \gets \{k : p_k < 1/2 \}$ \tcp{Ignore all-zero summaries}}
    {$s_{\max} \gets \lfloor 0.8 n\rfloor$ \tcp{Restrict maximum subgraph size}}
    }
    }
\caption{\label{alg:sizes}Automatic selection routine for network subsampling sizes}
\end{algorithm}

The approach we use to calculate network summaries in this study is described in Algorithms~\ref{alg:violins} and~\ref{alg:sizes}. It is based on repeatedly subsampling the network of interest, and then determining the prevalence and local variability of trees and $k$-cycles within each sub-network. The number of sub-network replicates sampled in this manner is set via Algorithm~\ref{alg:violins}, and the range of replicate sizes via Algorithm~\ref{alg:sizes}.

\subsection{Algorithm~\ref{alg:violins}: Subsampling to summarize trees and cycles}

Given an input graph $G$, Algorithm~\ref{alg:violins} randomly subsamples its nodes, yielding for each replicate $r$ a set of nodes $u_r$ and a corresponding node-induced subgraph $G[u_r]$. Summaries of trees and cycles in each $G[u_r]$ are then calculated via operations on the adjacency matrix of $G[u_r]$~\cite{alon1997finding,perepeshko2012cycle}. This procedure is repeated independently, and the resulting samples are used to form kernel density estimates~\cite{silverman1986density} summarizing the prevalence and local variability of trees and cycles throughout $G$.  These density estimates comprise the violin plots~\cite{hintze1998violin} in Figs.~\ref{fig:tadpole}--\ref{fig:twitter} in the main text.

Algorithm~\ref{alg:violins} summarizes trees in $G[u_r]$ through the ratio of the following quantities, with $|u_r|$ the number of nodes in $G[u_r]$ and $d[u_r]$ its degree sequence:
\begin{subequations}\label{count1}
\begin{align}
\#\{\textnormal{\Textpathtwo\!\! in }G[u_r]\} & = \frac{1}{2}\sum_{i=1}^{|u_r|} d[u_r]_i \, (d[u_r]_i-1), \\
\#\{\textnormal{\Textpathtwob in }G[u_r]\} & = \frac{|u_r|-2}{2}\sum_{i=1}^{|u_r|} d[u_r]_i .
\end{align}
\end{subequations}
Algorithm~\ref{alg:violins} summarizes $k$-cycles in $G[u_r]$ as the $k$th root of the proportion of cycles present, recalling that $(|u_r|)_k = |u_r|!/(|u_r|-k)!$ is the falling factorial:
\begin{subequations}\label{count2}
\begin{align}
\#\{C_{k}\text{ in }G[u_r]\} & = \text{the number of $k$-cycles in $G[u_r]$}, \\
\smash{\#\{C_{k}\text{ in }K_{|u_r|}\}} & = \frac{1}{2k}(|u_r|)_k.
\end{align}
\end{subequations}

To determine the total number of replicates $R$, Algorithm~\ref{alg:violins} relies on the following idea. Consider the cumulative distribution function (CDF) $F_{t_k}(\cdot)$ of a given $t_k(r)$.  Conditionally upon the observed graph of interest $G$, it follows from the central limit theorem that when $R$ is sufficiently large, the corresponding empirical CDF $\hat{F}_{t_k}(\cdot)$ behaves like a normal distribution with mean $F_{t_k}(\cdot)$ and variance $\smash{F_{t_k}(\cdot)\big[1-F_{t_k}(\cdot)\big]/R}$. To constrain departures of $\smash{\hat{F}_{t_k}(\cdot)}$ from $F_{t_k}(\cdot)$ greater than $\alpha$ to occur with probability bounded by $\alpha$, simultaneously taking into account via the Bonferroni correction that we wish this to hold for all $k_{\max}-1$ scales under consideration, we may formulate the following requirement:
\begin{equation}
\label{eq:normalpha}
\mathbb{P}\left(\left|\hat{F}_{t_k}(\cdot)-{F}_{t_k}(\cdot)\right|>\alpha\right)\leq \frac{\alpha}{k_{\max}-1}.
\end{equation}
Replacing the left-hand side of~\eqref{eq:normalpha} by the corresponding normal CDF and noting that $F_{t_k}(\cdot)\big[1-F_{t_k}(\cdot)\big] \leq 1/4$, we obtain the expression for $R$ given in Algorithm~\ref{alg:violins}.

Algorithm~\ref{alg:violins} requires as input a set of subgraph sizes $s_1, \ldots, s_{N_s}$, a maximal scale $k_{\max}$, and a level $\alpha$.  While it is possible to use just a single subgraph size, this can lead to discretization effects in the values of the summaries obtained when $G$ is small. To avoid this eventuality, we vary subgraph sizes around an average value; both the selection of this value and the variation around it are done automatically in this study using Algorithm~\ref{alg:sizes}.

\subsection{Algorithm~\ref{alg:sizes}: Selecting network sizes for subsampling}

Algorithm~\ref{alg:sizes} must determine subgraph sizes which are neither too small, in which case $k$-cycles may be absent simply because of network edge sparsity~\cite{rito2010threshold}---nor too large, in which case independently chosen subgraph replicates will overlap significantly, and be highly correlated as a result. To automate this procedure, Algorithm~\ref{alg:sizes} steps through an increasing sequence of trial values for the number of subgraph nodes $s^\ast$. For each trial value, Algorithm~\ref{alg:sizes} checks if summaries of $G$ obtained from Algorithm~\ref{alg:violins} have stabilized, using the fact that if $k$-cycles are present, they will appear in the output of Algorithm~\ref{alg:violins} only after $s^\ast$ exceeds a sparsity-dependent threshold~\cite{rito2010threshold}.

If not all $k_{\max}-1$ summaries based on a given value of $s^\ast$ are sufficiently well separated from zero, as determined by a one-sided $Z$-test at level $\alpha$ with Bonferroni correction, then Algorithm~\ref{alg:sizes} increments $s^\ast$ and the procedure repeats. Once all summaries are sufficiently well separated from zero, or if eventually $s^\ast$ reaches the full number of nodes in $G$, then the procedure restarts with the initial value of $s^\ast$---but any scales whose summaries were previously identically zero are then excluded from testing. The remaining summaries are once again tested and $s^\ast$ is incremented up to a maximum of $\lfloor 0.8 n\rfloor$, with Algorithm~\ref{alg:sizes} returning equispaced values $s_1,\dots,s_{N_s}$ over the range $0.9 s^\ast$--$1.1 s^\ast$, each rounded to the nearest integer.

\section{Network datasets}\label{sec:S7}

We analyzed all networks in this study using Algorithms~\ref{alg:violins} and~\ref{alg:sizes}, with $k_{\max} = 9$, $\alpha=0.01$, $N_s=21$, and $\delta = 0.05$. We now describe these networks.

\subsection{Networks analyzed in Fig.~\ref{fig:compnets} of the main text}

The networks analyzed in Fig.~\ref{fig:twitter} of the main text are standard benchmark datasets in the literature (save for Fig.~\ref{fig:compnets}d, synthetically generated to illustrate triadic closure), which we now describe. Archetypical synthetic examples corresponding to Figs.~\ref{fig:compnets}a--c are shown in Fig.~\ref{fig:typexamp} for comparison.

\begin{figure}[!t]
\centering
  \subfloat[]{%
    \label{fig:PrefAttach}%
    \includegraphics[width=.25\textwidth]
      {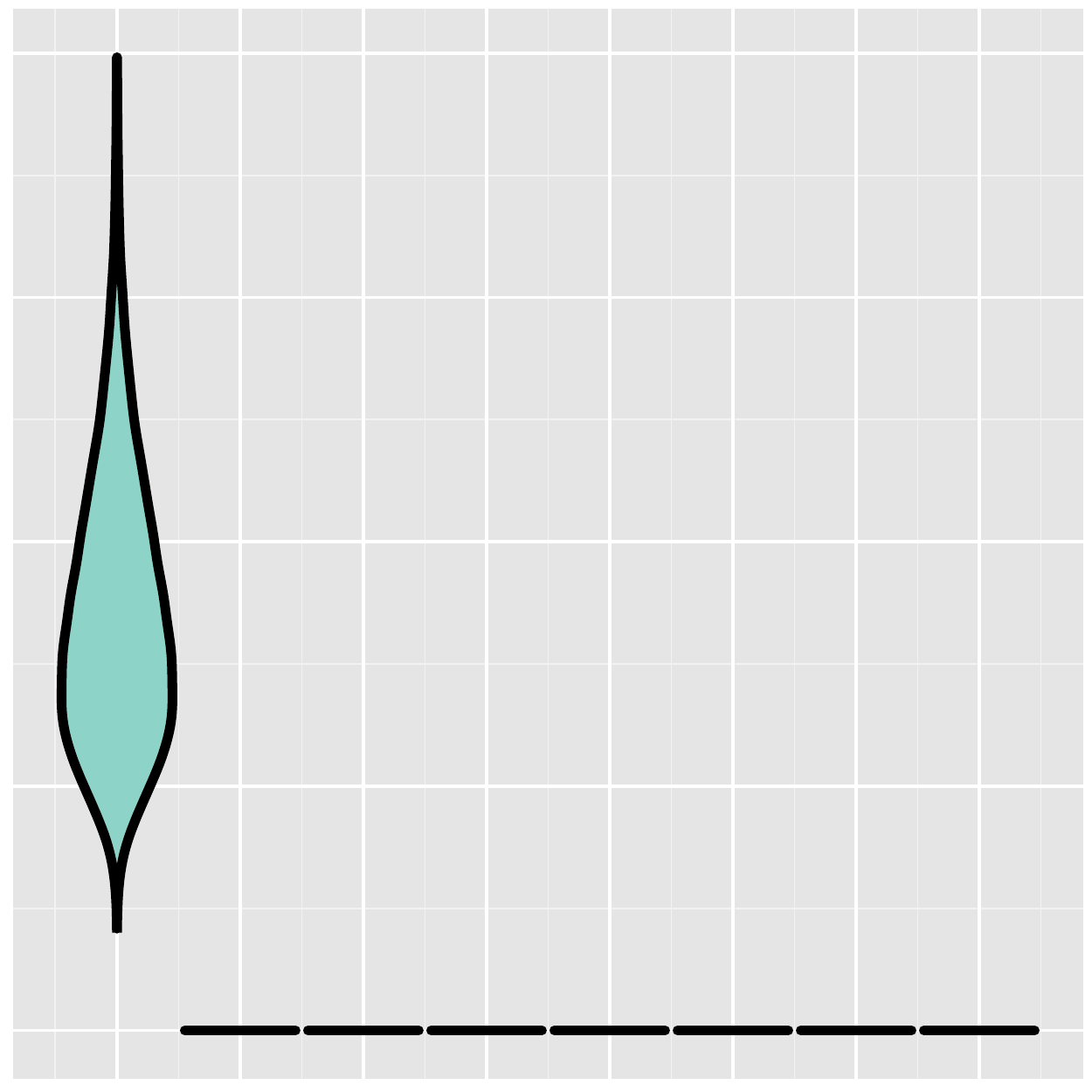}}\ \ %
  \subfloat[]{%
    \label{fig:WatStro}%
    \includegraphics[width=.25\textwidth]
      {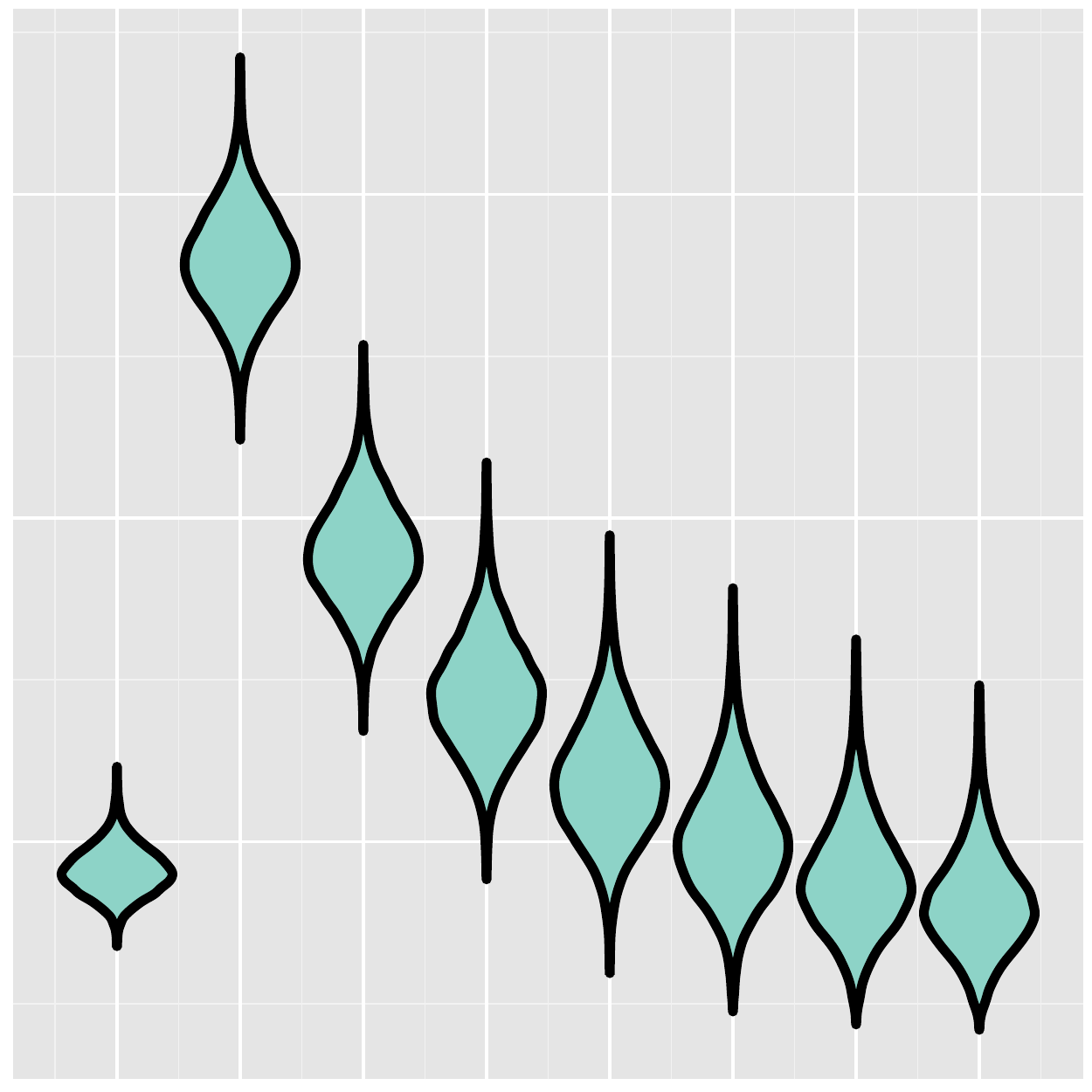}}\ \ %
  \subfloat[]{%
    \label{fig:Blockmodel}%
    \includegraphics[width=.25\textwidth]
      {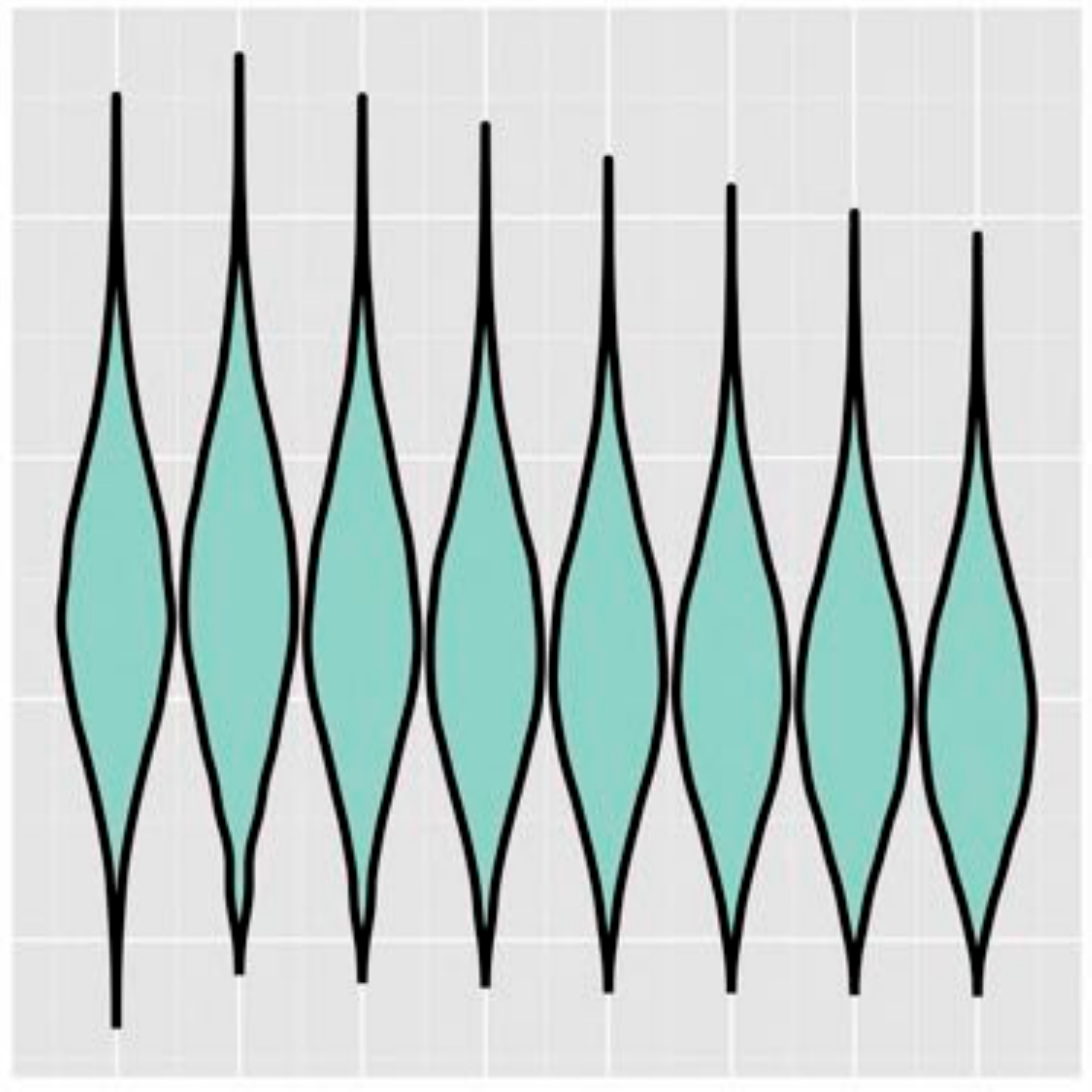}}%
  \caption{\label{fig:typexamp} \textbf{Summaries of synthetic networks.} Generated according to archetypical network models: \protect\textbf{a}, Preferential attachment model (1000 nodes, $s^\ast=322$). \protect\textbf{b} Watts--Strogatz model (500 nodes, $s^\ast=150$). \protect\textbf{c} Blockmodel (100 nodes, $s^\ast=33$).}
\end{figure}

\paragraph{Figure~\ref{fig:compnets}a} This 453-node metabolic network of \emph{C. elegans} was introduced in~\cite{jeong2000large} as an example of a network with scale-free properties. It may be obtained from \url{http://www.cise.ufl.edu/research/sparse/matrices/Arenas/}. For purposes of comparison, Fig.~\ref{fig:typexamp}a shows the scale-based summary of a preferential attachment network (a tree), generated with default parameter settings in the R package igraph~\cite{igraph}.

\paragraph{Figure~\ref{fig:compnets}b} This 4,941-node electrical power grid network of the western US was introduced in~\cite{watts1998small} as an example of a network with small-world properties. It may be obtained from \url{http://www.cise.ufl.edu/research/sparse/matrices/Newman/power}. For comparison, Fig.~\ref{fig:typexamp}b shows the scale-based summary of a synthetically generated small-world network, generated using the R package igraph~\cite{igraph} by perturbing a one-dimensional lattice to connect each group of five neighbors and then rewiring at random with probability 0.05.

\paragraph{Figure~\ref{fig:compnets}c} This 1,224-node network of political weblogs was introduced in~\cite{adamic2005political} as an example of network with known two-community structure. It may be obtained from \url{http://www.cise.ufl.edu/research/sparse/matrices/Newman/polblogs}. Here we study the version analyzed in~\cite{olhede2013network}. For comparison, Fig.~\ref{fig:typexamp}b shows the scale-based summary of a synthetically generated degree-corrected stochastic blockmodel with two communities.
 
\paragraph{Figure~\ref{fig:compnets}d} These two 512-node networks demonstrate triadic closure. The first network shown in Fig.~\ref{fig:compnets}d is formed by connecting pairs of nodes independently with constant edge probability $0.0380$. The second network is derived from the first as follows. We select a triplet of nodes uniformly at random, mark one of the three nodes, and then check if it is connected to the remaining two. If it is, and no third edge is present, then we close the triangle and delete another edge chosen at random from within the network---thus keeping the overall number of edges fixed. We repeat this procedure independently, 10,000 times. As a result, from the first to the second network, the number of triangles increases from 1303 to 3478.

\paragraph{ Figure~\ref{fig:compnets}e} This 1,117-node network of student friendships derives from the US National Longitudinal Study of Adolescent Health (Add Health)~\cite{resnick1997protecting}. We study the version analyzed in~\cite{olhede2013network}, after removing an additional 5 students with missing grade covariates. This yields six networks for middle school (grades 7--8) and high school (grades 9--12): 209 nodes (grade 7), 242 nodes (grade 8), 237 nodes (grade 9), 161 nodes (grade 10), 137 nodes (grade 11), and 131 nodes (grade 12).

\subsection{Networks analyzed in Fig.~\ref{fig:twitter} of the main text}

The social media discussion dataset analyzed in Fig.~\ref{fig:twitter} of the main text was constructed by the now-defunct private company FSwire Limited~\cite{FSwireCompaniesHouse} as follows. First, from all tweets broadcasted via the social media platform Twitter over a given time period, FSwire applied proprietary algorithms to retain only the most relevant tweets ``to extract content that impacts the capital markets''~\cite{FSwireBloomberg}. Then, FSwire extracted the tweets related to Apple Inc., labeled using the AAPL ticker in its database~\cite{FSwireFinanceMagnates}. This dataset consisted of 10,000 tweets shared online between $t_1 = $ 4:18:19pm on April 8, 2014 and $t_2 = $ 9:54:45am on May 15, 2014. Finally, of these 10,000 tweets, only those containing the keyword ``iPhone'' were retained, yielding $N=1986$ tweets from 939 unique users.

We use these 1986 tweets to construct two types of time-varying networks. In both cases, nodes in the networks are users broadcasting tweets. The differences between the two constructions we consider lie in how and when edges are considered to be present between pairs of users. In both cases we remove nodes without any connections to other users during the time period under consideration.

\begin{figure}[!t]
\centering
  \subfloat[]{%
    \includegraphics[width=0.19\textwidth]
      {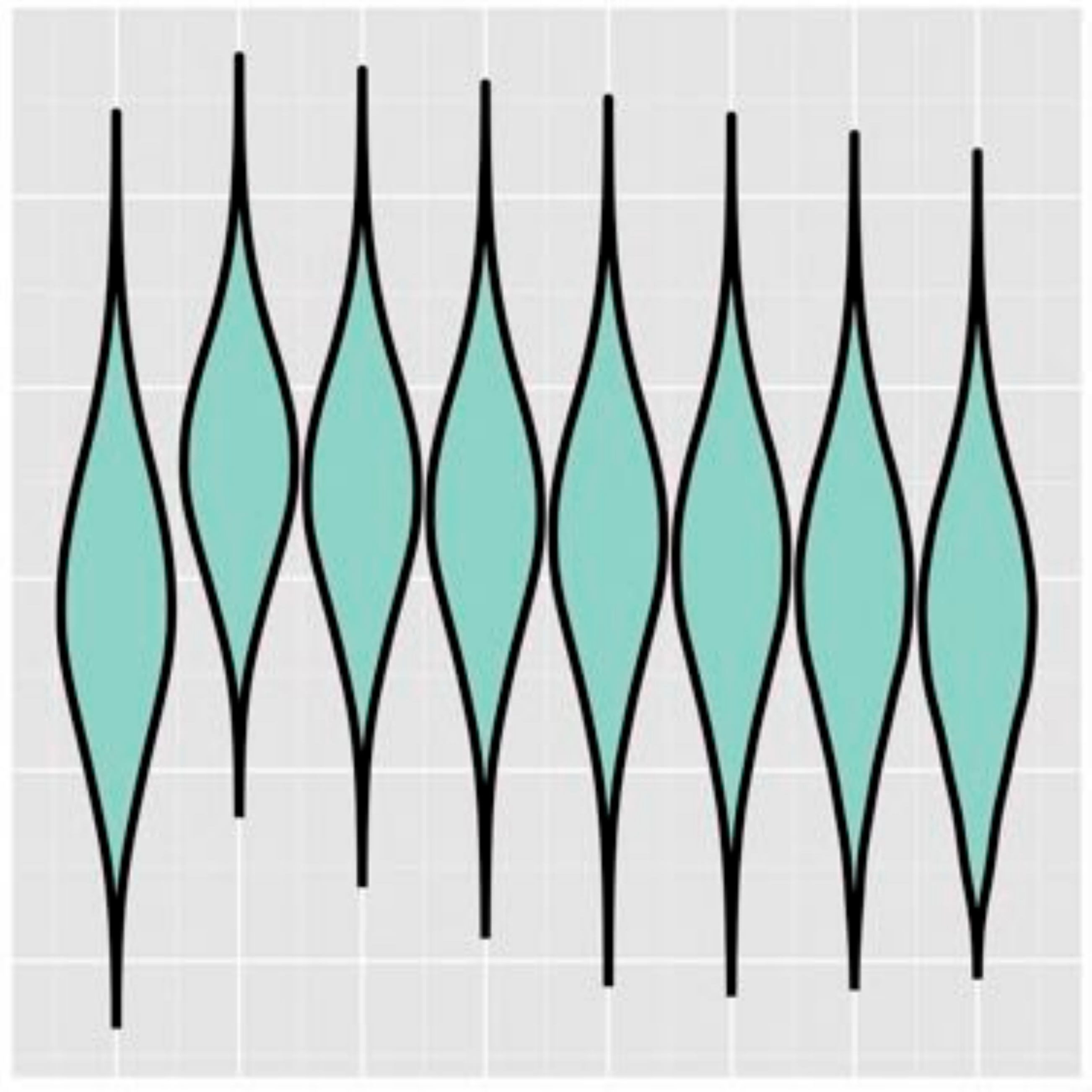}}\ \ %
  \subfloat[]{%
    \includegraphics[width=0.19\textwidth]
      {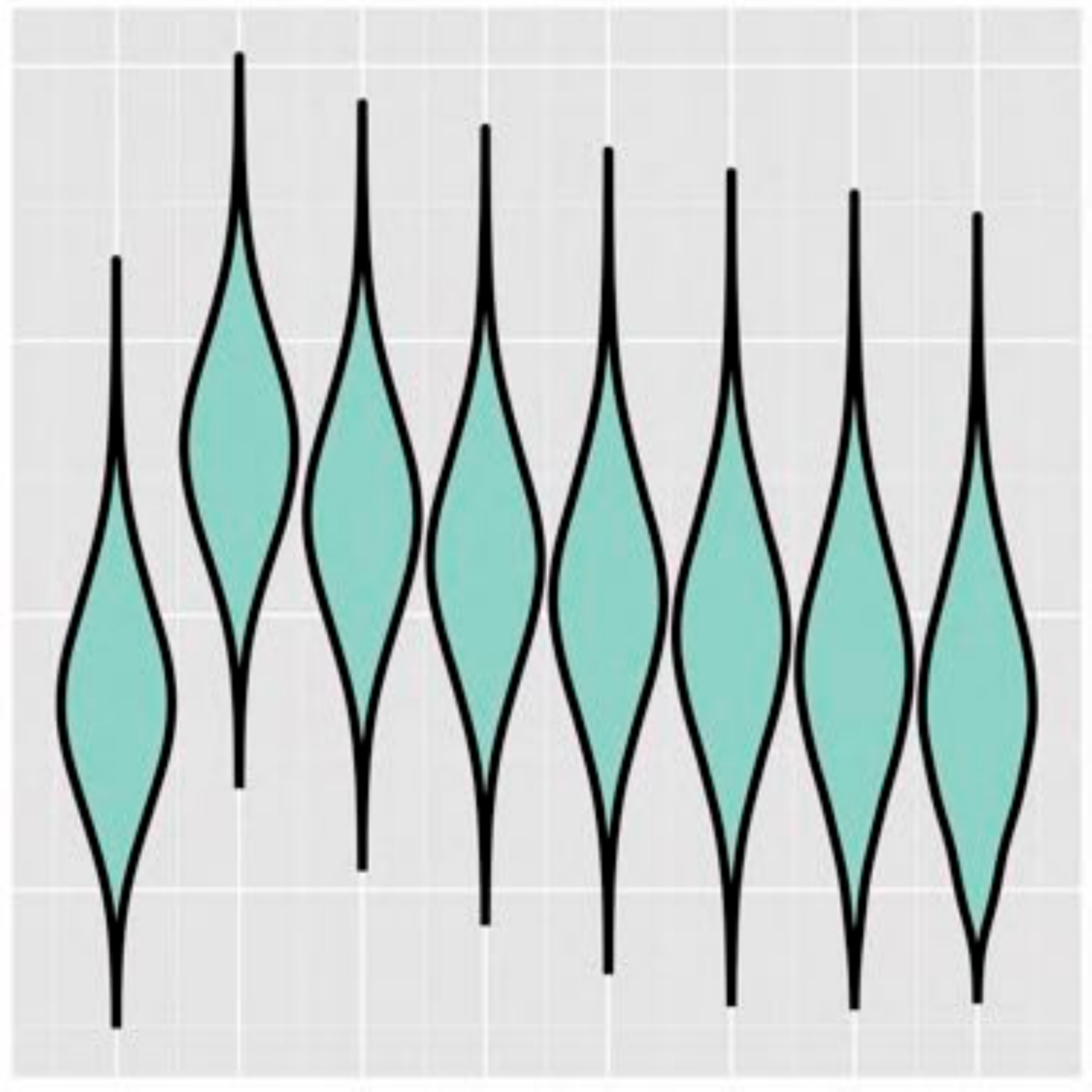}}\ \ %
  \subfloat[]{%
    \includegraphics[width=0.19\textwidth]
      {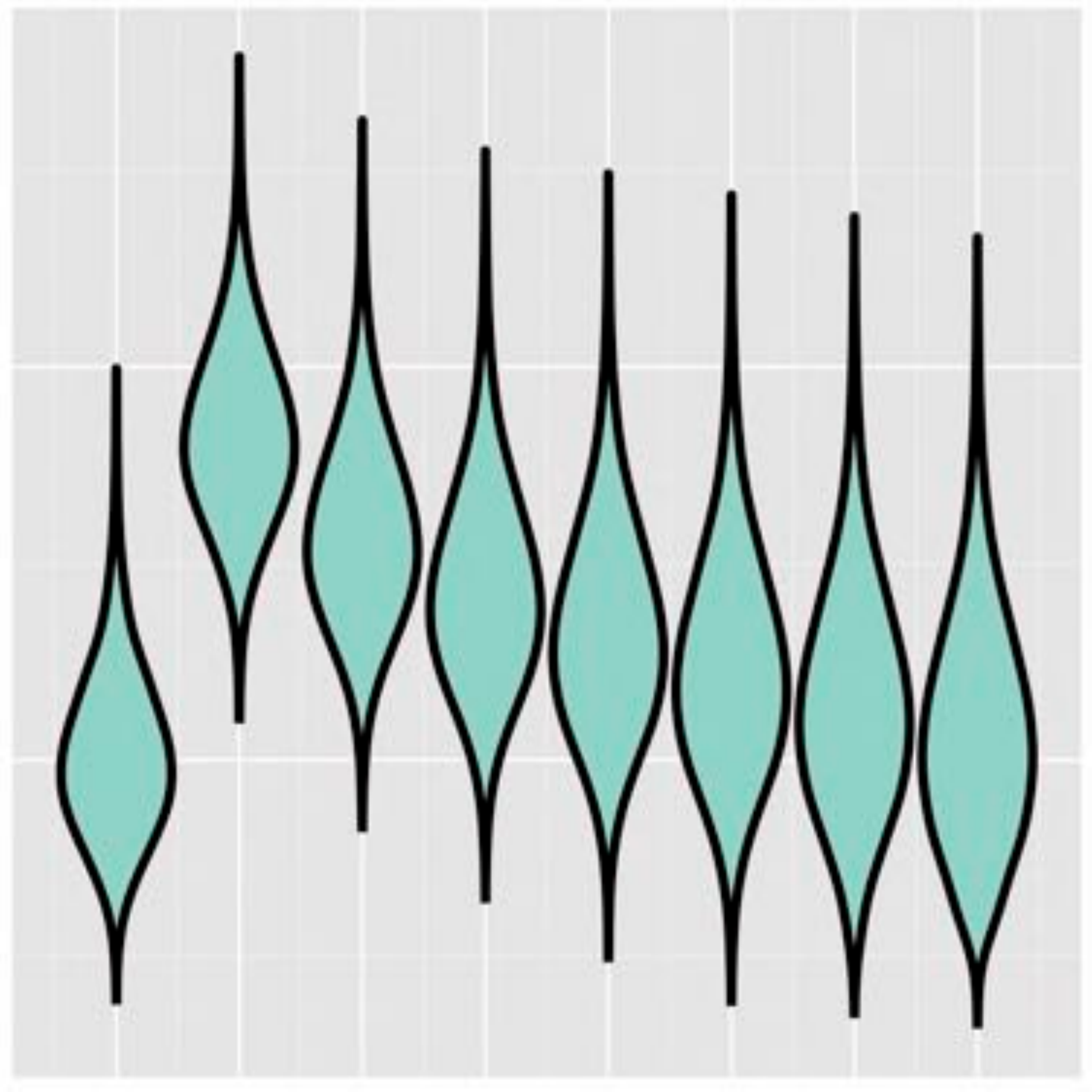}}\ \ %
  \subfloat[]{%
    \includegraphics[width=0.19\textwidth]
      {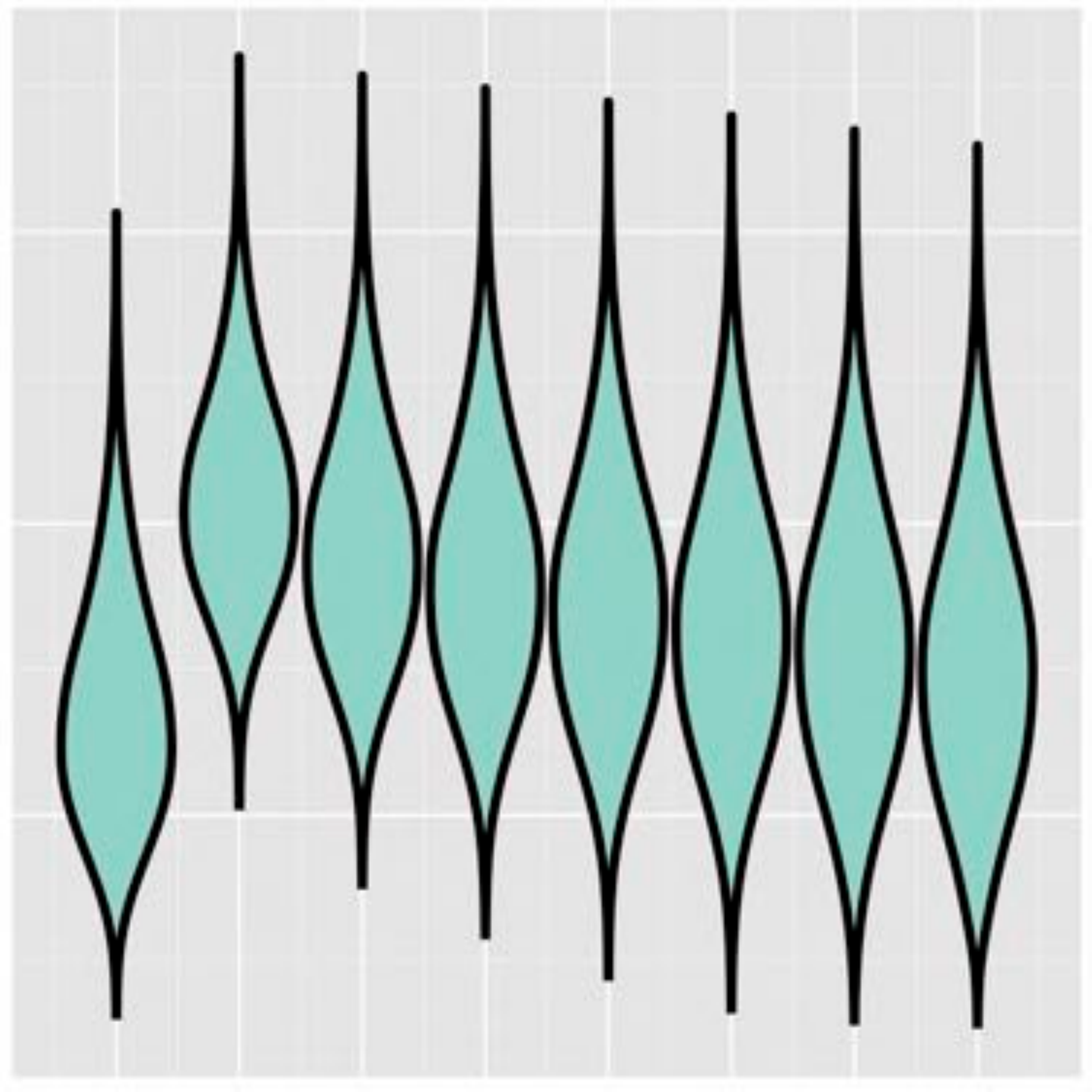}}\ \ %
      \\
  \subfloat[]{%
    \includegraphics[width=0.19\textwidth]
      {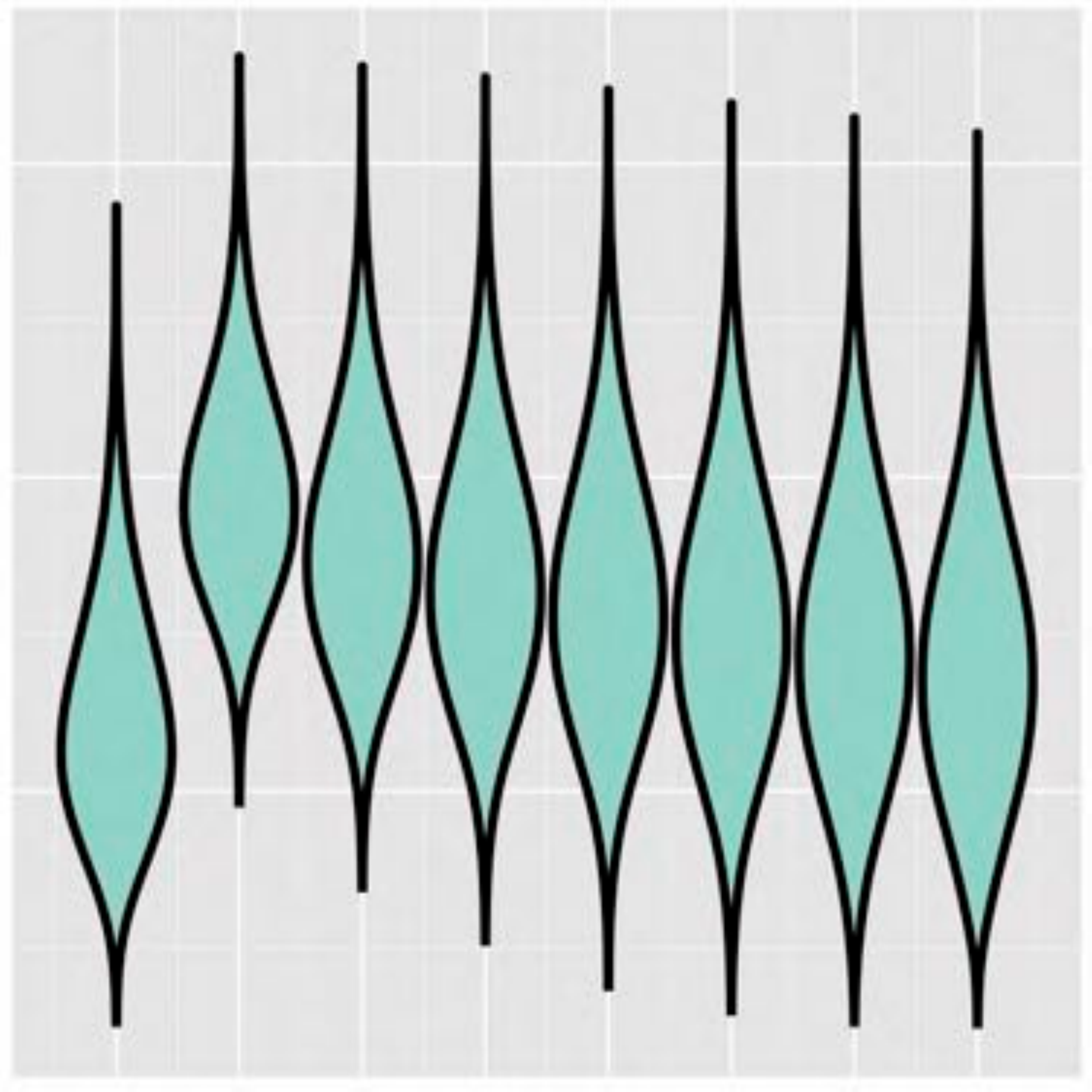}}\ \ %
  \subfloat[]{%
    \includegraphics[width=0.19\textwidth]
      {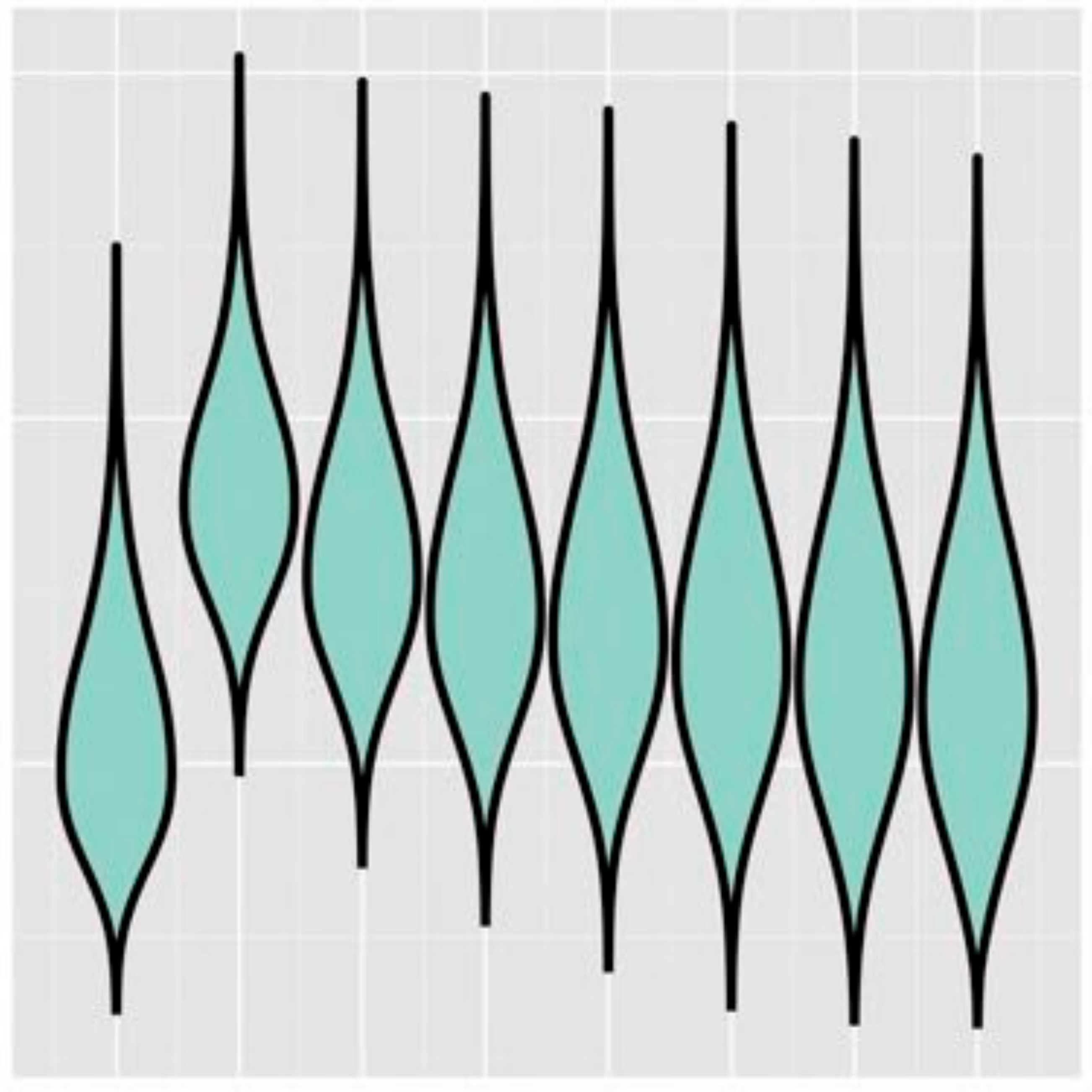}}\ \ %
  \subfloat[]{%
    \includegraphics[width=0.19\textwidth]
      {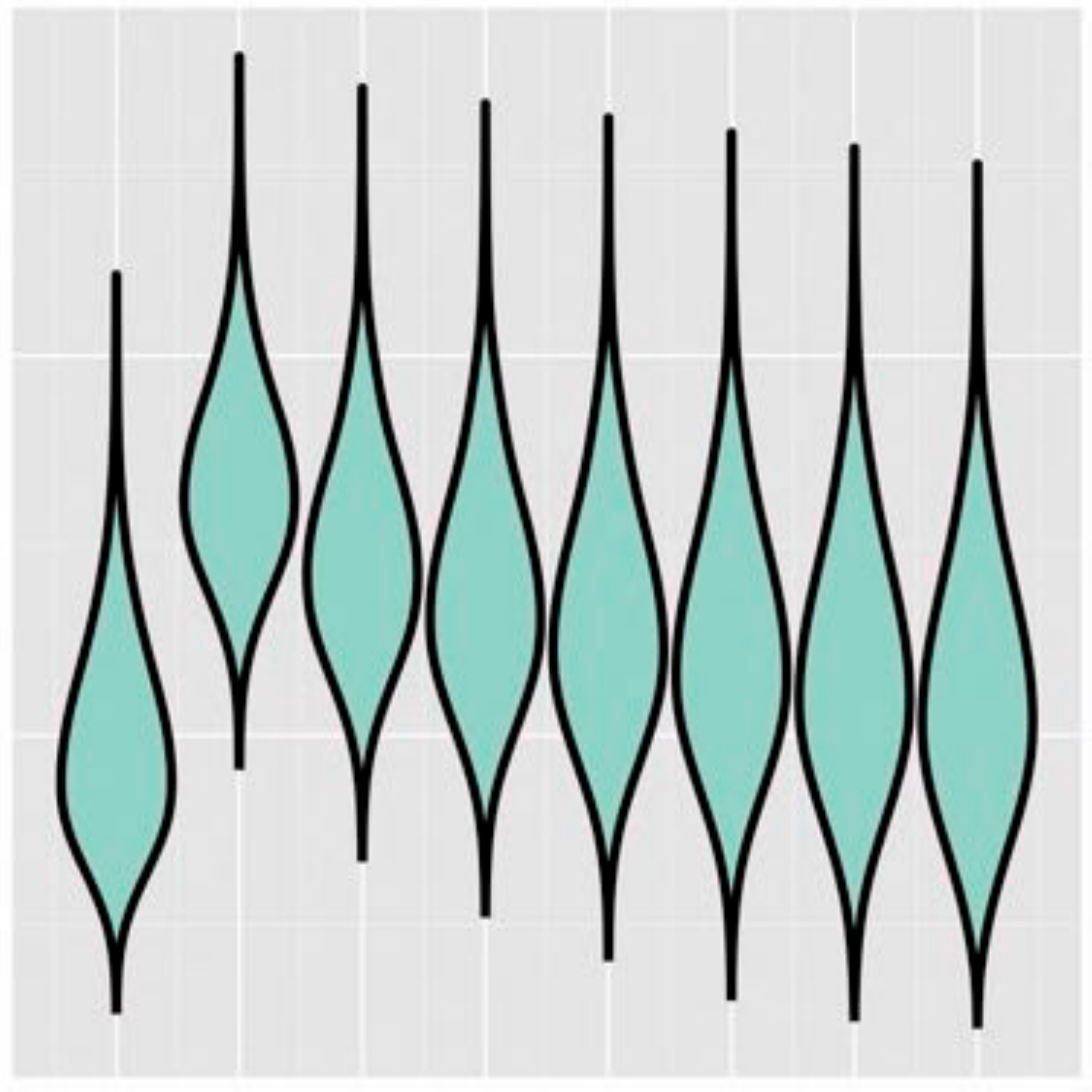}}\ \ %
  \subfloat[]{%
    \includegraphics[width=0.19\textwidth]
      {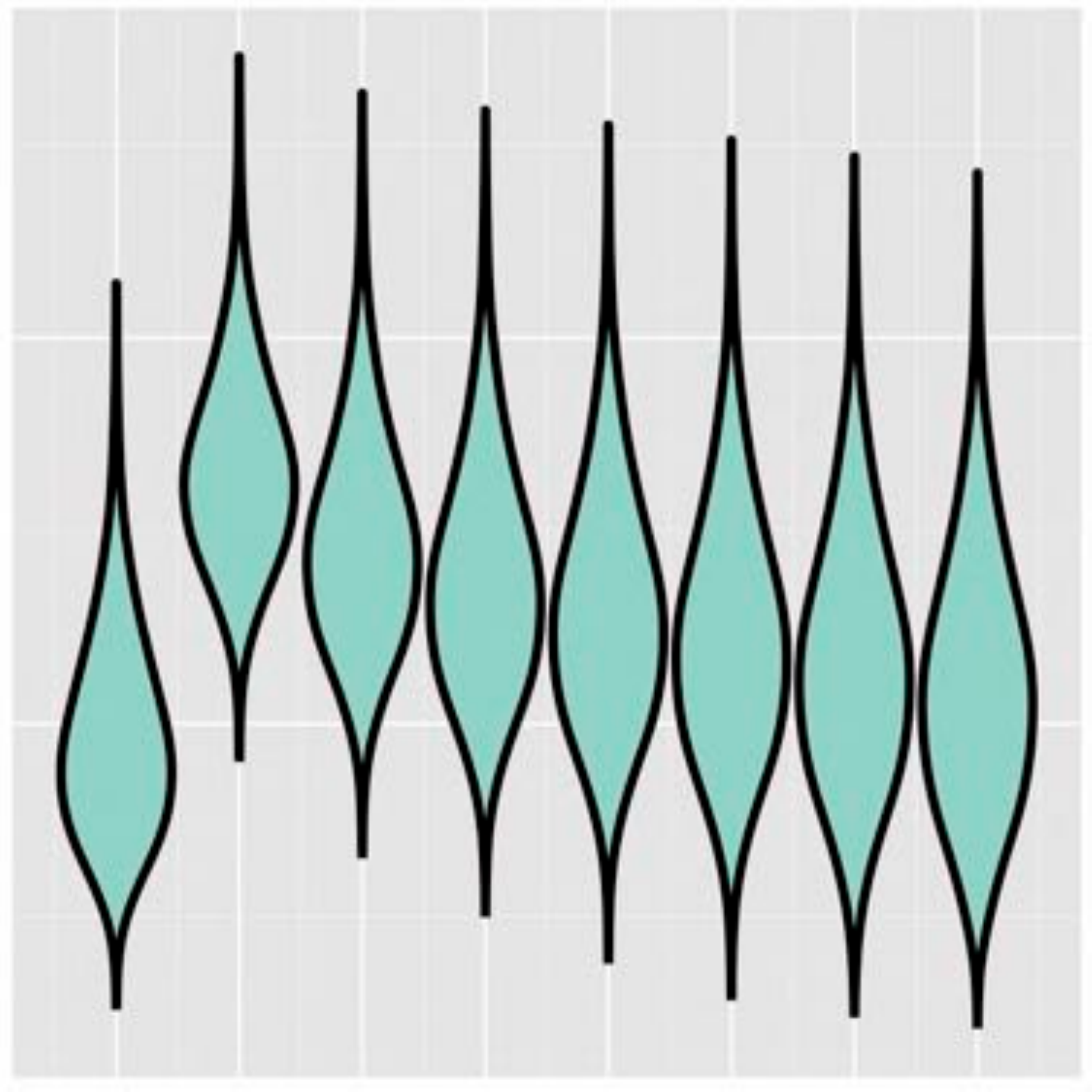}}%
  \caption{\label{fig:tweetevol} \textbf{Summaries of the aggregate network in Fig.~\ref{fig:twitter} in the main text.} Constructed using Algorithm~\ref{alg:sizes}, starting from $t_1 = $ 4:18:19pm on April 8, 2014 through: \protect\textbf{a}, April 14 (4:29:55pm). \protect\textbf{b} April 19 (pm). \protect\textbf{c} April 24 (pm). \protect\textbf{d}, May 2 (4:29:55am). \protect\textbf{e}, May 4 (pm). \protect\textbf{f}, May 9 (pm). \protect\textbf{g}, May 14 (pm). \protect\textbf{h}, May 15 (am). Summaries \protect\textbf{a} and \protect\textbf{h} also appear as Figs.~\ref{fig:twitter}b and~\ref{fig:twitter}f, respectively, in the main text.}
\end{figure}

\paragraph{An aggregate network} The first network we consider is one in which users are connected if they broadcasted the same Uniform Resource Locator (URL) in tweets. The use of common URLs to determine edges in the network follows from the observation that tweets often contain a URL linking to an article or blog post. Thus, at time point $t\in[t_1,t_2]$, two users are connected if they broadcasted any tweets containing the same URL at {\emph{any}} point in the time-window $[t_1,t]$. Any edges that appear then remain in the network, so that it accumulates edges over time.

Figure~\ref{fig:tweetevol} shows scaled-based summaries of this aggregate network at eight different time points. These summaries evolve quickly from a relatively uniform set of contributions across scales (cf.~Fig.~\ref{fig:compnets}c) to a steady state configuration which is reminiscent of small-world connectivity (cf.~Fig.~\ref{fig:compnets}b).

\begin{figure}[!t]
\centering
  \subfloat[]{%
    \includegraphics[width=0.19\textwidth]
      {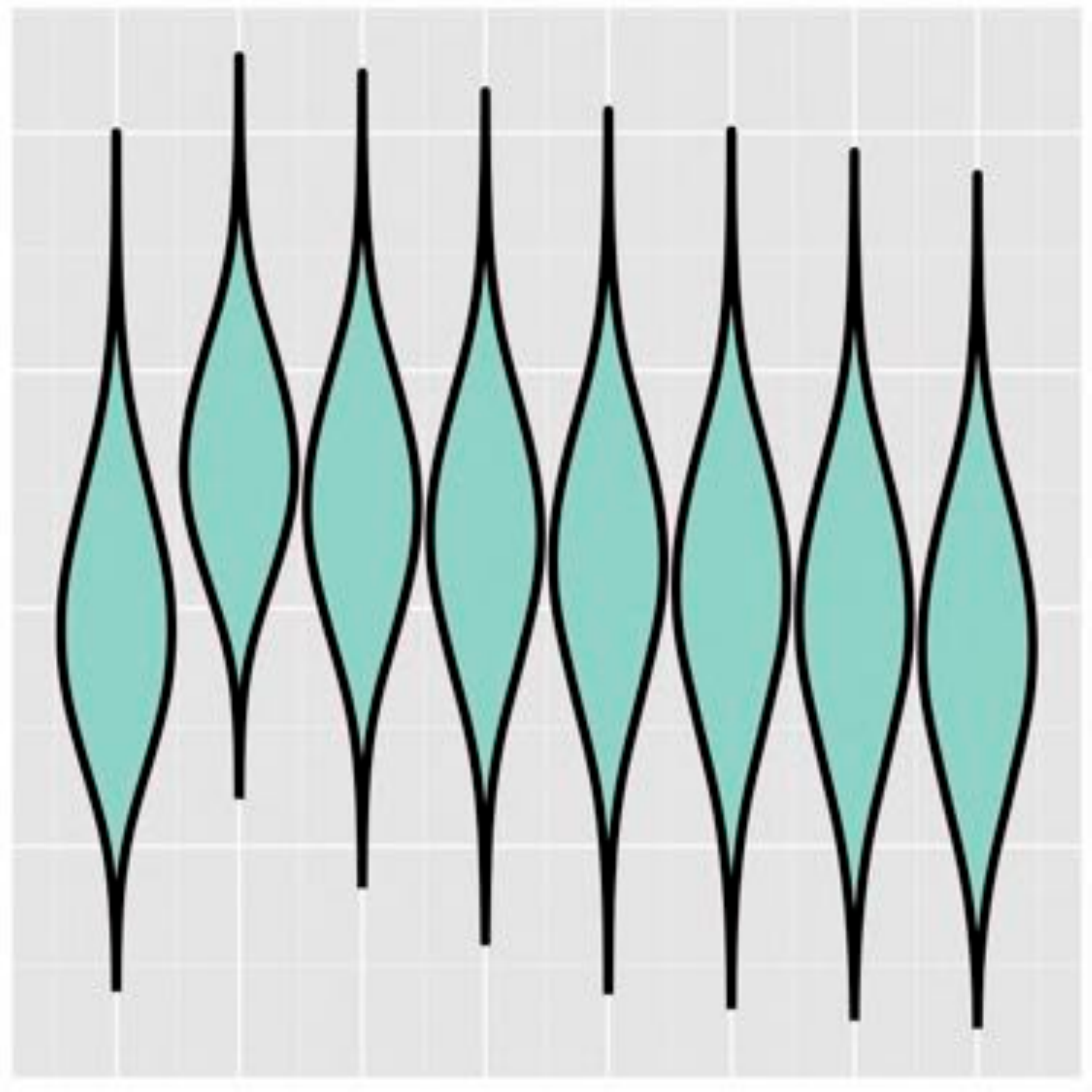}}\ \ %
  \subfloat[]{%
    \includegraphics[width=0.19\textwidth]
      {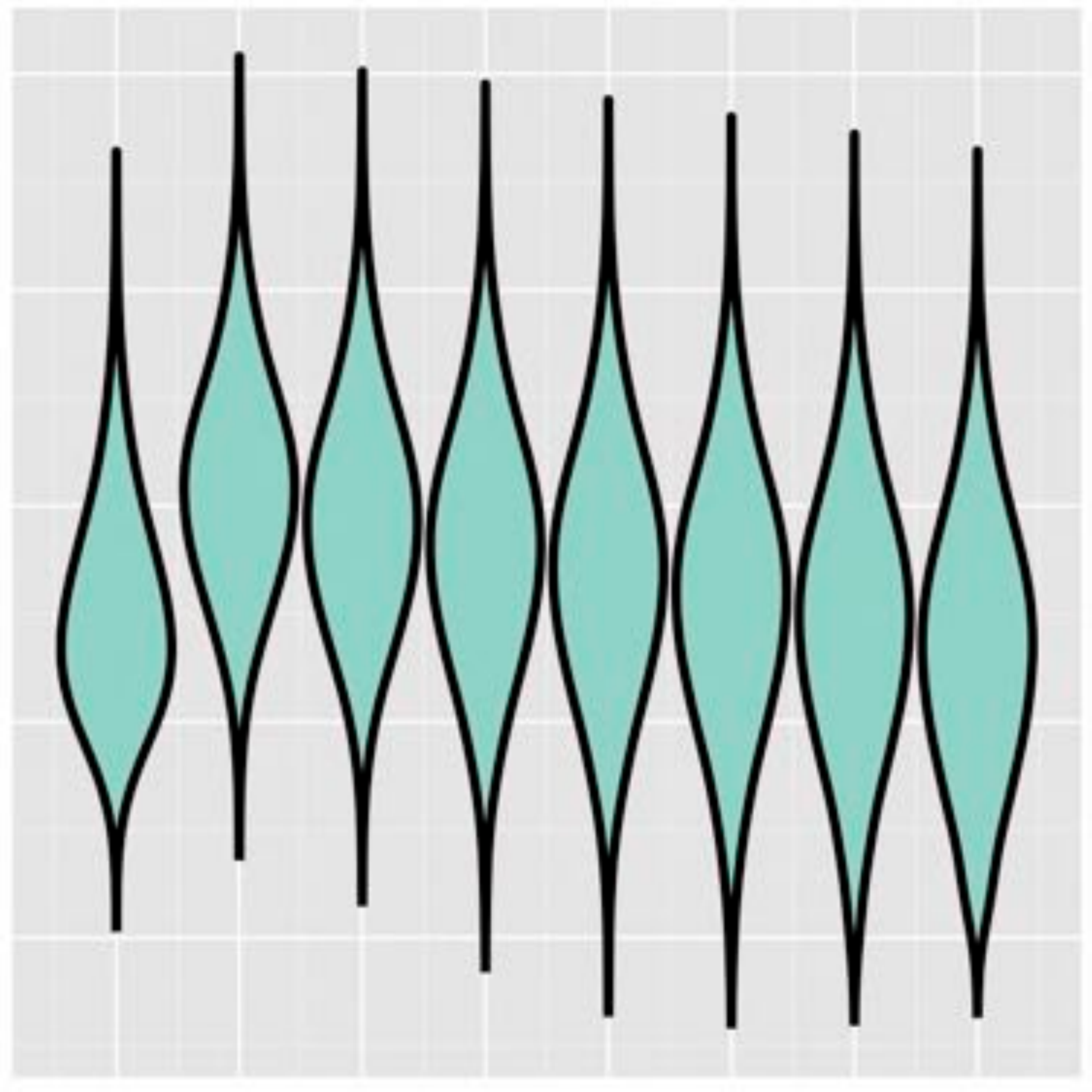}}\ \ %
  \subfloat[]{%
    \includegraphics[width=0.19\textwidth]
      {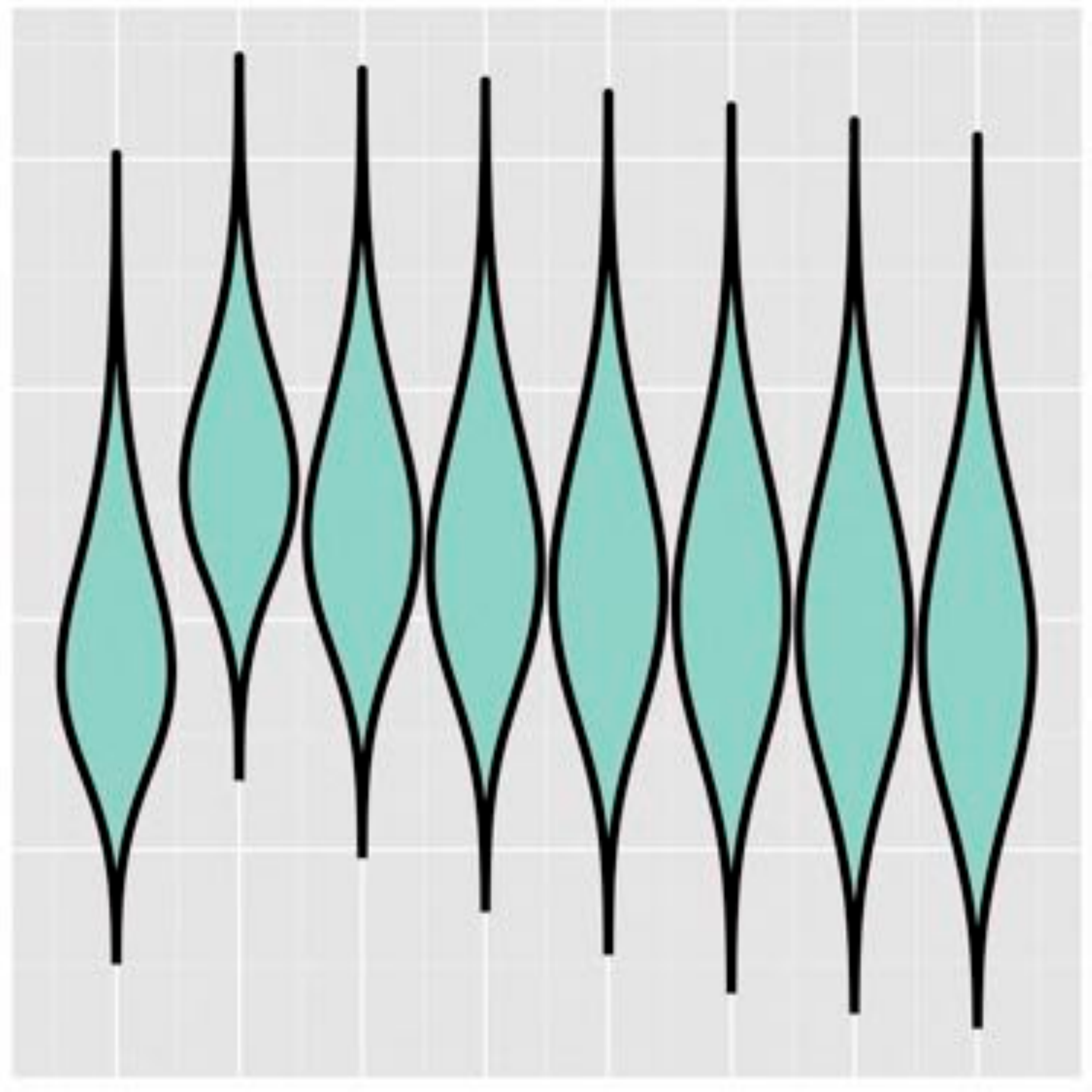}}\ \ %
  \subfloat[]{%
    \includegraphics[width=0.19\textwidth]
      {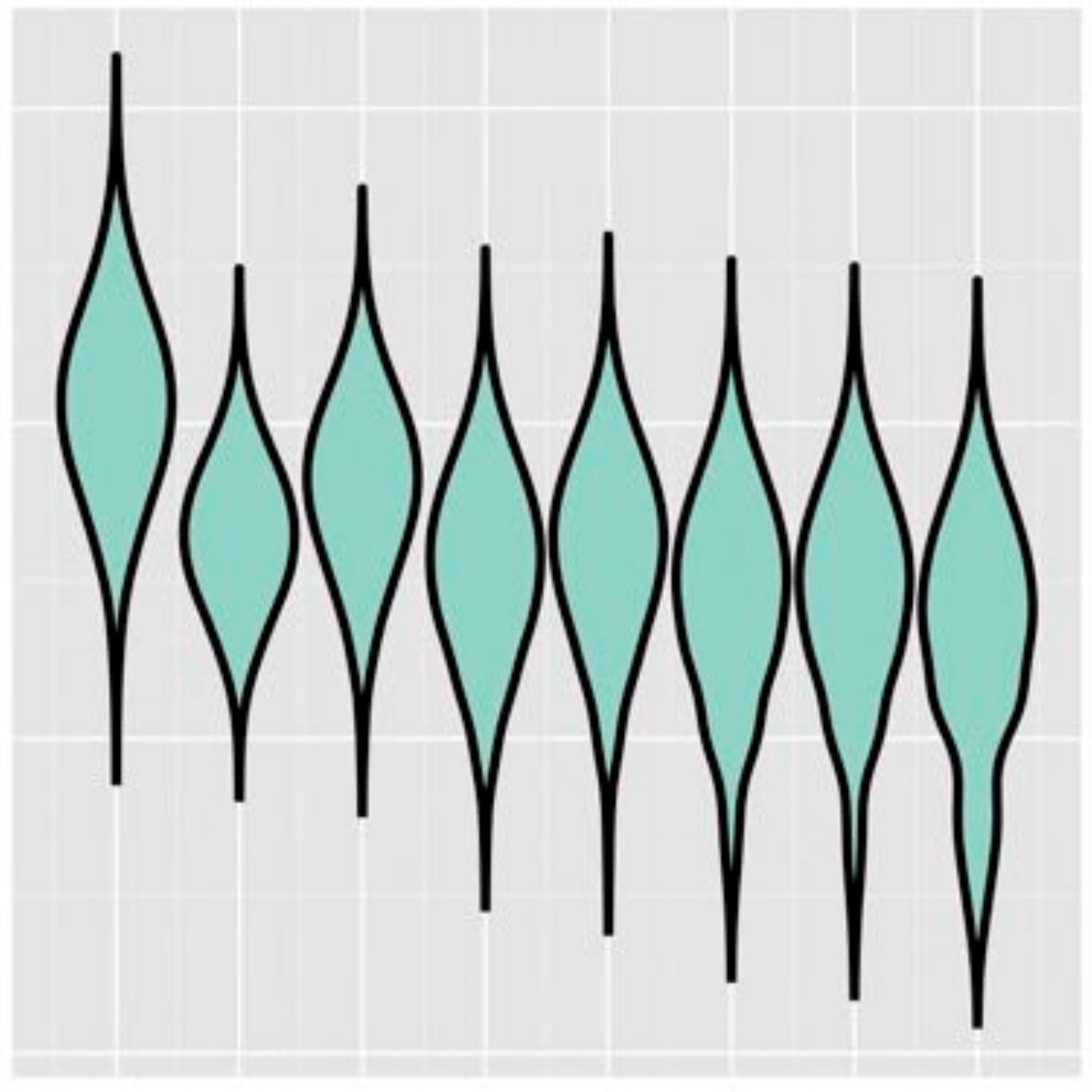}}\ \ %
      \\
  \subfloat[]{%
    \includegraphics[width=0.19\textwidth]
      {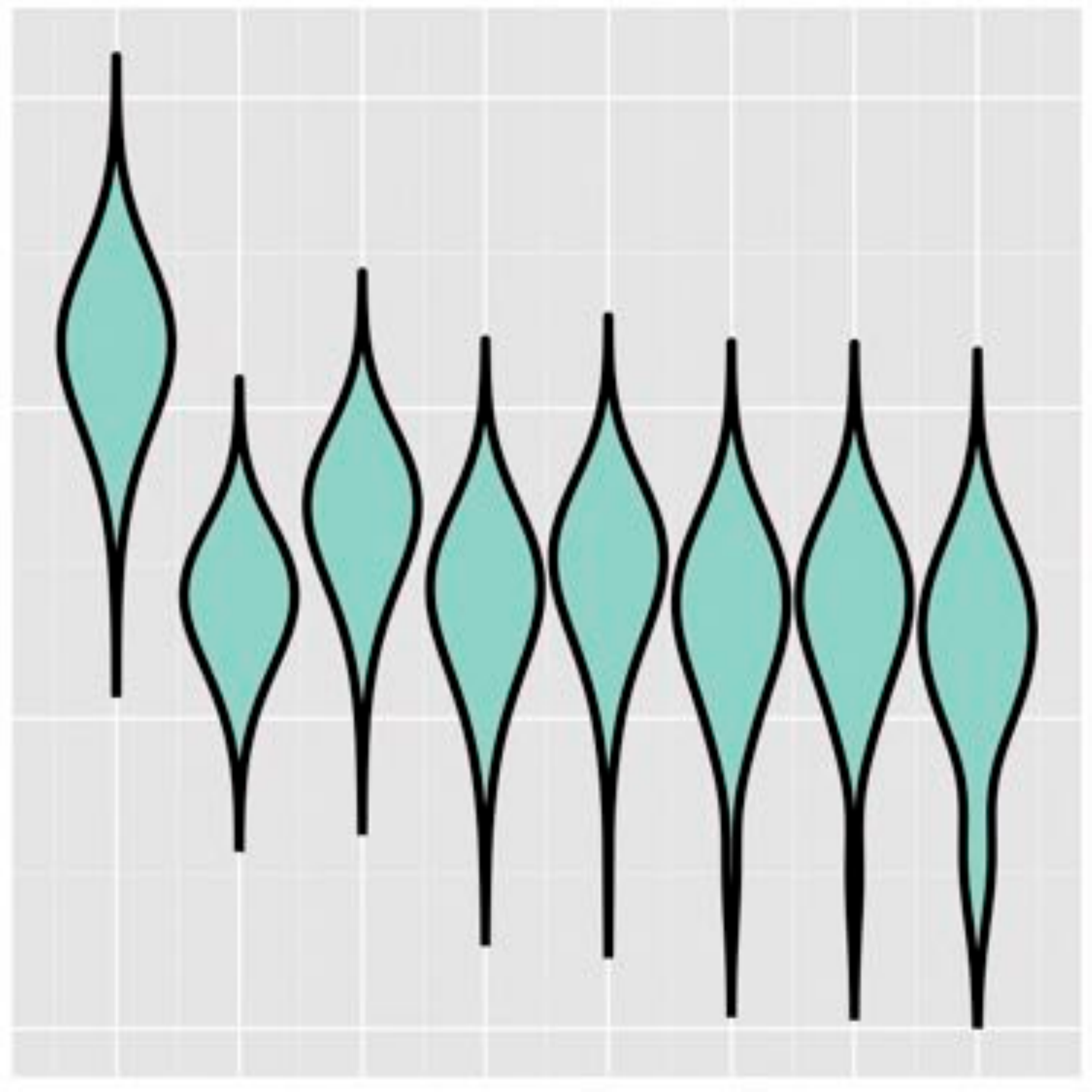}}\ \ %
  \subfloat[]{%
    \includegraphics[width=0.19\textwidth]
      {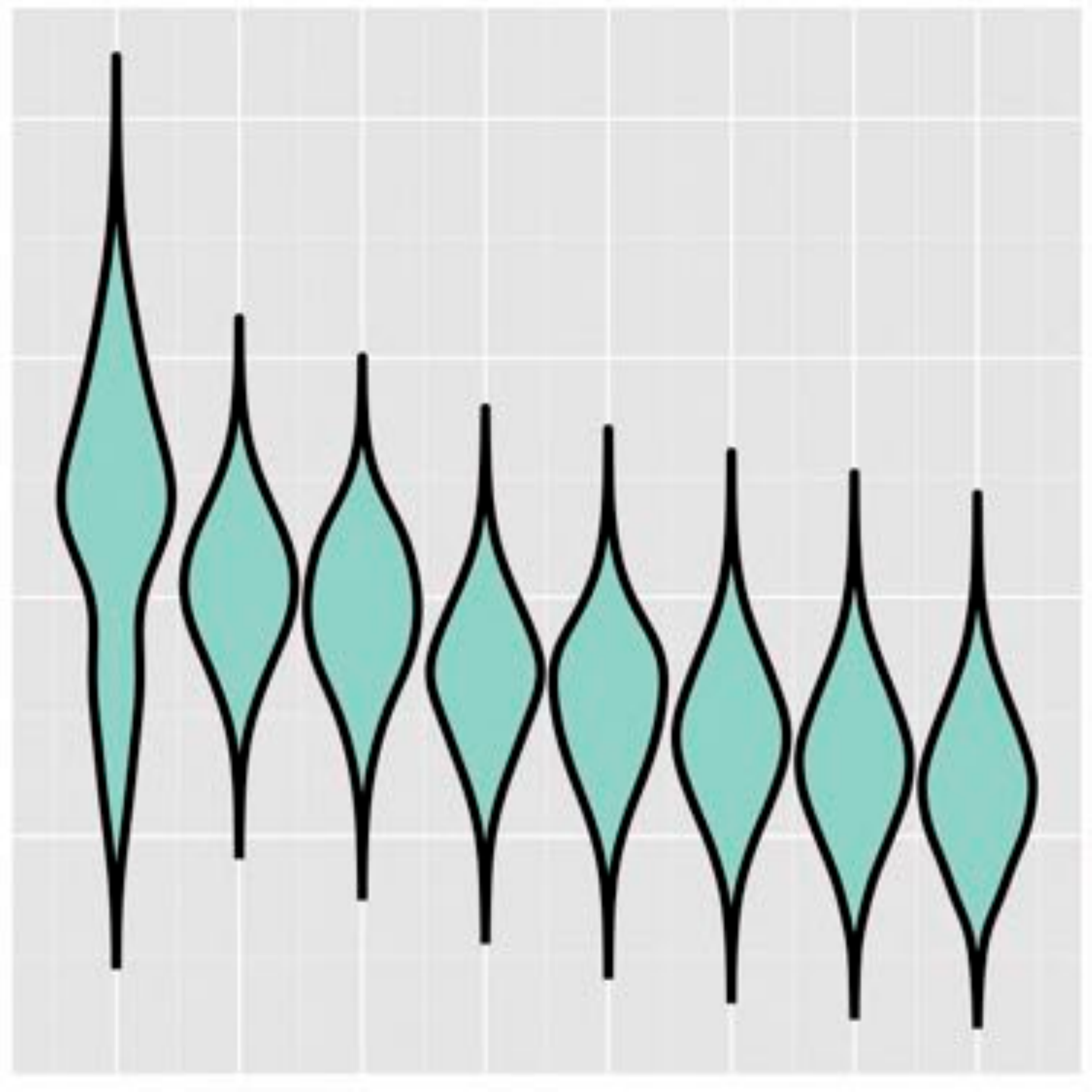}}\ \ %
  \subfloat[]{%
    \includegraphics[width=0.19\textwidth]
      {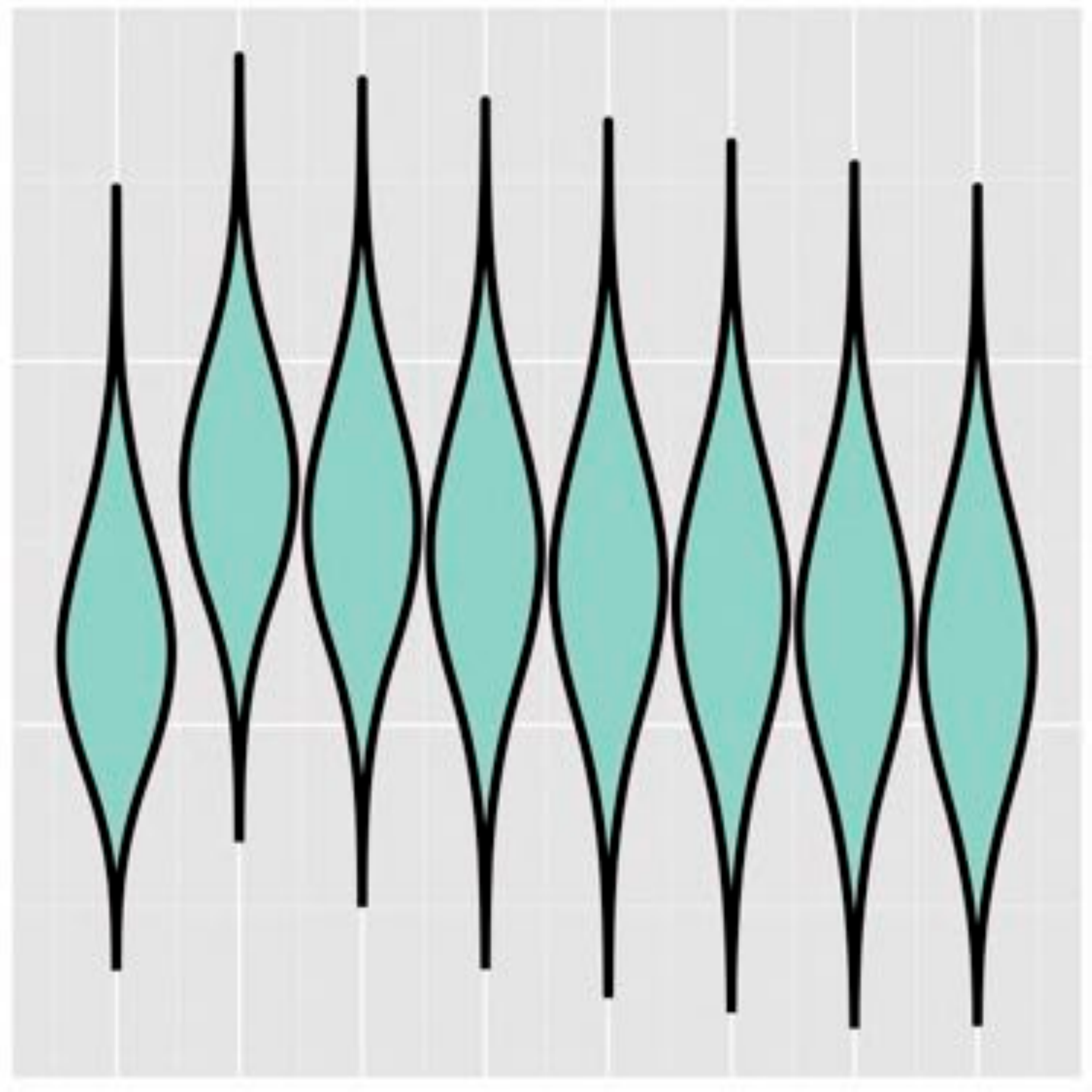}}\ \ %
  \subfloat[]{%
    \includegraphics[width=0.19\textwidth]
      {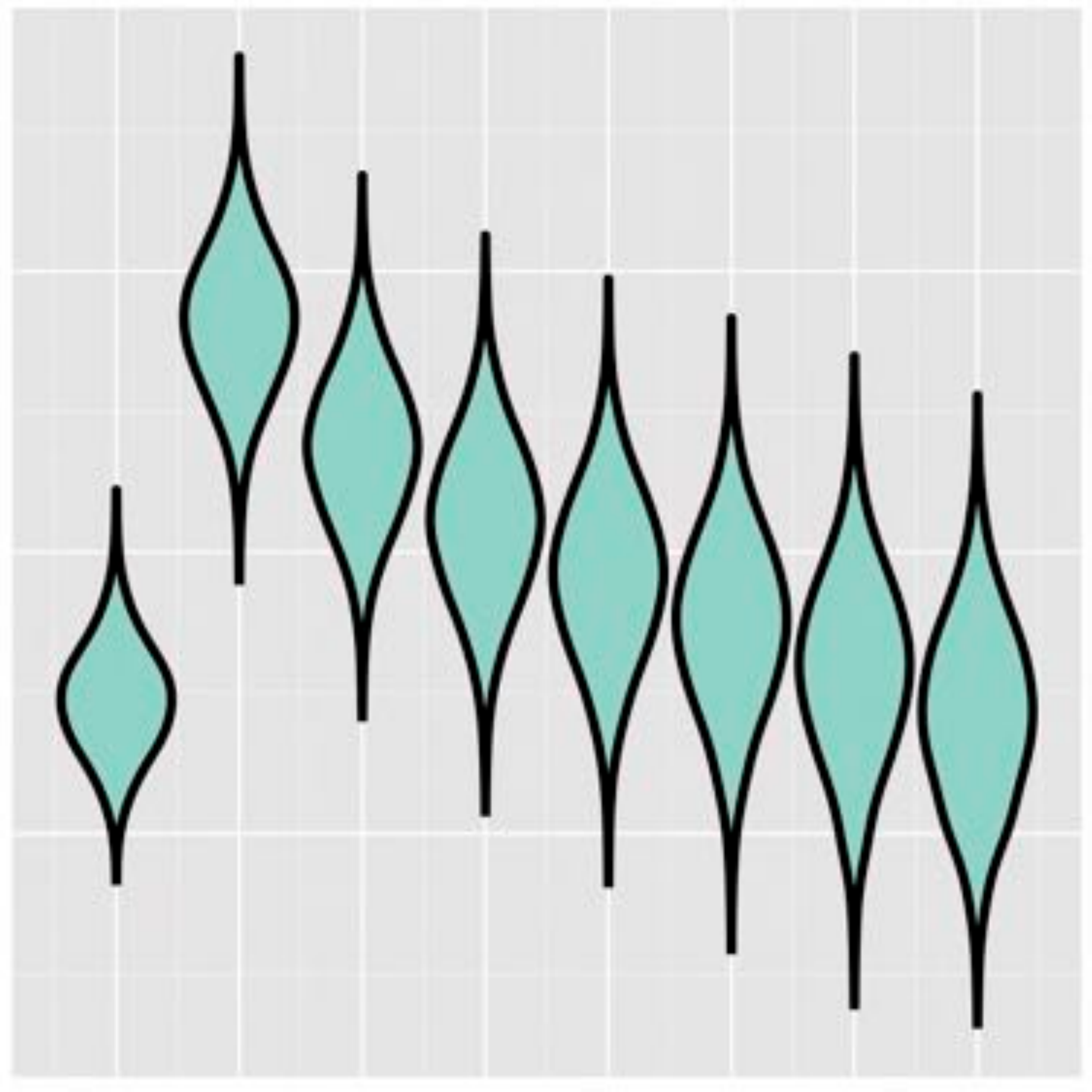}}%
  \caption{\label{fig:tweetevol2} \textbf{Time-localized networks from Fig.~\ref{fig:twitter} in the main text.} Constructed using Algorithm~\ref{alg:sizes} and the following 4-day windows: \protect\textbf{a}, April 11--15 (04:29:55am). \protect\textbf{b}, April 16--20 (am). \protect\textbf{c}, April 21--25 (am). \protect\textbf{d}, April 27--May 1 (am). \protect\textbf{e}, April 28--May 2 (am). \protect\textbf{f}, May 1--5 (am). \protect\textbf{g}, May 6--10 (am). \protect\textbf{h}, May 11--15 (am). Summaries \protect\textbf{a} and \protect\textbf{e} also appear as Figs.~\ref{fig:twitter}a and~\ref{fig:twitter}c, respectively, in the main text.}
\end{figure}

\paragraph{Time-localized networks} We also consider a collection of time-localized networks, with edges based on the similarity of tweets broadcasted within four days of one another.  This leads to denser networks relative to the above aggregate network over similar time windows, since users may broadcast similar information but point to different sources for it (see timeline below for examples). Here we consider two users to be connected if, within a given four-day window, they have sent at least one pair of tweets whose edit distance distance is no greater than 29 (just over 20\% of the 140-character limit of a tweet). Thus an edge is present between two users if, with at most 29 basic string modifications according to the restricted Damerau--Levenshtein distance~\cite{Damerau}, one user's tweet can be transformed into the other's tweet.

In this study we used the R package stringdist~\cite{stringdist} to evaluate the edit distance between all pairs of tweets in a given four-day window. We experimented with shorter and longer edit distances; in the former case, the networks obtained become very sparse, whereas in the latter case, with an edit distance of 30--40, we found many qualitative characteristics of the resulting networks to be similar. Once the allowable edit distance is increased significantly past 40, a large proportion of tweets can be transformed into one another.  We also experimented with shorter and longer time windows (2, 3, and 5 days), selecting four-day windows in half-day overlapping increments (66 in total between $t_1$ and $t_2$) as a compromise between network size, proportion of nodes in the largest connected component, and precision of time localization.

Figure~\ref{fig:tweetevol} shows scaled-based summaries of eight of these time-localized networks, at approximately equally spaced time points but with a greater density following the rapid growth in network activity in late April (see Fig.~\ref{fig:twitter}). We see that after starting from a relatively uniform set of contributions across scales (cf.~Fig.~\ref{fig:compnets}c), these networks undergo a transition towards more dominant tree-like structure (cf.~Fig.~\ref{fig:compnets}a) before relaxing towards a configuration reflecting small-world characteristics (cf.~Fig.~\ref{fig:compnets}b).

\paragraph{Timeline of events related to the social media discussion} Finally, we consider a timeline of iPhone-related news articles. As described in the main text, various rumors, leaks and news releases impacted the social media discussion under study. Google news single-day searches from March 30--May 30, 2014, first for ``Apple'' and then separately for ``iPhone,'' yield the following timeline of relevant events:
      
\begin{description}
\item[March 31, 2014] First iPhone photos start leaking to \emph{Sina Weibo}~\cite{weibo1,9to5mac1}.
\item[April 1, 2014] \emph{Reuters} reports iPhone suppliers to begin display production~\cite{reuters2}.
\item[April 15, 2014] Leaks of iPhone part images reported around the world~\cite{ibtimes3,nowhereelse4}.
\item[April 21, 2014] Samsung begins its arguments in Apple patent infringement suit (legal discussions ongoing for nearly two years)~\cite{cnet5}. Apple releases video with Tim Cook for Earth Day~\cite{appleinsider6}. Taiwanese site \emph{Commercial Times} discusses battery supplier issues with slimmer iPhone~\cite{ctee7,cnet6,businessinsider8}.
\item[April 22, 2014] Apple releases green logo and employee shirts for Earth Day~\cite{appleinsider9}.
\item[April 23, 2014] Apple reports fiscal second quarter results, announces stock split, live streams analyst call~\cite{apple10}.
\item[Late April--early May] Round-up of various rumors on iPhone design (e.g.,~\cite{gottabemobile11}), and on May 2, rumors said to ``[turn] from a trickle to a torrent''~\cite{ibtimes12}.  
\item[April 30, 2014] iPhone mockup photos leaked to Italian site \emph{Macitynet}~\cite{macitynet13}.
\item[May 2, 2014] Jury delivers verdict in Apple--Samsung patent case~\cite{wsj14}. \emph{Macitynet} posts photos of new iPhone~\cite{macitynet15}; additional photos posted on May 4~\cite{gottabemobile17}. 
\item[May 5, 2014] Leak on \emph{9to5Mac} of iPhone upgrade event to clear out stock, said to begin May 8, and subsequently announced~\cite{9to5mac18}.
\item[May 7, 2014] Taiwanese site \emph{Commercial Times} reports rumored information about iPhone supplier orders~\cite{ctee18},
then reported by \emph{CNET} on May 9~\cite{cnet19}.
\item[May 8, 2014] iPhone upgrade promotion email sent to customers~\cite{9to5mac20}.
\item[May 10, 2014] iPhone schematics leaked to \emph{Wei Feng}~\cite{feng21}.
\item[May 12, 2014] iPhone release date rumors reported by \emph{Tech Times}~\cite{techtimes22}.
\item[May 14, 2014] Leak on \emph{9to5Mac} of iPhone screen resolution~\cite{9to5mac23}.
\item[May 28, 2014] Apple confirms agreement to acquire the firm Beats by Dre~\cite{apple24}.
\end{description}

\section{Additional references}
\vspace{-2\baselineskip}%
\putbib[motif]
\end{bibunit}
\end{document}